\newcommand{\blind}{0}
\theoremstyle{plain}
\newtheorem{theorem}{Theorem}[section]
\newtheorem{assumption}{Assumption}[section]
\newtheorem{corollary}{Corollary}[section]
\newtheorem{lemma}{Lemma}[section]
\newtheorem{proposition}{Proposition}[section]
\theoremstyle{definition}
\newtheorem{example}{Example}[section]
\theoremstyle{remark}
\newtheorem{remark}{Remark}[section]
\newcommand{\PP}{\mathbb{P}}
\newcommand{\N}{\mathbb{N}}
\newcommand{\R}{\mathbb{R}}
\newcommand{\Z}{\mathbb{Z}}
\newcommand{\E}{\mathbb{E}}
\def\bi{\begin{itemize}}
	\def\ei{\end{itemize}}
\numberwithin{equation}{section}
\newif\ifi
\newcommand{\drift}{\ensuremath{\gamma}}
\newcommand{\lev}{\ensuremath{\eta}} 
\newcommand{\ind}{\ensuremath{\Bbb{I}}}
\begin{document}

	\def\spacingset#1{\renewcommand{\baselinestretch}%
		{#1}\small\normalsize} \spacingset{1}

	
	\if0\blind
	{
		\title{Inference and forecasting for continuous-time integer-valued trawl processes}
		
		\author{Mikkel Bennedsen\thanks{
				Department of Economics and Business Economics and CREATES, 
				Aarhus University, 
				Fuglesangs All\'e 4,
				8210 Aarhus V, Denmark.
				E-mail: {\tt mbennedsen@econ.au.dk}. 
			}, 
			Asger Lunde\thanks{
				Copenhagen Economics, Langebrogade 1B, 1411 Copenhagen K, Denmark,
				and CREATES, 
				Aarhus University, 
				Fuglesangs All\'e 4,
				8210 Aarhus V, Denmark.
				E-mail: {\tt alu@CopenhagenEconomics.com}. 
			}, Neil Shephard\thanks{
				Department of Economics and Department of Statistics, 
				Harvard University, 
				One Oxford Street,
				Cambridge, MA 02138, USA.
				E-mail: {\tt shephard@fas.harvard.edu}.
			},  Almut E. D. Veraart\thanks{
				Department of Mathematics, 
				Imperial College London, 
				South Kensington Campus,
				London SW7 2AZ, UK and CREATES, Aarhus University.
				E-mail: {\tt a.veraart@imperial.ac.uk}.
		}}
		
		\maketitle
	} \fi
	
	\if1\blind
	{
		\bigskip
		\bigskip
		\bigskip
		\begin{center}
			{\LARGE\bf Inference and forecasting for continuous-time integer-valued trawl processes and their use in financial economics}
		\end{center}
		\medskip
	} \fi
	
	\bigskip
	\begin{abstract}
		This paper develops likelihood-based methods for estimation, inference, model selection, and forecasting of continuous-time integer-valued trawl processes. The full likelihood of integer-valued trawl processes is, in general, highly intractable, motivating the use of composite likelihood methods, where we consider the pairwise likelihood in lieu of the full likelihood. Maximizing the pairwise likelihood of the data yields an estimator of the parameter vector of the model, and we prove consistency and, in the short memory case, asymptotic normality of this estimator. When the underlying trawl process has long memory, the asymptotic behaviour of the estimator is more involved; we present some partial results for this case. The pairwise approach further allows us to develop probabilistic forecasting methods, which can be used to construct the predictive distribution of integer-valued time series. In a simulation study, we document the good finite sample performance of the likelihood-based estimator and the associated model selection procedure. Lastly, the methods are illustrated in an application to modelling and forecasting financial bid-ask spread data, where we find that it is beneficial to carefully model both the marginal distribution and the autocorrelation structure of the data. 
	\end{abstract}
	
	\noindent%
	{\it Keywords:} Count data; L\'evy basis; pairwise likelihood; estimation; model selection; forecasting. \\
	
	\noindent {\it JEL Codes:} C13; C51; C52; C53; C58.
	\vfill
	
	\newpage
	\spacingset{1.5} 
	
	\section{Introduction}
	
	
	In this paper, we develop likelihood-based methods for estimation, inference, model selection, and forecasting of continuous-time integer-valued trawl (IVT) processes. IVT processes,  introduced in \cite{BNLSV2014},  are a  flexible class of integer-valued, serially correlated, stationary, and infinitely divisible continuous-time stochastic processes. In general, however, IVT processes are not Markovian, which implies that the structure of the full likelihood of an IVT process is highly intractable \citep[][]{SY2016}. This is the impetus of the present paper, where we propose to use composite likelihood \citep[CL,][]{Lindsay88} methods for estimation and inference. Specifically, we propose to estimate the parameters of an IVT model by maximizing 
	the  pairwise likelihood of the data. CL methods in general, and the pairwise likelihood approach in particular, have been successfully used in many applications, such as statistical genetics \citep[][]{LF2011}, geostatistics \citep[][]{HO1994}, and finance \citep[][]{EPSS2020}. See \cite{VRF2011} for an excellent overview of CL methods. Although the theory behind CL estimation is quite well understood in the case of iid observations \citep[e.g.][]{CN2004,VV2005}, the time series case, which is what we consider here, generally requires separate treatment \citep[][p. 11]{VRF2011}. For instance, \cite{DY2011} develops the theory of CL estimators in the setting of linear Gaussian time series models, while \cite{CHW2016} and \cite{NJKL2011} consider CL methods for a hidden Markov model and a time series model with a latent autoregressive process, respectively. Also, \cite{S2019} develops a two-step CL estimation method for parameter-driven count time series models with covariates. Our paper adds to the literature on CL methods for time series models by deriving the theoretical properties (consistency, asymptotic normality) of a pairwise CL estimator applied to IVT models.

	
	A central feature of IVT processes is that they allow for specifying the correlation structure of the model separately from the marginal distribution of the model, making them flexible and well-suited for modelling count- or integer-valued data. In particular, the marginal distribution of an IVT process can be any integer-valued infinitely divisible distribution, while the correlation structure can be specified independently using a so-called trawl function. This setup allows for both short- and long-memory of the IVT process. So far, IVT processes have been applied to financial data \citep{BNLSV2014,SY2017,VERAART2019} and real-valued trawl processes to the modelling of extreme events in environmental time series \citep{NVG2018}. IVT processes are, under weak conditions, stationary and ergodic, which motivated  \cite{BNLSV2014} to suggest a method-of-moments-based estimator for the parameters of the IVT model. This method-of-moments-based estimator has been used in most applied work using IVT processes \cite[e.g.][]{BNLSV2014,SY2017,VERAART2019}. Exceptions are  \cite{SY2016} and \cite{NVG2018}. In  \cite{NVG2018}, a pairwise likelihood  was used for a hierarchical model involving a latent (Gamma-distributed) trawl process and the corresponding asymptotic theory was derived in   \cite{CV2020}.  However, the asymptotic theory for inference for integer-valued trawl processes which are observed directly is not covered by these earlier papers.
	In \cite{SY2016}, the authors derive a prediction decomposition of the likelihood function of a particularly simple IVT process, the so-called Poisson-Exponential IVT process, allowing them to conduct likelihood-based estimation and inference. Although the likelihood estimation method developed in \cite{SY2016} theoretically applies to more general IVT processes, the computational burden quickly becomes overwhelming in these scenarios, making estimation by classical maximum likelihood methods infeasible in practice. 
	
	The contributions of this paper can be summarized as follows. First, we derive the theoretical mixing properties of IVT processes. Using these, we prove consistency and, in the short memory case, asymptotic normality of the maximum composite likelihood (MCL) estimator of the parameter vector of an IVT model. We discuss the long memory case and, based on a result about the asymptotic behaviour of partial sums of IVT processes, conjecture that the MCL estimator has an $\alpha$-stable limit with infinite variance in this case.  For the purpose of conducting feasible inference and model selection, we propose two alternative estimators of the asymptotic variance of the MCL estimator in the short memory case: a kernel-based estimator, inspired by the heteroskedastic and autocorrelation consistent (HAC) estimator of \cite{NW1987}, and a simulation-based estimator.  Second, we use the same principle of considering the pairwise likelihood in lieu of the full likelihood, to derive the predictive distribution of an IVT model, conditional on the current value of the process; this allows us to use the IVT framework for forecasting integer-valued data. In a simulation study, we compare the MCL estimator to the standard method-of-moments-based estimator suggested in \cite{BNLSV2014} and find that the MCL estimator provides substantial improvements in most cases. Indeed, in a realistic simulation setup, we find that the MCL estimator can improve on the method-of-moments-based estimator by more than $50\%$, in terms of finite sample root median squared error. 
	Since the asymptotic theory for (G)MM estimation of trawl processes has not been worked out elsewhere, we also derive the asymptotic theory for GMM estimation and present the results for comparison purposes in the Supplementary Material, see Section \ref{sec:WeakDepGMM}.
	
	We apply the methods developed in the paper to a time series of the bid-ask spread of a financial asset. The time series behaviour of the bid-ask spread has been extensively studied in the literature on the theory of the microstructure of financial markets \citep[e.g.][]{HS1997,BSW2004}. The model selection procedure developed in the paper indicates that a model with Negative Binomial marginal distribution and slowly decaying autocorrelations most adequately describe the data. These findings are in line with those of \cite{GH2013}, who also found strong persistence in bid-ask spread time series. Then, in a pseudo-out-of-sample forecast exercise, we find that it is important to carefully model both the marginal distribution and the autocorrelation structure to get accurate forecasts of the future bid-ask spread. These findings highlight the strength of modelling using a framework where the choice of marginal distribution can be made independently of the choice of autocorrelation structure.

	The rest of the paper is structured as follows. Section \ref{sec.:setup} outlines the mathematical setup of IVT processes, while Section \ref{sec:estimation} contains details on the estimation and model selection procedures. Section \ref{sec:forec} presents the theory behind our proposed forecasting approach. Section \ref{sec:MC} summarises the results from our simulation study, investigating the finite sample properties of the estimation and model selection procedures. Section \ref{sec:emp} illustrates the use of the new methodology 
	in an empirical application to financial bid-ask spread data. Section \ref{sec.:concl} concludes.   The proofs of the main mathematical results are given in an Appendix.
	Practical details on the implementation of the asymptotic theory and additional derivations are given 
	in the Supplementary Material, which 
	also contains further simulation results and extensive details on various calculations used in the implementation of the methods.
	A software package for the implementation of simulation, estimation, inference, model selection, and forecasting of IVT processes is freely available in the MATLAB programming language.\if1\blind{\footnote{The software package can be found on GitHub.}} \fi \if0\blind{\footnote{The software package can be found at \url{https://github.com/mbennedsen/Likelihood-based-IVT}.}} \fi

	\section{Integer-valued trawl processes}\label{sec.:setup}
	
	Let $(\Omega, \mathcal{F}, \mathbb{P})$ denote a probability space, supporting a Poisson random measure $N$, defined on $\mathbb{Z} \times [0,1] \times \R$, with mean (intensity) measure $\eta \otimes Leb \otimes Leb$. Throughout $Leb$ denotes the Lebesgue measure and $\eta$ is a L\'evy measure. 
	A  \emph{L\'evy basis} $L$ is a homogeneous and independently scattered random measure on $[0,1] \times \R$, defined as
	\begin{align}\label{eq:LB0}
		L(dx,ds) := \int_{-\infty}^{\infty} y N(dy,dx,ds), \quad (x,s) \in [0,1] \times \R.
	\end{align}  
See, e.g., \cite{RR1989} and \cite{BN2011} for further details on L\'evy bases. Since we are only interested in integer-valued L\'evy bases, we will work under the following assumption.
	\begin{assumption}\label{ass:LB}
		The L\'evy basis $L$ is given by \eqref{eq:LB0} with L\'evy measure $\eta$, concentrated on the integers ($y \in \mathbb{Z}$), such that $\| \eta \| := \sum_{y = -\infty}^{\infty} y^2 \eta(y) < \infty$. 
	\end{assumption}
	The L\'evy basis $L$ is an infinitely divisible random measure with cumulant (log-characteristic) function 
	\[
	C_{L(dx,ds)}(\theta) := \log \mathbb{E}[ \exp (i\theta L(dx,ds)) ] = \int_{-\infty}^{\infty} \left(e^{i\theta y}-1\right) \eta(dy)dx ds, \quad (x,s) \in [0,1] \times \R. 
	\]
	An important random variable associated with the L\'evy basis $L,$ is the so-called \emph{L\'evy seed}, $L'$, which we define as 
	the random variable $L'$ satisfying $\mathbb{E}[\exp (i \theta L')] = \exp (C_{L'} (\theta)),$ with $C_{L'}(\theta)  = \sum_{y=-\infty}^{\infty} \left(e^{i\theta y}-1\right) \eta(y) $. 
	
	\begin{remark}\label{rem:Lseed}
		Because the distribution of a L\'evy process is entirely determined by its distribution at a particular time point, we can specify a L\'evy process $L_t'$ from a L\'evy seed $L'$, by requiring that $L_1' \sim L'$. 
	\end{remark}
	
	Using the L\'evy seed, we can rewrite the cumulant function of the L\'evy basis 
	as $C_{L(dx,ds)}(\theta) = C_{L'}(\theta) dx ds$, or, for a Borel set $B \in \mathcal{B}([0,1]\times \mathbb{R}),$
	\begin{align}\label{eq.:cumLB}
		C_{L(B)}(\theta) = C_{L'}(\theta) Leb(B).
	\end{align}
	From \eqref{eq.:cumLB} we have that $\kappa_j(L(B)) = \kappa_j(L') Leb(B)$, $j \geq 0$, where $\kappa_j(Z)$ denotes the $j$th cumulant of the random variable $Z$, when it exists.\footnote{Recall that the cumulants $\kappa_j(Z)$ of the random variable $Z$ are defined implicitly through the power series expansion of the cumulant function of $Z$, i.e., $C_Z(\theta) = \log \mathbb{E}[ \exp (i\theta Z) ]  = \sum_{j=1}^{\infty} \kappa_j(Z) (i \theta)^j/j!$.} In particular 
	$\mathbb{E}[L(B)] = \mathbb{E}[L'] Leb(B)$, and $Var(L(B)) = Var(L') Leb(B)$.
	The relationship \eqref{eq.:cumLB} implies that the distribution of the random variable $L(B)$ is entirely specified by the L\'evy seed $L'$ and the Lebesgue measure of the set $B$. In Section \ref{sec:marginal} below, we illustrate how this can be used to construct trawl processes with a given marginal distribution.

	The L\'evy basis $L$ acts on sets in $\mathcal{B}([0,1]\times \mathbb{R})$. We restrict attention to \emph{trawl sets} of the form
	\begin{align}\label{eq:At}
		A_t = A + (0,t), \quad A = \{(x,s) : s\leq 0, 0\leq x < d(s)\}, \quad t \geq 0,
	\end{align}
	where $d$
	is a \emph{trawl function},  determining the shape of the trawl set $A_t$. Section \ref{sec:corr} contains several parametric examples for the trawl function $d$. 
	We will impose the following assumption.
	\begin{assumption}\label{ass:trawl}
		The trawl set $A_t$ is given by \eqref{eq:At}, where the trawl function $d: \R_- \to [0,1]$ is continuous and monotonically increasing such that $Leb(A) = \int_{-\infty}^0 d(s) < \infty$. 
	\end{assumption}
	
	Intuitively, $A_t$ is obtained from the set $A$ by ``dragging'' it along in time. Note in particular that $Leb(A_t) = Leb(A)$ for all $t$. Finally, define the IVT process $X = (X_t)_{t \geq 0}$ as the L\'evy basis evaluated over the trawl set:
	\begin{align}\label{eq:defIVT}
		X_t := L(A_t), \quad t\geq 0.
	\end{align} 
	
	
	\subsection{Modelling the marginal distribution}\label{sec:marginal}
	
	For an IVT process $X$ as defined in \eqref{eq:defIVT}, we have 	$C_{X_t}(\theta) = C_{L(A_t)}(\theta) = Leb(A)C_{L'}(\theta) = C_{ L'_{Leb(A)}}(\theta)$, where  $L'_t$ is a L\'evy process with $L'_1 \sim L'$. Hence we observe 
	that the marginal distribution of the IVT process $X_t$ is entirely decided by the Lebesgue measure of the trawl set $A$ and the L\'evy seed $L'$ of the underlying L\'evy basis $L$. Indeed, by specifying a distribution for $L'$, we can build IVT processes with the corresponding marginal distribution. The following two examples illustrate how to do this; additional details can be found in the Supplementary Material. 
	
	\begin{example}[Poissonian L\'evy seed] \label{ex:Poisson}
		Let $L' \sim \textnormal{Poisson}(\nu)$, i.e.~$L'$ is distributed as a Poisson random variable with intensity $\nu>0$. It follows from standard properties of the Poisson distribution that $X_t  \sim \textnormal{Poisson}(\nu Leb(A))$. In other words, for all $t \geq 0$,
		$P\left( X_t = x \right) = (\nu Leb(A))^{x} e^{-\nu Leb(A)} / x!$,  $x = 0, 1, 2, \ldots$.
	\end{example}
	
	\begin{example}[Negative Binomial L\'evy seed]\label{ex:NB}
		Let $L' \sim \textnormal{NB}(m,p)$, i.e.~$L'$ is distributed as a Negative Binomial random variable with parameters $m>0$ and $p \in [0,1]$. It follows from standard properties of the Negative Binomial distribution that $X_t  \sim \textnormal{NB}(m Leb(A),p)$.   In other words, for all $t \geq 0$,
		$P(X_t = x) =  \frac{\Gamma(Leb(A)m+x)}{x!\Gamma(Leb(A)m)} (1-p)^{Leb(A)m} p^x$, $x= 0, 1, 2, \ldots$,
		where $\Gamma(z) = \int_0^{\infty} y^{z-1} e^{-y} dy$ for $z>0$ is the $\Gamma$-function.
	\end{example}

	\subsection{Modelling the correlation structure}\label{sec:corr}
	Recall that the shape of the trawl set $A_t$ is determined by the trawl function $d$, see Equation \eqref{eq:At}. A particularly tractable and flexible class of parametrically specified trawl functions are 
	the so-called \emph{superposition trawls} \citep{BNLSV2014,SY2017}. They are defined as
	$d(s) := \int_0^{\infty} e^{\lambda s} \pi (d\lambda)$, for $s \leq 0$,
	where $\pi$ is a probability measure on $\mathbb{R}_+.$ This construction essentially randomizes the decay parameter $\lambda$ in an otherwise exponential function. 
	
	The IVT process with a superposition trawl function is stationary. Hence, we get the autocorrelation function \citep{BNLSV2014}
	\begin{align}\label{eq.:corr}
		\rho(h) := Corr( L(A_{t+h}),L(A_t)) = \frac{ Leb( A_{h} \cap A)}{Leb(A)} = \frac{ \int_h^{\infty} d(-s) ds}{\int_0^{\infty} d(-s)ds},  \quad h>0.
	\end{align}
	
	\begin{example}[Exponential trawl function] \label{ex:Exp}
		For the case where the measure $\pi$ has an atom at $\lambda > 0,$  i.e.~$\pi(dx) =\delta_{\lambda}(dx),$ where $\delta_x(\cdot)$ is the Dirac delta function at $x \in \mathbb{R}_+,$ we get $d(s) = e^{\lambda s}$ for $s \leq 0$. Consequently, 
		$\rho(h) = 
		\exp(-\lambda h)$, for $h \geq 0$.
	\end{example}
	
	\begin{example}[Inverse Gaussian trawl function]\label{ex:IG}
		Letting $\pi$ be given by the inverse Gaussian distribution
		$\pi(dx) = \frac{(\gamma/\delta)^{1/2}}{2 K_{1/2}(\delta \gamma)} x^{-1/2} \exp \left( -\frac{1}{2} (\delta^2 x^{-1} + \gamma^2 x) \right) dx$, 
		where $K_{\nu}(\cdot)$ is the modified Bessel function of the third kind and  $\gamma, \delta \geq 0$ with both not zero simultaneously. It can be shown that the resulting trawl function is given by
		$d(s) =  \left( 1- \frac{2s}{\gamma^2}\right)^{-1/2} \exp\left( \delta \gamma \left( 1- \sqrt{ 1 - \frac{2s}{\gamma^2}} \right) \right)$, for $s \leq 0$, 
		and hence that the correlation function of the IVT process with inverse Gaussian trawl function becomes
		$\rho(h) = Corr(X_{t+h},X_t) = \exp\left(-\delta \gamma (\sqrt{1+2h/\gamma^2} - 1)\right)$, for $h \geq 0$.
		The details on these calculations can be found in the Supplementary Material.
	\end{example}
	
	\begin{example}[Gamma trawl function]\label{ex:GAM}
		Let $\pi$ have the $\Gamma(1+H,\alpha)$ density,
		$\pi(dx) = \frac{1}{\Gamma(1+H)} \alpha^{1+H} \lambda^{H} e^{-\lambda \alpha} dx$,
		where $\alpha > 0 $ and $H > 0.$ We can show that 
		$d(s) = \left( 1- \frac{s}{\alpha}\right)^{-(H+1)}$, $s \leq 0$,
		which implies the correlation function
		$\rho(h) = Corr( X_{t+h},X_t) = \frac{ Leb( A_{h} \cap A)}{Leb(A)} =  \left(1 + \frac{h}{\alpha}\right)^{-H}$.
		Note that in this case $\int_0^{\infty} \rho(h) dh=\infty$ for $H\in (0,1]$ and $\int_0^{\infty} \rho(h) dh=\alpha(H-1)^{-1}$ for $H>1$,
		from which we see that an IVT process with a Gamma trawl function enjoys the long memory property, in the sense of a non-integrable autocorrelation function, when  $H \in (0,1].$ The details on these calculations can be found in the  Supplementary Material.
	\end{example}
	
	\subsection{Modelling IVT processes}
	
	Using the above methods, we can build flexible continuous-time integer-valued processes with a marginal distribution determined by the underlying L\'evy basis, and independently specified correlation structure determined by the trawl function. In our main examples given above, we considered a L\'evy basis with  Poisson 
	or Negative Binomial 
	marginals, and various trawl functions, namely the Exponential trawl function, 
	the IG trawl function, 
	and the Gamma trawl function. 
	Other specifications for the underlying L\'evy basis and trawl function than those given here could of course be considered. In practice, these choices should be guided by the properties of the data being modelled. 
	
	The simplest IVT process we can construct in this way is the Poisson-Exponential IVT process, i.e., the case where $L' \sim \textnormal{Poisson}(\nu)$ and $d(s) = \exp(\lambda s)$, $s \leq 0$, see Examples \ref{ex:Poisson} and \ref{ex:Exp}. This special case results in a Markovian process, which is not in general true of IVT processes \citep[][]{BNLSV2014}. In fact, the model is similar to the popular Poissonian INAR(1) model, introduced in \cite{McKenzie1985} and \cite{AAOA1987}.  
	An illustration of the exponential trawl set, $A_t,$ dragged through time, together with a simulation of the resulting Poisson-Exponential IVT trawl process $X_t = L(A_t)$, is seen in Figure \ref{fig.:PRMtrawl}. The parameters used are $\lambda = 1$ and $\nu = 5$. At each time point $t$, the value of $X_t$ (bottom plot) is the number of points inside the trawl set $A_t$ (top plot).

	\begin{figure}[htb] 
		\centering 
		\includegraphics[scale=0.9]{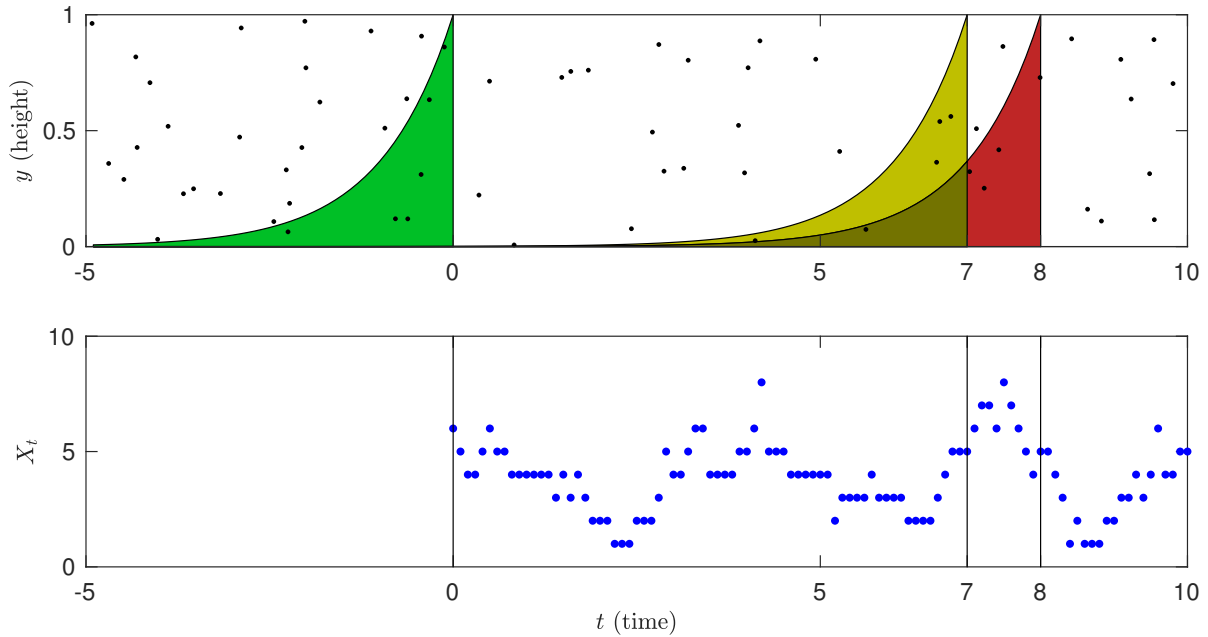}
		\caption{\it  Top: Simulation of a Poisson L\'evy basis on $\mathbb{R} \times  [0,1]$ (black dots) with an exponential trawl set $A_t$ (shaded) superimposed at three periods in time, $t \in \{0,7, 8\}$. Bottom: The associated trawl process $X_t = L(A_t)$, given by the number of `points' inside the trawl set $A_t$ at time $t$. The intensity of the Poisson random measure is $\|\eta\| = \eta(1) = \nu = 5$ and the parameter controlling the exponential trawl function, $d(s) = \exp(\lambda s)$, is $\lambda = 1.$}
		\label{fig.:PRMtrawl}
	\end{figure}



	\section{Estimation of integer-valued trawl processes}\label{sec:estimation}
	\cite{BNLSV2014} showed that the parameter vector $\theta$ of an IVT process can be consistently estimated using a generalized method of moments (GMM) procedure. 
	In Section \ref{sec:est_cst}, we propose a likelihood-based approach instead. 
	Both estimation procedures rely on the fact that the IVT process is stationary and mixing. The mixing property of IVT processes is obtained from results given in \cite{FS2013}, see \citet[][p. 699]{BNLSV2014}. Although mixing, in general, is sufficient for the consistency of the estimators, the central limit theorem for the likelihood-based estimator (Theorem \ref{th:CLT} below) relies on the stronger mixing concept of $\alpha$-mixing, where  the size (or rate) of mixing can also be established. 
	Let us  recall the definition of $\alpha$-mixing for a stationary process. Let $\mathcal{F}_{-\infty}^0 = \sigma( X_t; t\leq 0)$ and, for $m>0$, $\mathcal{F}_{m}^{\infty} = \sigma( X_t; t \geq m)$, and define the numbers
	$\alpha_m := \sup_{G \in \mathcal{F}_{-\infty}^0, H \in \mathcal{F}_{m}^{\infty}} | \mathbb{P}(H\cap G) - \mathbb{P}(H) \mathbb{P}(G)|$, for $m>0$.
	The process $X = (X_t)_{t\in \R}$ is $\alpha$-mixing if $\alpha_m \rightarrow 0$ as  $m \rightarrow  \infty$. It is $\alpha$-mixing of size $-\phi_0$ if $\alpha_m = O(m^{-\phi})$, as $m \rightarrow \infty$, for some $\phi>\phi_0$.
	
	We obtain the following important property for IVT processes. 
	\begin{theorem}\label{th:mixing}
		Let the IVT process  $X$ be given by \eqref{eq:defIVT} and let Assumptions \ref{ass:LB} and \ref{ass:trawl} hold. Now, $X$ is $\alpha$-mixing with $\alpha_m = O(\rho(m))$ as  $m \rightarrow  \infty$, where $\rho(m)$ is the autocorrelation function of $X$.
	\end{theorem}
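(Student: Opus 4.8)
The plan is to exploit the defining structure of the L\'evy basis---that it is independently scattered, so its values over disjoint regions of $[0,1]\times\mathbb{R}$ are independent---together with the fact that under Assumption~\ref{ass:LB} the total mass $c_\eta := \sum_{y\neq 0}\eta(y)$ is finite (since $\sum_{y\neq 0}\eta(y)\le\|\eta\|<\infty$), so that $L$ is a purely atomic random measure driven by a finitely active Poisson random measure $N$. The key geometric observation is that the time-zero line $\{s=0\}$ separates the region generating the past from the region that, up to a small correction, generates the future. Indeed, for $t\le 0$ we have $A_t\subseteq\{s\le 0\}$, so $\mathcal{F}_{-\infty}^0\subseteq\sigma(L|_{\{s\le 0\}})$; while for $t\ge m$ I would split $A_t=A_t^+\cup A_t^-$ with $A_t^+=A_t\cap\{s>0\}$ and $A_t^-=A_t\cap\{s\le 0\}$, and define the truncated future $\tilde X_t:=L(A_t^+)$, which is measurable with respect to $L|_{\{s>0\}}$ and hence independent of everything in $\sigma(L|_{\{s\le 0\}})$.

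First I would set up a coupling. Write $B:=\bigcup_{t\ge m}A_t^-$ and let $E$ be the event that $N$ places no atom over $B$. On $E$ every $L(A_t^-)=0$, hence $X_t=\tilde X_t$ simultaneously for all $t\ge m$; consequently, for any $H\in\mathcal{F}_m^\infty$, written as $H=\{(X_t)_{t\ge m}\in\Gamma\}$, the paired event $\tilde H:=\{(\tilde X_t)_{t\ge m}\in\Gamma\}\in\sigma(L|_{\{s>0\}})$ satisfies $H\cap E=\tilde H\cap E$. The crucial quantitative input is the probability of $E^c$. Although the tails $A_t^-$ for $t\ge m$ overlap, their union telescopes: using that $d$ is monotone increasing one checks that membership of $(x,s)$ with $s\le 0$ is governed by $t=m$, so that $B=\{(x,s):s\le 0,\ 0\le x<d(s-m)\}=A_m\cap\{s\le 0\}$. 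By the change of variables underlying \eqref{eq.:corr},
$$Leb(B)=\int_{-\infty}^0 d(s-m)\,ds=\int_m^\infty d(-v)\,dv=\rho(m)\,Leb(A).$$
Since the number of atoms of $N$ over $B$ is Poisson with mean $c_\eta\,Leb(B)$, this gives $\mathbb{P}(E^c)=1-e^{-c_\eta Leb(B)}\le c_\eta\,Leb(A)\,\rho(m)$.

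Finally I would convert the coupling into the mixing bound. For $G\in\mathcal{F}_{-\infty}^0$ and $H\in\mathcal{F}_m^\infty$, independence of $\tilde H$ from $\sigma(G)$ yields $\mathbb{P}(\tilde H\cap G)=\mathbb{P}(\tilde H)\mathbb{P}(G)$. Because $H$ and $\tilde H$ differ only on $E^c$, the elementary estimates $|\mathbb{P}(H\cap G)-\mathbb{P}(\tilde H\cap G)|\le\mathbb{P}(E^c)$ and $|\mathbb{P}(H)-\mathbb{P}(\tilde H)|\le\mathbb{P}(E^c)$ hold, and a triangle inequality gives $|\mathbb{P}(H\cap G)-\mathbb{P}(H)\mathbb{P}(G)|\le 2\mathbb{P}(E^c)$. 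Taking the supremum over $G$ and $H$ produces $\alpha_m\le 2c_\eta\,Leb(A)\,\rho(m)=O(\rho(m))$, as claimed.

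I expect the main obstacle to be twofold: verifying that the overlapping past-tails collapse exactly to $A_m\cap\{s\le 0\}$, which is what pins the rate to $\rho(m)$ rather than to a cruder bound; and the bookkeeping in the final step, where the independence must be applied only to the genuinely independent truncated functional $\tilde H$, with all the discrepancy between $H$ and $\tilde H$ absorbed into the exceptional event $E$. Everything else---the measurability of $\mathcal{F}_{-\infty}^0$ with respect to $L|_{\{s\le 0\}}$, the finiteness of $c_\eta$, and the Poisson tail estimate---is routine given Assumptions~\ref{ass:LB} and~\ref{ass:trawl} and the independent-scattering property of $L$.
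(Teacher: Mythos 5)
Your proof is correct, and it rests on the same underlying insight as the paper's proof---all dependence between $\mathcal{F}_{-\infty}^0$ and $\mathcal{F}_{m}^{\infty}$ is channelled through the Poisson points lying in $A_m \cap A_0$, whose expected number is proportional to $Leb(A_m \cap A_0) = \rho(m)Leb(A)$---but the execution is genuinely different. The paper conditions on the no-point event $\{N_m = 0\}$ and then works through a chain of conditional-probability and Bayes-formula manipulations (the terms $D_{1,m}, D_{2,m}, E_{1,m}, E_{2,m}$), arriving at $\alpha_m \leq 5\tilde\nu Leb(A_m \cap A_0) + o\left(Leb(A_m \cap A_0)\right)$. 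You instead build an explicit coupling: the truncated future $\tilde X_t = L(A_t \cap \{s>0\})$ is exactly (unconditionally) independent of the past, agrees with $X_t$ for all $t \geq m$ simultaneously off the exceptional event $E^c$, and a triangle inequality gives $\alpha_m \leq 2\,\mathbb{P}(E^c) \leq 2c_\eta\, Leb(A)\,\rho(m)$. This avoids the conditional-independence bookkeeping entirely, and it makes explicit a point the paper leaves implicit: by monotonicity of $d$, the union of past-tails $\bigcup_{t\geq m}\left(A_t \cap \{s\leq 0\}\right)$ collapses to the single set $A_m \cap A_0$, which is exactly what allows one event $E$ to serve every future coordinate at once (necessary, since events in $\mathcal{F}_m^\infty$ may depend on infinitely many time points). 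Your identification $A_m \cap \{s \leq 0\} = A_m \cap A_0$ and the computation $Leb(B) = \rho(m)Leb(A)$ are both correct, as is the Poisson bound $\mathbb{P}(E^c) \leq c_\eta Leb(A)\rho(m)$ with $c_\eta \leq \|\eta\| < \infty$ under Assumption \ref{ass:LB}. One difference of scope: the paper's proof also contains a converse step, using Lemma \ref{lem:B}, showing $\alpha_m \geq O(Leb(A_m \cap A_0))$, i.e., that the rate $\rho(m)$ is sharp; your coupling argument establishes only the upper bound, but that is all the statement $\alpha_m = O(\rho(m))$ literally requires.
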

	
	\begin{remark}\label{rem:mixing}
		The autocorrelation functions of the Exponential (Example \ref{ex:Exp}) and IG (Example \ref{ex:IG}) IVT models imply that these models are in fact $\alpha$-mixing with an exponential decay rate. The autocorrelation function of the Gamma (Example \ref{ex:GAM}) IVT model implies that it is $\alpha$-mixing of size $-(H-\epsilon)$ for all $\epsilon>0$.
	\end{remark}

\begin{remark}
	As an alternative to the proof of Theorem \ref{th:mixing} provided in Appendix \ref{app:proofs}, we could first show that trawl processes are $\theta$-weakly dependent, which we do in the Supplementary Material, see Section \ref{sec:WeakDepGMM}. Then, as pointed out in \citet[p.~324]{CuratoStelzer2019} and shown in the discrete-time case in \cite{Doukhanetal2012}, for integer-valued trawl processes, the fact that they are $\theta$-weakly dependent, implies that they are strongly mixing with the coefficient as stated in Theorem \ref{th:mixing}.
\end{remark}

	\subsection{Estimation by composite likelihoods}\label{sec:est_cst}
	Due to the non-Markovianity of the IVT process, we face computational difficulties when attempting to estimate the model by maximizing the full likelihood,
	hence we propose to use the CL method instead.  
	The main idea behind the CL approach is to consider a, possibly misspecified, 
likelihood function which captures the salient features of the data at hand; here this means capturing the features of the L\'evy basis, controlling the marginal distribution, and those of the trawl function, controlling the dependence structure. We focus on  
	pairwise CLs. 
	
	\subsubsection{Pairwise composite likelihood}\label{sec:pairwise}
	Suppose we have $n \in \N$ observations of the IVT process $X,$ $x_1, \ldots,x_n,$ on an equidistant grid of size $\Delta = T/n$, for some $T>0$. Define the following likelihood function using pairs of observations $k$ periods apart,
	\begin{align}\label{eq:CLh}
		CL^{(k)}(\theta;x) := \prod_{i=1}^{n-k} f(x_{i+k},x_{i};\theta), \quad k\geq 1,
	\end{align}
	where $f(x_{i+k},x_{i};\theta)$ is the joint probability mass function (PMF) of the observations $x_i$ and $x_{i+k}$, parametrized by the vector $\theta$. From \eqref{eq:CLh}, we construct  the composite likelihood function
	\begin{align}\label{eq:CLH}
		\mathcal{L}_{CL} (\theta;x) := \mathcal{L}_{CL}^{(K)} (\theta;x) := \prod_{k =1}^{K} CL^{(k)}(\theta;x) =  \prod_{k =1}^{K} \prod_{i=1}^{n-k} f(x_{i+k},x_{i};\theta),
	\end{align}
	where $K \in \N$ denotes the number of pairwise likelihoods to include in the calculation of the composite likelihood function. 
	
	%
	
	The maximum composite likelihood (MCL) estimator of $\theta$ is defined as
	\begin{align}\label{eq:clmax}
		\hat{\theta}^{CL}: = \arg \max_{\theta \in \Theta} l_{CL} (\theta;x),
	\end{align}
	where $\Theta$ is the parameter space and $l_{CL} (\theta;x) := \log \mathcal{L}_{CL} (\theta;x)$ is the log composite likelihood function. To apply this estimator in practice, we need to be able to calculate the PMFs $f(x_{i+k},x_i)$. Section \ref{app:f_derive} in the Supplementary Material contains a discussion on how to do this in the general integer-valued case. In the count-valued case, $f(x_{i+k},x_i)$ takes a particularly simple form which is convenient in implementations. Indeed, letting $\PP_{\theta}(B)$ denote the probability of the event $B$ given parameters $\theta$, we have the following.
	\begin{proposition}\label{prop:positiveLB}
		Let the IVT process  $X$ be given by \eqref{eq:defIVT} and let Assumptions \ref{ass:LB} and \ref{ass:trawl} hold. Suppose further, that the L\'evy basis $L$ is non-negative, i.e.~$\eta(y) = 0$ for $y<0$. The joint PMF of two observations $x_{i+k}$ and $x_i$ is
		\begin{align}
			f(x_{i+k},x_{i};\theta) = \sum_{c=0}^{ \min\{x_{i+k},x_i\}}  &\PP_{\theta}\left( L(A_{(i+k)\Delta} \setminus A_{i\Delta}) = x_{i+k} - c\right) \PP_{\theta}\left( L(A_{i\Delta} \setminus A_{(i+k)\Delta}) = x_{i}-c\right) \nonumber \\
			&\cdot \PP_{\theta}\left(L(A_{(i+k)\Delta} \cap A_{i\Delta}) = c\right). \label{eq:finsum}
		\end{align}
	\end{proposition}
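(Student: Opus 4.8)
The plan is to reduce the bivariate probability mass function to a convolution by exploiting the additivity of the random measure $L$ together with the independent scattering property of a L\'evy basis. Writing $t_1 = i\Delta$ and $t_2 = (i+k)\Delta$, I would first partition the region on which $L$ is evaluated into three pairwise disjoint Borel sets: the overlap $A_{t_2} \cap A_{t_1}$ and the two differences $A_{t_2} \setminus A_{t_1}$ and $A_{t_1} \setminus A_{t_2}$. Since $L(\cdot)$ is additive on disjoint sets, this gives $X_{t_2} = L(A_{t_2}) = L(A_{t_2} \setminus A_{t_1}) + L(A_{t_2} \cap A_{t_1})$ and $X_{t_1} = L(A_{t_1}) = L(A_{t_1} \setminus A_{t_2}) + L(A_{t_2} \cap A_{t_1})$, so that the two observations share precisely the common component $W := L(A_{t_2} \cap A_{t_1})$.

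The second step is to invoke independence. Because $L$ is independently scattered by construction, its values on the three disjoint sets above are mutually independent; writing $U := L(A_{t_2} \setminus A_{t_1})$ and $V := L(A_{t_1} \setminus A_{t_2})$, the random variables $U$, $V$ and $W$ are therefore independent, and under the non-negativity hypothesis $\eta(y)=0$ for $y<0$ they are all non-negative and integer-valued. Conditioning on $W$ and using independence then yields
\begin{align*}
f(x_{i+k},x_i;\theta) = \PP_\theta(X_{t_2}=x_{i+k},\, X_{t_1}=x_i) = \sum_{c} \PP_\theta(U = x_{i+k}-c)\, \PP_\theta(V = x_i - c)\, \PP_\theta(W = c),
\end{align*}
which is the asserted expression once $U$, $V$ and $W$ are re-expressed as $L$ evaluated over the corresponding sets.

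The only point requiring genuine care is the range of summation, and this is where non-negativity does the essential work. Since $U \geq 0$ and $V \geq 0$ almost surely, each summand vanishes unless $x_{i+k} - c \geq 0$ and $x_i - c \geq 0$, while $W \geq 0$ forces $c \geq 0$; together these truncate the convolution to $0 \leq c \leq \min\{x_{i+k},x_i\}$, giving the finite sum in \eqref{eq:finsum}. I do not expect a substantive obstacle: the result is a direct consequence of the defining properties of the L\'evy basis, and the main thing to state carefully is that independent scattering is exactly what turns a general integer convolution over $\Z$ into the non-negative truncated sum displayed above.
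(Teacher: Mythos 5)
Your proposal is correct and follows essentially the same route as the paper's proof: decompose the two trawl sets into the three disjoint pieces $A_{(i+k)\Delta}\setminus A_{i\Delta}$, $A_{i\Delta}\setminus A_{(i+k)\Delta}$, and $A_{(i+k)\Delta}\cap A_{i\Delta}$, condition on the value of $L$ on the overlap, factor using the independent-scattering property, and invoke non-negativity to truncate the sum to $0\leq c\leq \min\{x_{i+k},x_i\}$. The only cosmetic difference is that you justify the truncation via the vanishing of $\PP_{\theta}(L(A_{(i+k)\Delta}\setminus A_{i\Delta})=x_{i+k}-c)$ and $\PP_{\theta}(L(A_{i\Delta}\setminus A_{(i+k)\Delta})=x_i-c)$ for $c$ outside the range, whereas the paper bounds the overlap count directly; both arguments are valid and equivalent.
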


	The probabilities $\PP_{\theta}(\cdot)$ in \eqref{eq:finsum} can be expressed as a function of the parameters of the L\'evy seed and the trawl function. Indeed, for a Borel set $B \in \mathcal{B}([0,1]\times \mathbb{R})$ we have
	$\PP_{\theta}(L(B) = x) = \PP_{\theta}(L'_{Leb(B)} = x)$, 
	where $L'_t$ is a L\'evy process with $L_1' \sim L'$, and $L'$ being the L\'evy seed associated to $X$, see Remark \ref{rem:Lseed}. 
	Also, $Leb( A_{(i+k)\Delta} \cap A_{i\Delta}) = \int_{-\infty}^{-k\Delta} d(s) ds$, 
	and
	$Leb( A_{(i+k)\Delta} \setminus A_{i\Delta}) = Leb( A_{i\Delta} \setminus A_{(i+k)\Delta})= Leb(A) - Leb( A_{(i+k)\Delta} \cap A_{i\Delta}) = \int_{-k \Delta}^0 d(s) ds$. 
	Plugging these into \eqref{eq:finsum} we obtain the pairwise likelihoods, $f(x_{i+k},x_i;\theta)$, and thus the CL function, $\mathcal{L}_{CL}(\theta;x)$, as a function of $\theta$. 
	
	\begin{example}[Poisson-Exponential IVT process]
		Let $L' \sim \textnormal{Poisson}(\nu)$ and $d(s) = \exp (\lambda s)$, $s \leq 0$, for some $\nu, \lambda > 0$. Since $L' \sim \textnormal{Poisson}(\nu)$ we have $L(B)  \sim \textnormal{Poisson}(Leb(B)\nu)$ for Borel sets $B$ and hence
		$\PP_{\theta}\left( L(B) = x \right) = (\nu Leb(B))^{x} e^{-\nu Leb(B)} / x!$, for $x \geq 0$.
		Further, it is not difficult to show that
		$Leb( A_{(i+k)\Delta} \cap A_{i\Delta}) = \lambda^{-1} e^{-\lambda k\Delta}$ and $Leb( A_{(i+k)\Delta} \setminus A_{i\Delta})  =\lambda^{-1}(1 - e^{-\lambda k \Delta})$.
		Using this, the probabilities in  \eqref{eq:finsum} can be expressed as functions of $\nu$ and $\lambda$ and hence the maximization \eqref{eq:clmax} can be carried out using standard numerical methods.
	\end{example}

	

	\subsubsection{Asymptotic theory}\label{sec:asym}
	Because we are only considering dependencies across pairs of observations and not their dependence with the remaining observations, the pairwise composite likelihood function  \eqref{eq:CLH} can be viewed as a misspecified likelihood. Nonetheless, since the individual PMFs $f(x_{i+k},x_i;\theta)$ in \eqref{eq:CLH} are proper bivariate PMFs, the \emph{composite score function} $\partial l_{CL}(\theta;x) / \partial \theta$ provides unbiased estimating equations and, under certain regularity assumptions, the usual asymptotic results will apply \citep{CN2004}. However, as pointed out in \cite{VRF2011}, formally proving the results in the time series case requires more rigorous treatment. The following two theorems provide the details on the asymptotic theory in the setup of this paper. 
	We will work under the following identification assumption.
	\begin{assumption}\label{ass:identification}
		For all $\theta \in \Theta$, it holds that
		\begin{align}\label{eq:identify}
			\theta \neq \theta_0  \Rightarrow  \sum_{k=1}^K f(x_1,x_2;\theta) \neq \sum_{k=1}^K f(x_1,x_2;\theta_0)
		\end{align}	
		for some $x_1,x_2\in \Z$.	 
	\end{assumption}
	First, we have a Law of Large Numbers.
	\begin{theorem}\label{th:LLN}
		Fix $K \in \N$, let the IVT process  $X$ be given by \eqref{eq:defIVT}, and let Assumptions \ref{ass:LB}--\ref{ass:trawl} and  \ref{ass:identification} hold. Then
		$\hat{\theta}_{CL}  \stackrel{\PP}{\rightarrow} \theta_0$, as $n\rightarrow \infty$.
	\end{theorem}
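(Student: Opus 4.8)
The plan is to treat $\hat{\theta}_{CL}$ as a standard extremum (M-)estimator and to invoke the classical consistency argument for such estimators (as in, e.g., Newey--McFadden or van der Vaart): if the normalized objective function converges uniformly in probability to a deterministic limit that is uniquely maximized at $\theta_0$, and $\Theta$ is compact with the objective continuous, then the maximizer is consistent. Writing $Q_n(\theta) := n^{-1} l_{CL}(\theta;x) = n^{-1}\sum_{k=1}^K \sum_{i=1}^{n-k} \log f(x_{i+k},x_i;\theta)$, I would first establish pointwise convergence of $Q_n$, then identification of $\theta_0$ as the unique maximizer of the limit, and finally upgrade pointwise to uniform convergence.

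For the pointwise limit, the key observation is that, for each fixed $k$, the sequence $\{\log f(X_{i+k},X_i;\theta)\}_{i}$ is a measurable function of the stationary process $X$ and is therefore itself stationary. By Theorem \ref{th:mixing}, $X$ is $\alpha$-mixing and hence ergodic, so Birkhoff's ergodic theorem yields $(n-k)^{-1}\sum_{i=1}^{n-k}\log f(X_{i+k},X_i;\theta)\to \E[\log f(X_{1+k},X_1;\theta)]$ almost surely, where the expectation is taken under $\theta_0$. Summing over the finitely many $k \in \{1,\dots,K\}$ gives $Q_n(\theta)\stackrel{\PP}{\rightarrow} Q(\theta) := \sum_{k=1}^K \E[\log f(X_{1+k},X_1;\theta)]$ for each fixed $\theta$.

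Identification follows from the information (Kullback--Leibler) inequality applied to each pairwise PMF. Since each $f(\cdot,\cdot;\theta_0)$ is the genuine joint PMF of the pair $(X_{1+k},X_1)$ under $\theta_0$, the Gibbs inequality gives $\E[\log f(X_{1+k},X_1;\theta) - \log f(X_{1+k},X_1;\theta_0)] \le 0$, with equality if and only if the two PMFs agree on the support. Summing over $k$, we obtain $Q(\theta)\le Q(\theta_0)$, with equality only if every pairwise PMF at $\theta$ coincides with its counterpart at $\theta_0$; Assumption \ref{ass:identification} then rules this out unless $\theta = \theta_0$, so $\theta_0$ is the unique maximizer of $Q$.

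To pass from pointwise to uniform convergence---the genuinely delicate step---I would impose the usual regularity conditions: $\Theta$ compact, $\theta\mapsto \log f(x_{i+k},x_i;\theta)$ continuous, together with a dominance condition $\E[\sup_{\theta\in\Theta}|\log f(X_{1+k},X_1;\theta)|] < \infty$. Under these, the uniform ergodic law of large numbers gives $\sup_{\theta\in\Theta}|Q_n(\theta)-Q(\theta)|\stackrel{\PP}{\rightarrow}0$, and the extremum-estimator consistency theorem then delivers $\hat{\theta}_{CL}\stackrel{\PP}{\rightarrow}\theta_0$. The main obstacle is verifying the dominance condition: because $f$ is a PMF supported on $\Z$ and can become arbitrarily small for large counts, $\log f$ is potentially unbounded, so one must control the tails of the pairwise PMF uniformly in $\theta$ and relate them to the moments of $X$. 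Here the finiteness $\|\eta\|<\infty$ in Assumption \ref{ass:LB}, which guarantees a finite second moment of $X_t$ and hence of the counts entering each pair, is what makes the required integrability tractable.
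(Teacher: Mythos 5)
Your proposal follows essentially the same route as the paper's proof: pointwise convergence of $Q_n$ to $Q$ via stationarity and ergodicity, identification of $\theta_0$ as the unique maximizer via the information inequality together with Assumption \ref{ass:identification}, and then the standard extremum-estimator consistency machinery (the paper simply cites Lemma 2.2 of Newey--McFadden and Theorems 4.1 and 4.3 of Wooldridge, which package the compactness, continuity, and uniform-convergence conditions you spell out). If anything, you are more explicit than the paper about the dominance condition $\E[\sup_{\theta\in\Theta}|\log f(X_{1+k},X_1;\theta)|]<\infty$ needed for the uniform law of large numbers, a verification the paper leaves implicit in its citations.
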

	\begin{remark}\label{rem:identify} 
		As is often the case, the identification condition in Assumption \ref{ass:identification} can be difficult to check in practice. 
		For the IVT processes considered in this paper and presented in the examples above, our numerical experiments indicate that requiring $K \geq \textnormal{dim}(\theta_d)$, where $ \textnormal{dim}(\theta_d)$ denotes the dimension of the parameters controlling the trawl function $d$, results in $\theta_0$ being identified. A similar requirement was suggested in \cite{DY2011}.
	\end{remark}
	
	We also impose a standard assumption on the parameter space.
	\begin{assumption}\label{ass:theta}
		The set $\Theta$ is compact such that the true parameter vector, $\theta_0$, lies in the interior of $\Theta$. 
	\end{assumption}
	
	It turns out that the asymptotic behaviour of the MCL estimator differs in the short- and long-memory cases. The former is captured by the following assumption.
	\begin{assumption}[Short memory]\label{ass:SM}
		The autocorrelation function of the IVT process satisfies $\lim_{n\rightarrow \infty}\rho(n)n=0$.
	\end{assumption}
	\begin{remark}
		Assumption \ref{ass:SM} is satisfied by IVT processes with the Exponential trawl (Example \ref{ex:Exp}), the Inverse Gaussian trawl (Example \ref{ex:IG}), and the Gamma trawl (Example \ref{ex:GAM}) with $H>1$.
	\end{remark}
	Under this assumption, the mixing property of IVT processes, presented in Theorem \ref{th:mixing}, implies that we can invoke a Central Limit Theorem for triangular arrays of mixing processes \cite[][Corollary 24.7]{Davidson1994} to get the following result.
	\begin{theorem}\label{th:CLT}
		Let the conditions from Theorem \ref{th:LLN} hold, together with Assumptions \ref{ass:theta}--\ref{ass:SM}. Then,
		\begin{align*}
			\sqrt{n} (\hat{\theta}^{CL} - \theta_0) \stackrel{d}{\rightarrow} N\left(0,G(\theta_0)^{-1}\right), \quad n\rightarrow \infty,
		\end{align*}
		where $G(\theta_0)$ is the \emph{Godambe information matrix} \citep{godambe60} matrix with inverse
		$G(\theta_0)^{-1} = H(\theta_0)^{-1} V(\theta_0) H(\theta_0)^{-1}$,
		where
		\begin{align*}
			H(\theta_0) =& -   \sum_{k =1}^K  \mathbb{E}\left[ \frac{\partial^2}{\partial \theta' \partial \theta}  \log f(X_{k\Delta},X_{0};\theta)|_{\theta = \theta_0} \right], \quad \mathrm{and}
			\\
			V(\theta_0) =& \sum_{k=1}^K Var \left( \frac{\partial}{\partial \theta}  \log f(X_{k\Delta},X_{0};\theta)|_{\theta = \theta_0} \right) \\
			&+ 2 \sum_{k=1}^K \sum_{k'=1}^{K}  \sum_{i=1}^{\infty} Cov\left( \frac{\partial}{\partial \theta} \log f(X_{k\Delta},X_{0};\theta)|_{\theta = \theta_0}, \frac{\partial}{\partial \theta'}\log f(X_{(i+k')\Delta},X_{i \Delta};\theta)|_{\theta = \theta_0} \right).
		\end{align*}
		Further, the infinite sum in the expression for $V(\theta_0)$ converges.
	\end{theorem}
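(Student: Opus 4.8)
The plan is to treat $\hat{\theta}^{CL}$ as a standard M-estimator and follow the classical three-step route — first-order condition, mean-value expansion, Slutsky — where the only genuinely probabilistic inputs are a law of large numbers for the composite Hessian and a central limit theorem for the composite score, both of which I would extract from the $\alpha$-mixing property of Theorem \ref{th:mixing}. By Assumption \ref{ass:theta} and the consistency of Theorem \ref{th:LLN}, $\hat{\theta}^{CL}$ lies in the interior of $\Theta$ with probability approaching one, so the first-order condition $\partial_\theta l_{CL}(\hat{\theta}^{CL};x)=0$ holds eventually. A mean-value expansion of the composite score about $\theta_0$ then gives
\[
0=\tfrac{1}{\sqrt{n}}\,\partial_\theta l_{CL}(\theta_0;x)+\Big(\tfrac{1}{n}\,\partial^2_{\theta\theta'}l_{CL}(\bar{\theta};x)\Big)\sqrt{n}(\hat{\theta}^{CL}-\theta_0),
\]
for some $\bar{\theta}$ on the segment between $\hat{\theta}^{CL}$ and $\theta_0$. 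The essential structural fact, already noted before the statement, is that each $f(x_{i+k},x_i;\theta)$ is a proper bivariate PMF, so the individual contributions $s_{k,i}:=\partial_\theta\log f(X_{(i+k)\Delta},X_{i\Delta};\theta_0)$ are mean-zero; this is what turns the normalized score into a sum of centered terms to which a dependent CLT can be applied.

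For the Hessian factor I would regroup the double sum so that $l_{CL}$ and its derivatives become partial sums of the stationary sequence $g_i(\theta):=\sum_{k=1}^K\partial_\theta\log f(X_{(i+k)\Delta},X_{i\Delta};\theta)$ and its derivatives. Since each $g_i$ is a measurable function of the $K+1$ consecutive coordinates $X_{i\Delta},\dots,X_{(i+K)\Delta}$, it inherits the $\alpha$-mixing, and hence the ergodicity, of $X$ from Theorem \ref{th:mixing}. A uniform law of large numbers over a neighbourhood of $\theta_0$ — using stationarity, ergodicity, and a standard dominance condition on the second derivatives — together with $\bar{\theta}\stackrel{p}{\to}\theta_0$ from Theorem \ref{th:LLN} and continuity of $H(\cdot)$, then yields $\tfrac{1}{n}\partial^2_{\theta\theta'}l_{CL}(\bar{\theta};x)\stackrel{p}{\to}-H(\theta_0)$, with $H(\theta_0)$ invertible by Assumption \ref{ass:identification} and the regularity of the model.

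The crux is the CLT for the normalized score $\tfrac{1}{\sqrt{n}}\sum_i g_i(\theta_0)$, which I would obtain from the triangular-array CLT for mixing sequences in \citet[Corollary 24.7]{Davidson1994}. Checking its hypotheses is where the real work lies: (i) mean zero, already noted; (ii) a moment bound $\sup_i\|g_i(\theta_0)\|_r<\infty$ for some $r>2$, which requires controlling moments of the score beyond the finite-variance guarantee of Assumption \ref{ass:LB} and is the most delicate regularity point, since the count observations have unbounded support; and (iii) a mixing size $-r/(r-2)$, which is where Assumption \ref{ass:SM} enters, because $\alpha_m=O(\rho(m))$ and the short-memory decay $\rho(n)n\to0$ force the mixing coefficients to decay fast enough to meet the size requirement once $r$ is taken large (for the Gamma trawl this is exactly the regime $H>1$ of Remark \ref{rem:mixing}). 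The limiting covariance $V(\theta_0)$ is the long-run variance of $(g_i(\theta_0))$; expanding $g_i=\sum_k s_{k,i}$ and using stationarity reproduces the variance-plus-autocovariance decomposition in the statement.

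Finally, the convergence of the infinite series in $V(\theta_0)$ I would establish via a covariance inequality for $\alpha$-mixing variables, $|Cov(s_{k,0},s_{k',i})|\le C\,\alpha_{m(i)}^{1-2/r}$ with $m(i)$ the time gap separating the two pairs, so that absolute summability reduces to $\sum_m\alpha_m^{1-2/r}<\infty$; under Theorem \ref{th:mixing} and Assumption \ref{ass:SM} this holds for $r$ large enough (again $H>1$ in the Gamma case), which simultaneously justifies the $\sqrt{n}$ normalization. Combining the Hessian limit with the score CLT through Slutsky's theorem then gives $\sqrt{n}(\hat{\theta}^{CL}-\theta_0)\stackrel{d}{\to}H(\theta_0)^{-1}N(0,V(\theta_0))=N(0,H(\theta_0)^{-1}V(\theta_0)H(\theta_0)^{-1})$, as claimed. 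I expect the main obstacle throughout to be the quantitative matching of the available moment order $r$ to both the mixing size in step three and the summability in the last step; the remaining arguments are the routine M-estimation machinery.
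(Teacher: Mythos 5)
Your skeleton---first-order condition, mean-value expansion, Hessian convergence via stationarity and ergodicity, a score CLT from \citet[Corollary 24.7]{Davidson1994}, then Slutsky---is exactly the paper's. The genuine gap is in how you control the long-run variance of the score. You propose the generic route for $\alpha$-mixing sequences: Davydov's covariance inequality $|Cov(s_{k,0},s_{k',i})|\le C\,\alpha_{m(i)}^{1-2/r}\|s_{k,0}\|_r\|s_{k',i}\|_r$ together with summability of $\sum_m\alpha_m^{1-2/r}$ ``for $r$ large enough''. This step cannot be closed under the theorem's stated hypotheses. Theorem \ref{th:mixing} gives $\alpha_m=O(\rho(m))$, and Assumption \ref{ass:SM} gives only $\rho(m)m\to 0$, i.e.~$\alpha_m=o(m^{-1})$; this does \emph{not} imply $\alpha_m=O(m^{-\phi})$ for any $\phi>1$ (take $\rho(m)\asymp 1/(m\log m)$). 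Since $1-2/r<1$ for every finite $r$, the series $\sum_m\alpha_m^{1-2/r}$ then diverges no matter how large $r$ is taken, so both your summability step and your verification of the mixing-size condition $-r/(r-2)$ fail. The loss is quantitative but fatal even in concrete cases where the theorem applies: for the Gamma trawl with $H>1$ one has $\alpha_m=O(m^{-H})$, and your route needs $H(1-2/r)>1$, i.e.~$r>2H/(H-1)$ moments of the score---arbitrarily many as $H\downarrow 1$---whereas Assumption \ref{ass:LB} only guarantees second moments of the L\'evy seed, and the statement of $V(\theta_0)$ itself only involves second moments of the score.

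The paper avoids the generic inequality entirely, and this is the idea you are missing. It writes the covariance of two score contributions exactly as a sum over joint PMFs, so that everything reduces to the quantity $f_{i}(x_1,x_2,x_3,x_4;k,k',\theta_0)\big/\bigl(f_k(x_1,x_2;\theta_0)f_{k'}(x_3,x_4;\theta_0)\bigr)-1$, and then proves a bespoke result (Lemma \ref{lem:B}, via decomposing the four trawl sets $A_0,A_{k\Delta},A_{i\Delta},A_{(i+k')\Delta}$ into disjoint pieces and using the compound-Poisson representation of the L\'evy basis) showing this quantity is $O\bigl(Leb(A_{i\Delta}\cap A_0)\bigr)=O(\rho(i\Delta))$. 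The covariances therefore decay at the rate $\rho(i\Delta)$ itself---no exponent degradation, no moment inflation---which is precisely what lets the short-memory condition be used directly to get $n^{-1}Var(s_n(\theta_0))\to V(\theta_0)$; Davidson's corollary is then invoked only through the reduction ``mean zero plus variance convergence suffices''. (The paper's own footnote concedes that the corollary's condition (c$'$) needs mixing of size greater than one, so it too leans on the stronger decay enjoyed by the concrete examples, but its variance computation does not pay the Davydov penalty.) To complete your argument you would either have to prove an analogue of Lemma \ref{lem:B}---which is the real content of the paper's proof---or strengthen the hypotheses to polynomial mixing of size greater than one plus uniform $L_r$-boundedness of the score, which proves a strictly weaker theorem than the one stated.
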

	
	Theorem \ref{th:CLT} implies that feasible inference can be conducted using an estimate of the inverse of the Godambe information matrix
	$\hat G(\hat \theta^{CL})^{-1} = \hat H(\hat \theta^{CL})^{-1} \hat V(\hat \theta^{CL}) \hat H(\hat \theta^{CL})^{-1}$,
	where $\hat \theta^{CL}$ is the MCL estimate from \eqref{eq:clmax}. Note that while the straight-forward estimator $\hat H(\hat \theta^{CL} )= -n^{-1}\frac{\partial}{\partial \theta \partial\theta'}l_{CL}(\hat \theta^{CL};x)$ is consistent for $H(\theta)$ due to the stationarity and ergodicity of the IVT process, $\hat V(\hat \theta^{CL})$ is more difficult to obtain, since the obvious candidate $n^{-1} \frac{\partial}{\partial\theta} l_{CL}(\theta;x) \frac{\partial}{\partial\theta} l_{CL}(\theta;x)'$ vanishes at $\theta = \hat \theta^{CL}$, a fact also remarked in \cite{VV2005}. While it is possible to estimate $V(\theta_0)$ using a Newey-West-type kernel estimator \citep{NW1987}, we obtained more precise results using a simulation-based approach to estimating $V(\theta_0)$. The details of both approaches are provided in the Supplementary Material, Section \ref{app:stdErr}.\footnote{It is also possible to approximate the standard error of $\hat \theta^{CL}$ using a standard parametric bootstrap approach. However, as we discuss in Section \ref{app:B2} of the Supplementary Material, this solution is more computationally expensive than the two alternative approaches suggested here.}

	\subsubsection{Asymptotic theory in the long memory case}
	While the consistency result in Theorem \ref{th:LLN} applies for all IVT processes satisfying Assumptions \ref{ass:LB}--\ref{ass:trawl} and \ref{ass:identification}, Assumption \ref{ass:SM}, required in the central limit result in Theorem \ref{th:CLT}, excludes IVT processes with long memory, e.g. those with autocorrelation function adhering to $\rho(h) = O(h^{-H})$ for $H \in (0,1]$. As mentioned in Remark \ref{rem:mixing}, this is for instance the case for  the Gamma trawl function (Example \ref{ex:GAM}) with $H \in (0,1]$.

	Although a long memory CLT as such eludes us, we can say some things about the asymptotic behaviour of the MCL estimator $\hat{\theta}^{CL}$ in the long memory case. For instance, the convergence rate is likely slower than $\sqrt{n}$, as the following result suggests.
	\begin{theorem}\label{th:aH}
		Let the conditions from Theorem \ref{th:LLN} hold and assume that the autocorrelation function of the IVT process satisfies $\rho(h) = L_{\infty}(h) h^{-H}$ for some $H \in (0,1]$, where $L_{\infty}$ is a function which is slowly varying at infinity, i.e.~for all $a>0$ it holds that $\lim_{x \rightarrow \infty} \frac{ L_{\infty}(ax)}{ L_{\infty}(x)} = 1$. Then,
		\begin{enumerate}
			\item[(i)] For all $\epsilon > 0$,
			$n^{H/2-\epsilon} (\hat{\theta}^{CL} - \theta_0)  \stackrel{\PP}{\rightarrow} 0$, as $n\rightarrow \infty$.
			\item[(ii)] Let $J = \textnormal{dim}(\theta_0)$ be the dimension of $\theta_0$ and denote by $\hat{\theta}^{CL}_i$ and $\theta_{0,i}$ the $i$th component of the vectors $\hat{\theta}^{CL}$ and $\theta_0$, respectively. Then, for  $i = 1, 2, \ldots, J$, we have that for  all $\epsilon > 0$,
			$Var\left( n^{H/2 +\epsilon} (\hat{\theta}^{CL}_i - \theta_{0,i}) \right) \rightarrow \infty$, as 
			$n\rightarrow \infty$.
		\end{enumerate}
	\end{theorem}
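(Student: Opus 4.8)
The plan is to reduce the behaviour of $\hat\theta^{CL}-\theta_0$ to that of the composite score evaluated at the truth, via the usual $Z$-estimator linearisation, and then to quantify the growth of the score through the long-memory autocovariances. First I would write the first-order condition $\nabla_\theta l_{CL}(\hat\theta^{CL};x)=0$ and expand around $\theta_0$ by the mean value theorem, obtaining
\[
\hat\theta^{CL}-\theta_0 = -\Big(\tfrac1n\nabla_\theta^2 l_{CL}(\bar\theta;x)\Big)^{-1}\,\tfrac1n\nabla_\theta l_{CL}(\theta_0;x),
\]
where $\bar\theta$ lies on the segment joining $\hat\theta^{CL}$ and $\theta_0$. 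By consistency (Theorem \ref{th:LLN}) together with the stationarity and ergodicity of the IVT process, $\tfrac1n\nabla_\theta^2 l_{CL}(\bar\theta;x)\stackrel{\PP}{\to}-H(\theta_0)$, an invertible matrix (as in Theorem \ref{th:CLT}); crucially, ergodicity does not require Assumption \ref{ass:SM} and so remains available in the long-memory regime. Consequently the stochastic order of $\hat\theta^{CL}-\theta_0$ coincides with that of $\tfrac1n\nabla_\theta l_{CL}(\theta_0;x)=\tfrac1n\sum_{k=1}^K\sum_{i=1}^{n-k}Y_i^{(k)}$, where $Y_i^{(k)}:=\nabla_\theta\log f(X_{(i+k)\Delta},X_{i\Delta};\theta_0)$ is a mean-zero stationary sequence, since the composite score is an unbiased estimating equation.

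The heart of the argument is the growth rate of $\mathrm{Var}\big(\sum_i Y_i^{(k)}\big)=\sum_{i,j}\mathrm{Cov}(Y_i^{(k)},Y_j^{(k)})$. The key lemma I would establish is that $\mathrm{Cov}(Y_0^{(k)},Y_m^{(k')})=O(\rho(m))$ as $m\to\infty$, with a nonvanishing leading coefficient. This exploits the structure behind Proposition \ref{prop:positiveLB}: because $L$ is independently scattered, the pairs $(X_{i\Delta},X_{(i+k)\Delta})$ and $(X_{(i+m)\Delta},X_{(i+m+k')\Delta})$ depend only through the L\'evy mass on the overlap of their trawl sets, a region of Lebesgue measure $\asymp\rho(m)$. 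Conditioning on the number of shared points, which is zero with probability $1-O(\rho(m))$, renders the two pairs independent, whence the covariance is $O(\rho(m))$, and a first-order expansion in the overlap measure isolates a leading term proportional to $\rho(m)$. Feeding $\rho(m)=L_\infty(m)m^{-H}$ into $\sum_{i,j=1}^n\rho(|i-j|)$ and applying Karamata's theorem gives $n^{-(2-H)}\tilde L(n)^{-1}\,\mathrm{Var}\big(\nabla_\theta l_{CL}(\theta_0;x)\big)\to\Sigma$ for a matrix $\Sigma$ and a function $\tilde L$ slowly varying at infinity (for $H=1$ this persists with an additional slowly varying logarithmic correction). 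Hence $\mathrm{Var}\big(\tfrac1n\nabla_\theta l_{CL}(\theta_0;x)\big)\asymp n^{-H}\tilde L(n)$.

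For part (i) the upper bound suffices: $\mathrm{Var}\big(n^{H/2-\epsilon}\tfrac1n\nabla_\theta l_{CL}(\theta_0;x)\big)=O\big(n^{-2\epsilon}\tilde L(n)\big)\to0$, since $n^{-2\epsilon}$ dominates any slowly varying factor; being mean-zero, the scaled score converges to $0$ in $L^2$, hence in probability, and Slutsky's theorem together with the Hessian limit yields $n^{H/2-\epsilon}(\hat\theta^{CL}-\theta_0)\stackrel{\PP}{\to}0$. For part (ii) I would use the lower bound: compactness of $\Theta$ (Assumption \ref{ass:theta}) makes $\mathrm{Var}(\hat\theta^{CL}_i-\theta_{0,i})$ finite, and combining the linearisation with the lower bound on the score variance and an $L^2$ control of the remainder (again using boundedness) gives $\mathrm{Var}(\hat\theta^{CL}_i-\theta_{0,i})\gtrsim n^{-H}\tilde L(n)$; therefore $\mathrm{Var}\big(n^{H/2+\epsilon}(\hat\theta^{CL}_i-\theta_{0,i})\big)=n^{H+2\epsilon}\mathrm{Var}(\hat\theta^{CL}_i-\theta_{0,i})\gtrsim n^{2\epsilon}\tilde L(n)\to\infty$.

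The main obstacle is the covariance lemma, especially its use in the two directions. The upper bound $\mathrm{Cov}(Y_0^{(k)},Y_m^{(k')})=O(\rho(m))$ needed for (i) should follow cleanly from the independent-scattering decomposition. The delicate part is the lower bound required for (ii): one must verify that the leading $\rho(m)$-term does not cancel, the integer-valued analogue of a functional having Hermite rank one, so that the coordinatewise limiting variance $[H(\theta_0)^{-1}\Sigma H(\theta_0)^{-1}]_{ii}$ is strictly positive, and one must control the linearisation remainder in $L^2$ rather than merely in probability, which is precisely where the compactness of $\Theta$ becomes essential.
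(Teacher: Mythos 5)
Your proposal follows essentially the same route as the paper: the same mean-value linearisation with the ergodic Hessian limit, the same key covariance bound (your lemma that $\mathrm{Cov}(Y_0^{(k)},Y_m^{(k')})=O(\rho(m))$ with a non-vanishing leading coefficient is precisely the paper's Lemma \ref{lem:B}, proved there by the same shared-points/independently-scattered conditioning you sketch), the same Karamata-type summation giving $\mathrm{Var}\left(s_n(\theta_0)\right)\asymp n^{2-H}$ up to slowly varying factors, and the same Potter-bound argument in both directions for parts (i) and (ii). The delicate points you flag---non-cancellation of the leading covariance term needed for the lower bound in (ii), and transferring the score-variance blow-up to the estimator through the random Hessian---are handled no more rigorously in the paper, which simply invokes Lemma \ref{lem:B} for the exact order and concludes, so your write-up is, if anything, more candid about where the argument is thin.
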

	Theorem \ref{th:aH}(i) implies that the convergence rate of  $\hat{\theta}^{CL}$ cannot be slower than  $n^{H/2}$ for $H \in (0,1]$. Further, Theorem \ref{th:aH}(ii) implies that  if the convergence rate is faster than $n^{H/2}$ it must necessarily be the case that the limiting random variable has an infinite variance. We conjecture that $n^{H/(H+1)} (\hat{\theta}^{CL} - \theta_0)  \stackrel{(d)}{\rightarrow} M Y_{1+H}$ for a matrix $M$, where $Y_{\alpha}$ is an $\alpha$-stable random vector. Note that, for $H \in (0,1)$ it is the case $H/(1+H) \in (H/2,1/2)$, meaning that the conjectured convergence rate is faster than $n^{H/2}$, but slower that $\sqrt{n}$. Our reason for the conjecture has its roots in Theorem \ref{th:conjecture} below. First, we introduce a technical assumption on the trawl function $d$, ensuring that we are in the long memory case.
	\begin{assumption}[Long memory]\label{ass:LM}
		Assume that $H \in (0,1)$ and
		\begin{enumerate}
			\item
			$d(-x) = g_1(x) x^{-H-1}$, $x>0$, where $g_1$ is a function that is slowly varying at infinity.
			\item
			$d'(-x) = g_2(x) x^{-H-2}$, $x>0$, where $g_2$ is a function that is slowly varying at infinity.
		\end{enumerate}
	\end{assumption}
	\begin{remark}
		The key condition in Theorem \ref{th:aH}, namely  $\rho(h) = L_{\infty}(h) h^{-H}$ for some $H \in (0,1)$, is implied by Assumption \ref{ass:LM}.
	\end{remark}
	\begin{remark}
		The Gamma trawl function (Example \ref{ex:GAM}) fulfils Assumption \ref{ass:LM} with $g_1(x) = \left( x^{-1} + \alpha^{-1}\right)^{-H-1}$ and  $g_2(x) = \frac{H+1}{\alpha} \left( x^{-1} + \alpha^{-1}\right)^{-H-2}$.
	\end{remark}
	\begin{theorem}\label{th:conjecture}
		Suppose $L' \sim Poi(\nu)$ and that the parameters of the trawl function $d$ are known. Let the conditions from Theorem \ref{th:LLN} hold, along with Assumptions  \ref{ass:theta} and \ref{ass:LM}. 
		Then,
		\begin{align*}
			n^{H/(1+H)} (\hat{\nu}^{CL} - \nu_0 - R_n)  \stackrel{d}{\rightarrow} H(\nu_0)^{-1}\nu_0^{-1} Y_{1+H}, \quad n\rightarrow \infty,
		\end{align*}
		where $H(\nu_0)$ is given as in Theorem \ref{th:CLT},  $Y_\alpha$ is an $\alpha$-stable random variable with characteristic function
		\begin{align} \label{eq:chfct}
			\phi_{Y_\alpha}(u) := \mathbb{E}[\exp(i u Y_\alpha)] = \exp\left(c |u|^{\alpha}  \Gamma(2-\alpha)\ \left( \cos\left(\frac{\pi \alpha}{2}\right) - i \cdot \textnormal{sgn}(u) \sin\left(\frac{\pi \alpha}{2}\right) \right) \right), \quad u \in \R,
		\end{align}
		and where $R_n$ is given by
		\begin{align*}
			R_n = H(\nu_0)^{-1}\nu_0^{-1} n^{-1} S_n(U),
		\end{align*}
		with $S_n(U):= \sum_{i=1}^n \left(U_i -\mathbb{E}[U_i] \right)$ denoting the de-meaned partial sum of the sequence $U = \{U_i\}_{i=1}^n$, where $U_i := \sum_{k=1}^K g(X_{(i+k)\Delta},X_{i\Delta})$ and $g(X_{(i+k)\Delta},X_{i\Delta}) :=  \mathbb{E}[L(A_{(i+k)\Delta}\setminus A_{i\Delta})|X_{(i+k)\Delta},X_{i\Delta}]$.
	\end{theorem}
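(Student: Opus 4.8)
The plan is to treat $\hat{\nu}^{CL}$ as the root of the composite score equation and to linearise it around $\nu_0$, reducing the problem to a limit theorem for a partial sum of the IVT process. Since the trawl parameters are known and only the scalar $\nu$ is estimated, the first-order condition reads $\partial l_{CL}(\hat{\nu}^{CL};x)/\partial\nu=0$. A one-term Taylor expansion about $\nu_0$ gives $\hat{\nu}^{CL}-\nu_0 = -\left(\partial^2 l_{CL}(\bar{\nu};x)/\partial\nu^2\right)^{-1}\,\partial l_{CL}(\nu_0;x)/\partial\nu$ for an intermediate point $\bar{\nu}$. By the consistency result of Theorem \ref{th:LLN}, together with stationarity and ergodicity of $X$, I would show $-n^{-1}\partial^2 l_{CL}(\bar{\nu};x)/\partial\nu^2 \stackrel{\PP}{\rightarrow} H(\nu_0)$, so that the asymptotics of $\hat{\nu}^{CL}-\nu_0$ are governed by $H(\nu_0)^{-1}n^{-1}\,\partial l_{CL}(\nu_0;x)/\partial\nu$.

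The key algebraic step is to obtain a closed form for the composite score under the Poissonian seed. Writing each pair through the three independent Poisson blocks $L(A_{(i+k)\Delta}\setminus A_{i\Delta})$, $L(A_{i\Delta}\setminus A_{(i+k)\Delta})$ and $L(A_{(i+k)\Delta}\cap A_{i\Delta})$ of Proposition \ref{prop:positiveLB}, and using the missing-data (Fisher) identity that the observed-data score equals the conditional expectation of the complete-data score, I would obtain
\begin{equation*}
\frac{\partial}{\partial\nu}\log f(X_{(i+k)\Delta},X_{i\Delta};\nu) = \frac{1}{\nu}\Big(X_{i\Delta}+g(X_{(i+k)\Delta},X_{i\Delta})\Big) - \big(2\,Leb(A)-Leb(A_{(i+k)\Delta}\cap A_{i\Delta})\big),
\end{equation*}
where $g$ is exactly the conditional expectation appearing in the statement (the identity $\mathbb{E}[L(A_{(i+k)\Delta}\cap A_{i\Delta})\mid \cdot\,]=X_{i\Delta}-g$ collapses the three blocks to this form). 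Summing over $k=1,\dots,K$ and $i$, centring each summand at its mean (the score is an unbiased estimating equation at $\nu_0$), and absorbing the $O(K^2)$ boundary terms from the differing upper limits $n-k$, I reach the decomposition
\begin{equation*}
\frac{\partial l_{CL}(\nu_0;x)}{\partial\nu} = \frac{1}{\nu_0}\Big(K\,\widetilde S_n(X) + S_n(U)\Big) + o_p\!\big(n^{1/(1+H)}\big),\qquad \widetilde S_n(X):=\sum_{i=1}^n\big(X_{i\Delta}-\E[X_{i\Delta}]\big).
\end{equation*}

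Substituting this into the linearisation and recognising $R_n=H(\nu_0)^{-1}\nu_0^{-1}n^{-1}S_n(U)$ precisely as the $U$-partial-sum contribution, the proof reduces, using $H/(1+H)-1=-1/(1+H)$, to
\begin{equation*}
n^{H/(1+H)}\big(\hat{\nu}^{CL}-\nu_0-R_n\big) = H(\nu_0)^{-1}\nu_0^{-1}\,K\,n^{-1/(1+H)}\widetilde S_n(X)\,\big(1+o_p(1)\big).
\end{equation*}
It therefore remains to establish the stable limit $K\,n^{-1/(1+H)}\widetilde S_n(X)\stackrel{d}{\rightarrow}Y_{1+H}$, the multiplicative constant $K$ being absorbed into the scale parameter $c$ of \eqref{eq:chfct}; the theorem then follows by Slutsky's lemma.

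This last step is the main obstacle and is where the long-memory Assumption \ref{ass:LM} enters. Because $\int_0^T X_t\,dt$, and hence its Riemann approximation $\Delta\sum_i X_{i\Delta}$, is a linear functional of the Poisson random measure $N$, it is infinitely divisible, and I would attack it through its explicit characteristic function rather than via moments. Each Poisson point spends a sojourn time $\tau=-d^{-1}(x)$ inside the moving trawl set, so $\PP(\tau>u)=d(-u)$; Assumption \ref{ass:LM}(i), $d(-u)=g_1(u)u^{-H-1}$, then makes $\tau$ regularly varying with tail index $1+H\in(1,2)$, while part (ii) controls the density so the tail constant is well defined. Feeding this regularly varying sojourn law into the Lévy measure of $\int_0^T X_t\,dt$, I would show that the normalised cumulant function of $n^{-1/(1+H)}\widetilde S_n(X)$ converges to that of the $(1+H)$-stable law \eqref{eq:chfct}, and separately control the discretisation error between the integral and the sum. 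Verifying the regular variation of the sojourn tail, handling the centring in the $\alpha\in(1,2)$ regime where second moments are absent, and making the $o_p(n^{1/(1+H)})$ bounds on the boundary and Hessian remainders rigorous without recourse to $L^2$ arguments are the delicate points; this is also the reason $S_n(U)$ is removed by random centring rather than carried into the limit, since $U_i$ involves conditional expectations whose partial sums are not as directly amenable to the Poisson-point representation.
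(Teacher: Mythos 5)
Your linearisation and score computation coincide with the paper's own proof: the paper also Taylor-expands the composite score, sends the normalised Hessian to $H(\nu_0)$, and derives exactly your closed form for the Poisson score --- its expression $\nu^{-1}(X_{(i+1)\Delta}+X_{i\Delta})-C_i-D_i$ collapses to $\nu^{-1}\bigl(X_{i\Delta}+g\bigr)-\bigl(2Leb(A)-Leb(A_{(i+k)\Delta}\cap A_{i\Delta})\bigr)$, giving the same decomposition $s_n(\nu_0)=\nu_0^{-1}S_n(X)+\nu_0^{-1}S_n(U)$ with $S_n(U)$ removed by the random centring $R_n$ (the paper simply sets $K=1$; your bookkeeping of the factor $K$ and the $O(K^2)$ boundary terms is a harmless extension). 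One slip in your parenthetical remark: the collapsing identity is $\E[L(A_{(i+k)\Delta}\cap A_{i\Delta})\mid X_{(i+k)\Delta},X_{i\Delta}]=X_{(i+k)\Delta}-g$, not $X_{i\Delta}-g$; your displayed score formula is nevertheless correct.

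Where you genuinely diverge is the proof of $n^{-1/(1+H)}S_n(X)\stackrel{d}{\rightarrow}Y_{1+H}$, which the paper isolates as Theorem \ref{th:partial} and proves by an iid reduction in the style of \cite{DJLS2019}: it constructs iid variables $Z_i=\sum_k k\,L(B_{i,k})$ from time-shifted slabs of the L\'evy basis, establishes the regularly varying tail $\PP(Z_1>y)\sim g(y)y^{-H-1}$ via Lemmas \ref{lem:LebB}--\ref{lem:LebB2} (stripping off the largest one or two points), invokes the classical stable limit theorem for iid sums, and closes with $\E|S_n(X)-S_n(Z)|=o(n^{1/(1+H)})$. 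You instead propose a direct infinite-divisibility argument: the centred sum is a compensated Poisson integral whose L\'evy measure is the image of the intensity under the sojourn-count map, the sojourn tail $\PP(\tau>u)=d(-u)=g_1(u)u^{-H-1}$ is regularly varying of index $-(1+H)$ by Assumption \ref{ass:LM}, and convergence of cumulant functions yields the stable law. Both proofs run on the same heavy-tail mechanism (the paper's estimate $\PP(Z_1^{*}>N)\sim Leb(\bar B_{1,N})$ is the sojourn tail in disguise), but yours trades the iid approximation for a shot-noise--type characteristic-function computation; it is more direct in the Poisson case and handles discretisation cleanly (after centring, the gap between $\Delta^{-1}\int_0^{n\Delta}X_t\,dt$ and $\sum_i X_{i\Delta}$ is a Poisson integral of a bounded function, hence $O_p(\sqrt n)=o_p(n^{1/(1+H)})$ as $H<1$), at the price of careful compensation in the $\alpha\in(1,2)$ regime, which the iid route sidesteps. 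Finally, note that both your argument and the paper's leave the same point implicit: since $R_n$ is built from the deterministic $H(\nu_0)^{-1}$ rather than the sample Hessian, a cross term of the form $\bigl(\hat H_n^{-1}-H(\nu_0)^{-1}\bigr)\nu_0^{-1}n^{-1/(1+H)}S_n(U)$ survives the subtraction and neither proof controls the size of $S_n(U)$; the paper only gestures at ``arguments similar to Theorem \ref{th:CLT}'', so you are not worse off on this point.
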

	\begin{remark}
		The asymptotic behaviour of the remainder term $R_n$ in Theorem \ref{th:conjecture} is decided by a quite general function $g$ of the pairs $(X_i,X_{i+k})$ and one can show that similar issues arise in the more general case where $L'$ is integer-valued and the parameters in the trawl function are estimated. The asymptotic behaviour of such general functions of the data could possibly be studied using mixing conditions for partial sums with $\alpha$-stable limits \citep[e.g.][]{Jakubowski1993} or by deriving Breuer-Major-like theorems  \citep[][]{BM1986,NPP2011} valid for IVT processes using Malliavin calculus for Poissonian spaces, see \cite{BPT2020} for a related approach. We believe that especially this latter route could be fruitful, but leave it for future work.
	\end{remark}
	
	The proof of Theorem \ref{th:conjecture} relies on a result about the partial sums of the IVT process $X$, which might be of independent interest. We, therefore, state it here.
	\begin{theorem}\label{th:partial}
		Suppose the L\'evy basis $L$ is non-negative, i.e.~$\eta(y) = 0$ for $y<0$, and let the conditions from Theorem \ref{th:LLN} hold, along with Assumptions  \ref{ass:theta} and \ref{ass:LM}.
		Then,
		\begin{align*}
			n^{-\frac{1}{1+H}} S_n(X)   \stackrel{d}{\rightarrow}  Y_{1+H}, \quad n\rightarrow \infty,
		\end{align*}
		where  $Y_\alpha$ is an $\alpha$-stable random variable with characteristic function \eqref{eq:chfct}.
	\end{theorem}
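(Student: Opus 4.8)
The plan is to exploit the linearity of the L\'evy basis to write the centred partial sum as a single stochastic integral of a deterministic kernel, and then read off the limiting law from its characteristic function. First I would use the additivity of $L$ over the trawl sets together with centring to obtain
\[
S_n(X) = \sum_{i=1}^n \big(X_{i\Delta}-\E[X_{i\Delta}]\big) = \int_{[0,1]\times\R} f_n(x,s)\,\tilde L(dx,ds), \qquad f_n(x,s):=\sum_{i=1}^n \mathbf{1}_{\{(x,s)\in A_{i\Delta}\}},
\]
where $\tilde L:=L-\E[L]$ is the centred L\'evy basis and $f_n(x,s)$ counts how many of the trawl sets $A_{\Delta},\dots,A_{n\Delta}$ contain the point $(x,s)$. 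By the independently scattered, homogeneous structure \eqref{eq.:cumLB}, the characteristic function of the rescaled sum factorises as
\[
\E\!\left[\exp\!\Big(i\theta\,a_n^{-1}S_n(X)\Big)\right] = \exp\!\left\{\int_{[0,1]\times\R}\psi\!\Big(\tfrac{\theta f_n(x,s)}{a_n}\Big)\,dx\,ds\right\}, \qquad \psi(u):=\sum_{y\ge 1}\big(e^{iuy}-1-iuy\big)\,\eta(y),
\]
with $a_n:=n^{1/(1+H)}$, where the sum over $y\ge 1$ uses the non-negativity hypothesis $\eta(y)=0$ for $y<0$ and $\psi$ is finite by Assumption \ref{ass:LB}. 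By L\'evy's continuity theorem it then suffices to show that this exponent converges to the logarithm of \eqref{eq:chfct} with $\alpha=1+H$.

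Next I would recast the exponent through the occupation (level-set) structure of the kernel. Writing $G_n(\lambda):=\mathrm{Leb}\{(x,s):f_n(x,s)>\lambda\}$ and substituting $\lambda=a_n v$,
\[
\int_{[0,1]\times\R}\psi\!\Big(\tfrac{\theta f_n}{a_n}\Big)\,dx\,ds = -\int_0^\infty \psi(\theta v)\,dG_n(a_n v).
\]
The heart of the argument is the asymptotics of $G_n$. Because $d$ is monotone (Assumption \ref{ass:trawl}), for fixed $s$ the indices $i$ covering $(x,s)$ form a contiguous block, and the multiplicity exceeds $\lambda$ essentially when the point is deep enough in $x$, namely $x<d(-\lambda\Delta)$, and has at least $\lambda$ trawls ahead of it in time. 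Counting the admissible $s$-positions (of which there are $\approx n\Delta$ in the bulk) I expect
\[
G_n(\lambda) \sim n\Delta\cdot d(-\lambda\Delta) \sim n\,\Delta^{-H}\,g_1(\lambda\Delta)\,\lambda^{-H-1},
\]
using the regular variation $d(-y)=g_1(y)y^{-H-1}$ of Assumption \ref{ass:LM}(1). Substituting $\lambda=a_n v=n^{1/(1+H)}v$, the factor $n\cdot a_n^{-(H+1)}=1$ cancels exactly and the slowly varying $g_1(a_nv\Delta)$ stabilises, so that $G_n(a_n v)\to c\,v^{-(1+H)}$ for a constant $c>0$. This is precisely the regularly varying occupation measure of index $-\alpha$, $\alpha=1+H$, that the normalisation $a_n$ is designed to produce, and it explains why the rate differs from the $\sqrt{n}$ and the Gaussian long-memory rate.

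Finally I would pass to the limit under the integral. With $-dG_n(a_n v)\to c(1+H)v^{-(H+2)}\,dv$, the exponent tends to
\[
c(1+H)\sum_{y\ge 1}\eta(y)\int_0^\infty \big(e^{i\theta y v}-1-i\theta y v\big)\,v^{-(2+H)}\,dv,
\]
and the inner integral is the standard stable identity $\int_0^\infty(e^{iwv}-1-iwv)v^{-\alpha-1}\,dv=\Gamma(-\alpha)\,|w|^{\alpha}\,e^{-i\,\mathrm{sgn}(w)\pi\alpha/2}$, valid for $\alpha=1+H\in(1,2)$. After absorbing constants through $\Gamma(-\alpha)=\Gamma(2-\alpha)/(\alpha(\alpha-1))$ and $e^{-i\,\mathrm{sgn}(w)\pi\alpha/2}=\cos(\tfrac{\pi\alpha}{2})-i\,\mathrm{sgn}(w)\sin(\tfrac{\pi\alpha}{2})$, this reproduces exactly the characteristic exponent in \eqref{eq:chfct} with $\alpha=1+H$, identifying the limit as $Y_{1+H}$.

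The main obstacle is making the occupation-measure asymptotics rigorous and uniform enough to justify interchanging the limit and the integral. Concretely, I expect three delicate points: converting the integer-valued counting kernel $f_n$ and the sum over $i$ into the continuous level-set asymptotics and its density $-dG_n$, which should require Assumption \ref{ass:LM}(2) to approximate the sums by integrals of $d'$; handling the slowly varying factor $g_1$ together with the edge effects near $s\approx n\Delta$ and near $x\approx d(0)$ uniformly in $v$; and dominating $\psi(\theta v)\,v^{-(2+H)}$ across both tails, using $\psi(u)\sim-\tfrac12\|\eta\|u^2$ and $\alpha<2$ for integrability near $v=0$, and the boundedness of the oscillatory part together with the compensating $-iuy$ term and $\alpha>1$ near $v=\infty$. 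Establishing this uniform regular variation of $G_n$, rather than the heuristic $\sim$ above, is where the real work lies.
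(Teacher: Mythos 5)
Your proposal is correct in strategy and in all the constants I can check, but it takes a genuinely different route from the paper. The paper follows the coupling strategy of the discrete-time trawl literature: it constructs an auxiliary \emph{iid} sequence $Z_i=\sum_{k\ge 1}kL(B_{i,k})$ by time-shifting disjoint slices of the trawl sets, proves (a) regular variation of the tail $\PP(Z_1>y)$ via a largest-point decomposition ($Z_1^*$, $Z_2^*$, $Z_3^*$, $Z^{**}$) and (b) an $L^1$ coupling bound $\E|S_n(X)-S_n(Z)|=o(n^{1/(1+H)})$, and then invokes the classical stable limit theorem for iid sums (Ibragimov--Linnik) — moreover it only carries this out explicitly for the Poisson seed. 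You instead write $S_n(X)=\int f_n\,d\tilde L$ with the counting kernel $f_n=\sum_i\mathbf{1}_{A_{i\Delta}}$, compute the characteristic exponent exactly from the L\'evy--Khintchine formula for the basis, and extract the stable exponent from regular variation of the occupation measure $G_n(\lambda)=Leb\{f_n>\lambda\}\approx n\Delta\,d(-\lambda\Delta)$; your algebra is right (the edge effects near $s\approx 0$ and $s\approx n\Delta$ are of relative order $a_n/n\to 0$, the factor $n a_n^{-(1+H)}=1$ cancels exactly, and the stable integral identity with $\Gamma(-\alpha)=\Gamma(2-\alpha)/(\alpha(\alpha-1))$ reproduces \eqref{eq:chfct} with $c\propto\Delta^{-H}H^{-1}\sum_{y\ge1}y^{1+H}\eta(y)$, which is finite under Assumption \ref{ass:LB} since $1+H<2$). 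What your route buys: it is self-contained (no coupling, no external iid theorem), it handles any non-negative integer-valued L\'evy seed uniformly through $\psi$, and it appears not to need Assumption \ref{ass:LM}(2) at all, since only $d$ and not $d'$ enters $G_n$; the paper's route outsources the hardest analytic step to a classical theorem at the price of the intricate set decomposition and the remainder estimates $R_n'$, $R_n''$. The cost of your route is exactly where you put it: proving the level-set asymptotics uniformly enough (via Potter bounds, $|\psi(u)|\lesssim\min(u^2,1+|u|)$, and $H<1$ for integrability near $v=0$, $H>0$ near $v=\infty$) to justify the limit under the integral — this is real but standard work. One caveat worth recording: your step ``the slowly varying $g_1(a_nv\Delta)$ stabilises'' implicitly assumes $g_1$ converges at infinity; for a genuinely non-convergent slowly varying $g_1$ the normalization $n^{1/(1+H)}$ would need a slowly varying correction. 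This is not a defect relative to the paper — its own proof makes exactly the same tacit assumption when it passes from $\PP(Z_1>y)\sim g_1(y)y^{-H-1}$ to the stated norming — but it should be made explicit in either argument.
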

	
	\begin{remark}
		Closely related results about partial sums of trawl processes have previously been put forth in  \cite{DJLS2019} and \cite{PPSV2021} in the context of a discrete-time trawl process under a standard asymptotic scheme ($n\rightarrow \infty$) and in the context of a continuous-time trawl process under an infill asymptotic sampling scheme ($\delta \rightarrow 0$), respectively. In this paper, we consider $n$ observations of the continuous-time trawl process sampled on an equidistant $\delta$-grid, $X_{\delta}, X_{2\delta}, \ldots, X_{n\delta}$, where $\delta>0$ is fixed, and let $n \rightarrow \infty$. In this sense, our setup is closer to the one in  \cite{DJLS2019}. Indeed, it can be shown that the law of  $(X_{\delta}, X_{2\delta}, \ldots, X_{n\delta})$ is equal to the law of $(Y_1, Y_2, \ldots, Y_n)$, where $Y$ is an appropriately specified discrete-time trawl process in the sense of  \cite{DJLS2019}. Although this highlights a close connection between the long memory results of \cite{DJLS2019} and those presented in this section, the underlying assumptions in the two approaches are different. Firstly, the assumptions on the marginal distribution made in  \cite{DJLS2019} are different from ours. In particular, while  \cite{DJLS2019} are not restricting the marginal distribution of the process to be infinitely divisible, they do impose a restrictions on the size of the jumps of the L\'evy basis, see Equation (3.36) in  \cite{DJLS2019}. Secondly, our assumptions on the correlation structure of the process are slightly different than the assumptions made in   \cite{DJLS2019}. In particular, besides polynomial decay, we also allow for a slowly varying function to enter the correlation structure, compare Assumption \ref{ass:LM} with Equation (2.12) in   \cite{DJLS2019}. For these reasons, although our setting is closely related to that in   \cite{DJLS2019}, we cannot use their results directly. We can, however, follow similar lines of arguments as done in \cite{DJLS2019}, and this is what we do in the proof of Theorem \ref{th:partial}, given in the Appendix.		
	\end{remark}

	\subsection{Information criteria for model selection}\label{sec:IC}
	Takeuchi's Information Criterion \citep{Takeuchi1976} is an information criterion, which can be used for model selection in the case of misspecified likelihoods. \cite{VV2005} adapted the ideas of Takeuchi to the composite likelihood framework and provided arguments for using the composite likelihood information criterion (CLAIC)
	\[
	CLAIC = l_{LC}(\hat \theta^{CL};x) + \textnormal{tr}\left\{ \hat V(\hat \theta^{CL}) \hat H(\hat \theta^{CL})^{-1}  \right\}
	\]
	as a basis for model selection, where $\textnormal{tr}\{M\}$ is the trace of the matrix $M$. Specifically,  \cite{VV2005} suggest picking the model that maximizes $CLAIC$.
	
	Analogous to the usual Bayesian/Schwarz Information Criterion \citep[BIC,][]{BIC1978}, we also suggest the alternative composite likelihood information criterion \citep[][]{GS2010}
	\[
	CLBIC = l_{CL}(\hat \theta^{CL};x) + \frac{\log(n)}{2}  \textnormal{tr}\left\{ \hat V(\hat \theta^{CL}) \hat H(\hat \theta^{CL})^{-1}  \right\},
	\]
	where $n$ is the number of observations of the data series $x$. Note that the various models we consider are generally non-nested, whereas most research on model selection using the composite likelihood approach has considered nested model \citep[][]{NJ2014}. An analysis of the properties of $CLAIC$ and $CLBIC$ in the non-nested case in the spirit of, e.g., \cite{Vuong1989} would be very valuable but is beyond the scope of the present article.

	\section{Forecasting integer-valued trawl processes}\label{sec:forec}
	Let $\mathcal{F}_t = \sigma((X_s)_{s \leq t})$ be the sigma-algebra generated by the history of the IVT process $X$ up until time $t$ and let $h>0$ be a forecast horizon. 
	We are interested in the predictive distribution of the IVT process, i.e.~the distribution of $X_{t+h}|\mathcal{F}_t$. 
	However, since the IVT process $X$ is in general non-Markovian, the distribution of   $X_{t+h}|\mathcal{F}_t$ is highly intractable. This problem is similar to the one encountered when considering the likelihood of observations of $X$, cf. Section \ref{sec:est_cst}. For this reason, we propose to approximate the distribution of $X_{t+h}|\mathcal{F}_t$ by $X_{t+h}|X_t$, i.e.~instead of conditioning on the full information set, we only condition on the most recent observation. Thus, our proposed solution to the forecasting problem is akin to the proposed solution to the problem of the intractability of the full likelihood. That is, instead of considering the full distribution of $X_{t+h}|\mathcal{F}_t$, we use the conditional ``pairwise'' distribution implied by $X_{t+h}|X_t$.  
	
	To fix ideas, let $t \in \mathbb{R}$ and $h>0$, and consider the random variables $X_t=L(A_t)=L(A_t \cap A_{t+h})+L(A_t\setminus A_{t+h})$ and $X_{t+h}=L(A_{t+h})=L(A_t \cap A_{t+h})+L(A_{t+h}\setminus A_{t})$. The goal is to find the conditional distribution of $X_{t+h}$ given $X_t$. 
	Note  that $L(A_t \cap A_{t+h})$ and $L(A_{t+h}\setminus A_{t})$ are independent random variables. Further, since $L(A_{t+h}\setminus A_t)$ is independent of $X_t$ with a known distribution, we only need to determine the distribution of $L(A_t\cap A_{t+h})$ given $X_t$. The following lemma characterises the conditional distribution of $L(A_t\cap A_{t+h})$.
	
	\begin{lemma}\label{lem:fCond}
		Let $x \in \mathbb{N}\cup \{0\}$ and $l \in \{0, 1, \ldots, x\}$, then
		\begin{align*}
			&\PP(L(A_t \cap A_{t+h})=l|X_t=x)=\frac{\PP(L(A_t\setminus A_{t+h})=x-l)\PP(L(A_t \cap A_{t+h})=l)}{\PP(X_t=x)}.
		\end{align*}
	\end{lemma}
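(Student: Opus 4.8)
The plan is to treat this as a direct consequence of the definition of conditional probability combined with the independent-scattering property of the L\'evy basis $L$. Write $B := L(A_t \cap A_{t+h})$ and $C := L(A_t \setminus A_{t+h})$, so that $X_t = L(A_t) = B + C$ as noted just before the statement. Since $A_t \cap A_{t+h}$ and $A_t \setminus A_{t+h}$ are disjoint Borel subsets of $[0,1] \times \R$ and $L$ is independently scattered by definition of the L\'evy basis, the random variables $B$ and $C$ are independent; in the count-valued (non-negative) setting both are supported on $\N \cup \{0\}$, which is what makes the stated range $l \in \{0,1,\ldots,x\}$ the natural one.

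First I would rewrite the conditioning event in terms of $B$ and $C$. On the event $\{X_t = x\}$, the event $\{B = l\}$ coincides with $\{B = l\} \cap \{C = x - l\}$, so by the definition of conditional probability (valid since $\PP(X_t = x) > 0$ for the relevant $x$),
\[
\PP(B = l \mid X_t = x) = \frac{\PP(B = l,\, B + C = x)}{\PP(X_t = x)} = \frac{\PP(B = l,\, C = x - l)}{\PP(X_t = x)}.
\]

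The final step factorizes the numerator using the independence of $B$ and $C$:
\[
\PP(B = l,\, C = x - l) = \PP(B = l)\,\PP(C = x - l) = \PP(L(A_t \cap A_{t+h}) = l)\,\PP(L(A_t \setminus A_{t+h}) = x - l),
\]
and substituting this into the previous display yields exactly the claimed identity. There is no substantive obstacle in this argument; the only point requiring any care is the bookkeeping of supports, namely checking that $\{X_t = x\} = \bigsqcup_{l=0}^{x} \left( \{B = l\} \cap \{C = x - l\} \right)$ is a genuine disjoint partition, which holds because the non-negativity of $L$ forces $0 \le B \le x$ on this event. This is precisely what restricts $l$ to $\{0,1,\ldots,x\}$ and completes the proof.
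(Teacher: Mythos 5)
Your proof is correct and takes essentially the same route as the paper's: the paper applies Bayes' theorem and then uses the decomposition $X_t = L(A_t \cap A_{t+h}) + L(A_t \setminus A_{t+h})$ together with independence of $L$ over disjoint sets, which is exactly your definition-of-conditional-probability argument written in a slightly different order. The only cosmetic difference is that you factor the joint probability $\PP(B=l,\,C=x-l)$ directly, while the paper peels off the factor $\PP(B=l)$ first and then drops the conditioning on $\{B=l\}$ by independence; the substance is identical.
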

	
	\begin{example}\label{ex:f1} In the case when $L'\sim\mathrm{Poi}(\nu)$, we get the Binomial distribution:
		\begin{align*}
			L(A_t\cap A_{t+h})|X_t \sim \mathrm{Bin}\left(X_t,\frac{Leb(A_0\cap A_h)}{Leb(A_0)}\right),
		\end{align*}
		which implies that 
		$
		\E(L(A_t\cap A_{t+h})|X_t)=X_t Leb(A_0\cap A_h)(Leb(A_0))^{-1}.
		$
	\end{example}

	\begin{example}\label{ex:f2} In the case when $L'\sim\mathrm{NB}(m,p)$, we get the Dirichlet-multinomial distribution:
		\begin{align*}
			L(A_t\cap A_{t+h})|X_t\sim \mathrm{Dirichlet-multinomial}(X_t, \alpha_1, \alpha_2),
		\end{align*}
		where
		$\alpha_1=Leb(A_0\setminus A_h)m, \alpha_2=Leb(A_0\cap A_h)m)$ and $\alpha_1+\alpha_2=Leb(A_0)m$. For $x-l, x\in \{0, \ldots, X_t\}$, the corresponding probability mass function is given by 
		\begin{align*}
			&\PP(L(A_t\cap A_{t+h}) = l|X_t=x) \\
			&= {x \choose l} \frac{\Gamma(Leb(A_0\setminus A_h)m + x-l)}{\Gamma(Leb(A_0\setminus A_h)m)}  \frac{\Gamma(Leb(A_0 \cap A_h)m + l)}{\Gamma(Leb(A_0\cap A_h)m)} \frac{\Gamma(Leb(A_0)m)}{\Gamma(Leb(A_0)m + x)}, \quad x \geq l \geq 0,
		\end{align*}
		where $ {x \choose l}  = \frac{x!}{l!(x-l)!}$ is the binomial coefficient. This implies that, as before,  
		$
		\E(L(A_t\cap A_{t+h})|X_t)=X_t Leb(A_0\cap A_h)(Leb(A_0))^{-1}.
		$
	\end{example}
	
	Using Lemma \ref{lem:fCond}, we can derive the distribution of $X_{t+h}|X_t$, which can be used for probabilistic forecasting. The details for non-negative-valued L\'evy bases are given in the following proposition.
	
	\begin{proposition}\label{prop:fCond}
		Let the IVT process  $X$ be given by \eqref{eq:defIVT} and let Assumptions \ref{ass:LB} and \ref{ass:trawl} hold. Suppose further, that the L\'evy basis $L$ is non-negative, i.e.~$\eta(y) = 0$ for $y<0$.  Now, 
		\begin{align*}
			\PP(X_{t+h}=x_{t+h}|X_t=x_t)
			=\sum_{c=0}^{\min(x_t,x_{t+h})}
			\PP(L(A_{t+h}\setminus A_t)=x_{t+h}-c)
			\PP(L(A_t\cap A_{t+h})=c|X_t=x_t).
		\end{align*}
	\end{proposition}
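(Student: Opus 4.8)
The plan is to exploit the disjoint decomposition of the two trawl sets together with the independent-scattering property of the L\'evy basis $L$, and then to condition on the overlap count $L(A_t \cap A_{t+h})$ exactly as in the hypothesis of Lemma \ref{lem:fCond}. Writing the three disjoint pieces as $C := L(A_t \cap A_{t+h})$, $U := L(A_{t+h} \setminus A_t)$, and $W := L(A_t \setminus A_{t+h})$, the definition \eqref{eq:defIVT} gives $X_t = C + W$ and $X_{t+h} = C + U$. Since $A_t \cap A_{t+h}$, $A_{t+h} \setminus A_t$, and $A_t \setminus A_{t+h}$ are pairwise disjoint Borel sets, the fact that $L$ is independently scattered makes $C$, $U$, and $W$ mutually independent; this is the structural fact on which everything rests.

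First I would apply the law of total probability, conditioning on the value of the overlap $C$:
\begin{align*}
\PP(X_{t+h} = x_{t+h} \mid X_t = x_t) = \sum_{c} \PP(X_{t+h} = x_{t+h} \mid C = c, X_t = x_t) \, \PP(C = c \mid X_t = x_t).
\end{align*}
The second factor is precisely the conditional mass function supplied by Lemma \ref{lem:fCond}, so it remains only to identify the first factor. For this I would observe that $\{C = c, X_t = x_t\} = \{C = c, W = x_t - c\}$ is measurable with respect to $\sigma(C, W)$, on which event $X_{t+h} = c + U$; since $U$ is independent of $\sigma(C, W)$, it follows that $\PP(X_{t+h} = x_{t+h} \mid C = c, X_t = x_t) = \PP(U = x_{t+h} - c)$, which is exactly the unconditional summand $\PP(L(A_{t+h} \setminus A_t) = x_{t+h} - c)$ appearing in the claim.

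Finally I would pin down the range of summation using the non-negativity hypothesis $\eta(y) = 0$ for $y < 0$, which forces $U \geq 0$ and $W \geq 0$ almost surely. Non-negativity of $U$ annihilates every term with $c > x_{t+h}$, while $C \leq C + W = X_t = x_t$ annihilates the conditional factor $\PP(C = c \mid X_t = x_t)$ whenever $c > x_t$; together these restrict the index to $c \in \{0, 1, \ldots, \min(x_t, x_{t+h})\}$, yielding the stated finite sum. I do not expect a genuine obstacle here: the entire argument is a conditioning-plus-independence manipulation, and the only point requiring a moment's care is the measurability-and-independence step that reduces the conditional law of $X_{t+h}$ to the unconditional law of $U$. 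That step is clean once the three disjoint pieces are isolated, so the difficulty is essentially bookkeeping rather than analysis.
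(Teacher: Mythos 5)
Your proof is correct and follows essentially the same route as the paper's: both condition on the overlap $L(A_t\cap A_{t+h})$ via the (conditional) law of total probability, use the independent-scattering of $L$ over the disjoint pieces $A_t\cap A_{t+h}$, $A_{t+h}\setminus A_t$, $A_t\setminus A_{t+h}$ to replace the conditional law of $L(A_{t+h}\setminus A_t)$ by its unconditional law, and invoke non-negativity to truncate the sum at $\min(x_t,x_{t+h})$. Your write-up is slightly more explicit than the paper's on the measurability-and-independence step and on why the summation range collapses, but these are presentational differences, not a different argument.
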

	
	The following corollaries give the specific details for our two main specifications for the marginal distribution of $X_t$, studied in Examples \ref{ex:f1} and \ref{ex:f2} above.
	
	\begin{corollary}\label{cor:fpoi}
		If $L'\sim\mathrm{Poi}(\nu)$, then 
		\begin{multline*}
			\PP(X_{t+h}=x_{t+h}|X_t=x_t)
			\\
			=\sum_{c=0}^{\min(x_t,x_{t+h})} \frac{(\nu {Leb}(A_h\setminus A_0))^{x_{t+h}-c}}{(x_{t+h}-c)!}e^{-\nu {Leb}(A_h\setminus A_0)}
			{x_t \choose c} 
			\left(\frac{{Leb}(A_h \cap A_0)}{{Leb}(A_0)}\right)^c
			\left(1-\frac{{Leb}(A_h \cap A_0)}{{Leb}(A_0)}\right)^{x_t-c}.
		\end{multline*}
		
	\end{corollary}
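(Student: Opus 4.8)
The plan is to read the statement directly off Proposition \ref{prop:fCond} by specialising the two generic factors in its summand to the Poisson case. Recall that the general decomposition reads $\PP(X_{t+h}=x_{t+h}\mid X_t=x_t)=\sum_{c=0}^{\min(x_t,x_{t+h})}\PP(L(A_{t+h}\setminus A_t)=x_{t+h}-c)\,\PP(L(A_t\cap A_{t+h})=c\mid X_t=x_t)$. Hence it suffices to evaluate, under $L'\sim\mathrm{Poisson}(\nu)$, the marginal PMF of $L(A_{t+h}\setminus A_t)$ and the conditional PMF of $L(A_t\cap A_{t+h})$ given $X_t$, and then substitute both into the sum.

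For the first factor I would use the defining property of the Poissonian L\'evy seed, namely that $L(B)\sim\mathrm{Poisson}(\nu\, Leb(B))$ for every Borel set $B$ (a standard consequence of $L'\sim\mathrm{Poisson}(\nu)$, cf. Example \ref{ex:Poisson}). Applying this to $B=A_{t+h}\setminus A_t$ and invoking stationarity of the trawl construction, so that $Leb(A_{t+h}\setminus A_t)=Leb(A_h\setminus A_0)$, yields $\PP(L(A_{t+h}\setminus A_t)=x_{t+h}-c)=(\nu\, Leb(A_h\setminus A_0))^{x_{t+h}-c}e^{-\nu\, Leb(A_h\setminus A_0)}/(x_{t+h}-c)!$, which is exactly the first two factors of the claimed summand.

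For the conditional factor I would invoke Example \ref{ex:f1}, which gives $L(A_t\cap A_{t+h})\mid X_t\sim\mathrm{Bin}\bigl(X_t,\,Leb(A_0\cap A_h)/Leb(A_0)\bigr)$, so that $\PP(L(A_t\cap A_{t+h})=c\mid X_t=x_t)=\binom{x_t}{c}\bigl(Leb(A_h\cap A_0)/Leb(A_0)\bigr)^{c}\bigl(1-Leb(A_h\cap A_0)/Leb(A_0)\bigr)^{x_t-c}$. If a self-contained derivation is preferred, this Binomial form follows from Lemma \ref{lem:fCond} by inserting the three Poisson PMFs for $L(A_t\setminus A_{t+h})$, $L(A_t\cap A_{t+h})$ and $X_t=L(A_t)$; the key simplification is that the exponential terms cancel because $Leb(A_0)=Leb(A_0\setminus A_h)+Leb(A_0\cap A_h)$, after which the factorials reassemble into a binomial coefficient. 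Substituting this expression together with the one from the previous paragraph into the decomposition of Proposition \ref{prop:fCond} gives the stated formula.

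Since the whole argument is substitution, there is no genuine obstacle. The only points requiring care are the translation-invariance bookkeeping on the Lebesgue measures — reducing every $t$-dependent set to its stationary counterpart $A_h\cap A_0$, $A_h\setminus A_0$, or $A_0$ — and the confirmation of the Binomial form of Example \ref{ex:f1}. I would therefore present the proof as essentially a one-line specialisation of Proposition \ref{prop:fCond} once those two ingredients are in place.
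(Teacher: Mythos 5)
Your proposal is correct and follows exactly the route the paper intends: the corollary is an immediate specialisation of Proposition \ref{prop:fCond}, plugging in the Poisson PMF of $L(A_{t+h}\setminus A_t)$ (with $Leb(A_{t+h}\setminus A_t)=Leb(A_h\setminus A_0)$ by stationarity) and the Binomial conditional law of Example \ref{ex:f1}, which itself follows from Lemma \ref{lem:fCond} by the cancellation you describe. Nothing is missing.
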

	
	\begin{corollary}\label{cor:fnb}
		If $L'\sim\mathrm{NB}(m,p)$, then 
		\begin{align*}
			&\PP(X_{t+h}=x_{t+h}|X_t=x_t) \\
			&=\sum_{c=0}^{\min(x_t,x_{t+h})} (1-p)^{Leb(A_h\setminus A_0)m} p^{x_{t+h}-c}  {x_t \choose c} \frac{1}{(x_{t+h}-c)!} \\
			&\cdot \frac{\Gamma(Leb(A_h\setminus A_0)m + x_{t+h}-c)}{ \Gamma(Leb(A_h\setminus A_0)m)}
			\frac{\Gamma(Leb(A_h\setminus A_0)m + x_t-c)}{\Gamma(Leb(A_h\setminus A_0)m)}  \frac{\Gamma(Leb(A_h \cap A_0)m + c)}{\Gamma(Leb(A_h\cap A_0)m)} \frac{\Gamma(Leb(A_0)m)}{\Gamma(Leb(A_0)m + x_t)}.
		\end{align*}
		
	\end{corollary}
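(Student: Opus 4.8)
The plan is to obtain the formula by direct substitution into Proposition \ref{prop:fCond}, with its two factors supplied by Example \ref{ex:NB} and Example \ref{ex:f2}. First I would identify the unconditional factor $\PP(L(A_{t+h}\setminus A_t)=x_{t+h}-c)$. By stationarity the set $A_{t+h}\setminus A_t$ has the same Lebesgue measure as $A_h\setminus A_0$, and by the relation $\PP_{\theta}(L(B)=x)=\PP_{\theta}(L'_{Leb(B)}=x)$ together with Example \ref{ex:NB}, the random variable $L(A_{t+h}\setminus A_t)$ is $\mathrm{NB}(Leb(A_h\setminus A_0)m,p)$-distributed. This yields
\[
\PP(L(A_{t+h}\setminus A_t)=x_{t+h}-c)=\frac{\Gamma(Leb(A_h\setminus A_0)m+x_{t+h}-c)}{(x_{t+h}-c)!\,\Gamma(Leb(A_h\setminus A_0)m)}(1-p)^{Leb(A_h\setminus A_0)m}p^{x_{t+h}-c},
\]
which accounts for the $(1-p)^{Leb(A_h\setminus A_0)m}$, $p^{x_{t+h}-c}$, $1/(x_{t+h}-c)!$, and one of the $\Gamma$-ratios in the claimed expression.

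Next I would take the conditional factor $\PP(L(A_t\cap A_{t+h})=c|X_t=x_t)$ directly from Example \ref{ex:f2}, which gives the Dirichlet-multinomial probability mass function with parameters $\alpha_1=Leb(A_0\setminus A_h)m$ and $\alpha_2=Leb(A_0\cap A_h)m$, namely the product of ${x_t \choose c}$ with the three $\Gamma$-ratios involving $Leb(A_0\setminus A_h)m$, $Leb(A_0\cap A_h)m$, and $Leb(A_0)m$. Multiplying the two factors and summing over $c$ from $0$ to $\min(x_t,x_{t+h})$, as prescribed by Proposition \ref{prop:fCond}, produces the stated sum once the terms are collected.

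The only point requiring care — and the closest thing to an obstacle — is matching the Lebesgue measures across the two factors: the unconditional factor is written in terms of $Leb(A_h\setminus A_0)$ while the Dirichlet-multinomial factor is written in terms of $Leb(A_0\setminus A_h)$. These coincide by stationarity: since $Leb(A_t)=Leb(A)$ for every $t$ (Assumption \ref{ass:trawl}), we have $Leb(A_0\setminus A_h)=Leb(A_0)-Leb(A_0\cap A_h)=Leb(A_h)-Leb(A_0\cap A_h)=Leb(A_h\setminus A_0)$, and likewise $Leb(A_0\cap A_h)=Leb(A_h\cap A_0)$. After substituting these identities, the $\Gamma$-ratios with arguments $Leb(A_h\setminus A_0)m+x_t-c$, $Leb(A_h\cap A_0)m+c$, and $Leb(A_0)m$ coming from the conditional factor align exactly with the displayed formula, completing the proof.
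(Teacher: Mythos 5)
Your proposal is correct and follows exactly the route the paper intends: the corollary is a direct substitution into Proposition \ref{prop:fCond}, with the unconditional factor coming from Example \ref{ex:NB} (a Negative Binomial with size parameter $Leb(A_h\setminus A_0)m$) and the conditional factor being the Dirichlet-multinomial PMF of Example \ref{ex:f2}. Your careful reconciliation of $Leb(A_0\setminus A_h)$ with $Leb(A_h\setminus A_0)$ via $Leb(A_t)=Leb(A)$ is precisely the identity the paper uses (cf.\ the set-decomposition identities in Section \ref{sec:pairwise}), so nothing is missing.
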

	
	If the parameters of an IVT process $X$ with Poisson or Negative Binomial marginal distribution are known, we can use Corollary \ref{cor:fpoi} or \ref{cor:fnb}, and the calculations for the Lebesgue measures of the trawl sets given in Section \ref{sec:corr}, 
	for computing the predictive PMFs and thus for forecasting. When the true parameter values are unknown, they can be estimated using the MCL estimator suggested above, and plugged into the formulas to arrive at estimates of the  predictive PMFs.

	\section{Monte Carlo simulation experiments}\label{sec:MC}
	Using simulations, we examine the finite sample properties of the composite likelihood-based estimation procedure
	and of the model selection procedure. Details are available in the Supplementary Material, see Section \ref{sec:SuppSim}. Here we summarise our main findings. 
	
	We consider six data-generating processes (DGPs): 
	the Poisson-Exponential (P-Exp), the Poisson-Inverse Gaussian (P-IG), the Poisson-Gamma (P-Gamma), the Negative Binomial-Exponential (NB-Exp), the Negative Binomial-Inverse Gaussian (NB-IG), and the Negative Binomial-Gamma (NB-Gamma)  IVT models. 
	The parameter choices used in the simulation study, see Table  \ref{tab:paramTab} in the Supplementary Material,  are motivated by the estimates of our empirical study. 

	We compare the finite sample properties of the MCL estimator with the GMM estimator, which has been used in the existing literature. 
	Figure \ref{fig:CLvsMM} plots the root median squared error (RMSE) of the MCL estimator of a given parameter divided by the RMSE of the GMM estimator of the same parameter for the six DGPs. 
	Thus, numbers smaller than one indicate that the MCL estimator has a lower RMSE than the GMM estimator and vice versa for numbers larger than one. We see that for most parameters in most of the DGPs, the MCL estimator outperforms the GMM estimator substantially; indeed, in many cases, the RMSE of the MCL estimator is around $50\%$ that of the GMM estimator. The exception seems to be the trawl parameters, i.e.~the parameters controlling the autocorrelation structure, in the case of the Gamma and IG trawls, where the GMM estimator occasionally performs on par with the MCL estimator. However, in most cases, it appears that the MCL estimator is able to provide large improvements over the GMM estimator.
	
	In the Supplementary Material (Section \ref{sec:fs_asym}), we also examine how close the finite sample distribution of the MCL estimator is to the true (Gaussian) asymptotic limit, as presented in Theorem \ref{th:CLT}. We find that the Gaussian approximation is very good for the case of the parameters governing the marginal distribution, as well as for the parameter $\lambda$ for the case of IVTs with exponential trawl functions. When there are two parameters in the trawl function ($\delta$ and $\gamma$ in the case of the IG trawl and $H$ and $\alpha$ in the case of the Gamma trawl), however, the Gaussian distribution can be a poor approximation to the finite sample distribution of the MCL estimator in case of the trawl parameters. This indicates that for constructing confidence intervals or testing hypotheses on these parameters, it might be useful to consider bootstrap approaches instead of relying on the Gaussian distribution.
	


	\begin{figure}[!t]
		\centering
		\includegraphics[scale=0.75]{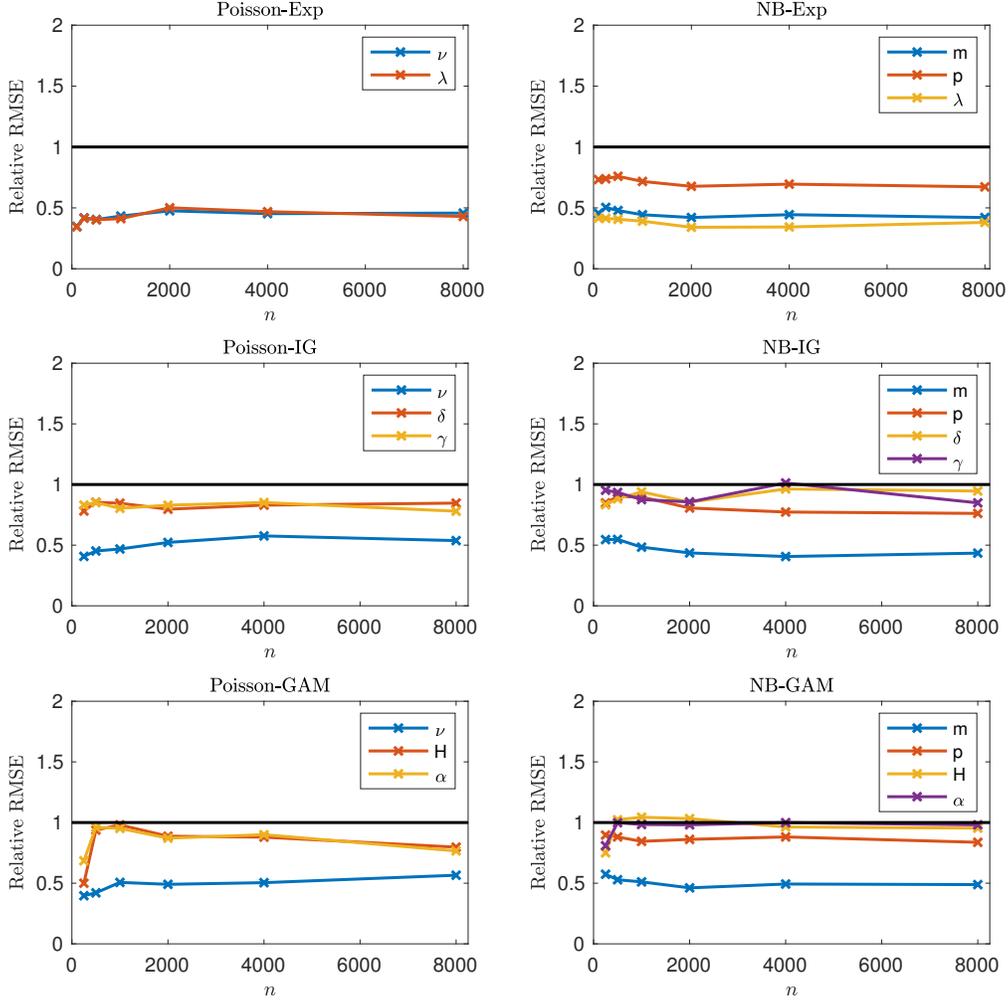} 
		\caption{\it  Root median square error (RMSE) of the MCL estimator \eqref{eq:clmax} divided by the RMSE of the GMM estimator. The underlying IVT process $X_t$ is simulated on the grid $t = \Delta, 2\Delta, \ldots, n\Delta$, with $\Delta = 0.10$, see Table \ref{tab:paramTab} for  the values of the parameters used in the simulations. For the Poisson-Exp and NB-Exp models we set $K = 1$ in Equations \eqref{eq:MM} and \eqref{eq:clmax}; for the other models we set $K = 10$.}
		\label{fig:CLvsMM}
	\end{figure}

	\section{Application to financial bid-ask spread data}\label{sec:emp}

	In this section, we apply the IVT modelling framework to the bid-ask spread of equity prices. The bid-ask spread has been extensively studied in the market microstructure literature, see, e.g., \cite{HS1997} and \cite{BSW2004}. An application similar to the one studied in this section was considered in \cite{BNLSV2014}. 
	To illustrate the use of the methods proposed in this paper, we study the time series of the bid-ask spread, measured in U.S. dollar cents, of the Agilent Technologies Inc. stock (ticker: A) on a single day, May $4$, $2020$. 
	We cleaned the data and sampled the data in 5s intervals, leading to $n = 3961$ observations, see Section \ref{sec:SuppEmp} in the Supplementary Material for details.

	\begin{figure}[!t]
		\centering
		\includegraphics[scale=0.75]{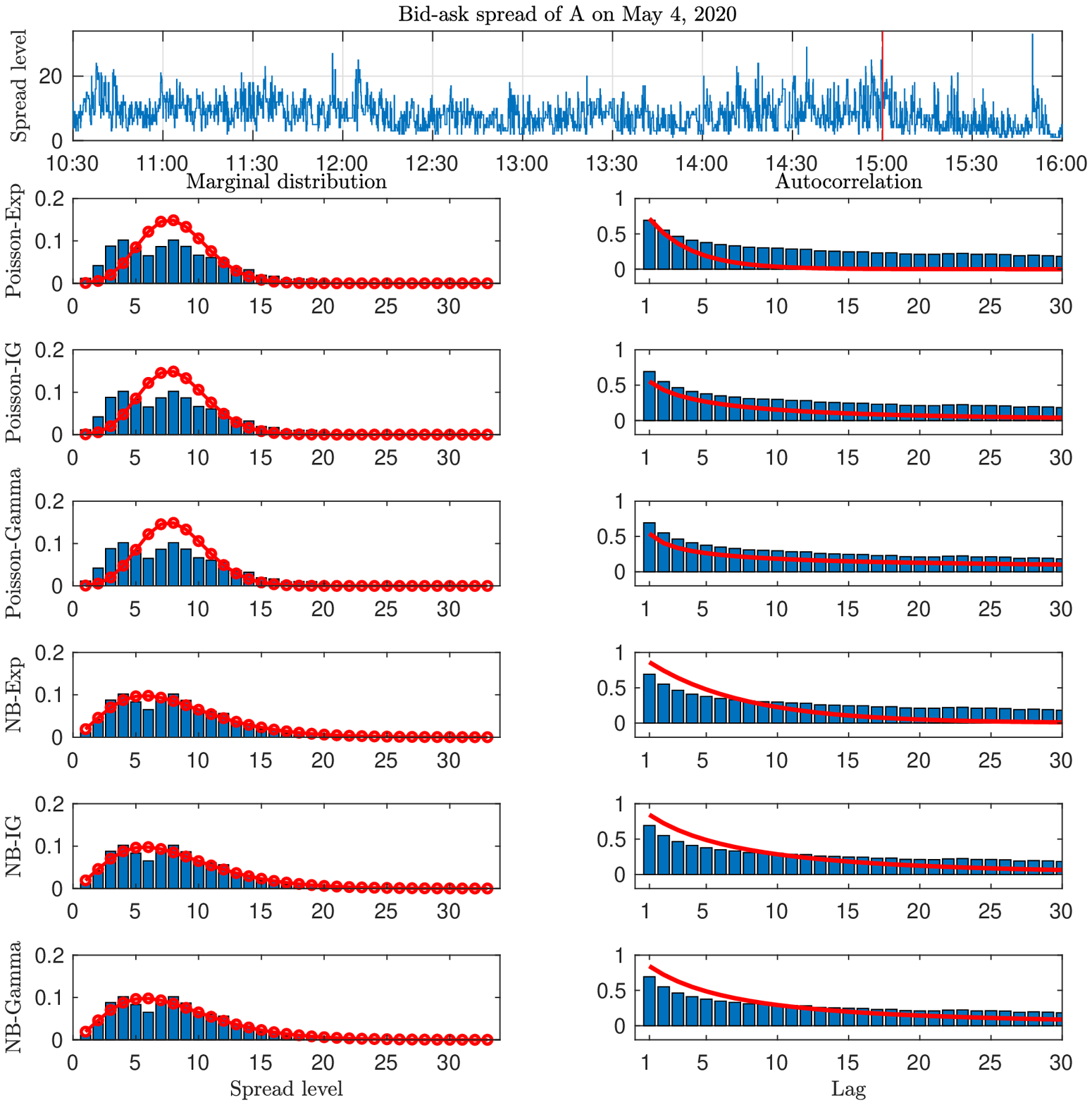} 
		\caption{\it Analysis of the A spread level on May 4, 2020. Top: The data (spread level in cents) from $10$:$30$AM to $4$:$00$PM sampled every $5$ seconds. The vertical red line separates the initial in-sample period (left) from the out-of-sample period (right), used in the forecasting exercise in Section \ref{sec:fSpr}. The bottom six rows show the empirical autocorrelation (left; blue bars) and the empirical marginal distribution of the spread level (right; blue bars) together with the fits from the six IVT models (red lines). The parameters of the models have been estimated using the MCL estimator \eqref{eq:clmax} with $K = 10$, see Table \ref{tab:paramEmp}.}
		\label{fig:nci1}
	\end{figure}
	
	Let $s_t$ be the  bid-ask spread level at time $t$, the time series of which is displayed in the top panel of Figure \ref{fig:nci1}. Since the minimum spread level in the data is one tick (one dollar cent), we work on this time series minus one, i.e on $x_t = s_t-1$. 
	We can now apply  our model selection method. 
	We first inspect the empirical autocorrelation of the data (shown in the right panels of Figure \ref{fig:nci1}) which shows evidence of a very persistent process; we, therefore, set $K$ to a moderately large value to accurately capture the dependence structure of the data. Here, we choose $K=10$ but the results are robust to other choices. 
	

	\begin{table}
\caption{\it Model selection results: bid-ask spread data}
\vspace{-0.8cm}
\begin{center}
\footnotesize
\begin{tabularx}{1.00\textwidth}{@{\extracolsep{\stretch{1}}}lcccccc@{}} 
\toprule
Model: & Poisson-Exponential & Poisson-IG & Poisson-Gamma & NB-Exponential & NB-IG & NB-Gamma  \\
\midrule
$CL$ &   $-244125.5 $ & $ -242885.2 $ & $-242835.8 $ & $-216363.9 $ & $-216318.1 $ & ${\bf -216313.5}$  \\
$CLAIC$ & $ -244125.7 $ & $-242899.1 $ & $-242855.8 $ & $-216364.0  $ & $ -216318.5$ & ${\bf -216313.9}$ \\ 
$CLBIC$ &  $-244126.4 $ & $-242942.8$ & $ -242918.8 $ & $-216364.1$ & $ -216319.7 $ & ${\bf-216315.1}$ \\
\bottomrule 
\end{tabularx}
\end{center}
{\footnotesize \it Composite likelihood and information criteria values for fitting the A bid-ask spread data on May 4, 2020, shown in the top plot of Figure \ref{fig:nci1}, calculated using six different models as given in the top row of the table using $K=10$. The maximum value for a given criteria (i.e. row-wise) is given in bold. The parameter estimates corresponding to the fits are given in Table \ref{tab:paramEmp}.} 
\label{tab:nci_ic}
\end{table}

	\begin{table}
\caption{\it Estimation results: bid-ask spread data}
\vspace{-0.8cm}
\begin{center}
\footnotesize
\begin{tabularx}{1.00\textwidth}{@{\extracolsep{\stretch{1}}}lcccccccc@{}} 
\toprule
DGP        &  $\nu$  & $m$           & $p$      & $\lambda$ & $\delta$ & $\gamma$ & $H$ & $\alpha$ \\
\cmidrule{2-9}
P-Exp      & $28.9319$      &                  &              & $4.0399$            &               & & & \\
  & $(0.6644)$   &                  &              & $(0.0904)$            &               & & & \\
P-IG        & $ 294.9102$      &                  &              &                   &  $1.5293$  & $0.0371$ & & \\
        & $(28.2715)$      &                  &              &                   &  $(0.0424)$  & $(0.0040)$ & & \\
P-Gamma  & $83.8197$      &                  &              &              & &     &  $0.6123$  & $0.0523$ \\
  & $(-)$      &                  &              &              & &     &  $(-)$  & $(-)$ \\
NB-Exp   &            & $6.4273$   & $0.6665$  & $1.7835$            &               & & & \\
   &            & $(0.9324)$   & $(0.0215)$  & $(0.1349)$            &               & & & \\
NB-IG     &            &   $7.7104$  & $0.6675$  &                   &  $1.7816$ & $0.8292$ & & \\
     &            &   $(1.0111)$  & $(0.0238)$  &                   &  $(0.4436)$ & $(0.2094)$ & & \\
NB-Gamma     &            &   $7.7336$  & $0.6675$  &          & &         &  $1.7020$ & $0.7897$ \\
     &            &   $(1.1316)$  & $(0.0260)$  &          & &         &  $(0.7365)$ & $(0.3363)$ \\
\bottomrule 
\end{tabularx}
\end{center}
{\footnotesize \it Parameter estimates (standard errors in parentheses) from the six different DGPs when applied to the bid-ask spread data of A on May 4, 2020 using the MCL estimator with $K = 10$. The standard errors have been obtained using the simulation-based approach to estimating the asymptotic covariance matrix of the MCL estimator, see Section \ref{app:stdErr} in the Supplementary Material. Since our asymptotic theory does not cover the long memory case, no standard deviations are reported for the P-Gamma model. See Figure \ref{fig:nci1} for the resulting fits of the models to the empirical distribution and autocorrelation.}
\label{tab:paramEmp}
\end{table}
	
	Using this setting, we calculated the maximized composite likelihood value, CL, and the two information criteria, $CLAIC$ and $CLBIC$, obtained for these data using the six models considered in Section \ref{sec:MC}. The results are shown in Table \ref{tab:nci_ic}. The table shows that the NB-Gamma model is the preferred model on all three criteria, while the second-best model is the NB-IG model. 
	
	To further examine the fit of the various models, the bottom six rows of Figure \ref{fig:nci1} contain the empirical autocorrelation (left; blue bars) and the empirical marginal distribution of the spread level (right; blue bars). Each respective row also shows the fit of one of the six models considered in Table \ref{tab:nci_ic}; the parameter estimates corresponding to the models are given in Table \ref{tab:paramEmp}.  The fit of the models shown in the  bottom six panels of Figure \ref{fig:nci1} and the selection criteria of Table  \ref{tab:nci_ic} indicate that the models based on the Negative Binomial distribution are preferred to the models based on the Poisson distribution. We conclude that, for this data series, the Poisson distribution is unable to accurately describe the marginal distribution of the spread level sampled every $5$ seconds. That the Gamma and IG trawl functions are preferred to the Exponential trawl function indicates that the Exponential autocorrelation function is not flexible enough to capture the correlation structure of the data. By both visual inspection of the autocorrelations in  Figure \ref{fig:nci1} and the selection criteria of Table  \ref{tab:nci_ic}, we conclude that the NB-Gamma model is the preferred model for these data. As shown in  Table \ref{tab:paramEmp}, this model has $\hat H = 1.70$ (s.e. $0.74$), implying that the model possesses a slowly decaying autocorrelation structure, albeit not the long memory property.



	\subsection{Forecasting the spread level}\label{sec:fSpr}
	This section illustrates the use of IVT models for forecasting, as outlined in Section \ref{sec:forec}. The aim is to forecast the future spread level of the A stock on May $4$, i.e.~the data studied above and plotted in the top panel of Figure \ref{fig:nci1}. We set aside the first $n_1 = 3221$ observations as an ``in-sample period'' for initial estimation of the parameters of the models, see the vertical red line in the top panel of Figure \ref{fig:nci1} for the placement of this split. We then forecast the spread level from $5$ seconds until $100$ seconds into the future, using the approach presented in Section \ref{sec:forec}. That is, we forecast $x_{n_1+1}, x_{n_1+2}, \ldots, x_{n_1+20}$ given the current value $x_{n_1}$. After this, we update the in-sample data set with one additional observation so that this sample now contains $n_2 = n_1 +1 = 3222$ observations. Then we again forecast the next $20$ observations, $x_{n_2+1}, x_{n_2+2}, \ldots, x_{n_2+20}$, given $x_{n_2}$. We repeat this procedure until the end of the sample, which yields $n_{oos} = n - n_1 - 20 = 720$ out-of-sample forecasts for each forecast horizon. To ease the computational burden, we only re-estimate the model every $24$ periods (i.e.~every $2$ minutes). 
	
	To evaluate the forecasts, we consider four different loss metrics. The first two, the mean absolute error (MAE) and the mean squared error (MSE), are often used in econometric forecasting studies of real-valued data \citep[e.g.][]{ET2016}. For a forecast horizon $h = 1, 2, \ldots, 20$, define the mean absolute forecast error,
	\[
	MAE(h) =  \frac{1}{n_{oos}} \sum_{i=n_1+h}^{n-(20-h)} | x_{i} - \hat{x}_{i|i-h}|,
	\]
	where $\hat{x}_{i|i-h}$ is the $h$-step ahead forecast of $x_i$, constructed using the information available up to observation $i-h$. That is, $\hat{x}_{i|i-h}$ is the point forecast of $x_i$ coming from a particular IVT model, such as the conditional mean, median, or mode. In what follows, we set $\hat x_{i|i-h}$ equal to the estimated conditional mean, i.e.~we set $\hat x_{i|i-h} = \sum_{k=0}^{M}  \widehat \PP(X_{i|i-h} = k) k$, where $M\geq 1$ is a large (cut-off) number and $\widehat \PP$ is the estimated predictive density of the IVT model.\footnote{Letting $\hat x_{i|i-h}$ be the conditional mode, instead of the conditional mean, produces results similar to those reported here. These results are reported in the Supplementary Material, Section \ref{sec:add_forec}.} Here, we set $M=60$ but the results are very robust to other choices. 
	Define also the mean squared forecast error
	\[
	MSE(h) = \frac{1}{n_{oos}} \sum_{i=n_1+h}^{n-(20-h)} ( x_{i} - \hat{x}_{i|i-h})^2.
	\]
	We consider two additional loss metrics, designed to directly evaluate the accuracy of the estimated predictive PMF $\widehat \PP$, which is arguably more relevant to the problem at hand than MAE and MSE. The first is the logarithmic score \citep[][p. 30]{ET2016},
	\[
	logS(h) =  \frac{1}{n_{oos}} \sum_{i=n_1+h}^{n-(20-h)}  -\log \widehat \PP(X_{i|i-h} = x_i),
	\]
	where $x_i$ is the realized outcome. The second is the ranked probability score \citep[RPS;][]{Epstein1969},
	\[
	RPS(h) =  \frac{1}{n_{oos}} \sum_{i=n_1+h}^{n-(20-h)}  \sum_{k=0}^M (\widehat F_{i|i-h}(k) - \mathbb{I}_{\{x_i \leq k\}})^2,
	\]
	where $\widehat F_{i|i-h}(k) = \sum_{j=0}^k \widehat \PP(X_{i|i-h} = j)$ is the estimated cumulative distribution function  of $X_i|x_{i-h}$  coming from a given model and $\mathbb{I}_{\{x_i \leq k\}}$ is the indicator function of the event $\{x_i \leq k\}$.

	Figure \ref{fig:ncif_all} shows the four different forecast loss metrics for the preferred NB-Gamma IVT model as a ratio of the forecasting loss of a given benchmark model in the out-of-sample forecasting exercise described above. The numbers plotted in the figure are $Loss(h)_{NB-Gamma}/Loss(h)_{benchmark}$, where ``$Loss$'' denotes one of the four loss metrics given above and $h = 1, 2, \ldots, 20$ denotes the forecasting horizon. Thus, numbers less than one favour the NB-Gamma model compared to the benchmark model and vice versa for numbers greater than one. Initially, we choose  the Poisson-Exponential IVT process as the benchmark model (Figure \ref{fig:ncif_all}, first column); as remarked above, this process is identical to the Poissonian INAR(1) model, which is often used for forecasting count-valued data \citep[e.g.][]{FM2004,MM2005,SPS2009}. It is evident from the figure that losses from the NB-Gamma model are smaller than those from the Poisson-Exponential model for practically all forecast horizons and loss metrics. 
	In the case of the  two most relevant loss functions for evaluating the predictive distribution, the logS and RPS, the reduction in losses are substantial for all forecast horizon, on the order of $20\%$. 
	
	\begin{figure}[!t] 
		\centering
		\includegraphics[scale=0.95]{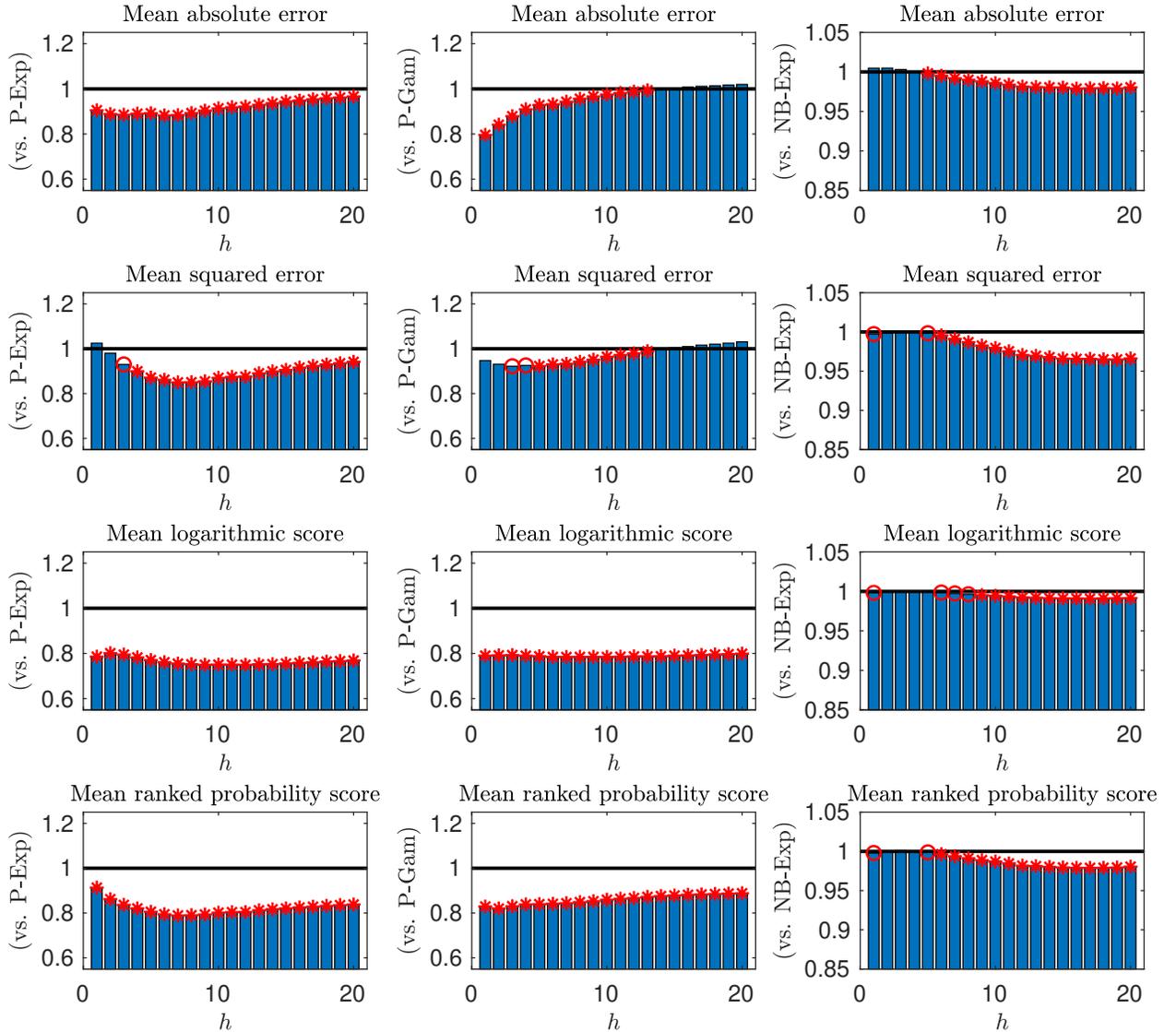} 
		\caption{\it Forecasting the spread level of the A stock on May 4, 2020. Four different loss metrics (row-wise, as indicated above each plot) and twenty forecast horizons, $h = 1, 2, \ldots, 20$. The numbers plotted are relative average losses of the NB-Gamma forecasting model, 
			compared with the Poisson-Exponential model (first column), the Poisson-Gamma model (second column), and the NB-Exponential model (third column), over $n_{oos} = 720$ out-of-sample forecasts. 
			A circle above the bars indicates rejection null of equal forecasting performance between the two models, against the alternative that the NB-Gamma model provides superior forecasts, using the Diebold-Mariano \citep{DM1995} test at a $5\%$ level; an asterisk denotes rejection at a $1\%$ level.}
		\label{fig:ncif_all}
	\end{figure}

	To assess whether these loss differences are also statistically significant, we perform the Diebold-Mariano test of superior predictive ability \citep[][]{DM1995}. The null hypothesis of the statistical test is that the two models have equal predictive power, while the alternative hypothesis is that the NB-Gamma model provides superior forecasts compared to the benchmark model. In Figure \ref{fig:ncif_all}, a circle (asterisk) denotes rejection of the Diebold-Mariano test at a $5\%$ ($1\%$) level. The test rejects the null hypothesis of equal predictive ability for almost all forecast horizons and loss metrics at a $1\%$ level. 
	
	To investigate whether the increased forecast performance of the NB-Gamma model comes from having a more flexible marginal distribution than the Poisson-Exponential benchmark model (Negative Binomial vs. Poisson marginals) or from having a more flexible correlation structure (polynomial decay vs. exponential decay) or both, we compare the forecasts from NB-Gamma model to those coming from a Poisson-Gamma model and from an NB-Exp model. These results are given in the second and third columns of Figure \ref{fig:ncif_all}, respectively. From  the second column, we see that the NB-Gamma model outperforms the Poisson-Gamma model considerably, especially for the shorter forecast horizons, indicating that it is important to use a model with Negative Binomial marginals when forecasting these data. 
	From  the third column of the figure, we see that for the shorter forecast horizons, the NB-Exp model performs on par with the NB-Gamma model, but for the longer forecast horizons, the NB-Gamma model is superior. Hence, when forecasting, it also appears to be important to specify a model with an accurate autocorrelation structure, especially for longer forecasting horizons.

	\section{Conclusion}\label{sec.:concl}
	
	This paper has developed likelihood-based methods for the estimation, inference, model selection, and forecasting of IVT processes. 
	We proved the consistency and asymptotic normality of the MCL estimator and provided the details on how to conduct feasible inference and model selection. We also developed a pairwise approach to approximating the conditional predictive PMF of the IVT process, which can be used for forecasting integer-valued data. All these methods are implemented in a freely available software package written in the MATLAB programming language.
	
	In a simulation exercise, we demonstrated the good properties of the MCL estimator compared to the often-used method-of-moments-based estimator. Indeed, the reduction in root median squared error of the MCL estimator was in many cases more than $50\%$ compared to the corresponding GMM estimator. 
	
	In an empirical application to financial bid-ask spread data, we illustrated the model selection procedure and found that the Negative Binomial-Gamma IVT model provided the best fit for the data.  Using the forecast tools developed in the paper, we saw that this model outperformed the simpler Poisson-Exponential IVT model considerably, resulting in a reduction in forecast loss on the order of $20\%$  for most forecast horizons. We demonstrated that most of the superior forecasting performance came from accurate modelling of the marginal distribution of the data; however, we also found that it was beneficial to carefully model the autocorrelation structure, especially for longer forecasting horizons. These findings highlight the strengths of the IVT modelling framework, where the marginal distribution and autocorrelation structure can be modelled independently in a 
	flexible fashion. 

	\bibliographystyle{agsm}
	\bibliography{mybib-v7}
	

	\appendix
	
	\section{Mathematical proofs}\label{app:proofs}
	
	We first give an alternative representation of the L\'evy basis $L$, underlying the IVT process X. From the  construction of the IVT process in Section \ref{sec.:setup} in the main article, it is clear that the distribution of $L$ is representable as a compound Poisson distribution. That is, for a Borel set $B$, we can write 
	\begin{align}
		\PP_{\theta}(L(B) = x) &= \sum_{q=0}^{\infty} \PP_{\theta}\left( \sum_{i=1}^q Y_i = x, \tilde N(B) = q\right) \nonumber \\
		&= \sum_{q=0}^{\infty} \PP_{\theta}\left( \sum_{i=1}^q Y_i = x\right) \PP_{\theta}( \tilde N(B) = q), \quad x \in \Z, \label{eq:Lalt}
	\end{align}
	where $Y_i$ are iid integer-valued random variables with probability mass function $\tilde \eta(y):= \frac{\eta(y)}{ \sum_{y=-\infty}^{\infty} \eta(y)}$, i.e.~$\PP_{\theta}(Y_i = y) = \tilde \eta(y)$, where $\eta$ is the L\'evy measure given in the construction of the IVT process in Equation. 
	Likewise, $\tilde N$ is a Poisson random measure, given by
	\begin{align*}
		\tilde N(dx,ds) = \int_{-\infty}^{\infty} N(dy,dx,ds),
	\end{align*}
	with an underlying intensity $\tilde \nu := \sum_{y=-\infty}^{\infty} \eta(y)$. The random variables $Y_i$ are independent of the random measure $\tilde N$. Intuitively, we have decomposed the event that the sum of the points in the set $B$ equals $x$ (i.e.~$\{L(B) = x\}$) into the intersection of the two events that there are $q$ individual points in $B$ (i.e.~$\{\tilde N(B) = q\}$) and the ``sizes'' of these $q$ points add up to $x$ (i.e.~$\{ \sum_{i=1}^q Y_i = x \}$). With this construction, we have
	\begin{align}\label{eq:NBq}
		\PP_{\theta}(\tilde N(B) = q) = \frac{(\tilde \nu Leb(B))^q}{q!} e^{-\tilde \nu Leb(B)}, \quad q=0,1,2,\ldots.
	\end{align}
	We will use this alternative representation of $L$ in our proofs below.

	\begin{proof}[Proof of Theorem \ref{th:mixing}]
		Let $m>0$ and define $N_m = \tilde N (A_m \cap A)$ as the Poisson random variable, which counts the number of `events' in the set $A_m \cap A_0$. From \eqref{eq:NBq}, we know that there exists a constant $\tilde \nu >0$ such that 
		\begin{align*}
			\mathbb{P}(N_m = 0) = e^{-\tilde{\nu} Leb(A_m \cap A_0)}
		\end{align*}
		and, therefore,
		\begin{align}\label{eq:Nm}
			\mathbb{P}(N_m \neq 0) = 1- \mathbb{P}(N_m = 0) =1- e^{-\tilde{\nu} Leb(A_m \cap A_0)} = \tilde{\nu} Leb(A_m \cap A_0) + o(Leb(A_m \cap A_0)),
		\end{align}
		as $m \rightarrow \infty$. 
		
		Let $G \in \mathcal{F}_{-\infty}^0$ and  $H \in \mathcal{F}_{m}^{\infty}$ be such that $\mathbb{P}(G),\mathbb{P}(H)>0$, and write, using the law of total probability,
		\begin{align*}
			|\mathbb{P}(H \cap G) - \mathbb{P}(H)\mathbb{P}(G)| &= |\mathbb{P}(H|G) - \mathbb{P}(H)| \cdot \mathbb{P}(G) \\
			&=  |\mathbb{P}(H|G,N_m=0) \mathbb{P}(N_m=0|G) + \mathbb{P}(H|G,N_m\neq 0) \mathbb{P}(N_m \neq 0|G) \\
			&-\mathbb{P}(H|N_m=0) \mathbb{P}(N_m=0) - \mathbb{P}(H|N_m\neq 0) \mathbb{P}(N_m \neq 0) | \cdot \mathbb{P}(G)\\
			&\leq (D_{1,m} + D_{2,m}) \cdot \mathbb{P}(G),
		\end{align*}
		where
		\begin{align*}
			D_{1,m}  &:= |\mathbb{P}(H|G,N_m=0) \mathbb{P}(N_m=0|G) -\mathbb{P}(H|N_m=0) \mathbb{P}(N_m=0) |, \\
			D_{2,m} &:= | \mathbb{P}(H|G,N_m\neq 0) \mathbb{P}(N_m \neq 0|G)- \mathbb{P}(H|N_m\neq 0) \mathbb{P}(N_m \neq 0) |.
		\end{align*}
		We seek to bound these expressions. For $D_{1,m}$, we use the fact that on the event $\{N_m = 0\}$ the two events $G$ and $H$ are independent. For both $D_{1,m}$ and $D_{2,m}$, we will use that the probability of the complementary event $\{N_m \neq 0\}$ is ``small enough'', cf. Equation \eqref{eq:Nm}. For the first of the terms, we get, using conditional independence of $H$ and $G$ and \eqref{eq:Nm},
		\begin{align*}
			D_{1,m} &=  \mathbb{P}(H|N_m=0) \cdot | \mathbb{P}(N_m=0|G) -  \mathbb{P}(N_m=0)| \\
			&\leq  | \mathbb{P}(N_m=0|G) -  \mathbb{P}(N_m=0)| \\
			& \leq | 1- \mathbb{P}(N_m=0|G) | + |1-  \mathbb{P}(N_m=0)| \\
			&= | 1- \mathbb{P}(N_m=0|G) |  + \tilde \nu Leb(A_m\cap A_0) + o(Leb(A_m\cap A_0)),
		\end{align*}
		and, using the Bayes formula and then the law of total probability,
		\begin{align*}
			| 1- \mathbb{P}(N_m=0|G) |  &= |1- \mathbb{P}(G|N_m=0) \mathbb{P}(N_m=0) \mathbb{P}(G)^{-1}| \\
			&=  \mathbb{P}(G)^{-1} |  \mathbb{P}(G) - \mathbb{P}(G|N_m=0) \mathbb{P}(N_m=0)| \\
			&= \mathbb{P}(G)^{-1} |  \mathbb{P}(G|N_m = 0)\mathbb{P}(N_m = 0) + \mathbb{P}(G|N_m \neq 0)\mathbb{P}(N_m \neq 0) \\
			&- \mathbb{P}(G|N_m=0) \mathbb{P}(N_m=0)| \\
			&= \mathbb{P}(G)^{-1} |  \mathbb{P}(G|N_m \neq 0)\mathbb{P}(N_m \neq 0)| \\
			&\leq \mathbb{P}(G)^{-1}\mathbb{P}(N_m \neq 0) \\
			&=  \mathbb{P}(G)^{-1}( \tilde \nu Leb(A_m \cap A_0) +  o(Leb(A_m \cap A_0) )).
		\end{align*}
		We conclude that 
		\begin{align*}
			D_{1,m} \cdot \mathbb{P}(G) \leq 2  \tilde \nu Leb(A_m \cap A_0) +  o(Leb(A_m \cap A_0) ).
		\end{align*}
		For the second term above, we get
		\begin{align*}
			D_{2,m} &=   |\mathbb{P}(H|G,N_m\neq 0) \mathbb{P}(N_m \neq 0|G) - \mathbb{P}(H|N_m\neq 0) \mathbb{P}(N_m \neq 0)| \\
			&= |\mathbb{P}(H|G,N_m\neq 0) \mathbb{P}(N_m \neq 0|G) - \mathbb{P}(H|G,N_m\neq 0) \mathbb{P}(N_m \neq 0) \\
			&+\mathbb{P}(H|G,N_m\neq 0) \mathbb{P}(N_m \neq 0) - \mathbb{P}(H|N_m\neq 0) \mathbb{P}(N_m \neq 0)| \\
			&\leq E_{1,m} + E_{2,m},
		\end{align*}
		where
		\begin{align*}
			E_{1,m}&:= |\mathbb{P}(H|G,N_m\neq 0) \mathbb{P}(N_m \neq 0|G) - \mathbb{P}(H|G,N_m\neq 0) \mathbb{P}(N_m \neq 0)|, \\
			E_{2,m}&:= |\mathbb{P}(H|G,N_m\neq 0) \mathbb{P}(N_m \neq 0) - \mathbb{P}(H|N_m\neq 0) \mathbb{P}(N_m \neq 0)|.
		\end{align*}
		Now, by \eqref{eq:Nm},
		\begin{align*}
			E_{2,m} &=  \mathbb{P}(N_m \neq 0) |\mathbb{P}(H|G,N_m\neq 0)  - \mathbb{P}(H|N_m\neq 0)| \\
			& \leq  \mathbb{P}(N_m \neq 0) \\
			& = \tilde \nu Leb(A_m \cap A_0) + o(Leb(A_m \cap A_0)).
		\end{align*}
		Also,
		\begin{align*}
			E_{1,m} &=  |\mathbb{P}(H|G,N_m\neq 0) \mathbb{P}(N_m \neq 0|G) - \mathbb{P}(H|G,N_m\neq 0) \mathbb{P}(N_m \neq 0)| \\
			&= \mathbb{P}(H|G,N_m\neq 0)  \cdot |\mathbb{P}(N_m \neq 0|G) - \mathbb{P}(N_m \neq 0)| \\
			& \leq   |\mathbb{P}(N_m \neq 0|G) - \mathbb{P}(N_m \neq 0)| \\
			& \leq \mathbb{P}(N_m \neq 0|G) + \mathbb{P}(N_m \neq 0).
		\end{align*}
		Using Bayes formula, we can write
		\begin{align*}
			\mathbb{P}(N_m \neq 0|G)  &= \mathbb{P}(G|N_m \neq 0)\mathbb{P}(N_m \neq 0)\mathbb{P}(G)^{-1} \\
			&\leq \mathbb{P}(N_m \neq 0)\mathbb{P}(G)^{-1}
		\end{align*}
		so that, from  \eqref{eq:Nm},
		\begin{align*}
			E_{1,m}  \leq (1+\mathbb{P}(G)^{-1})(\tilde \nu Leb(A_m \cap A_0) + o(Leb(A_m \cap A_0))).
		\end{align*}
		We conclude that 
		\begin{align*}
			D_{2,m} \cdot \mathbb{P}(G) \leq  3 \tilde \nu Leb(A_m \cap A_0) +  o(Leb(A_m \cap A_0) ).
		\end{align*}
		Taking it all together, we have that
		\begin{align*}
			|\mathbb{P}(H \cap G) - \mathbb{P}(H)\mathbb{P}(G)|  &\leq (D_{1,m} +D_{2,m}  ) \cdot \mathbb{P}(G) \\
			&\leq 5 \tilde \nu Leb(A_m \cap A_0) +  o(Leb(A_m \cap A_0) ),
		\end{align*}
		implying, since $G$ and $H$ were arbitrary, that (taking supremums)
		\begin{align*}
			\alpha_m \leq  5  \tilde \nu Leb(A_m \cap A_0) +  o(Leb(A_m \cap A_0) ),
		\end{align*}
		which implies that $\alpha_m \leq O( Leb(A_m \cap A_0))$.  
		
		To finish the proof, we show that we also have $\alpha_m \geq O( Leb(A_m \cap A_0))$. Letting $x_1,x_2 \in \Z$ and defining the events $\tilde H := \{ X_0 = x_1\} $ and $\tilde G := \{ X_{m} = x_2\}$,  the results and the arguments in the proof of Lemma \ref{lem:B}  imply that
		\begin{align*}
			|\mathbb{P}(\tilde H \cap \tilde G) - \mathbb{P}(\tilde H)\mathbb{P}(\tilde G)|   = O( Leb(A_m \cap A_0) ), \qquad m \rightarrow \infty.
		\end{align*}
		Since, clearly, $\tilde H \in \mathcal{F}_{-\infty}^0$ and $\tilde G \in  \mathcal{F}_{m}^{\infty}$, we conclude that $\alpha_m =  \sup_{H \in  \mathcal{F}_{-\infty}^0, G \in \mathcal{F}_{m}^{\infty} }| \mathbb{P}(H\cap G) - \mathbb{P}(H) \mathbb{P}(G)| \geq |\mathbb{P}(\tilde H \cap \tilde G) - \mathbb{P}(\tilde H)\mathbb{P}(\tilde G)|   = O( Leb(A_m \cap A_0) )$.

	\end{proof}
	
	\begin{proof}[Proof of Proposition \ref{prop:positiveLB}]
		When $\nu(y) = 0$ for $y<0$ we have $P_{\theta}(L(A_{(i+k)\Delta} \cap A_{i\Delta}) = c) = 0$ for $c < 0$. Further, since the maximal amount of events in $L(A_{(i+k)\Delta} \cap A_{i\Delta})$ is bounded by the number of events in $x_{t+k}$ and $x_t$ (no negative values in the trawl sets), we also have $P_{\theta}(L(A_{(i+k)\Delta} \cap A_{i\Delta}) = c) = 0$ for $c >  \min\{x_{t+k},x_t\}$. This, together with the discussion of the decomposition of trawl sets in Section \ref{sec:pairwise} (cf. Figure \ref{fig:trawl_decomp} in the Supplementary Material), 
		and the law of total probability implies that
		\begin{align*}
			f(x_{i+k},&x_{i};\theta) := \PP_{\theta}\left( X_{(i+k)\Delta} = x_{i+k}, X_{i\Delta} = x_{i}\right)  \\
			=& \sum_{c=-\infty}^{\infty}  \PP_{\theta}\left( X_{(i+k)\Delta} = x_{i+k}, X_{i\Delta} = x_{i}| L(A_{(i+k)\Delta} \cap A_{i\Delta}) = c \right)\cdot \PP_{\theta}\left(L(A_{(i+k)\Delta} \cap A_{i\Delta}) = c\right) \\
			=& \sum_{c=0}^{\max\{x_i,x_{i+k}\}}  \PP_{\theta}\left( L(A_{(i+k)\Delta} \setminus A_{i\Delta}) = x_{i+k} - c\right) \PP_{\theta}\left( L(A_{i\Delta} \setminus A_{(i+k)\Delta}) = x_{i}-c\right)  \\
			&\cdot \PP_{\theta}\left(L(A_{(i+k)\Delta} \cap A_{i\Delta}) = c\right),
		\end{align*}
		as we wanted to show.
	\end{proof}

	\begin{proof}[Proof of Theorem \ref{th:LLN}]
		Due to the stationarity and ergodicity of the IVT processes considered in this paper, the normalized log-composite likelihood function will converge in probability to its population counterpart, i.e.
		\begin{align*}
			Q_n( \theta) := \frac{1}{n} l_{CL} (\theta;X) =  \frac{1}{n} \sum_{k=1}^K \sum_{i=1}^{n-k} \log f(X_{(i+k)\Delta},X_{i\Delta};\theta) \stackrel{\mathbb{P}}{\rightarrow} \mathbb{E}\left[ \sum_{k=1}^K \log f(X_{k\Delta},X_{0};\theta) \right] =: Q(\theta), 
		\end{align*}
		as $n\rightarrow \infty$. By the identifiability condition \eqref{eq:identify}, and the fact that the pairwise likelihoods are indeed proper (bivariate) likelihoods, the information inequality implies that  $Q(\theta)$ is uniquely maximized at $\theta = \theta_0$ \citep[Lemma 2.2 in][p. 2124]{NM1994handbook}. The result now follows from Theorem 4.1 and Theorem 4.3 in \cite{Wooldridge1994handbook}.
	\end{proof}

	\begin{proof}[Proof of Theorem \ref{th:CLT}]
		Let 
		\begin{align*}
			s_n(\theta) := \frac{\partial}{\partial \theta} l_{CL} (\theta;X) =   \sum_{k =1}^K \sum_{i=1}^{n-k} \frac{\partial}{\partial \theta}  \log f(X_{(i+k)\Delta},X_{i\Delta};\theta)
		\end{align*}
		denote the score function and consider the estimating equation related to the MCL estimator $\hat \theta = \hat \theta^{CL}$, namely $s_n(\hat \theta) = 0$. Using this equation, we Taylor expand $s_n(\hat \theta)$ around the true parameter vector $\theta_0$ to get
		\begin{align*}
			s_n(\theta_0) + \frac{\partial}{\partial \theta'} s_n(\bar \theta) (\hat \theta - \theta_0) = 0,
		\end{align*}
		where $\bar \theta$ lies on the line segment between $\theta_0$ and $\hat \theta$ and $\frac{\partial}{\partial \theta'}s_n(\bar \theta)$ is shorthand for $\frac{\partial}{\partial \theta'}s_n( \theta)|_{\theta=\bar \theta}$. Rearranging this equation and multiplying through by $\sqrt{n}$, we get
		\begin{align*}
			\sqrt{n} (\hat \theta - \theta_0) = - \left( \frac{1}{n} \frac{\partial}{\partial \theta'} s_n(\bar \theta)  \right)^{-1}  n^{-1/2} s_n(\theta_0).
		\end{align*}
		Stationarity and ergodicity, along with consistency of $\hat \theta$ due to Theorem \ref{th:LLN}, implies that $-\frac{1}{n} \frac{\partial}{\partial \theta'}  s_n(\bar \theta)' \stackrel{\mathbb{P}}{\rightarrow} H(\theta_0)$ as $n \rightarrow \infty$. To prove the result, we thus need to show that $ n^{-1/2}  s_n(\theta_0) \stackrel{(d)}{\rightarrow} N(0,V(\theta_0))$ as $n\rightarrow \infty$. By the mixing properties of the IVT process $X$, given in Theorem \ref{th:mixing}, it is enough to show that $\mathbb{E}[s_n(\theta_0)] = 0$ and $Var\left(n^{-1/2} s_n(\theta_0)\right) \rightarrow V(\theta_0)$ as $n\rightarrow \infty$ \citep[e.g.][Corollary 24.7, p. 387]{Davidson1994}.\footnote{Note that the crucial condition (c') of Corollary 24.7 in \cite{Davidson1994} relies on the IVT process $X$ being mixing of size $\phi_0$ for some $\phi_0 > 1$. This rules out the long memory processes, as shown in Theorem \ref{th:mixing}.}
		
		
		To show this, we consider, for simplicity, the case where $\theta$ is a scalar. The vector case is similar, but with slightly more involved notation. First note that, clearly, 
		\begin{align*}
			\mathbb{E}\left[ s_n(\theta_0)  \right] = \mathbb{E}\left[ \frac{\partial}{\partial \theta}  l _{CL}(\theta;X) |_{\theta = \theta_0}   \right] = 0.
		\end{align*}
		Also
		\begin{align*}
			Var\left( s_n(\theta_0) \right)  &= Var\left( \frac{\partial}{\partial \theta}  l _{CL}(\theta;X) |_{\theta = \theta_0}   \right) \\
			&=Var\left( \sum_{k=1}^K \sum_{i=1}^{n-k} \frac{\partial}{\partial \theta} \log \left.f (X_{i\Delta},X_{(i+k)\Delta};\theta) \right|_{\theta=\theta_0} \right) \\
			&= \sum_{k=1}^K \sum_{i=1}^{n-k} Var\left(\frac{\partial}{\partial \theta} \log \left.f (X_{i\Delta},X_{(i+k)\Delta};\theta) \right|_{\theta=\theta_0} \right) \\
			&+ \sum_{k=1}^K\sum_{k'=1}^K \sum_{i \neq j}  Cov\left(\frac{\partial}{\partial \theta} \log \left.f (X_{i\Delta},X_{(i+k)\Delta};\theta) \right|_{\theta=\theta_0},\frac{\partial}{\partial \theta} \log \left.f (X_{j\Delta},X_{(j+k')\Delta};\theta) \right|_{\theta=\theta_0} \right).
		\end{align*}
		Due to stationarity, the first sum is $O(n)$ as $n\rightarrow \infty$. To prove the proposition, we, therefore, investigate the second sum. With slight abuse of notation, let $\frac{\partial}{\partial \theta} \log \left.f (X_{i\Delta},X_{(i+k)\Delta};\theta) \right|_{\theta=\theta_0}$ be denoted by $\frac{\partial}{\partial \theta} \log f (X_{i\Delta},X_{(i+k)\Delta};\theta_0)$. For $l,k,k'\geq 1$, define also the joint probability mass functions
		\begin{align*}
			f_l(x_1,x_2,x_3,x_4;k,k',\theta) := \mathbb{P}_{\theta}\left( X_0 = x_1, X_{k\Delta} = x_2, X_{l\Delta} = x_3, X_{(l+k')\Delta} = x_4   \right), \quad x_1,x_2,x_3,x_4 \in \mathbb{Z}.
		\end{align*}
		and
		\begin{align*}
			f_k(x_1,x_2) := \mathbb{P}_{\theta}\left( X_0 = x_1, X_{k\Delta} = x_2  \right), \quad x_1,x_2 \in \mathbb{Z}.
		\end{align*}
		Now, using that 
		\begin{align*}
			\mathbb{E}\left[\frac{\partial}{\partial \theta} \log f (X_{i\Delta},X_{(i+k)\Delta};\theta_0) \right] = 0,
		\end{align*}
		we have, for all $i, j, k, k'$, 
		\begin{align*}
			&Cov\left(\frac{\partial}{\partial \theta} \log f (X_{i\Delta},X_{(i+k)\Delta};\theta_0) ,\frac{\partial}{\partial \theta} \log f (X_{j\Delta},X_{(j+k')\Delta};\theta_0) \right) \\
			&= \mathbb{E}\left[\frac{\partial}{\partial \theta} \log f (X_{i\Delta},X_{(i+k)\Delta};\theta_0) \frac{\partial}{\partial \theta} \log f (X_{j\Delta},X_{(j+k')\Delta};\theta_0) \right] \\
			&= \sum_{x_1=-\infty}^{\infty} \sum_{x_{2}=-\infty}^{\infty} \sum_{x_{3}=-\infty}^{\infty} \sum_{x_{4}=-\infty}^{\infty} \frac{\partial}{\partial \theta}  f_k (x_1,x_{2};\theta_0)\frac{\partial}{\partial \theta} f_{k'} (x_3,x_{4};\theta_0)   \frac{f_{|i-j|}(x_1,x_{2},x_3,x_{4};k,k',\theta_0)}{f_k (x_1,x_{2};\theta_0)f_{k'} (x_3,x_{4};\theta_0)} \\
			&= \sum_{x_1=-\infty}^{\infty} \sum_{x_{2}=-\infty}^{\infty} \sum_{x_{3}=-\infty}^{\infty} \sum_{x_{4}=-\infty}^{\infty} \frac{\partial}{\partial \theta}  f_k (x_1,x_{2};\theta_0)\frac{\partial}{\partial \theta} f_{k'} (x_3,x_{4};\theta_0)   \left(\frac{f_{|i-j|}(x_1,x_{2},x_3,x_{4};k,k',\theta_0)}{f_k (x_1,x_{2};\theta_0)f_{k'} (x_3,x_{4};\theta_0)} - 1\right),
		\end{align*}
		where the last equality follows because e.g., 
		\begin{align*}
			\sum_{x_1=-\infty}^{\infty} \sum_{x_{2}=-\infty}^{\infty} \frac{\partial}{\partial \theta}  f_k (x_1,x_{2};\theta_0) =  \frac{\partial}{\partial \theta} \sum_{x_1=-\infty}^{\infty} \sum_{x_{2}=-\infty}^{\infty}   f_k (x_1,x_{2};\theta_0) = \frac{\partial}{\partial \theta}1 = 0. 
		\end{align*}
		Now, Lemma \ref{lem:B} below shows that
		\begin{align*}
			\left(\frac{f_{n}(x_1,x_{2},x_3,x_{4};k,k',\theta_0)}{f_k (x_1,x_{2};\theta_0)f_{k'} (x_3,x_{4};\theta_0)} - 1\right) = O(Leb(A_{n\Delta} \cap A_0)), \quad  n \rightarrow  \infty,
		\end{align*}
		from which we conclude, using Equation \eqref{eq.:corr}, i.e.~$Leb(A_{n\Delta} \cap A_0) = \rho(n\Delta) Leb(A)$, and the condition on $\rho$ imposed in the theorem, that the second sum in the expression for $Var\left( s_n(\theta_0) \right)$ is $O(n)$ as well. Indeed, taking it all together, we have that, as $n\rightarrow \infty$,
		\begin{align*}
			n^{-1} Var\left( s_n(\theta_0) \right) & \rightarrow  \sum_{k=1}^K Var\left(\frac{\partial}{\partial \theta} \log \left.f (X_{0},X_{k\Delta};\theta) \right|_{\theta=\theta_0} \right) \\
			&+ 2 \sum_{k=1}^K\sum_{k'=1}^K \sum_{i =1}^{\infty}  Cov\left(\frac{\partial}{\partial \theta} \log \left.f (X_{0},X_{k\Delta};\theta) \right|_{\theta=\theta_0},\frac{\partial}{\partial \theta} \log \left.f (X_{i\Delta},X_{(i+k')\Delta};\theta) \right|_{\theta=\theta_0} \right) \\
			&=: V(\theta_0),
		\end{align*}
		where the series converges. This finalizes the proof.  

	\end{proof}

	\begin{lemma}\label{lem:B}
		Fix $k,k'\geq 1$, let $X$ be an IVT process, let $f_k(\cdot, \cdot;\theta)$ be the  joint PMF of $(X_0,X_{k\Delta})$, and let $f_n(\cdot,\cdot,\cdot,\cdot;k,k',\theta)$ be the joint PMF of $(X_0,X_{k\Delta},X_{n\Delta},X_{(n+k')\Delta})$. That is
		\begin{align*}
			f_k(x_1,x_2;\theta) := \mathbb{P}_{\theta}\left( X_0 = x_1, X_{k\Delta} = x_2   \right), \quad x_1,x_2 \in \mathbb{Z},
		\end{align*}
		and
		\begin{align*}
			f_n(x_1,x_2,x_3,x_4;k,k',\theta) := \mathbb{P}_{\theta}\left( X_0 = x_1, X_{k\Delta} = x_2, X_{n\Delta} = x_3, X_{(n+k')\Delta} = x_4   \right), \quad x_1,x_2,x_3,x_4 \in \mathbb{Z}.
		\end{align*}
		Define the function
		\begin{align*}
			G_n(x_1,x_2,x_3,x_4;k,k',\theta) := \left( \frac{f_n(x_1,x_2,x_3,x_4;k,k',\theta)}{f_k(x_1,x_2;\theta) f_{k'}(x_3,x_4;\theta) } - 1\right).
		\end{align*}
		The following holds:
		\begin{align*}
			\frac{1}{Leb(A_{k\Delta} \cap A_{n\Delta})} G_n(x_1,x_2,x_3,x_4;k,k',\theta)  \rightarrow G(x_1,x_2,x_3,x_4;k,k',\theta), \quad n\rightarrow \infty,
		\end{align*}
		where $G$ is a function, given in Equation  \eqref{eq:app_G} below, that depends on the L\'evy basis and trawl function of $X$. (In  Remark \ref{rem:B_poi} below, we give the function $G$ in the special case where the L\'evy basis is Poissonian and the trawl function is the Gamma trawl.)
	\end{lemma}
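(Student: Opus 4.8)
The plan is to exploit the fact that the L\'evy basis $L$ is \emph{independently scattered}, so that the joint law of $(X_0,X_{k\Delta},X_{n\Delta},X_{(n+k')\Delta})=(L(A_0),L(A_{k\Delta}),L(A_{n\Delta}),L(A_{(n+k')\Delta}))$ is determined by the partition of $[0,1]\times\R$ generated by the four trawl sets. Writing $t_1=0<t_2=k\Delta<t_3=n\Delta<t_4=(n+k')\Delta$ and letting $R_S$, for $S\subseteq\{1,2,3,4\}$, denote the atom consisting of the points lying in $A_{t_j}$ exactly for $j\in S$, the joint characteristic function factorises across atoms,
\begin{align*}
\log \E\Big[\exp\Big(i\sum_{j=1}^4\theta_j X_{t_j}\Big)\Big] = \sum_{S} Leb(R_S)\, C_{L'}\Big(\sum_{j\in S}\theta_j\Big),
\end{align*}
by \eqref{eq.:cumLB} and the compound-Poisson representation \eqref{eq:Lalt}. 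The geometric key is that, since $d$ is monotone (Assumption \ref{ass:trawl}), a point $(x,s)$ belongs to $A_{t_j}$ precisely for a \emph{contiguous} block of indices $j$; hence the only nonempty atoms have $S$ an interval of consecutive indices, and the only \emph{cross} atoms -- those meeting both a first-pair set ($j\le2$) and a second-pair set ($j\ge3$) -- are $S\in\{\{2,3\},\{2,3,4\},\{1,2,3\},\{1,2,3,4\}\}$.

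First I would record that every cross atom has Lebesgue measure of order $\epsilon_n:=Leb(A_{k\Delta}\cap A_{n\Delta})$. Using that membership is an interval, each $Leb(R_S)$ is a second difference of the pairwise overlaps $Leb(A_{t_i}\cap A_{t_j})=\int_{(t_j-t_i)}^{\infty} d(-s)\,ds$, and all four relevant gaps $(t_j-t_i)$ differ from $(n-k)\Delta$ by a bounded amount; since $A_0\cap A_{n\Delta}\subseteq A_{k\Delta}$ one checks that $\epsilon_n$ dominates. Setting $\theta_3=\theta_4=0$ (resp. $\theta_1=\theta_2=0$) kills all second-pair (resp. first-pair) and cross contributions, so the characteristic functions $\Phi_k,\Phi_{k'}$ of the two pairs are recovered exactly, and I would factor
\begin{align*}
\E\Big[\exp\Big(i\sum_{j}\theta_j X_{t_j}\Big)\Big] = \Phi_{k}(\theta_1,\theta_2)\,\Phi_{k'}(\theta_3,\theta_4)\,\exp\big(E_\times(\theta)\big),
\end{align*}
where $E_\times(\theta)=\sum_{S\ \mathrm{cross}} Leb(R_S)\big[C_{L'}(\theta_{S,1}+\theta_{S,2})-C_{L'}(\theta_{S,1})-C_{L'}(\theta_{S,2})\big]$, with $\theta_{S,1}=\sum_{j\in S,\,j\le2}\theta_j$ and $\theta_{S,2}=\sum_{j\in S,\,j\ge3}\theta_j$; thus $E_\times=O(\epsilon_n)$ uniformly in $\theta$.

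Next I would expand $\exp(E_\times)=1+E_\times+O(\epsilon_n^2)$ and pass to probability mass functions by Fourier inversion. Since inversion is linear, the $1$ produces exactly $f_k(x_1,x_2)f_{k'}(x_3,x_4)$, while $\epsilon_n^{-1}E_\times$ inverts to an explicit function of $(x_1,x_2,x_3,x_4)$ built from $C_{L'}$, the limiting ratios $Leb(R_S)/\epsilon_n$, and the pair PMFs. Dividing by $f_k(x_1,x_2)f_{k'}(x_3,x_4)$ -- which is strictly positive and independent of $n$ -- gives
\begin{align*}
\frac{1}{Leb(A_{k\Delta}\cap A_{n\Delta})}\,G_n(x_1,x_2,x_3,x_4;k,k',\theta)\longrightarrow G(x_1,x_2,x_3,x_4;k,k',\theta),\qquad n\to\infty,
\end{align*}
which is the assertion. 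In the Poissonian--Gamma case $C_{L'}(\theta)=\nu(e^{i\theta}-1)$, the brackets collapse and the inversion can be carried out explicitly, yielding the formula reported in Remark \ref{rem:B_poi}.

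The main obstacle is making this expansion rigorous at the level of PMFs rather than characteristic functions: I must justify interchanging the $n\to\infty$ limit with Fourier inversion and show the inverted remainder is genuinely $o(\epsilon_n)$ \emph{after} division by the fixed quantity $f_k f_{k'}$. This is where Assumption \ref{ass:LB} enters -- the bound $\|\eta\|=\sum_y y^2\eta(y)<\infty$ supplies enough moments (hence smooth, integrable characteristic functions) to dominate the inversion integrals uniformly in $n$ and to control the second-order term. A secondary, bookkeeping, obstacle is the convergence of the ratios $Leb(R_S)/\epsilon_n$; via the second-difference formula and \eqref{eq.:corr} this reduces to regular variation (or exponential decay) of the tail $\int_h^\infty d(-s)\,ds$, which holds for all trawl functions considered in Section \ref{sec:corr}.
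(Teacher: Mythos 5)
Your proposal is correct, and it reaches the limit by a genuinely different analytic route from the paper's. The geometry is identical: your ten interval-membership atoms are exactly the paper's disjoint sets $C_1,\ldots,C_7,D_1,D_2,D_3$ (Figure \ref{fig:trawl_decomp2}), and your four cross atoms are $C_1,\ldots,C_4$, whose union is $A_{k\Delta}\cap A_{n\Delta}$. The analysis, however, differs. The paper stays at the level of PMFs: it rewrites $G_n$ through the conditional law $f_n(x_1,x_2|x_3,x_4;k,k',\theta)$, applies the law of total probability over the values of $L$ on $C_1,\ldots,C_4$, and feeds in a separate small-set asymptotic (Lemma \ref{lem:B2}, proved via the compound-Poisson representation) for $\PP_{\theta}(L(C_j)=c_j)$, assembling the limit through a case analysis over quadruplets $(c_1,c_2,c_3,c_4)$. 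You instead factorize the joint characteristic function exactly over the atoms, observe that the non-factorizing part $E_\times$ of the log-CF is carried by the cross atoms alone and is $O(\epsilon_n)$, $\epsilon_n:=Leb(A_{k\Delta}\cap A_{n\Delta})$, uniformly in $\theta$, expand $\exp(E_\times)=1+E_\times+O(\epsilon_n^2)$, and invert. Your route buys cleaner, more systematic error control and makes Lemma \ref{lem:B2} unnecessary; the paper's route delivers the explicit limit \eqref{eq:app_G} directly, whereas you still owe the (routine) term-by-term inversion: writing $C_{L'}(\theta)=\sum_c (e^{i\theta c}-1)\eta(c)$, each factor $e^{i\theta_j c}$ shifts the argument $x_j$ by $c$, and this reproduces exactly the paper's formula, with $\eta(c)=\tilde\nu\tilde\eta(c)$ and your ratios $Leb(R_S)/\epsilon_n$ playing the role of the paper's $a_j$. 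Two remarks on the obstacles you flag. First, the Fourier-inversion interchange is easier than you fear: for integer-valued variables, inversion is over the compact torus $[-\pi,\pi]^4$, so the uniform-in-$\theta$ convergence you already have suffices; the real content of Assumption \ref{ass:LB} here is that $\sum_y \eta(y)\le\sum_y y^2\eta(y)<\infty$ (the L\'evy measure charges only nonzero integers), so $C_{L'}$ is bounded and $E_\times=O(\epsilon_n)$ uniformly --- no moment or smoothness domination argument is needed. Second, the existence of the limiting ratios $Leb(R_S)/\epsilon_n$ is treated no more rigorously in the paper, where the $a_j$ are simply \emph{defined} as limits; so this gap, which you correctly identify, is shared with the original proof rather than introduced by your approach.
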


	%
	%

	\begin{proof}[Proof of Lemma \ref{lem:B}]
		Letting $ f_n(x_1,x_2|x_3,x_4;k,k',\theta):= \mathbb{P}_{\theta}\left( X_0 = x_1, X_{k\Delta} = x_2| X_{n\Delta} = x_3, X_{(n+k')\Delta} = x_4   \right)$, we can write
		\begin{align*}
			G_n(x_1,x_2,x_3,x_4;k,k',\theta) &=  \frac{f_n(x_1,x_2,x_3,x_4;\theta)-f_k(x_1,x_2;\theta) f_{k'}(x_3,x_4;\theta)}{f_k(x_1,x_2;\theta) f_{k'}(x_3,x_4;\theta) } \\
			&=  \frac{f_n(x_1,x_2|x_3,x_4;k,k',\theta)f_{k'}(x_3,x_4;\theta)-f_k(x_1,x_2;\theta) f_{k'}(x_3,x_4;\theta)}{f_k(x_1,x_2;\theta) f_{k'}(x_3,x_4;\theta) } \\
			&= (f_n(x_1,x_2|x_3,x_4;k,k',\theta)-f_k(x_1,x_2;\theta)) f_k(x_1,x_2;\theta)^{-1}.
		\end{align*}
		To prove the lemma, we, therefore, study the asymptotic behaviour of $f_n(x_1,x_2|x_3,x_4;k,k',\theta)-f_k(x_1,x_2;\theta)$ as $n\rightarrow \infty$.
		
		Recall first the decomposition of the trawl sets into three disjoint sets which led to Proposition \ref{prop:positiveLB}, see Figure \ref{fig:trawl_decomp}. 
		In a similar manner, we can decompose the four trawl sets associated to $X_0 = L(A_0)$, $X_{k\Delta} = L(A_{k\Delta})$, $X_{n\Delta} = L(A_{n\Delta})$, and $X_{(n+k')\Delta} = L(A_{(n+k')\Delta})$, into $10$ disjoint sets as illustrated in Figure \ref{fig:trawl_decomp2} below. For ease of notation, we ignore the dependence on $n$, $k$, and $k'$ for a moment and write 
		\begin{align*}
			A_0 &= C_3 \cup C_4 \cup C_6 \cup C_7, \quad A_{k\Delta} = C_1 \cup C_2 \cup C_3 \cup C_4 \cup C_5\cup C_6, \\
			A_{n\Delta} &= C_1 \cup C_2 \cup C_3 \cup C_4 \cup D_2 \cup D_3, \quad A_{(n+k')\Delta} = C_1 \cup C_3 \cup D_1 \cup D_2,
		\end{align*}
		where the sets $C_1, C_2, \ldots, C_7, D_1, D_2, D_3$ are disjoint. 
		We will use below that $\lim_{n\rightarrow \infty} Leb(C_j) = 0$ for $j=1,2,3,4$, $\lim_{n\rightarrow \infty} Leb(C_5) = Leb(A_{k\Delta} \setminus A_0)$,  $\lim_{n\rightarrow \infty} Leb(C_6) = Leb(A_0 \cap A_{k\Delta})$, $\lim_{n\rightarrow \infty} Leb(C_7) = Leb(A_0 \setminus A_{k\Delta})$, 
		$\lim_{n\rightarrow \infty} Leb(D_1) = 
		Leb(A_{k'\Delta} \setminus A_{0})$, 
		$\lim_{n\rightarrow \infty} Leb(D_2) 
		=Leb(A_{0} \cap A_{k'\Delta})$, and $\lim_{n\rightarrow \infty} Leb(D_3) 
		= Leb(A_{0} \setminus A_{k'\Delta})$, cf. Figure \ref{fig:trawl_decomp2}. 
		
		Using this decomposition and the law of total probability, we may write
		\begin{align*}
			f(x_1,x_2;\theta) &=  \sum_{c_1,c_2,c_3,c_4 = -\infty}^{\infty}  \PP_{\theta}\left( L(C_6) + L(C_7) = x_1 - c_3 - c_4,L(C_5) + L(C_6) = x_2 - c_1 - c_2 - c_3 - c_4\right)  \\
			&\cdot \prod_{j=1}^4  \PP_{\theta}\left( L(C_j) = c_j \right) 
		\end{align*}
		and
		\begin{align*}
			&f(x_1,x_2|x_3,x_4;\theta) \\
			&=  \sum_{c_1,c_2,c_3,c_4 = -\infty}^{\infty} \PP_{\theta}\left( L(C_6) + L(C_7) = x_1 - c_3 - c_4,L(C_5) + L(C_6) = x_2 - c_1 - c_2 - c_3 - c_4\right)  \\
			&\cdot \PP_{\theta}\left( L(C_1) = c_1, L(C_2) = c_2, L(C_3) = c_3, L(C_4) = c_4 | X_{n\Delta} = x_3, X_{(n+k')\Delta} = x_4 \right).
		\end{align*}
		Taking these together, we get
		\begin{align*}
			&f_n(x_1,x_2|x_3,x_4;\theta)-f(x_1,x_2;\theta) \\
			&=   \sum_{c_1,c_2,c_3,c_4 = -\infty}^{\infty} \PP_{\theta}\left( L(C_6) + L(C_7) = x_1 - c_3 - c_4,L(C_5) + L(C_6) = x_2 - c_1 - c_2 - c_3 - c_4\right)  \\
			&\cdot \left(  \PP_{\theta}\left( L(C_1) = c_1, L(C_2) = c_2, L(C_3) = c_3, L(C_4) = c_4 | X_{n\Delta} = x_3, X_{(n+k')\Delta} = x_4 \right) - \prod_{j=1}^4  \PP_{\theta}\left( L(C_j) = c_j \right)  \right).
		\end{align*}
		Note that, for the first term in the  parenthesis,  the following holds
		\begin{align*}
			&\PP_{\theta}\left( L(C_1) = c_1, L(C_2) = c_2, L(C_3) = c_3, L(C_4) = c_4 | X_{n\Delta} = x_3, X_{(n+k')\Delta} = x_4 \right) \\
			&= f(x_3,x_4;\theta)^{-1}  \PP_{\theta}\left( L(C_1) = c_1, L(C_2) = c_2, L(C_3) = c_3, L(C_4) = c_4 , X_{n\Delta} = x_3, X_{(n+k')\Delta} = x_4 \right) \\
			&= f(x_3,x_4;\theta)^{-1}  \PP_{\theta}\left( L(D_2) + L(D_3) = x_3 -c_1 - c_2- c_3 - c_4, L(D_1) + L(D_2) = x_4 - c_1 - c_3\right)  \\
			&\cdot \prod_{j=1}^4  \PP_{\theta}\left( L(C_j) = c_j \right),
		\end{align*}
		which allows us to write
		\begin{align*}
			&f_n(x_1,x_2|x_3,x_4;\theta)-f(x_1,x_2;\theta) \\
			&=   \sum_{c_1,c_2,c_3,c_4 = -\infty}^{\infty} \PP_{\theta}\left( L(C_6) + L(C_7) = x_1 - c_3 - c_4,L(C_5) + L(C_6) = x_2 - c_1 - c_2 - c_3 - c_4\right)  \\
			&\cdot   \prod_{j=1}^4  \PP_{\theta}\left( L(C_j) = c_j \right) f(x_3,x_4;\theta)^{-1} \\
			&\cdot \left(  \PP_{\theta}\left( L(D_2) + L(D_3) = x_3 -c_1 - c_2- c_3 - c_4, L(D_1) + L(D_2) = x_4 - c_1 - c_3\right) - f(x_3,x_4)\right).
		\end{align*}

		
		Define the set $\mathcal{C}_0 := \{(c_1,c_2,c_3,c_4) \in \mathbb{Z}^4 : c_i \neq 0 \textnormal{ for at least one } i=1,2,3,4\}$. The above calculations imply that
		\begin{align}\label{eq:Fn0}
			f_n(x_1,x_2|x_3,x_4;\theta)-f(x_1,x_2;\theta) =  f(x_3,x_4)^{-1}\left( F_n^{(1)} + F_n^{(2)} \right),
		\end{align}
		where
		\begin{align*}
			F_n^{(1)} &:= \PP_{\theta}\left( L(C_6) + L(C_7) = x_1, L(C_5) + L(C_6) = x_2 \right) \prod_{j=1}^4  \PP_{\theta}\left( L(C_j) = 0 \right)  \\
			& \cdot  \left(\PP_{\theta}\left( L(D_2) + L(D_3) = x_3, L(D_1) + L(D_2) = x_4\right) -f(x_3,x_4;\theta)  \right)
		\end{align*}
		and
		\begin{align*}
			F_n^{(2)} &:= \sum_{(c_1,c_2,c_3,c_4) \in \mathcal{C}_0} \PP_{\theta}\left( L(C_6) + L(C_7) = x_1 - c_3 - c_4, L(C_5) + L(C_6) = x_2 - c_1 - c_2 - c_3 - c_4\right)  \\
			&\cdot \prod_{j=1}^4  \PP_{\theta}\left( L(C_j) = c_j \right)  \\
			& \cdot  \left(  \PP_{\theta}\left( L(D_2) + L(D_3) = x_3 -c_1 - c_2- c_3 - c_4,L(D_1) + L(D_2) = x_4 - c_1 - c_3\right) -f(x_3,x_4;\theta)  \right).
		\end{align*}
		We can think of $F_n^{(1)}$ as the part of $(f_n(x_1,x_2|x_3,x_4;\theta)-f(x_1,x_2;\theta)) f(x_3,x_4)$ where $c_1 = c_2 = c_3 = c_4 = 0$, while $F_n^{(2)}$ is the remainder.
		
		We study first the behavior of $F_n^{(1)}$ as $n\rightarrow \infty$. Considering the first two factors of this term, the continuity of the probability measure $\PP_{\theta}(\cdot)$ implies that
		\begin{align*}
			& \lim_{n\rightarrow\infty} \PP_{\theta}\left( L(C_6) + L(C_7) = x_1, L(C_5) + L(C_6) = x_2 \right) \prod_{j=1}^4  \PP_{\theta}\left( L(C_j) = 0 \right)  \\
			&=\PP_{\theta}\left( L(A_0 \cap A_{k\Delta}) + L(A_0\setminus A_{k\Delta} ) = x_1, L(A_0 \cap A_{k\Delta}) + L(A_{k\Delta} \setminus A_{0} ) = x_2 \right) \\
			&= f(x_1,x_2;\theta).
		\end{align*}
		The third term in $F_n^{(1)}$, i.e. 
		\begin{align*}
			\PP_{\theta}\left( L(D_2) + L(D_3) = x_3, L(D_1) + L(D_2) = x_4\right) -f(x_3,x_4;\theta) 
		\end{align*}
		will, by the same logic as above, converge to zero as $n\rightarrow \infty$. In fact, by decomposition of the trawl sets of $f(x_3,x_4;\theta)$ in the same manner as above, we get that
		\begin{align*}
			& \PP_{\theta}\left( L(D_2) + L(D_3) = x_3 , L(D_1) + L(D_2) = x_4\right) - f(x_3,x_4;\theta) \\
			&=  \PP_{\theta}\left( L(D_2) + L(D_3) = x_3 , L(D_1) + L(D_2) = x_4\right) \left( 1- \prod_{j=1}^4   \PP_{\theta}\left( L(C_j) = 0 \right) \right) \\
			&- \sum_{(c_1,c_2,c_3,c_4) \in \mathcal{C}_0}   \PP_{\theta}\left( L(D_2) + L(D_3) = x_3 - c_1 -c_2 - c_3 - c_4 , L(D_1) + L(D_2) = x_4 - c_1 -c_3\right) \\
			&\cdot  \prod_{j=1}^4   \PP_{\theta}\left( L(C_j) = c_j \right).
		\end{align*}
		In the first part of Lemma \ref{lem:B2} below, we show that there exists a constant $\tilde \nu > 0$, such that
		\begin{align*}
			1-\prod_{j=1}^4   \PP_{\theta}\left( L(C_j) = 0 \right)  = \tilde \nu Leb( A_{k\Delta} \cap A_{n\Delta}) + o\left(Leb( A_{k\Delta} \cap A_{n\Delta})\right),
		\end{align*}
		as $n \rightarrow \infty$. Similarly,  in the second part of Lemma \ref{lem:B2}, we  show that there exists a non-negative function $\tilde \eta$ concentrated on the integers, such that, for $c_j \neq 0$,
		\begin{align*}
			\PP_{\theta}\left( L(C_j) = c_j \right) = \tilde \eta(c_j) \tilde \nu Leb(C_j) e^{-\tilde \nu Leb(C_j)} + o\left( Leb(C_j) \right),
		\end{align*}
		as $n \rightarrow \infty$, while
		\begin{align*}
			\PP_{\theta}\left( L(C_j) = 0 \right) = e^{-\tilde \nu Leb(C_j)} + o\left( Leb(C_j) \right),
		\end{align*}
		as $n \rightarrow \infty$. This shows that for quadruplets $(c_1,c_2,c_3,c_4)$ where $c_i \neq 0$ for some $i=1,2,3,4$ and $c_j = 0$ for the remaining $j\neq i$ (i.e.~quadruplets of the form $(c_1,0,0,0)$, $(0,c_2,0,0)$, $(0,0,c_3,0)$ or $(0,0,0,c_4)$), we have
		\begin{align*}
			\prod_{j=1}^4 \PP_{\theta}\left( L(C_j) = c_j \right) &= \tilde \eta(c_i) \tilde \nu Leb(C_i) e^{-\tilde \nu \sum_{j=1}^4 Leb(C_j)} + o\left( \sum_{j=1}^4Leb(C_j)\right) \\
			&=  \tilde \eta(c_i) \tilde \nu Leb(C_i) e^{-\tilde \nu  Leb(  A_{k\Delta} \cap A_{n\Delta} )} + o\left( Leb( A_{k\Delta} \cap A_{n\Delta})\right),
		\end{align*}
		as $n \rightarrow \infty$. Conversely, for quadruplets $(c_1,c_2,c_3,c_4)$ where $c_i, c_j \neq 0$ for at least two distinct $i,j=1,2,3,4$, we have
		\begin{align*}
			\prod_{j=1}^4 \PP_{\theta}\left( L(C_j) = c_j \right) = o\left( Leb( A_{k\Delta} \cap A_{n\Delta})\right),
		\end{align*}
		as $n \rightarrow \infty$.
		
		Define the numbers $a_j := \lim_{n\rightarrow \infty} \frac{Leb(C_j)}{Leb(A_{n\Delta} \cap A_{k\Delta})} \geq 0$, $j=1,2,3,4$, which are such that $\sum_{j=1}^4 a_j = 1$ since $C_1 \cup C_2 \cup C_3 \cup C_4 = A_{n\Delta} \cap A_{k\Delta}$, cf. Figure \ref{fig:trawl_decomp2} below. Taking the above together, we may, after a little algebra, conclude that, as $n\rightarrow \infty$,
		\begin{align}
			\frac{F_n^{(1)}}{Leb( A_{k\Delta} \cap A_{n\Delta}) }  \rightarrow &\  \tilde  \nu f(x_1,x_2;\theta) f(x_3,x_4;\theta) \label{eq:Fn1} \\
			&-  \tilde \nu f(x_1,x_2)  \sum_{c\neq 0} \tilde \eta(c) \left(  (a_1 + a_3) f(x_3-c,x_4-c;\theta) + (a_2 + a_4) f(x_3- c ,x_4;\theta)  \right). \nonumber
		\end{align}
		
		Turning now to $F_n^{(2)}$, similar calculations yield that, as $n\rightarrow \infty$,
		\begin{align}
			\frac{F_n^{(2)}}{Leb( A_{k\Delta} \cap A_{n\Delta}) }  \rightarrow  & - \tilde \nu  f(x_3,x_4;\theta)  \sum_{c\neq 0} \tilde \eta(c)  \left( (a_1+a_2)f(x_1 ,x_2-c;\theta) +  (a_3 + a_4)f(x_1-c,x_2-c;\theta) \right)\nonumber \\
			&+ \tilde \nu a_1 \sum_{c\neq 0} \tilde \eta (c) f(x_1,x_2-c)f(x_3-c,x_4-c) \nonumber\\
			&+ \tilde \nu a_2 \sum_{c\neq 0} \tilde \eta (c) f(x_1,x_2-c)f(x_3-c,x_4) \nonumber\\
			&+ \tilde \nu a_3 \sum_{c\neq 0} \tilde \eta (c) f(x_1-c,x_2-c)f(x_3-c,x_4-c) \nonumber\\
			&+ \tilde \nu a_4 \sum_{c\neq 0} \tilde \eta (c) f(x_1-c,x_2-c)f(x_3-c,x_4). \label{eq:Fn2}
		\end{align}
		
		Finally, recalling Equation \eqref{eq:Fn0}, we can conclude that
		\begin{align}
			\lim_{n\rightarrow \infty} \frac{G_n(x_1,x_2,x_3,x_4;k,k',\theta)}{Leb( A_{k\Delta} \cap A_{n\Delta}) } &=  \lim_{n\rightarrow \infty}  \frac{F_n^{(1)} + F_n^{(2)} }{Leb( A_{k\Delta} \cap A_{n\Delta}) } f_k(x_1,x_2;\theta)^{-1}f_{k'}(x_3,x_4;\theta)^{-1} \nonumber \\
			&=: G(x_1,x_2,x_3,x_4;k,k',\theta), \label{eq:app_G}
		\end{align}
		where $\lim_{n\rightarrow \infty} \frac{1}{Leb( A_{k\Delta} \cap A_{n\Delta}) } F_n^{(i)}$ for $i = 1,2$ are given above in Equations \eqref{eq:Fn1}--\eqref{eq:Fn2}. (See the following Remark \ref{rem:B_poi} for how the expression for $G$ simplifies slightly in the case of an IVT process with Poisson L\'evy basis and Gamma trawl function.)
		
		\begin{figure}[!t]
			\centering
			\includegraphics[scale=0.95]{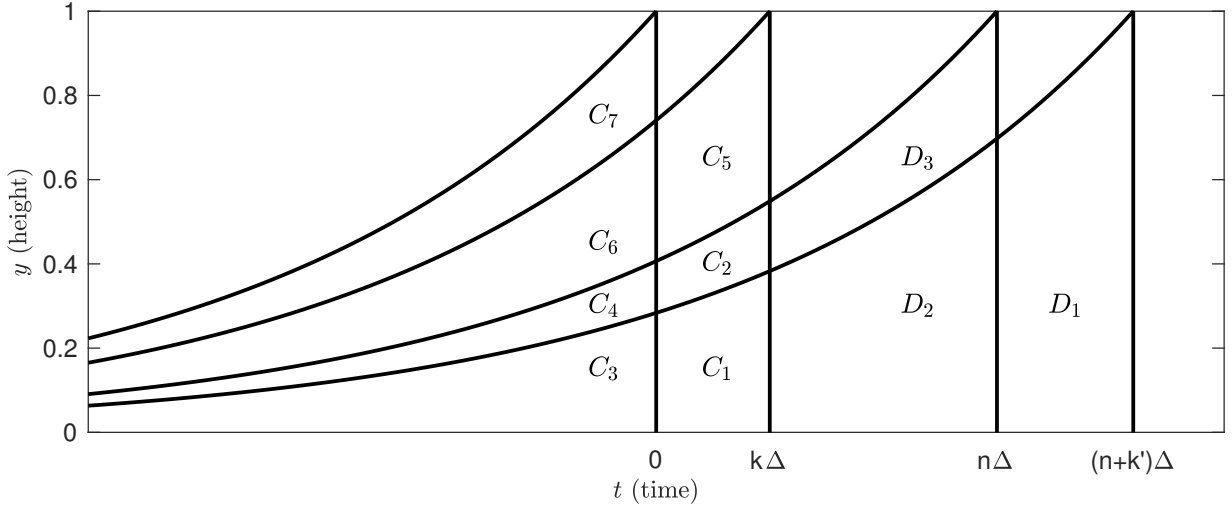} 
			\caption{\it  Illustration of the decomposition of the four trawl sets $A_0, A_{k\Delta}, A_{n\Delta}, A_{(n+k')\Delta}$.}
			\label{fig:trawl_decomp2}
		\end{figure}

	\end{proof}
	
	\begin{remark}\label{rem:B_poi} 
		Note that in the case of a Poisson L\'evy basis (Example \ref{ex:Poisson} in the main article),  
		we have $\tilde \nu = \nu$, $\tilde \eta(1) = 1$, and $\tilde \eta(c) = 0$ for $c\neq 1$. Further, in the case of $d$ being a Gamma trawl function (Example \ref{ex:GAM}), it is straightforward to show that $a_1 = a_2=a_4=0$ and $a_3=1$. For this specification, the limit in the proof of Lemma \ref{lem:B} simplifies somewhat. Indeed, in this case, Equation \eqref{eq:app_G} yields
		\begin{align*}
			G(x_1,x_2,x_3,x_4;k,k',\theta) = \nu \left( \frac{f_k(x_1-1,x_2-1;\theta)}{f_k(x_1,x_2;\theta)} - 1\right) \left(  \frac{f_{k'}(x_3-1,x_4-1;\theta)}{f_{k'}(x_3,x_4;\theta)} - 1\right).
		\end{align*}
		
	\end{remark}

	\begin{lemma}\label{lem:B2}
		In the setting of the proof of Lemma \ref{lem:B}, we have the following two-part result.
		
		(First part) There exists a constant $\tilde \nu>0$, such that 
		\begin{align*}
			1-\prod_{j=1}^4   \PP_{\theta}\left( L(C_j) = 0 \right) = \tilde \nu Leb( A_{k\Delta} \cap A_{n\Delta}) + o\left(Leb( A_{k\Delta} \cap A_{n\Delta})\right),
		\end{align*}
		as $n \rightarrow \infty$.
		
		(Second part) There exists a non-negative function $\tilde \eta$, concentrated on the integers, such that, for $c_j \neq 0$,
		\begin{align*}
			\PP_{\theta}\left( L(C_j) = c_j \right) = \tilde \eta(c_j) \tilde \nu Leb(C_j) e^{-\tilde \nu Leb(C_j)} + o\left( Leb(C_j) \right),
		\end{align*}
		as $n \rightarrow \infty$, while
		\begin{align*}
			\PP_{\theta}\left( L(C_j) = 0 \right) = e^{-\tilde \nu Leb(C_j)} + o\left( Leb(C_j) \right),
		\end{align*}
		as $n \rightarrow \infty$.

	\end{lemma}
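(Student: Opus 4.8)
The plan is to reduce everything to the compound Poisson representation of the L\'evy basis established at the start of the Appendix, Equations \eqref{eq:Lalt} and \eqref{eq:NBq}, which expresses $L(C)$ as a sum of $\tilde N(C)$ iid jumps $Y_i$ with law $\tilde\eta$, where $\tilde N(C)\sim\textnormal{Poisson}(\tilde\nu Leb(C))$. Since the relevant sets $C_1,\ldots,C_4$ all have $Leb(C_j)\to 0$ as $n\to\infty$, I would expand $\PP_\theta(L(C_j)=c_j)$ as a power series in $Leb(C_j)$ and keep only the first two orders, with $\tilde\nu$ and $\tilde\eta$ identified as the mass and the normalized L\'evy measure of $\eta$ from the Appendix preamble.

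For the second part, I would start from
\[
\PP_\theta(L(C)=c) = e^{-\tilde\nu Leb(C)} \sum_{q=0}^{\infty} \PP_\theta\left( \sum_{i=1}^q Y_i = c\right) \frac{(\tilde\nu Leb(C))^q}{q!},
\]
valid for any Borel set $C$, and isolate the $q=0$ and $q=1$ terms. The crucial structural fact is that, because $\eta$ is a L\'evy measure, it places no mass at the origin, so the normalized jump law satisfies $\tilde\eta(0)=0$; hence the empty-sum ($q=0$) term is concentrated at $c=0$, contributing $e^{-\tilde\nu Leb(C)}$ when $c=0$ and nothing when $c\neq 0$. The single-jump ($q=1$) term contributes $\tilde\eta(c)\,\tilde\nu Leb(C)\,e^{-\tilde\nu Leb(C)}$. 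Setting $c=0$ then gives $\PP_\theta(L(C)=0)=e^{-\tilde\nu Leb(C)}+o(Leb(C))$ (the $q=1$ piece vanishing since $\tilde\eta(0)=0$), and setting $c=c_j\neq 0$ gives the other displayed identity. The omitted tail $\sum_{q\geq 2}$ is bounded by $\sum_{q\geq 2}(\tilde\nu Leb(C))^q/q! = O(Leb(C)^2)$ uniformly in $c$, which is $o(Leb(C))$.

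For the first part I would feed the $c=0$ estimate into the product. Using $\PP_\theta(L(C_j)=0)=e^{-\tilde\nu Leb(C_j)}+o(Leb(C_j))$ together with the disjointness and union $\bigcup_{j=1}^4 C_j = A_{k\Delta}\cap A_{n\Delta}$ recorded in the proof of Lemma \ref{lem:B}, the product of exponentials collapses to $e^{-\tilde\nu Leb(A_{k\Delta}\cap A_{n\Delta})}$. A first-order Taylor expansion of this, noting that each $Leb(C_j)\leq Leb(A_{k\Delta}\cap A_{n\Delta})$ so that the $o(\cdot)$ remainders and all cross terms are $o(Leb(A_{k\Delta}\cap A_{n\Delta}))$, yields $\prod_{j=1}^4\PP_\theta(L(C_j)=0)=1-\tilde\nu Leb(A_{k\Delta}\cap A_{n\Delta})+o(Leb(A_{k\Delta}\cap A_{n\Delta}))$, and subtracting from one gives the claim.

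I expect no substantive obstacle: the whole lemma is a careful first-order asymptotic expansion, and its only real content is the absence of an atom of $\eta$ at zero, which forces the $q=0$ term to sit at the origin and makes the leading-order coefficients come out as stated. The one point requiring mild care is that the Poisson-series tails be controlled uniformly so that the $o(\cdot)$ symbols are genuine; since $Leb(C_j)\leq Leb(A)<\infty$ by Assumption \ref{ass:trawl}, the series converges and the remainder is dominated by $e^{\tilde\nu Leb(C_j)}-1-\tilde\nu Leb(C_j)=O(Leb(C_j)^2)$, so nothing further is needed.
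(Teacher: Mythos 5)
Your proposal is correct and follows essentially the same route as the paper's own proof: both rest on the compound Poisson representation \eqref{eq:Lalt}--\eqref{eq:NBq}, isolate the $q=0$ and $q=1$ terms of the Poisson series using the fact that $\eta$ (hence $\tilde\eta$) has no atom at zero, bound the $q\geq 2$ tail by $O(Leb(C_j)^2)$, and obtain the first part by collapsing the product of exponentials over the disjoint partition $C_1\cup\cdots\cup C_4 = A_{k\Delta}\cap A_{n\Delta}$ and Taylor expanding. If anything, your explicit uniform tail bound and your remark that $Leb(C_j)\leq Leb(A_{k\Delta}\cap A_{n\Delta})$ justifies absorbing the remainders are slightly more careful than the paper's terse treatment.
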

	
	\begin{proof}[Proof of Lemma \ref{lem:B2}]
		The proof of the lemma relies on the alternative representation of the L\'evy basis $L$ given at the start of the Appendix, see Equation \eqref{eq:Lalt}.
		
		(Proof of second part) Note that since $\eta(0) = 0$, we have $\PP_{\theta}\left( Y_i = 0\right) = 0$. Using this in the setup of Lemma \ref{lem:B}, we get, from Equations \eqref{eq:Lalt} and \eqref{eq:NBq},
		\begin{align*}
			\PP_{\theta}\left( L(C_j) = 0 \right)  &= e^{-\tilde \nu Leb(C_j)}  + \sum_{q=2}^{\infty} \PP_{\theta}\left( \sum_{i=1}^q Y_i = x\right) \PP_{\theta}( \tilde N(C_j) = q) \\
			&=  e^{-\tilde \nu Leb(C_j) } + o\left( Leb(C_j) \right),
		\end{align*}
		as $n \rightarrow \infty$, while for $c_j \neq 0$,
		\begin{align*}
			\PP_{\theta}\left( L(C_j) = c_j \right)  &= \PP_{\theta}\left( Y_1 = c_j\right) \PP_{\theta}( \tilde N(C_j) = 1)  + \sum_{q=2}^{\infty} \PP_{\theta}\left( \sum_{i=1}^q Y_i = x\right) \PP_{\theta}( \tilde N(C_j) = q) \\
			&=  \tilde \eta(c_j) \tilde \nu Leb(C_j) e^{-\tilde \nu Leb(C_j)}  + o\left( Leb(C_j) \right),
		\end{align*}
		as $n\rightarrow \infty$. This proves the second part of the lemma. 
		
		(Proof of first part) As for the first part, use Equations \eqref{eq:Lalt} and \eqref{eq:NBq} to write
		\begin{align*}
			\prod_{j=1}^4   \PP_{\theta}\left( L(C_j) = 0 \right)  &= e^{-\tilde \nu \sum_{j=1}^4 Leb(C_j)} + o\left( \sum_{j=1}^4 Leb(C_j)\right) \\
			&=  e^{-\tilde \nu Leb(A_{n\Delta} \cap A_{k\Delta}) } + o\left( Leb(A_{n\Delta} \cap A_{k\Delta}) \right) \\
			&= 1 - \tilde  \nu Leb(A_{n\Delta} \cap A_{k\Delta}) + o\left( Leb(A_{n\Delta} \cap A_{k\Delta}) \right),
		\end{align*}
		as $n \rightarrow \infty$, where we in the last line Taylor expanded the exponential function. This proves the first part of the lemma.

	\end{proof}

	\begin{proof}[Proof of Theorem \ref{th:aH}]
		With similar calculations to those used in the proof of Theorem  \ref{th:CLT}, we can write
		\begin{align*}
			n^{H/2} (\hat \theta - \theta_0) = - \left( \frac{1}{n} \frac{\partial}{\partial \theta'} s_n(\bar \theta)  \right)^{-1}  n^{H/2-1} s_n(\theta_0).
		\end{align*}
		We again have that $-\frac{1}{n} \frac{\partial}{\partial \theta}  s_n(\bar \theta)' \stackrel{\mathbb{P}}{\rightarrow} H(\theta_0)$ as $n \rightarrow \infty$. As in the proof of Theorem \ref{th:CLT}, we can write
		\begin{align*}
			Var\left(s_n(\theta_0) \right) = A_{1,n} + A_{2,n},
		\end{align*}
		where $A_{1,n}$ is $O(n)$. Further, Lemma \ref{lem:B} implies that
		\begin{align*}
			A_{2,n} =  O(n^{2} Leb(A_{n\Delta} \cap A_0)),
		\end{align*}
		as $n\rightarrow \infty$. Equation \eqref{eq.:corr} along with the condition on $\rho$ imposed in the theorem thus yields
		\begin{align*}
			A_{2,n} =  O(n^{2} L_{\infty}(n\Delta) n^{-H}),
		\end{align*}
		as $n\rightarrow \infty$. Using this, we get that, for all $\epsilon > 0$,
		\begin{align}\label{eq:pH}
			n^{H - 2 \pm 2\epsilon} Var\left(s_n(\theta_0) \right) = O(n^{H-1 \pm 2 \epsilon}) +  O(L_{\infty}(n\Delta)n^{\pm 2\epsilon}),
		\end{align}
		as $n \rightarrow \infty$. Finally, we recall the so-called \emph{Potter bounds} for slowly varying functions: Since $L_{\infty}$ is a slowly varying function, for all $\epsilon > 0$ it holds that \citep[][Theorem 1.5.6(ii)]{BGT}
		\begin{align*}
			L_{\infty}(n \Delta) n^{\epsilon} \rightarrow \infty \quad \textnormal{ and } \quad L_{\infty}(n \Delta) n^{-\epsilon} \rightarrow 0,
		\end{align*}
		as $n\rightarrow \infty$. Combining the Potter bounds with \eqref{eq:pH} yields the required results.
	\end{proof}
	
	\begin{proof}[Proof of Theorem \ref{th:conjecture}]
		Set, for simplicity, $K=1$.  From the proof of Theorem \ref{th:CLT}, it is clear that the asymptotic behaviour of $\hat \nu^{CL}$ is governed by the asymptotic behaviour of the score function
		\begin{align}
			s_n(\nu_0) &=   \sum_{i=1}^n \left. \frac{\partial}{\partial \nu} \log f(X_{i\Delta}, X_{(i+1)\Delta}) \right|_{\nu = \nu_0}=   \sum_{i=1}^n \left.  \frac{1}{f(X_{i\Delta}, X_{(i+1)\Delta})} \frac{\partial}{\partial \nu} f(X_{i\Delta}, X_{(i+1)\Delta})\right|_{\nu = \nu_0}. \label{eq:S}
		\end{align}
		Let $B$ be a Borel set. Since $L' \sim Poi(\nu)$, we have
		\begin{align*}
			\mathbb{P}\left( L(B) = x \right) = \frac{[\nu Leb(B)]^x \exp(-\nu Leb(B))}{x!}, \quad x = 0, 1, \ldots.
		\end{align*}
		Hence,
		\begin{align*}
			\frac{\partial}{\partial \nu} \mathbb{P}\left( L(B) = x \right) = \left(x\nu^{-1} - Leb(B)\right) \mathbb{P}\left( L(B) = x \right), \quad x = 0, 1, \ldots.
		\end{align*}
		Using this, along with Proposition \ref{prop:positiveLB}, it is straightforward to show that
		\begin{align}
			\frac{1}{f(X_{(i+1)\Delta},X_{i\Delta})} \frac{\partial}{\partial \nu} f(X_{i\Delta}, X_{(i+1)\Delta})  &= \nu^{-1}(X_{(i+1)\Delta} + X_{i\Delta}) - C_{i} - D_i \label{eq:A}
		\end{align}
		where
		\begin{align}
			C_i &= \frac{\nu^{-1}}{f(X_{(i+1)\Delta},X_{i\Delta})} \sum_{c=0}^{\min (X_{i\Delta},X_{(i+1)\Delta})}  c \cdot P( L(A_{(i+1)\Delta}) = X_{(i+1)\Delta}, L(A_{i\Delta}) = X_{i\Delta} , (L(A_{i\Delta} \cap A_{(i+1)\Delta}) =  c) \nonumber \\
			&=  \nu^{-1} \sum_{c=0}^{\min (X_{i\Delta},X_{(i+1)\Delta})}  c \cdot P((L(A_{i\Delta} \cap A_{(i+1)\Delta}) =  c|L(A_{(i+1)\Delta}) = X_{(i+1)\Delta}, L(A_{i\Delta}) = X_{i\Delta})\nonumber \\
			&=\nu^{-1} \mathbb{E}[L (A_{i\Delta} \cap A_{(i+1)\Delta}) | X_{i\Delta}, X_{(i+1)\Delta}] \nonumber \\
			&=\nu^{-1} \mathbb{E}[ L(A_{i+k}) - L (A_{(i+1)\Delta} \setminus A_{i\Delta}) | X_{i\Delta}, X_{(i+1)\Delta}] \nonumber \\
			&=\nu^{-1} \mathbb{E}[ X_{(i+1)\Delta} - L (A_{(i+1)\Delta} \setminus A_{i\Delta}) | X_{i\Delta}, X_{(i+1)\Delta}] \nonumber \\
			&=\nu^{-1} \left( X_{(i+1)\Delta} -  \mathbb{E}[  L (A_{(i+1)\Delta} \setminus A_{i\Delta}) | X_{i\Delta}, X_{(i+1)\Delta}]  \right) \label{eq:C}
		\end{align}
		and
		\begin{align}
			D_i &= Leb(A_{(i+1)\Delta} \setminus A_{i\Delta}) +Leb(A_{i\Delta} \setminus A_{(i+1)\Delta}) +Leb(A_{i\Delta} \cap A_{(i+1)\Delta})\nonumber  \\
			&= Leb(A) + Leb(A_{(i+1)\Delta} \setminus A_{i\Delta}). \label{eq:D}
		\end{align}
		
		Define $U_i := \mathbb{E}[  L (A_{(i+1)\Delta} \setminus A_{i\Delta}) | X_{i\Delta}, X_{(i+1)\Delta}]$, $i = 1, 2, \ldots, n$. Recall that $\mathbb{E}[L(B)] = \nu Leb(B)$ from which we deduce that
		\begin{align*}
			\mathbb{E}[X_{i\Delta}] =  \nu Leb(A), \quad \textnormal{ and } \quad \mathbb{E}[U_i] =  \nu Leb(A_{(i+1)\Delta} \setminus A_{i\Delta}).
		\end{align*}
		Using the above and Equations \eqref{eq:A}--\eqref{eq:C} in Equation \eqref{eq:S}, we get 
		\begin{align*}
			s_n(\nu_0)  = \nu_0^{-1}  S_n(X) + \nu_0^{-1}  S_n(U),
		\end{align*}
		Using arguments similar to those in the proof of Theorem \ref{th:CLT}, the result now follows from Theorem \ref{th:partial}.

		%

	\end{proof}

	Before we provide the proof of  Theorem \ref{th:partial}, we will need the following two lemmas. For two sequences $a_n, b_n$, we will write ``$a_n \sim b_n$'' if $a_n/b_n$ tends to a non-zero constant as $n \rightarrow \infty$.
	\begin{lemma}\label{lem:LebB}
		Let Assumption \ref{ass:LM} hold and define the sets
		\begin{align}\label{eq:LebB}
			B_{i,k} = \{(x,s) : (i-1)\Delta \leq s \leq i\Delta , d(s-(k+1)\Delta) \leq x < d(s-k\Delta)\}, 
		\end{align}
		for $i = 1, 2, \ldots, n$ and $k \in \N$. 
		Then, for all $ i = 1, 2, \ldots, n$,
		\begin{align*}
			Leb(B_{i,k}) = \Delta^2d'(-(k-i+1+r_2)\Delta), 
		\end{align*}
		where $r_2=r_2(i,k) \in (-1,1)$. 
		In particular,
		\begin{align*}
			Leb(B_{1,N}) \sim g_2(N) N^{-H-2}, \quad N \rightarrow \infty,
		\end{align*}
		and
		\begin{align*}
			\sum_{k=1}^\infty k Leb(B_{1,k}) < \infty.
		\end{align*}
	\end{lemma}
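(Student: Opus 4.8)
The plan is to reduce everything to two applications of the mean value theorem together with the representation of $d'$ in Assumption \ref{ass:LM}(2) and standard facts about slowly varying functions. First I would compute $Leb(B_{i,k})$ directly from the definition \eqref{eq:LebB}. For each fixed $s\in[(i-1)\Delta,i\Delta]$ the admissible $x$ fill the interval $[d(s-(k+1)\Delta),d(s-k\Delta))$, whose length is $d(s-k\Delta)-d(s-(k+1)\Delta)\ge 0$ by the monotonicity of $d$ (Assumption \ref{ass:trawl}). Hence, by Fubini,
\[
Leb(B_{i,k})=\int_{(i-1)\Delta}^{i\Delta}\bigl(d(s-k\Delta)-d(s-(k+1)\Delta)\bigr)\,ds .
\]
The integrand is continuous in $s$, so the mean value theorem for integrals gives a point $s_0\in((i-1)\Delta,i\Delta)$ at which the integral equals $\Delta\,(d(s_0-k\Delta)-d(s_0-(k+1)\Delta))$; applying the Lagrange mean value theorem to $d$ on $[s_0-(k+1)\Delta,s_0-k\Delta]$ produces an interior point $\xi$ with $d(s_0-k\Delta)-d(s_0-(k+1)\Delta)=\Delta\,d'(\xi)$. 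Tracking the two interiors shows $\xi\in((i-k-2)\Delta,(i-k)\Delta)$, which is exactly $\xi=-(k-i+1+r_2)\Delta$ for some $r_2=r_2(i,k)\in(-1,1)$, yielding $Leb(B_{i,k})=\Delta^2 d'(\xi)$ and hence the first claim.

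For the asymptotics I would specialise $i=1$, $k=N$, so that the point becomes $\xi=-(N+r_2)\Delta$ with $r_2\in(-1,1)$, all arguments of $d$ being negative in this case. Plugging into Assumption \ref{ass:LM}(2), namely $d'(-(N+r_2)\Delta)=g_2((N+r_2)\Delta)\,((N+r_2)\Delta)^{-H-2}$, gives
\[
\frac{Leb(B_{1,N})}{g_2(N)N^{-H-2}}=\Delta^{-H}\,\frac{g_2((N+r_2)\Delta)}{g_2(N)}\Bigl(1+\tfrac{r_2}{N}\Bigr)^{-H-2}.
\]
The power factor tends to $1$ since $r_2$ is bounded; for the ratio of the $g_2$'s I would write it as $[g_2((N+r_2)\Delta)/g_2(N\Delta)]\cdot[g_2(N\Delta)/g_2(N)]$ and let both factors tend to $1$, the second by slow variation and the first by the uniform convergence theorem for slowly varying functions \citep[][Theorem 1.2.1]{BGT}, because $(N+r_2)/N\to 1$. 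The limit is the nonzero constant $\Delta^{-H}$, which is precisely $Leb(B_{1,N})\sim g_2(N)N^{-H-2}$ in the sense fixed above.

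The summability then follows from the same representation: for large $k$ one has $k\,Leb(B_{1,k})=k\Delta^2 d'(-(k+r_2)\Delta)\le C\,g_2(k)\,k^{-(H+1)}$, and since $g_2$ is slowly varying it is $o(k^{\epsilon})$ for every $\epsilon>0$; choosing $\epsilon<H$ bounds the summand by $C'k^{-(H+1)+\epsilon}$, whose exponent exceeds $1$, so $\sum_{k}k\,Leb(B_{1,k})<\infty$ (the finitely many small-$k$ terms being finite because $d'$ is finite on $\R_-$). The only genuinely delicate point is the transfer of the slowly varying asymptotics through the bounded, $k$-dependent shift $r_2$ and the rescaling by $\Delta$; this is where the uniform convergence theorem for slowly varying functions is essential, and it is the step I would be most careful about. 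Everything else is bookkeeping with the two mean value theorems and Karamata-type bounds.
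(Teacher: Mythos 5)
Your proof is correct and follows essentially the same route as the paper's: two applications of the mean value theorem to obtain $Leb(B_{i,k}) = \Delta^2 d'(-(k-i+1+r_2)\Delta)$ with $r_2 \in (-1,1)$, followed by Assumption \ref{ass:LM} to deliver the asymptotics and the summability. If anything, your version is slightly more careful than the paper's — you apply the integral mean value theorem before the Lagrange one (avoiding an $x$-dependent mean value inside the integral), and you spell out the uniform-convergence argument for the slowly varying factor and the Potter-type bound for the tail sum, steps the paper leaves implicit under ``the latter parts now follow from Assumption \ref{ass:LM}.''
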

	
	\begin{proof}[Proof of Lemma \ref{lem:LebB}]
		Fix $i \in \{1,2,\ldots, n\}$ and $k \geq 1$. Using the mean value theorem twice, we may write,
		\begin{align*}
			Leb(B_{i,k}) &= \int_{(i-1)\Delta}^{i\Delta}  \left( d(x-k\Delta) - d(x-(k+1)\Delta) \right)dx \\
			&=  \Delta\int_{(i-1)\Delta}^{i\Delta}   d'(x-(k+r_1)\Delta) dx \\
			&= \Delta[d(-(k+r_1-i)\Delta) - d(-(k+r_1-i+1)\Delta)] \\
			&= \Delta^2d'(-(k-i+1+r_2)\Delta), 
		\end{align*}
		where $r_1 \in (0,1)$ and $r_2 \in (-1,1)$, which proves the first part of the lemma. Note that $r_1$ and $r_2$ will in general depend on $k$ and $i$, but we will suppress this for ease of notation. The latter parts of the lemma now follow from Assumption \ref{ass:LM}. 
	\end{proof}

	\begin{lemma}\label{lem:LebB2}
		Let Assumption \ref{ass:LM} hold and define the sets
		\begin{align}\label{eq:LebB}
			\bar B_{1,k} = \cup_{j=k+1}^\infty B_{1,j}, \qquad k \in \N.
		\end{align}
		Then, 
		\begin{align*}
			Leb (\bar B_{1,N}) &= \int_0^\Delta d(s-(N+1)\Delta) ds \sim g_1(N) N^{-H-1},
		\end{align*}
		as $N \rightarrow \infty$.
		In particular, using also Lemma \ref{lem:LebB},
		\begin{align*}
			\sum_{k=1}^\infty k Leb(B_{1,k})^{1/2} Leb (\bar B_{1,k}) < \infty.
		\end{align*}
	\end{lemma}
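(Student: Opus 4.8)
The plan is to split the claim into three pieces: the exact set identity giving $Leb(\bar B_{1,N})=\int_0^\Delta d(s-(N+1)\Delta)\,ds$, the asymptotic rate of this integral, and the resulting summability of the series. First I would establish the set identity. For fixed $s\in[0,\Delta]$ the $x$-fibre of $B_{1,j}$ is the half-open interval $[d(s-(j+1)\Delta),\,d(s-j\Delta))$; since $d$ is monotonically increasing (Assumption \ref{ass:trawl}), these fibres are adjacent as $j$ ranges over $j\geq N+1$, the top of the $(j+1)$-th fibre coinciding with the bottom of the $j$-th. Hence the union telescopes to the single interval $[\lim_{j\to\infty}d(s-j\Delta),\,d(s-(N+1)\Delta))$. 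Because $Leb(A)=\int_{-\infty}^0 d(u)\,du<\infty$ with $d\geq 0$ monotone, we must have $d(u)\to 0$ as $u\to-\infty$, so the lower endpoint is $0$; integrating the fibre length $d(s-(N+1)\Delta)$ over $s\in[0,\Delta]$ (Tonelli) yields the stated equality.

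Next I would read off the rate. By the mean value theorem the integral equals $\Delta\, d(-(N+1-\theta)\Delta)$ for some $\theta=\theta(N)\in(0,1)$. Assumption \ref{ass:LM}(1) gives $d(-x)=g_1(x)x^{-H-1}$ with $g_1$ slowly varying; since the argument $(N+1-\theta)\Delta$ is asymptotically a constant multiple of $N$, slow variation forces $g_1((N+1-\theta)\Delta)/g_1(N)\to 1$, while the power factor is $\sim N^{-H-1}$. Therefore $Leb(\bar B_{1,N})\sim g_1(N)N^{-H-1}$, as claimed.

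Finally, for the series I would combine this with the estimate $Leb(B_{1,k})\sim g_2(k)k^{-H-2}$ from Lemma \ref{lem:LebB}, so that $Leb(B_{1,k})^{1/2}\sim g_2(k)^{1/2}k^{-H/2-1}$. Multiplying the three $\sim$-equivalences (each a ratio tending to a nonzero constant, hence stable under products) gives general term $\sim \tilde L(k)\,k^{-3H/2-1}$, where $\tilde L:=g_1\,g_2^{1/2}$ is slowly varying, since products and powers of slowly varying functions are slowly varying. Because $H\in(0,1)$ the exponent obeys $3H/2+1>1$ strictly, and summability follows from the Potter bounds used already in the proof of Theorem \ref{th:aH}: choosing $\epsilon>0$ with $3H/2+1-\epsilon>1$, one has $\tilde L(k)\leq Ck^{\epsilon}$ eventually, so the term is dominated by the summable $k^{-(3H/2+1-\epsilon)}$.

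The conceptual content is entirely in the telescoping fibre picture of the first step; everything afterwards is routine Karamata bookkeeping. The only point demanding care is the manipulation of the slowly varying factors when multiplying the asymptotics and when passing to summability, and this is precisely where I expect the (mild) main obstacle to lie; the Potter bounds resolve it cleanly by letting the slowly varying part be absorbed into an arbitrarily small power of $k$.
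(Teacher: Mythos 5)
Your proof is correct, but it takes a different route to the asymptotic equivalence than the paper does. The paper changes variables to write $Leb(\bar B_{1,N})=\int_{N\Delta}^{(N+1)\Delta}d(-y)\,dy$, expresses this as the difference of two tail integrals $\int_{N\Delta}^{\infty}-\int_{(N+1)\Delta}^{\infty}$, applies Karamata's theorem to each tail, and then recombines using the mean value theorem on $N^{-H}-(N+1)^{-H}$ together with slow-variation ratios. You instead apply the mean value theorem \emph{directly} to the integral, obtaining $\Delta\,d(-(N+1-\theta)\Delta)$, and then read off the rate from Assumption \ref{ass:LM}(1) plus the uniform convergence theorem (or Potter bounds) for slowly varying functions. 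Your route is more elementary — it needs no Karamata theorem at all — and it mirrors exactly how the paper proves the companion Lemma \ref{lem:LebB}; it also sidesteps a delicate step in the paper's argument, since subtracting two asymptotic equivalences is not in general a legitimate operation (the paper's conclusion is nonetheless correct, but your derivation avoids having to justify that step). Two further points in your favour: you make explicit the telescoping-fibre argument behind the set identity $Leb(\bar B_{1,N})=\int_0^{\Delta}d(s-(N+1)\Delta)\,ds$ (using monotonicity of $d$ and $d(u)\to 0$ as $u\to-\infty$), and you spell out the summability of $\sum_k k\,Leb(B_{1,k})^{1/2}Leb(\bar B_{1,k})$ via the exponent count $k^{-3H/2-1}$ and a Potter bound, both of which the paper leaves implicit ("the latter parts now follow from Assumption \ref{ass:LM}"). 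Note only that since the paper's convention is that $a_n \sim b_n$ means the ratio tends to a \emph{nonzero constant}, your extra factor $\Delta^{-H}$ is harmless, as you correctly observe.
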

	
	\begin{proof}[Proof of Lemma \ref{lem:LebB2}]
		Using a change-of-variables, Assumption \ref{ass:LM}, Karamata's theorem, the mean value theorem, and standard properties of slowly varying functions, we get
		\begin{align*}
			Leb (\bar B_{1,N}) &= \int_0^\Delta d(s-(N+1)\Delta) ds \\
			&= \int_{N\Delta}^{(N+1)\Delta} d(-y)dy \\
			&= \int_{N\Delta}^\infty g_1(y) y^{-H-1} dy - \int_{(N+1)\Delta}^\infty g_1(y) y^{-H-1} dy \\
			&\sim \frac{1}{H} g_1(N\Delta) (N\Delta)^{-H} -  \frac{1}{H} g_1((N+1)\Delta) ((N+1)\Delta)^{-H} \\
			&= \frac{\Delta^{-H}}{H} g_1(N\Delta) \left( N^{-H} -  \frac{g_1((N+1)\Delta)}{g_1(N\Delta)}  (N+1)^{-H} \right) \\
			&= \frac{\Delta^{-H}}{H} g_1(N\Delta) \left( N^{-H}-(N+1)^{-H} -  \left( \frac{g_1((N+1)\Delta)}{g_1(N\Delta)} -1\right)(N+1)^{-H} \right) \\
			&\sim \frac{\Delta^{-H}}{H} g_1(N\Delta) \left( N^{-H}-(N+1)^{-H} \right) \\
			&= \Delta^{-H} g_1(N\Delta) (N+r_1)^{-H-1} \\
			&\sim g_1(N\Delta) N^{-H-1}, \\
			&\sim g_1(N) N^{-H-1}, \\
		\end{align*}
		as $N \rightarrow \infty$, where $r_1 \in (0,1)$. The latter parts of the lemma now follow from Assumption \ref{ass:LM}. 
		%
	\end{proof}

	\begin{proof}[Proof of Theorem \ref{th:partial}]
		Suppose for simplicity that $L' \sim Poi(\nu)$ for some $\nu > 0$. The general case follows along similar lines to the proof in the Poisson case, using the compound Poisson representation of the IVT process as presented above.
		
		We follow the strategy of the proof outlined in Theorem 2 of \cite{DJLS2019}, adapted to our purposes. First, define the sequence of random variables
		\begin{align*}
			Z_i &=Z_{i\Delta}=L(A_{i\Delta} \setminus A_{(i-1)\Delta}) + \sum_{j=1}^{\infty} L((A_{i\Delta} \cap A_{(i-j)\Delta})\setminus A_{(i-j-1)\Delta} + (0,j\Delta)), \quad i = 1, 2, \ldots, n,
		\end{align*}
		where the sum converges almost surely by Kolmogorov's three-series theorem.
		Note that 
		\begin{align*}
			(A_{i\Delta} \cap A_{(i-j)\Delta})\setminus A_{(i-j-1)\Delta}=\{(x,s): (i-j-1)\Delta<s \leq (i-j)\Delta, 0\leq d(s-i\Delta)\},  
		\end{align*}
		and, hence, applying a time shift of $j\Delta$ leads to
		\begin{multline*}
			(A_{i\Delta} \cap A_{(i-j)\Delta})\setminus A_{(i-j-1)\Delta}+ (0,j\Delta)\\=\{(x,s): (i-1)\Delta<s \leq i\Delta, 0\leq d(s-(i+j)\Delta)\}. 
		\end{multline*}
		We observe that we can write $Z_i$ as a sum of independent variables,
		\begin{align*}
			Z_i &= \sum_{k=1}^{\infty} k L(B_{i,k}), 
		\end{align*}
		where $B_{i,k}$ are disjoint sets, given as in Equation \eqref{eq:LebB}. The construction of $Z_1$, and the decomposition into independent variables, are illustrated in Figure \ref{fig:Zillu}. Note furthermore that, by construction, the sequence $\{Z_i\}_{i=1}^{\infty}$ is iid. Let $S_n(\cdot)$ denote the de-meaned partial sums of a process, e.g. $S_n(X) = \sum_{i=1}^n(X_i - \mathbb{E}[X_i])$. Here we use the short-hand notation that $X_i=X_{i\Delta}$. Our aim is to show that
		\begin{enumerate}
			\item[(i)] $\mathbb{P}(Z_1 > y) \sim g(y) y^{-H-1}$, as $y \rightarrow \infty$, where $g$ is a slowly varying function.
			\item[(ii)] $\mathbb{E}[|S_n(X) - S_n(Z)|] = o(n^{1/(H+1)})$, as $n \rightarrow \infty$.
		\end{enumerate}
		Indeed, the result of the present theorem follows from (i) and (ii). To see this, note that (i) implies that \citep[][Theorem 2.6.7]{IL1971}
		\begin{align*}
			n^{-1/(H+1)} S_n(Z) \stackrel{(d)}{\rightarrow} L_{1+H}, \quad n \rightarrow \infty,
		\end{align*}
		where $L_{1+H}$ is an $(1+H)$-stable random variable with characteristic function as given in Equation \eqref{eq:chfct}. Secondly, (ii) now implies that
		\begin{align*}
			n^{-1/(H+1)} S_n(X) = n^{-1/(H+1)}  S_n(Z) + o_{\mathbb{P}}(1) \stackrel{(d)}{\rightarrow} L_{1+H}, \quad n \rightarrow \infty,
		\end{align*}
		which is what we wanted to show.
		
		We proceed to prove (i). First, let $Z^*$ be given by
		\begin{align*}
			Z^* = \sum_{k=1}^{\infty} k I(L(B_{1,k})> 0), 
		\end{align*}
		where $I(A)$ is the indicator of the set $A$. Note that $Z_1 \geq Z^* \geq 0$. Second, define $Z^{**} = Z_1-Z^*$. Let $g$ be a slowly varying function. As in  \cite{DJLS2019}, we prove that (a) $\mathbb{P}(Z^* > y) = g(y) y^{-H-1} + o( y^{-H-1})$ and (b) $\mathbb{P}(Z^{**} > y) = o(y^{-H-1})$ as $y \rightarrow \infty$, which allows us to deduce (i), i.e.~$\mathbb{P}(Z_1 > y) \sim g(y) y^{-H-1}$ as $y \rightarrow \infty$. 
		
		To prove (a), define the random variables
		\begin{align*}
			Z_1^{*} &:= \sum_{k=1}^{\infty} k I(L (B_{1,k})> 0)I(L (\bar B_{1,k}) =  0),  \\
			Z_2^{*} &:= \sum_{k=1}^{\infty} k I(L (B_{1,k})> 0)I(L (\bar B_{1,k}) =  1),
		\end{align*}
		where, as before,  
		\begin{align}\label{eq:LebB2}
			\bar B_{1,k} = \{(x,s) : 0 \leq s \leq \Delta , 0 \leq x < d(s-(k+1)\Delta)\} = \cup_{j=k+1}^{\infty} B_{1,j}.
		\end{align}
		The intuition is that $Z_1^{*}$ takes the value $k$ if there is an observation in $B_{1,k}$ but not in $B_{1,k+1}, B_{1,k+2}, \ldots$, while $Z_2^*$ takes the value $k$ if there is an observation in $B_{1,k}$ and precisely one observation in $B_{1,k+1}, B_{1,k+2}, \ldots$. Note that $Z^*  \geq  Z_1^{*} + Z_2^{*}  \geq 0$. Define, analogously to above $Z_3^{*} = Z^* - Z_1^* -Z_2^{*}$, which is $Z^*$ with the two largest ``points'' removed.\footnote{It can be shown that if $H \in (1/2,1)$, then we can set $Z_2^* = 0$, i.e., in this case, we only need to remove the largest point from $Z^*$ for the proof to go through.} We prove that (a') $\mathbb{P}(Z_1^{*} > y) \sim g(y) y^{-H-1}$ and (b') $\mathbb{P}(Z_2^{*} > y) = o(y^{-H-1})$ and  $\mathbb{P}(Z_3^{*} > y) = o(y^{-H-1})$ as $y \rightarrow \infty$, which allows us to deduce (a), i.e.~$\mathbb{P}(Z^* > y) \sim g(y) y^{-H-1}$ as $y \rightarrow \infty$. As noted in \cite{DJLS2019}, to prove (a'), we only need to prove  $\mathbb{P}(Z_1^{*} > N) \sim g(N) N^{-H-1} + o(N^{-H-1})$ for $N \rightarrow \infty$, where $N \in \mathbb{N}$. Let therefore $N \in \N$. We get, using Lemma \ref{lem:LebB2} and the fact that $L' \sim Poi(\nu)$,
		\begin{align*}
			\mathbb{P}(Z_1^{*} > N) &= \mathbb{P}\left(  \sum_{k=1}^{\infty} k I(L (B_{1,k})> 0)I(L (\bar B_{1,k}) =  0) > N \right) \\
			&=  \mathbb{P}\left( L (\bar B_{1,N}) >0 \right) \\
			&= 1- \mathbb{P}\left( L (\bar B_{1,N}) =0 \right) \\
			&= 1- e^{-\nu Leb (\bar B_{1,N})} \\
			&=  Leb (\bar B_{1,N}) + o(Leb (\bar B_{1,N})) \\
			&\sim g_1(N)N^{-H-1},
		\end{align*}
		as $ N \rightarrow \infty$.
		
		To prove (b'), note that, by Lemma \ref{lem:LebB2}, 
		\begin{align*}
			\mathbb{P}(Z_2^{*} > N) &= \mathbb{P}\left(  \sum_{k=1}^{\infty} k I(L (B_{1,k})> 0)I(L (\bar B_{1,k}) =  1) > N \right) \\
			&=  \mathbb{P}\left( L (\bar B_{1,N}) >1) \right) \\
			&= 1- \mathbb{P}\left( L (\bar B_{1,N}) =0 \right) -\mathbb{P}\left( L (\bar B_{1,N}) =1) \right) \\
			&= 1- e^{-\nu Leb (\bar B_{1,N})} - \nu Leb (\bar B_{1,N}) e^{-\nu Leb (\bar B_{1,N})} \\
			&\leq  \nu Leb (\bar B_{1,N})   - \nu Leb (\bar B_{1,N}) e^{-\nu Leb (\bar B_{1,N})} \\
			&=  \nu Leb (\bar B_{1,N}) (1  -e^{-\nu Leb (\bar B_{1,N})}) \\
			&\leq  \nu^2 Leb (\bar B_{1,N})^2 \\
			&\sim g_1(N)^2 N^{-2-2H} \\
			&= o(N^{-1-H}),
		\end{align*}
		as $N \rightarrow \infty$, by Lemma \ref{lem:LebB2}. To show that  $\mathbb{P}(Z_3^{*} > y) = o(y^{-H-1})$ as $y \rightarrow \infty$, we show that $Z_3^{*}$ is bounded in $L^2$. Let $\epsilon \in (0,H)$. Using Minkowski's inequality, the independence of $L (B_{1,k})$ and $L (\bar B_{1,k})$, and Lemmas \ref{lem:LebB} and \ref{lem:LebB2},  we get
		\begin{align*}
			\mathbb{E}[(Z_3^{*})^2]^{1/2} &= \mathbb{E}\left[ \left(Z^* -Z_1^{*}  -Z_2^{*} \right)^2 \right]^{1/2} \\
			&= \mathbb{E}\left[ \left( \sum_{k=1}^{\infty} k I(L (B_{1,k})> 0)\left(1 - I(L (\bar B_{1,k}) =  0) - I(L (\bar B_{1,k}) =  1) \right)\right)^2 \right]^{1/2} \\
			&\leq  \sum_{k=1}^{\infty} \mathbb{E}\left[  k^2 I(L (B_{1,k})> 0)^2\left(1 - I(L (\bar B_{1,k}) =  0) - I(L (\bar B_{1,k}) =  1) \right)^2 \right]^{1/2} \\
			&=  \sum_{k=1}^{\infty} k \mathbb{E}\left[  I(L (B_{1,k})> 0) I(L (\bar B_{1,k}) > 1) \right]^{1/2} \\
			&=  \sum_{k=1}^{\infty} k  \mathbb{P}(L (B_{1,k})> 0)^{1/2}  \mathbb{P}(L (\bar B_{1,k}) > 1)^{1/2} \\
			&=  \sum_{k=1}^{\infty} k  \left(1-\mathbb{P}(L (B_{1,k})= 0)\right)^{1/2}  \left(1-\mathbb{P}(L (\bar B_{1,k}) = 0) - \mathbb{P}(L (\bar B_{1,k}) = 1)\right)^{1/2} \\
			&=  \sum_{k=1}^{\infty} k  \left(1- \exp(-\nu Leb(B_{1,k}) \right)^{1/2}  \left(1-\exp(-\nu Leb(\bar B_{1,k}))- \nu Leb(\bar B_{1,k}) \exp(-\nu Leb(\bar B_{1,k}))\right)^{1/2} \\
			&\leq  \sum_{k=1}^{\infty} k  \nu^{1/2} Leb(B_{1,k})^{1/2}  \left(\nu Leb(\bar B_{1,k})- \nu Leb(\bar B_{1,k}) \exp(-\nu Leb(\bar B_{1,k}))\right)^{1/2} \\
			&=  \sum_{k=1}^{\infty} k \nu^{1/2} Leb(B_{1,k})^{1/2}  \nu^{1/2} Leb(\bar B_{1,k})^{1/2}  \left(1- \exp(-\nu Leb(\bar B_{1,k}))\right)^{1/2} \\
			&\leq    \nu^{3/2}  \sum_{k=1}^{\infty} k   Leb(B_{1,k})^{1/2}   Leb(\bar B_{1,k}) \\
			&<\infty,
		\end{align*}
		where we used, repeatedly, that $1-e^{-x} \leq x$.
		
		To prove (b), we show that $Z^{**}$ is bounded in $L^2$. We utilize the assumption that $L' \sim Poi(\nu)$, which implies that $L (B) = L(B)$ for all Borel sets $B$. This allows us to write
		\begin{align*}
			Z_1 	&=   \sum_{k=1}^{\infty} k L(B_{1,k}) \\
			&=  \sum_{k=1}^{\infty} k L(B_{1,k}) I(L(B_{1,k}) = 1)+  \sum_{k=1}^{\infty} k L(B_{1,k})I(L(B_{1,k}) > 1) \\
			&=  \sum_{k=1}^{\infty} k I(L(B_{1,k}) = 1)+  \sum_{k=1}^{\infty} k L(B_{1,k})I(L(B_{1,k}) > 1),
		\end{align*}
		and
		\begin{align*}
			Z_1^* &= \sum_{k=1}^{\infty} k I(L (B_{1,k})> 0) \\
			&= \sum_{k=1}^{\infty} k I(L (B_{1,k}) = 1) +  \sum_{k=1}^{\infty} k I(L (B_{1,k})> 1).
		\end{align*}
		We deduce that
		\begin{align*}
			Z_1^{**} 	&= Z_1-Z_1^* \\
			&=\sum_{k=1}^{\infty} k L(B_{1,k})I(L(B_{1,k}) > 1) -   \sum_{k=1}^{\infty} k I(L (B_{1,k})> 1) \\
			& = \sum_{k=1}^{\infty} kI(L(B_{1,k}) > 1) ( L(B_{1,k})-1)\\
			& \leq \sum_{k=1}^{\infty} kI(L(B_{1,k}) > 1) L(B_{1,k}).
		\end{align*}
		Using this and then Minkowski's inequality, we may write
		\begin{align*}
			\mathbb{E}[(Z_1^{**})^2]^{1/2} 	&\leq  \mathbb{E}[(\sum_{k=1}^{\infty} kI(L(B_{1,k}) > 1) L(B_{1,k}))^2]^{1/2} 	\\
			&\leq \sum_{k=1}^{\infty}  \mathbb{E}[k^2 I(L(B_{1,k}) > 1) L(B_{1,k})^2]^{1/2} \\
			&= \sum_{k=1}^{\infty} k \mathbb{E}[I(L(B_{1,k}) > 1) L(B_{1,k})^2]^{1/2} \\
			&= \sum_{k=1}^{\infty} k \mathbb{E}[ \sum_{j=2}^\infty (I(L(B_{1,k}) = j)) L(B_{1,k})^2]^{1/2} \\
			&= \sum_{k=1}^{\infty} k \left(\sum_{j=2}^\infty   \mathbb{E}[ I(L(B_{1,k}) = j) j^2] \right)^{1/2} \\
			&= \sum_{k=1}^{\infty} k \left(\sum_{j=2}^\infty   \mathbb{P}(I(L(B_{1,k}) = j) j^2 \right)^{1/2} \\
			&= \sum_{k=1}^{\infty} k \left(\sum_{j=2}^\infty  \frac{(\nu Leb(B_{1,k}))^j e^{-\nu Leb(B_{1,k})}}{j!} j^2 \right)^{1/2} \\
			&= \sum_{k=1}^{\infty} k \left(\nu^2 Leb(B_{1,k})^2 \sum_{j=2}^\infty  \frac{(\nu Leb(B_{1,k}))^{j-2} e^{-\nu Leb(B_{1,k})}}{j!} j^2 \right)^{1/2} \\
			&= \sum_{k=1}^{\infty} k \left(\nu^2  Leb(B_{1,k})^2 \sum_{j=0}^\infty  \frac{(\nu Leb(B_{1,k}))^{j} e^{-\nu Leb(B_{1,k})}}{(j+2)!} (j+2)^2 \right)^{1/2} \\
			&= \sum_{k=1}^{\infty} k \left(\nu^2  Leb(B_{1,k})^2 \sum_{j=0}^\infty  \frac{(\nu Leb(B_{1,k}))^{j} e^{-\nu Leb(B_{1,k})}}{j!} \frac{(j+2)}{(j+1)} \right)^{1/2} \\
			&\leq \sum_{k=1}^{\infty} k \left(\nu^2  Leb(B_{1,k})^2 \sum_{j=0}^\infty 2 \frac{(\nu Leb(B_{1,k}))^{j} e^{-\nu Leb(B_{1,k})}}{j!} \right)^{1/2} \\
			&=\sqrt{2} \nu \sum_{k=1}^{\infty} k  Leb(B_{1,k}) \\
			&<\infty,
		\end{align*}
		by Lemma \ref{lem:LebB}.
		
		We now prove (ii). Let $\kappa_1 = \mathbb{E}[L']$. Note first that 
		\begin{align*}
			\mathbb{E}[Z_1] &= \kappa_1\left(Leb(A_{i\Delta} \setminus A_{(i-1)\Delta}) + \sum_{j=1}^{\infty} Leb((A_{i\Delta} \cap A_{(i-j)\Delta})\setminus A_{(i-j-1)} + (0,j\Delta)) \right) \\
			&=  \kappa_1 Leb(A) \\
			&= \mathbb{E}[X_1],
		\end{align*}
		and
		\begin{align*}
			X_i =  L(A_{i\Delta} \setminus A_{(i-1)\Delta}) + \sum_{j=1}^{\infty} L(A_{i\Delta} \cap A_{(i-j)\Delta}), \quad i = 1,2,\ldots, n,
		\end{align*}
		see Figure \ref{fig:Zillu} for an illustration of these results. Now, following again \cite{DJLS2019}, we can use this to write
		\begin{align*}
			S_n(X) - S_n(Z) &= \sum_{i=1}^n X_i - \sum_{i=1}^n Z_i \\
			&=  \sum_{i=1}^n X_i - \sum_{i=1}^n Z_i \\
			&=  \sum_{i=1}^n \left( L(A_{i\Delta} \setminus A_{(i-1)\Delta}) + \sum_{j=1}^{\infty} L((A_{i\Delta} \cap A_{(i-j)\Delta})\setminus A_{(i-j-1)\Delta}) \right) \\
			&- \sum_{i=1}^n \left( L(A_{i\Delta} \setminus A_{(i-1)\Delta}) + \sum_{j=1}^{\infty} L((A_{i\Delta} \cap A_{(i-j)\Delta})\setminus A_{(i-j-1)\Delta} + (0,j\Delta)) \right) \\
			&=R_n' - R_n'',
		\end{align*}
		where 
		\begin{align*}
			R_n'  &=  \sum_{i=1}^n L(A_0 \cap A_{i\Delta}), 
		\end{align*}
		and
		\begin{align*}
			R_n'' &=  \sum_{i=1}^n \sum_{k=1}^\infty k L(B_{n+1-i,i+k}).
		\end{align*}
		Let  $\epsilon \in (0,H^2/2)$.  By Karamata's Theorem, Assumption \ref{ass:LM}, and properties of slowly varying functions, we have
		\begin{align*}
			\mathbb{E}[R_n'] &= \mathbb{E}[L'] \sum_{k=1}^n Leb(A_0 \cap A_{k\Delta}) \\
			&= \kappa_1  \sum_{k=1}^n \int_{k\Delta}^\infty d(-x) dx \\
			&\sim   \sum_{k=1}^n g_1(k)k^{-H} \\
			&\sim O(n^{1-H+\epsilon}),
		\end{align*}
		as $n \rightarrow \infty$. Note that since $\epsilon<\frac{H^2}{2}$, then $1-H+\epsilon < \frac{1}{H+1}$, proving that $\mathbb{E}[R_n']  = o(n^{1/(H+1)})$. Similarly, using Lemma \ref{lem:LebB2},
		\begin{align*}
			\mathbb{E}[R_n''] 	&= \kappa_1 \sum_{i=1}^n \sum_{k=1}^\infty k Leb(B_{n+1-i,i+k}) \\
			&= \kappa_1 \sum_{i=1}^n \sum_{k=1}^\infty k Leb(B_{1,i+k}) \\
			&\sim O(n^{1-H+\epsilon}),
		\end{align*}
		as $n \rightarrow \infty$. This concludes the proof.
	\end{proof}

	\begin{figure}[h!] 
		\centering 
		\includegraphics[trim={5cm 17cm 2cm 5cm},clip, width=0.95\columnwidth]{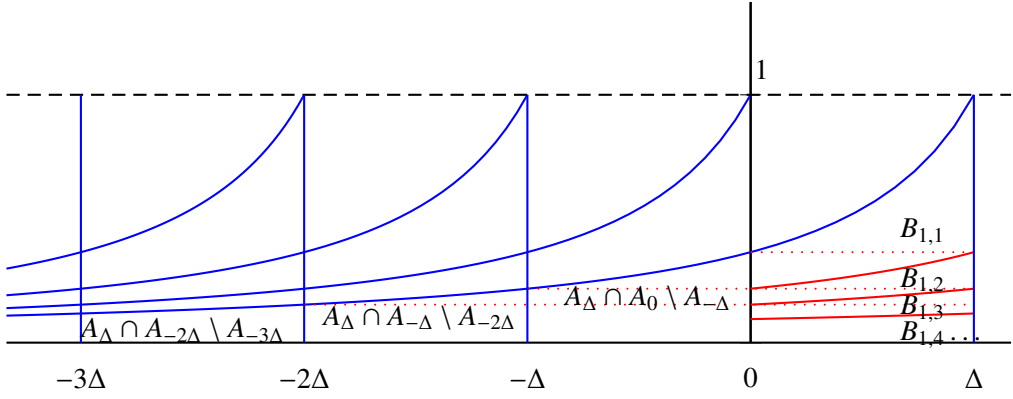} 
		\caption{\it The construction of $Z_1=L(A_{\Delta} \setminus A_0) + \sum_{j=1}^{\infty} L(A_{\Delta} \cap A_{(1-j)\Delta} \setminus A_{(-j) \Delta}+ (0,j\Delta))$, and the decomposition into independent variables, $Z_1 = \sum_{k=1}^{\infty} k L(B_{1,k})$.
			}
		\label{fig:Zillu}
	\end{figure}

	\begin{proof}[Proof of Lemma \ref{lem:fCond}]
		Using Bayes' theorem and the unconditional independence of $L(A_t \cap A_{t+h})$  and $L(A_t\setminus A_{t+h})$, we have for  $x \in \mathbb{N}\cup \{0\}$ and $l \in \{0, 1, \ldots, x\}$:
		\begin{align*}
			&\PP(L(A_t \cap A_{t+h})=l|X_t=x)=\frac{\PP(X_t=x|L(A_t \cap A_{t+h})=l)\PP(L(A_t \cap A_{t+h})=l)}{\PP(X_t=x)}
			\\
			&=\frac{\PP(L(A_t \cap A_{t+h})+L(A_t\setminus A_{t+h})=x|L(A_t \cap A_{t+h})=l)\PP(L(A_t \cap A_{t+h})=l)}{\PP(X_t=x)}\\
			&=\frac{\PP(L(A_t\setminus A_{t+h})=x-l|L(A_t \cap A_{t+h})=l)\PP(L(A_t \cap A_{t+h})=l)}{\PP(X_t=x)}
			\\
			&=\frac{\PP(L(A_t\setminus A_{t+h})=x-l)\PP(L(A_t \cap A_{t+h})=l)}{\PP(X_t=x)}.
		\end{align*}
	\end{proof}

	\begin{proof}[Proof of Proposition \ref{prop:fCond}]
		Using the conditional law of total probability we obtain the following convolution formula
		\begin{align*}
			&\PP(X_{t+h}=x_{t+h}|X_t=x_t)
			=\PP(L(A_t \cap A_{t+h})+L(A_{t+h}\setminus A_t)=x_{t+h}|X_t=x_t)
			\\
			&=\sum_{c=0}^{\min(x_t,x_{t+h})}
			\PP(L(A_t \cap A_{t+h})+L(A_{t+h}\setminus A_t)=x_{t+h}|X_t=x_t, L(A_t\cap A_{t+h})=c) \\
			&\cdot \PP(L(A_t\cap A_{t+h})=c|X_t=x_t)\\
			&=\sum_{c=0}^{\min(x_t,x_{t+h})}
			\PP(L(A_{t+h}\setminus A_t)=x_{t+h}-c|X_t=x_t, L(A_t\cap A_{t+h})=c)
			\PP(L(A_t\cap A_{t+h})=c|X_t=x_t)\\
			&=\sum_{c=0}^{\min(x_t,x_{t+h})}
			\PP(L(A_{t+h}\setminus A_t)=x_{t+h}-c)
			\PP(L(A_t\cap A_{t+h})=c|X_t=x_t).
		\end{align*}
		
	\end{proof}

	\newpage \clearpage
	
	\setcounter{section}{0}
	\renewcommand{\thesection}{S\arabic{section}}
	\renewcommand{\thesubsection}{S\arabic{section}.\arabic{subsection}}
	\setcounter{equation}{0}
	\setcounter{figure}{0}
	\setcounter{table}{0}
	\setcounter{page}{1}
	\makeatletter
	\numberwithin{equation}{subsection}
	
	\renewcommand{\theequation}{S\arabic{section}.\arabic{equation}}
	\renewcommand{\thefigure}{S\arabic{figure}}
	\renewcommand{\thetable}{S\arabic{table}}
	
	\begin{center}
		\textbf{\large Supplemental Materials: Inference and forecasting for continuous-time integer-valued trawl processes}
	\end{center}
	\section{Introduction}
	
	This document
	is structured as follows.
	\begin{itemize}
		\item Section \ref{sec:vis} presents a decomposition of trawl processes. 
		\item 
		
		Section \ref{sec:mm} provides the details of a method of moment-based estimation of integer-valued trawl processes so far used in the literature. 
		\item	Section \ref{app:stdErr} describes  how the composite likelihood estimator and the corresponding asymptotic covariance matrices can be computed in practice. 
		\item Section \ref{app:f_derive} presents an 
		expression for the pairwise likelihood for integer-valued processes (not restricted to count data) and discusses a simulation unbiased estimator for the composite likelihood.

		\item Section \ref{sec:SuppSim} contains additional details for the simulation study:  
		Subsection \ref{sec:detailssim} describes the simulation setup for the simulation study reported in the main article. 
		Subsections \ref{sec:emp_est} and \ref{sec:fs_ic} present the finite sample results of the MCL estimator and of the model selection procedure, respectively. 
		Section \ref{sec:setup2} repeats the simulation study using different parameter choices.  
		\item 
		Section \ref{sec:SuppEmp} contains additional details on the empirical study, including details on the  data pre-processing, see Subsection \ref{sec:EmpSetUp}, and additional forecasting results, see Subsection \ref{sec:add_forec}.
		
		\item Sections \ref{sec:levyBases} and  \ref{sec:trawls} present details on various parametric L\'evy bases and trawl functions, respectively. These structures might be used in the construction of IVT processes, as illustrated in the main paper. 
		\item Section \ref{sec:gradients} contains details on how to calculate the gradients for the log-composite likelihood functions implied by most of the parametric IVT processes. These calculations are straightforward to make (although somewhat tedious) and can rather easily be made for other IVT specifications than those considered here. The gradients can be used in the numerical optimization of the composite likelihood functions and are also crucial for implementing the asymptotic theory presented in the main paper. In particular, to estimate the asymptotic variance matrix $V(\theta)$, it is necessary to  evaluate the gradient at $\hat \theta^{CL}$. 
		\item  Section \ref{sec:add_calcs} contains some additional technical calculations. 
		\item Section \ref{sec:WeakDepGMM} shows that IVT processes are $\theta$-weakly dependent and presents the asymptotic theory for the  GMM estimation of the parameters. Section \ref{sec:AVARcomp} contains an analytical comparison of the asymptotic variance of the MCL and GMM estimators in the case of the Poisson-Exponential IVT process.
		\item Lastly, Section \ref{sec:code} contains brief details on the software packages accompanying the main paper. In particular, we supply software for simulation, estimation (including inference), model selection, and forecasting of IVT processes.
	\end{itemize}
	\clearpage
	
	\newpage
	
	\section{Trawl process decomposition}\label{sec:vis}
	%
	%
	
	When deriving theoretical results for trawl processes, we typically decompose trawl sets at different time points into a partition of disjoint sets. We note that, given a trawl function $d,$ it holds that
	\begin{align} \label{eq:rule1}
		Leb(A_t) = Leb(A) = \int_{-\infty}^0 d(s) ds.
	\end{align}
	It is also useful to note that for $t\geq s$,
	\begin{align}\label{eq:rule2}
		Leb(A_t \cap A_s) =  Leb(A_{t-s} \cap A)  = \int_{-\infty}^{-(t-s)} d(s) ds,
	\end{align}
	and
	\begin{align}\label{eq:rule3}
		Leb(A_t \setminus  A_s)  = Leb(A_s \setminus  A_t) = L(A) -  Leb(A_t \cap A_s).
	\end{align}
	Thus, given a trawl function $d$, it is straightforward to calculate $Leb(A), Leb(A_t \cap A_s)$, and $L(A_t \setminus A_s)$ for all $t\geq s\geq 0$.
	
	Also, we can write, for $0\leq s \leq t$, 
	\begin{align*}
		X_s&=L(A_s)=L(A_s \cap A_{t})+L(A_s\setminus A_{t}),\\ X_{t}&=L(A_{t})=L(A_s \cap A_{t})+L(A_{t}\setminus A_{s}),
	\end{align*}
	where the three random variables $L(A_s \cap A_{t}), L(A_s\setminus A_{t}), L(A_{t}\setminus A_{s})$ are independent since the corresponding sets are disjoint. 
	In Figure \ref{fig:trawl_decomp}, we illustrate such a decomposition for $s=3, t=4$. 
	
	\begin{figure}[htb]
		\centering
		\includegraphics[scale=0.75]{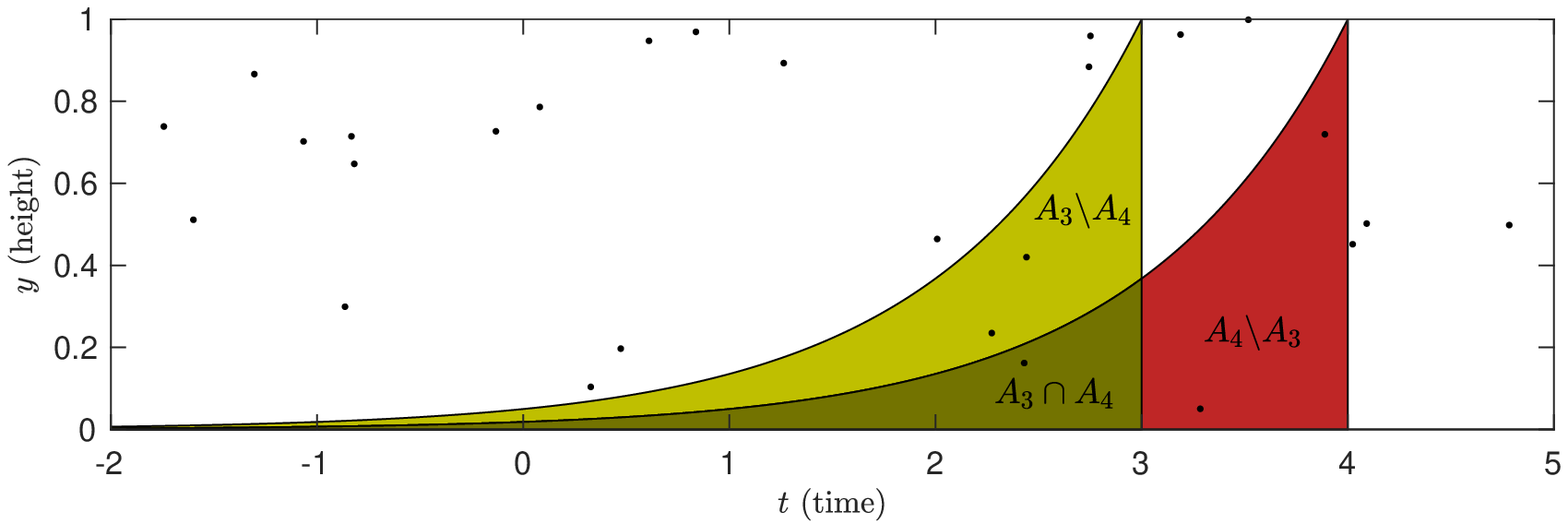} 
		\caption{\it  Decomposition of exponential trawl sets at times $s=3, t=4$, super-imposed on a Poisson L\'evy basis, with $\nu = 5$, on $\mathbb{R} \times [0,1]$. The exponential trawl parameter is set to $\lambda = 1$.}
		\label{fig:trawl_decomp}
	\end{figure}
	
	\clearpage \newpage
	
	\section{Method-of-moments-based estimation of IVT processes}\label{sec:mm}
	For a parametric IVT model, let  the parameter vector $\theta$ of the model be given by $\theta = (\theta_d, \theta_L)$, where $\theta_d$ contains the parameters governing the trawl function $d$ and $\theta_L$ contains the parameters governing the marginal distribution of the IVT process, as specified by the underlying L\'evy seed $L'$. For instance, in the case of the Poisson-Exponential IVT process considered above, cf. Figure \ref{fig.:PRMtrawl}, we would have $\theta_d = \lambda$ and $\theta_L = \nu$.  This section discusses how $\theta_d$ and $\theta_L$ can be estimated in a two-step procedure using a method-of-moments procedure. It is this procedure that has been used so far in most applied work on IVT processes. Note that we develop the asymptotic theory for the full (one-step) GMM estimation of $\theta$ in Section \ref{sec:WeakDepGMM} below.
	
	Because the correlation structure  of an IVT process is decoupled from its marginal distribution, the theoretical autocorrelation function of the process will not depend on $\theta_L$, and we can thus estimate $\theta_d$ in the first step, using the empirical autocorrelations of the data. To be precise, let $\rho_{\theta_d}(h)$ be the parametric autocorrelation function as implied by the trawl function $d$, see Equation \eqref{eq.:corr}, and let $\hat{\rho}(k)$ be the estimate of the empirical autocorrelation of the data at lag $k$. The GMM estimator of $\theta_d$ is 
	\begin{align}\label{eq:MM}
		\hat{\theta}_d^{GMM} := \arg \min_{\theta_d \in \Theta_d} \sum_{k=1}^{K} \left( \rho_{\theta_d}(k) - \hat{\rho}(k) \right)^2,
	\end{align}
	where $K \geq 1$ denotes the number of lags to include in the estimation and $\Theta_d$ is the parameter space of the trawl parameters in $\theta_d$.\footnote{\label{foot:lam}In the case of the IVT process with an exponential trawl function $d(s) = \exp(\lambda s)$, $s\leq 0$, we use a closed-form estimator of the $\lambda$ parameter using only the autocorrelation function calculated at the first lag, that is $\hat \lambda^{GMM} = - \log \hat \rho(1)/\Delta$, where $\Delta>0$ is the equidistant time between observations.}
	
	For the estimation of the parameters $\theta_L$ governing the marginal distribution of the IVT process, recall that the $j$th cumulant $\kappa_j$ of $X_t$ is given by
	\begin{align*}
		\kappa_j = Leb(A) \cdot \kappa_j', \quad j = 1, 2, \ldots,
	\end{align*}
	where $\kappa_j'$ is the $j$th cumulant of the L\'evy seed $L'$.  Using the estimates $\hat{\theta}_d^{GMM}$ from the first step, we can estimate the Lebesgue measure of the trawl set as
	\begin{align}\label{eq:estLebA}
		\widehat{Leb(A)} = \int_0^{\infty} \hat{d}(-s) ds, 
	\end{align}
	where $\hat{d}(\cdot)$ denotes the estimate of the trawl function implied by the estimated trawl parameters  $\hat{\theta}_d^{GMM}$. The parameters governing the marginal distribution, $\theta_L$, can now be estimated as follows. Let $r$ be the number of elements in $\theta_L$ and denote by $\widehat{Leb(A)}$ the estimate of the Lebesgue measure obtained from \eqref{eq:estLebA}. Estimates of the cumulants, $\hat{\kappa}_j$, can be obtained straightforwardly  by calculating the empirical cumulants of the data. Let $I_r$ be a set of $r$ distinct natural numbers (e.g., the numbers from $1$ to $r$). Now
	\begin{align*}
		\hat{\kappa}_j = \widehat{Leb(A)} \cdot \kappa_j', \quad j \in I_r,
	\end{align*}
	defines $r$ equations in the $r$ unknowns $\theta_L$. GMM  estimates of the elements in $\theta_L$, $\hat{\theta}_L^{GMM}$, can be obtained by solving these $r$ equations. Finally, set $\hat{\theta}^{GMM} := (\hat{\theta}_d^{GMM},\hat{\theta}_L^{GMM})$, which is the method-of-moments-based estimator of $\theta$.

	\clearpage \newpage

	\section{Practical details on feasible inference using the MCL estimator}\label{app:stdErr}
	As shown in Theorem \ref{th:CLT} of Section \ref{sec:asym}, the asymptotic variance of the maximum composite likelihood estimator, $\hat \theta_{CL}$ is given by the inverse Godambe information matrix,
	\begin{align*}
		G(\theta_0)^{-1} = H(\theta_0)^{-1} V(\theta_0) H(\theta_0)^{-1}.
	\end{align*}
	As mentioned, the matrices $H(\theta_0)$ and $V(\theta_0)$ can be consistently estimated by
	\begin{align*}
		\hat{H}(\hat{\theta}^{CL} ) &=  -\frac{1}{n} \frac{\partial}{\partial \theta \partial\theta'}l_{CL}(\hat \theta^{CL};x) \\
		&=  -\frac{1}{n}  \sum_{k=1}^K  \sum_{i=1}^{n-k} \frac{\partial^2}{\partial \theta \theta'} \log f(x_{(i+k)\Delta},x_{i\Delta};\theta)|_{\theta=\hat{\theta}^{CL}}  , \\
		\hat{V}_{HAC}(\hat{\theta}^{CL} ) &= \hat{\Sigma}_0 + \sum_{j=1}^q \left(1 - \frac{j}{q+1}\right)(\hat{\Sigma}_j + \hat{\Sigma}_j'),
	\end{align*}
	where
	\begin{align*}
		\hat{\Sigma}_j := \frac{1}{n}  \sum_{k=1}^K\sum_{k'=1}^K \sum_{i = 1}^{n-j-k'} \frac{\partial}{\partial \theta} \log f(x_{(i+k)\Delta},x_{i\Delta};\theta)|_{\theta=\hat{\theta}^{CL} } \frac{\partial}{\partial \theta'}\log f(x_{(i+j+k')\Delta},x_{(i+j) \Delta};\theta))|_{\theta=\hat{\theta}^{CL} },  
	\end{align*}
	and $q \in \mathbb{N}$ is the number of autocorrelation terms to take into account in the HAC estimator. The Hessian, $H(\theta_0)$, is straightforwardly estimated by the above expression. Indeed, a numerical approximation of this matrix is often directly available as output from the software maximizing the composite likelihood function. We have found that while this estimator $\hat{H}(\hat{\theta}^{CL} )$ is quite precise, the  HAC estimator $\hat{V}_{HAC}(\hat{\theta}^{CL} )$ can be rather imprecise. In practice, we therefore recommend estimating $V(\theta_0)$ using  simulation-based approach; the details are given in the following  
	\ref{app:B1}.
	
	\subsection{Simulation-based approach to estimating the asymptotic covariance matrix}\label{app:B1}
	To obtain a simulation-based estimator of $V(\theta_0)$, let $B$ denote a positive integer (e.g. $B = 500$) and suppose that $\hat \theta^{CL}$ is the maximum composite likelihood estimate of $\theta$ from \eqref{eq:clmax} when applied to the original data. To estimate $V(\theta_0)$, do as follows:
	\begin{enumerate}
		\item \label{st:1}
		For $b = 1, 2, \ldots, B$, simulate $N$ observations of a trawl process $X^{(b)} = \{X^{(b)}_i\}_{i=1}^N$ with underlying parameters $\hat \theta^{CL}$.
		\item
		For $b = 1, 2, \ldots, B$, use the simulated data $X^{(b)}$ to calculate $s^{(b)}(\hat \theta^{CL}) = N^{-1/2} \frac{\partial}{\partial\theta} l_{CL}( \hat \theta^{CL};X^{(b)})$. The gradient can either be calculated numerically or analytically.\footnote{The Supplementary Material contains analytical expressions for the gradients implied by the various parametric specifications considered in this paper.} Note that the bootstrap data is used to calculate the gradient, but the parameter vector $\hat \theta^{CL}$ is the original estimator obtained from the initial (real) data set. 
		\item
		Estimate $V(\hat \theta^{CL})$ as the sample covariance matrix of the simulated scores  $\left\{ s^{(b)}(\hat \theta^{CL})\right\}_{b=1}^{B}$.
	\end{enumerate}
	Note that the number of simulated observations, $N$, in Step \ref{st:1} does not need to equal the number of observations in the original data set, $n$. When $n$ is large, setting $N=n$ can be computationally costly; we found that setting $N = 500$ or even $N = 100$ provided good results. In our simulation study and in the empirical application we have set $B = N = 500$.
	
	\subsection{Feasible inference using parametric bootstrap methods}\label{app:B2}
	It is, of course, also possible to side-step the estimation of $G(\theta)^{-1}$ entirely, by considering a ``standard'' parametric bootstrap approach where the asymptotic variance of the MCL estimator is approximated directly by applying the MCL estimator to $B$ bootstrap samples of IVT processes simulated with parameters $\hat \theta^{CL}$. While mechanically simpler to implement than the simulation-based procedure suggested above in Section \ref{app:B1}, such a parametric bootstrap approach will often be more computationally demanding, since one needs to apply the numerical optimization of the composite likelihood function in each bootstrap replication, whereas the approach suggested in Section \ref{app:B1} only requires the evaluation of the composite likelihood once for each $b = 1, 2, \ldots, B$. Further, the ``standard'' bootstrap only delivers standard errors of $\hat \theta^{CL}$ and can not be used for calculating information criteria (Section \ref{sec:IC}). For these reasons, we do not consider the standard parametric bootstrap approach in this paper. The interested reader should have no trouble implementing it, however.
	
	
	\clearpage \newpage

	\section{Pairwise likelihood for IVT processes and simulation-based likelihood}\label{app:f_derive}

	Proposition \ref{prop:positiveLB} presented a simple expression for the pairwise PMFs $f(x_{i+k},x_{i};\theta)$ in the case where the underlying L\'evy basis is positive. In the general case, i.e.~where the L\'evy basis is integer-valued, we have, by the law of total probability, that 
	\begin{align}
		f(x_{i+k},&x_{i};\theta) := \PP_{\theta}\left( X_{(i+k)\Delta} = x_{i+k}, X_{i\Delta} = x_{i}\right) \nonumber \\
		=& \sum_{c=-\infty}^{\infty}  \PP_{\theta}\left( X_{(i+k)\Delta} = x_{i+k}, X_{i\Delta} = x_{i}| L(A_{(i+k)\Delta} \cap A_{i\Delta}) = c \right)\cdot \PP_{\theta}\left(L(A_{(i+k)\Delta} \cap A_{i\Delta}) = c\right) \nonumber\\
		=& \sum_{c=-\infty}^{\infty}  \PP_{\theta}\left( L(A_{(i+k)\Delta} \setminus A_{i\Delta}) = x_{i+k} - c\right) \PP_{\theta}\left( L(A_{i\Delta} \setminus A_{(i+k)\Delta}) = x_{i}-c\right) \label{eq:jointDens} \\
		&\cdot \PP_{\theta}\left(L(A_{(i+k)\Delta} \cap A_{i\Delta}) = c\right), \nonumber
	\end{align}
	which can be used for calculating the pairwise likelihood as a function of the parameter vector $\theta$, in the general integer-valued case. When implementing this result in practice, one could truncate the sum in \eqref{eq:jointDens} according to some criterion in order to approximate the joint PMF. Truncation can be avoided by resorting to a simulation-based approach, however. The following proposition shows that a simulation unbiased version of the joint PMF exists and that the simulation is, in fact, easy to perform.
	\begin{proposition}\label{prop:unbiasedSims}
		Let $t, s\geq 0$, choose $M \in \N$ and let $C^{(j)} \sim L(A_t \cap A_s)$, $j = 1, 2, \ldots, M$, be an iid sample. Then
		\begin{align*}
			\hat{f}(x_{t},x_{s}; \theta) = \frac{1}{M}\sum_{j=1}^M \PP_{\theta}(L(A_t \setminus A_s) = x_t - C^{(j)})\PP_{\theta}(L(A_s \setminus A_t) = x_s - C^{(j)})
		\end{align*}
		is a simulation-based unbiased estimator of $f(x_t,x_s;\theta)$. We further note that the simulation error
		\begin{align*}
			\hat{f}(x_{t},x_{s}; \theta)  - f(x_{t},x_{s}; \theta) 
		\end{align*}
		is, conditional on $x$, stochastically independent for different values of $t$ and $s$. Also, this error converges to zero at rate $\sqrt{M}$ as long as
		\begin{align*}
			\sum_{c=-\infty}^{\infty} f(x_t| c;\theta)^2 f(x_s| c;\theta)^2 f(x;\theta) <  \infty,
		\end{align*}
		where $f(c;\theta) = \PP_{\theta}(C^{(1)} = c)$ denotes the PMF of $C^{(j)}$, $j = 1, 2, \ldots, M$, and $f(x_t|c;\theta) = \PP_{\theta}(X_t = x_t|L(A_t \cap A_s) = c)$ denotes the conditional PMF of $X_t$. 
	\end{proposition}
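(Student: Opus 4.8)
The plan is to view $\hat f(x_t,x_s;\theta)$ as a sample average of iid functions of the simulated intersection counts and then read off the three claims in turn. Set $g(c) := \PP_{\theta}(L(A_t \setminus A_s) = x_t - c)\,\PP_{\theta}(L(A_s \setminus A_t) = x_s - c)$, so that $\hat f(x_t,x_s;\theta) = M^{-1}\sum_{j=1}^M g(C^{(j)})$. Since the $C^{(j)}$ are iid with $C^{(1)} \sim L(A_t \cap A_s)$, linearity of expectation gives $\E[\hat f(x_t,x_s;\theta)] = \E[g(C^{(1)})] = \sum_{c} g(c)\,\PP_{\theta}(L(A_t \cap A_s) = c)$, which is exactly the representation \eqref{eq:jointDens} of the joint PMF $f(x_t,x_s;\theta)$. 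Unbiasedness therefore follows immediately from \eqref{eq:jointDens}.

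For the cross-pair independence claim, I would emphasise that this is a feature of the simulation design rather than a probabilistic computation: for two distinct pairs of time points one draws independent simulation samples $\{C^{(j)}\}$ from the respective intersection laws, while the observed data $x$ is held fixed. As the only randomness in $\hat f - f$ enters through these draws, the errors attached to different $(t,s)$ are, conditional on $x$, functions of independent random variables and hence mutually independent. Beyond stating the construction carefully, nothing further is required here.

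The rate statement is a standard consequence of the classical CLT for iid averages: since $\hat f(x_t,x_s;\theta)$ averages $M$ iid copies of $g(C^{(1)})$, one has $\sqrt{M}\big(\hat f(x_t,x_s;\theta) - f(x_t,x_s;\theta)\big) \stackrel{d}{\rightarrow} N\big(0, Var(g(C^{(1)}))\big)$ as soon as the limiting variance is finite, and finiteness is controlled by $\E[g(C^{(1)})^2] = \sum_{c} g(c)^2\,\PP_{\theta}(L(A_t \cap A_s) = c)$. The remaining work is to recast this second moment in the form stated in the proposition. Using the trawl decomposition of Section \ref{sec:vis}, namely $X_t = L(A_t \cap A_s) + L(A_t \setminus A_s)$ with independent summands, conditioning on $\{L(A_t \cap A_s) = c\}$ yields $\PP_{\theta}(L(A_t \setminus A_s) = x_t - c) = \PP_{\theta}(X_t = x_t | L(A_t \cap A_s) = c) = f(x_t|c;\theta)$ and, symmetrically, $\PP_{\theta}(L(A_s \setminus A_t) = x_s - c) = f(x_s|c;\theta)$. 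Substituting these identities into $\E[g(C^{(1)})^2]$ produces precisely $\sum_{c} f(x_t|c;\theta)^2 f(x_s|c;\theta)^2 f(c;\theta)$, with $f(c;\theta) = \PP_{\theta}(C^{(1)} = c)$, so the assumed summability condition is exactly what secures a finite asymptotic variance and the $\sqrt{M}$ rate. The one point that needs genuine care---and the main thing to verify---is this identification of the conditional PMFs $f(x_t|c;\theta)$ with the free-increment probabilities (which also clarifies that the PMF factor in the summability condition should read $f(c;\theta)$); the rest is routine iid bookkeeping.
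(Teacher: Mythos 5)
Your proof is correct and follows essentially the same route as the paper's: both rest on the decomposition $f(x_t,x_s;\theta)=\sum_{c} f(x_t|c;\theta)f(x_s|c;\theta)f(c;\theta)$ (law of total probability plus conditional independence of $X_t$ and $X_s$ given $L(A_t\cap A_s)$), from which unbiasedness, the design-based independence across pairs, and the $\sqrt{M}$ rate via the iid second-moment argument all follow. The paper treats everything beyond the decomposition as ``obvious''; you fill in those details, and you correctly observe that the factor in the summability condition should read $f(c;\theta)$ rather than $f(x;\theta)$.
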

	\begin{proof}[Proof of Proposition \ref{prop:unbiasedSims}]
		Ignoring the dependence on $\theta$ we have
		\begin{align*}
			f(x_t, x_s) = \sum_{c=-\infty}^{\infty} f(x_t,x_s|c) f(c) = \sum_{c=-\infty}^{\infty} f(x_t|x_s,c) f(x_s|c) f(c).
		\end{align*}
		Since, conditionally on $L(A_t \cap A_s)=c$, $X_t$ and $X_s$ are independent we have $f(x_t|x_s,c) = f(x_t|c)$ and thus
		\begin{align*}
			f(x_t, x_s)  = \sum_{c=-\infty}^{\infty} f(x_t|c) f(x_s|c) f(c),
		\end{align*}
		which shows that sampling from $f(c;\theta)$ delivers the quantity we need. The rest of the proposition is obvious.
	\end{proof}

	Proposition \ref{prop:unbiasedSims} shows that the simulated CL function is an unbiased estimator of the true CL function. In other words, if we let $U$ denote the vector of uniform random variables behind the simulation of $\{C^{(j)}\}_{j=1}^M$ and define
	\begin{align*}
		\log \mathcal{L}_U(\theta;x,u)  = \log \mathcal{L}^{(K)}_U(\theta;x,u) := \sum_{k = 1}^K  \sum_{i=1}^{n-k} \log \hat{f}(x_{i+k},x_i;\theta),
	\end{align*}
	then $\mathcal{L}_U(\theta;x,u)$ is a simulation unbiased estimator for the composite likelihood $\mathcal{L}_{CL}(\theta;x)$. That is
	\begin{align*}
		\mathcal{L}_{CL}(\theta;x) = \int \mathcal{L}_U(\theta;x,u) f_U(u)du,
	\end{align*}
	where $f_U(u) \propto 1$ is the joint density of the uniform random numbers behind all the simulations. It is well known that numerically optimizing a simulated likelihood function \citep[the so-called simulated maximum likelihood approach, see e.g.][]{LermanManski81} suffers a number of drawbacks and can be fragile in practice \citep[e.g.][]{NeilFlury11}. However, as a result of the seminal \cite{PMCMC}, it is feasible to do Markov Chain Monte Carlo (MCMC) when one can unbiasedly simulate the likelihood. As a consequence, it is feasible to perform simulation-based estimation through MCMC, instead of relying on numerical optimization. From an estimation viewpoint, this can be an attractive approach \citep{NeilFlury11}.

	\clearpage
	\newpage
	\section{Simulation study}\label{sec:SuppSim}
	In this section, we provide additional information on the simulation study. We first describe the simulation set-up used for the study which we summarise in the main article in Subsection \ref{sec:detailssim}.
	\subsection{Additional details on the simulation study reported in the main article}\label{sec:detailssim}

	In a simulation study, we examine the finite sample properties of the composite likelihood-based estimation procedure
	and the model selection procedure. 
	
	The IVT framework is very flexible and there are many possible choices of data-generating processes (DGPs) to use in the simulation studies. Here, we will consider
	the six combinations of the two marginal distributions, given in Examples \ref{ex:Poisson} and \ref{ex:NB}, and three correlations structures, given in Examples \ref{ex:Exp},  \ref{ex:IG}, and \ref{ex:GAM}. In other words, we consider 
	the Poisson-Exponential (P-Exp), the Poisson-Inverse Gaussian (P-IG), the Poisson-Gamma (P-Gamma), the Negative Binomial-Exponential (NB-Exp), the Negative Binomial-Inverse Gaussian (NB-IG), and the Negative Binomial-Gamma (NB-Gamma)  IVT models. Note that the first model contains two free parameters, the second, third, and fourth models three free parameters, and the fifth and sixth models four free parameters. Since the L\'evy bases considered here are non-negative-valued, we will use Proposition \ref{prop:positiveLB} for the calculation of the pairwise likelihoods. 
	
	The parameter values used in the simulation studies are given in Table \ref{tab:paramTab} and the implied marginal distributions and autocorrelation structures are shown in Figure \ref{fig:simEx}. The figure illustrates the difference between the six DGPs: those based on the Poisson L\'evy basis have a more concentrated marginal distribution compared to those based on the Negative Binomial L\'evy basis; those based on the exponential trawl function have smaller degrees of autocorrelation (memory) than those based on the Inverse Gaussian trawl function, and the Gamma trawl function can exhibit still greater autocorrelation.

	The choice of parameter values used in the simulation studies below and given in Table  \ref{tab:paramTab} are based on the estimates obtained in the empirical study in Section \ref{sec:emp}. We have found the finite sample properties of the methods proposed in this paper to be relatively robust to the exact choice of parameter values.

	\begin{table}
\caption{\it Parameter values used in simulation studies}
\begin{center}
\footnotesize
\begin{tabularx}{1.00\textwidth}{@{\extracolsep{\stretch{1}}}lcccccccc@{}} 
\toprule
DGP        &  $\nu$  & $m$           & $p$      & $\lambda$ & $\delta$ & $\gamma$ & $H$ & $\alpha$ \\
\cmidrule{2-9}
P-Exp      & $17.50$      &                  &              & $1.80$            &               & & & \\
P-IG        & $17.50$      &                  &              &                   &  $1.80$  & $0.80$ & & \\
P-Gamma  & $17.50$      &                  &              &              & &     &  $1.70$  & $0.80$ \\
NB-Exp   &            & $7.50$   & $0.70$  & $1.80$            &               & & & \\
NB-IG     &            &   $7.50$  & $0.70$  &                   &  $1.80$ & $0.80$ & & \\
NB-Gamma     &            &   $7.50$  & $0.70$  &          & &         &  $1.70$ & $0.80$ \\
\bottomrule 
\end{tabularx}
\end{center}
{\footnotesize \it Parameter values for the six different DGPs used in the simulation studies of Section \ref{sec:MC}. See Examples   \ref{ex:Poisson}, \ref{ex:NB}, \ref{ex:Exp}, \ref{ex:IG}, and \ref{ex:GAM} for details. The value $\nu = mp/(1-p)$  with $m = 7.5$ and $p = 0.70$ is chosen such that the first moment of the Poisson and Negative Binomial L\'evy bases are matched. Marginal distributions and autocorrelation functions implied by these parameter values are shown in Figure \ref{fig:simEx}.}
\label{tab:paramTab}
\end{table}
	
	\begin{figure}[!t]
		\centering
		\includegraphics[scale=0.75]{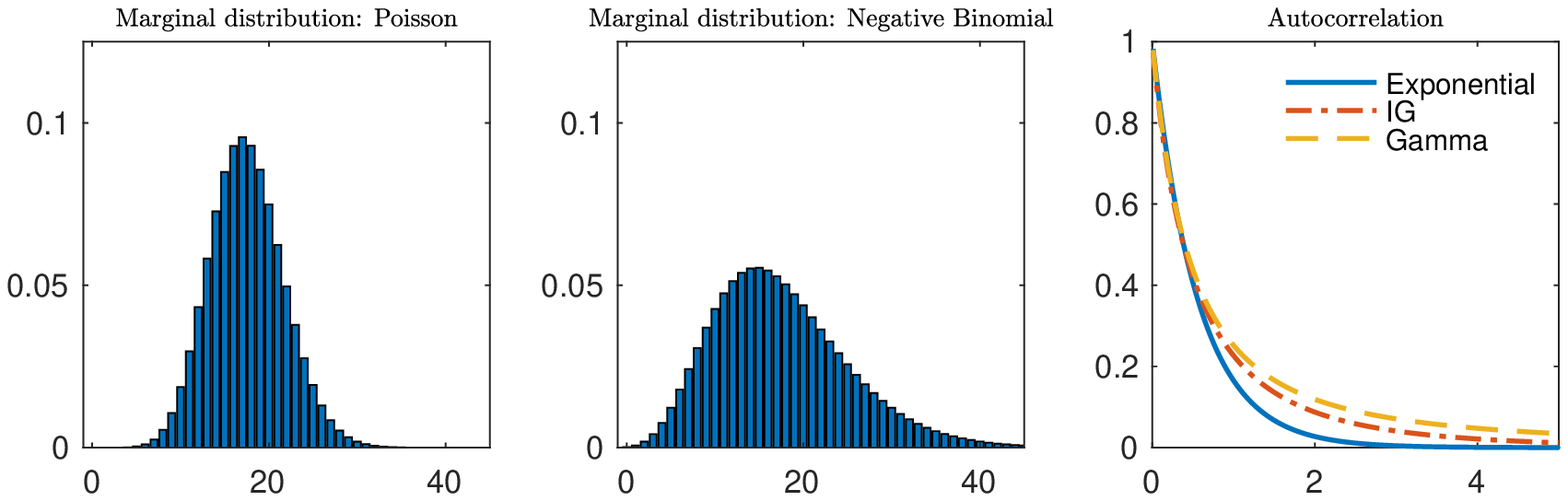} 
		\caption{\it  Marginal distributions of the L\'evy bases and autocorrelations of the DGPs used in the simulation studies of Section \ref{sec:MC}. The marginal distribution and autocorrelation structure of IVT processes can be specified independently, resulting in six different DGPs in this setup (P-Exp, P-IG, P-Gamma, NB-Exp, NB-IG, NB-Gamma). See Examples  \ref{ex:Poisson}, \ref{ex:NB}, \ref{ex:Exp}, \ref{ex:IG}, and  \ref{ex:GAM}  for details. The parameter values used to produce the plots are given in Table \ref{tab:paramTab}.}
		\label{fig:simEx}
	\end{figure}
	
	\clearpage
	
	\subsection{Finite sample properties of the MCL estimator}\label{sec:emp_est}
	Consider $n$ equidistant observations of an IVT process on an equidistant grid of size $0.10$, i.e.~$X_{\Delta}, X_{2\Delta}, \ldots, X_{n\Delta}$ with $\Delta = 0.10$. We simulate $500$ Monte Carlo replications of such time series and in each iteration estimate the parameters of the model using the MCL approach of Equation \eqref{eq:clmax}. For the IVT models based on the exponential trawl function (Example \ref{ex:Exp}), we set $K=1$, while we set $K = 10$ for the remaining IVT models.\footnote{Our analyses have shown that $K=1$ will deliver good estimation results for the IVT models with exponential trawl functions, but poor estimation results for the models with more other trawl functions. This is not surprising since the correlation structure for an IVT model with an exponential trawl is very simple, while it is more complicated for other IVT processes. The upshot is that choosing $K=1$ is sufficient for the simple exponential trawl-based IVT models, while it is necessary to choose $K>1$ to obtain good results for IVT models constructed using other trawl functions. This is analogous to the situation for the GMM estimator, where the estimator of the $\lambda$ parameter in the exponential trawl function has a closed-form solution using only the  autocorrelation function calculated at the first lag, cf. Section \ref{sec:mm}.} In extensive simulation experiments (not reported here), we verified that the results are robust to the choice of $K$. 

	As mentioned, previous applied work using IVTs has mainly relied on the moment-based estimator. We, therefore, compare the GMM estimation procedure, laid forth in Section \ref{sec:mm}, with the MCL estimator suggested in this paper. Figure \ref{fig:CLvsMM} plots the RMSE of the MCL estimator of a given parameter divided by the RMSE of the GMM estimator of the same parameter for the six DGPs of Table \ref{tab:paramTab}. Thus, numbers smaller than one indicate that the MCL estimator has a lower RMSE than the GMM estimator and vice versa for numbers larger than one. We see that for most parameters in most of the DGPs, the MCL estimator outperforms the GMM estimator substantially; indeed, in many cases, the RMSE of the MCL estimator is around $50\%$ that of the GMM estimator. The exception seems to be the trawl parameters, i.e.~the parameters controlling the autocorrelation structure, in the case of the Gamma and IG trawls, where the GMM estimator occasionally performs on par with the MCL estimator. However, in most cases, it appears that the MCL estimator is able to provide large improvements over the GMM estimator.

	
	\subsubsection{Simulation results supplementing those from the main paper}\label{sec:supp_sims}
	
	The simulation results, for various values of $n$, are shown in Tables \ref{tab:poi_exp}--\ref{tab:nb_gam} for the six DGPs of Table \ref{tab:paramTab}. We report the median, the median bias, and the root median squared error (RMSE) of the estimator, calculated over the $500$ Monte Carlo replications. The reason for reporting the median, instead of the mean, is that we found that when the number of observations, $n$, is small, the estimation approach will occasionally result in large outliers in few of the Monte Carlo runs, thus skewing the results (this was the case for both the MCL and GMM estimators). 
	
	From the tables, we see evidence of the MCL estimator being consistent, i.e.~the bias converges towards zero as the number of observations, $n$, grows. As expected, the estimator is most precise for the simpler models, e.g. the Poisson-Exp IVT (Table \ref{tab:poi_exp}) and somewhat less precise for the more complex models, e.g. the NB-IG model (Table \ref{tab:nb_ig}) and NB-Gamma model (Table \ref{tab:nb_gam}). 
	
	\begin{table}
\caption{\it MCL estimation results: Poisson trawl process with exponential trawl function}
\begin{center}
\footnotesize
\begin{tabularx}{1.00\textwidth}{@{\extracolsep{\stretch{1}}}rcccccc@{}} 
\toprule
& \multicolumn{3}{c}{$\hat{\nu}$ ($\nu = 17.5$)} & \multicolumn{3}{c}{$\hat{\lambda}$ ($\lambda = 1.8$)}  \\  
\cmidrule{2-4} \cmidrule{5-7}  \\  
$n$ & Med. & Bias & RMSE & Med. & Bias & RMSE \\ 
\midrule
  $    100 $ & $ 17.3213 $ & $ -0.1787 $ & $ 1.8616 $ & $ 1.8351 $ & $ 0.0351 $ & $ 0.2035 $\\
  $    250 $ & $ 17.4204 $ & $ -0.0796 $ & $ 1.1761 $ & $ 1.8009 $ & $ 0.0009 $ & $ 0.1279 $\\
  $    500 $ & $ 17.5179 $ & $ 0.0179 $ & $ 0.8444 $ & $ 1.8012 $ & $ 0.0012 $ & $ 0.0883 $\\
  $   1000 $ & $ 17.6023 $ & $ 0.1023 $ & $ 0.6164 $ & $ 1.8049 $ & $ 0.0049 $ & $ 0.0626 $\\
  $   2000 $ & $ 17.5540 $ & $ 0.0540 $ & $ 0.4538 $ & $ 1.8026 $ & $ 0.0026 $ & $ 0.0476 $\\
  $   4000 $ & $ 17.5099 $ & $ 0.0099 $ & $ 0.3038 $ & $ 1.8000 $ & $ 0.0000 $ & $ 0.0327 $\\
  $   8000 $ & $ 17.5302 $ & $ 0.0302 $ & $ 0.2197 $ & $ 1.8036 $ & $ 0.0036 $ & $ 0.0222 $\\
\bottomrule 
\end{tabularx}
\end{center}
{\footnotesize \it Median (Med.), median bias (Bias) and root median squared error (RMSE) of the MCL estimator with $K=1$. DGP: Poisson-Exponential IVT process. The IVT process $X_t$ is simulated on the grid $t = \Delta, 2\Delta, \ldots, n\Delta$, with $\Delta = 0.10$, see Table \ref{tab:paramTab} for  the values of the parameters used in the simulations. Number of Monte Carlo simulations: $500$.} 
\label{tab:poi_exp}
\end{table}

	\begin{table}
\caption{\it CL estimation results: Poisson trawl process with IG trawl function}
\begin{center}
\footnotesize
\begin{tabularx}{1.00\textwidth}{@{\extracolsep{\stretch{1}}}rccccccccc@{}} 
\toprule
& \multicolumn{3}{c}{$\hat{\nu}$ ($\nu = 17.5$)} & \multicolumn{3}{c}{$\hat{\delta}$ ($\delta = 1.8$)} & \multicolumn{3}{c}{$\hat{\gamma}$ ($\gamma = 0.8$)} \\  
\cmidrule{2-4} \cmidrule{5-7} \cmidrule{8-10}  \\  
$n$ & Med. & Bias & RMSE & Med. & Bias & RMSE & Med. & Bias & RMSE \\ 
\midrule
  $    250 $ & $ 17.7984 $ & $ 0.2984 $ & $ 1.8764 $ & $ 2.4177 $ & $ 0.6177 $ & $ 0.9923 $ & $ 1.0985 $ & $ 0.2985 $ & $ 0.5365 $\\
  $    500 $ & $ 17.7749 $ & $ 0.2749 $ & $ 1.4921 $ & $ 2.0205 $ & $ 0.2205 $ & $ 0.6317 $ & $ 0.9306 $ & $ 0.1306 $ & $ 0.3352 $\\
  $   1000 $ & $ 17.6514 $ & $ 0.1514 $ & $ 1.0344 $ & $ 1.9141 $ & $ 0.1141 $ & $ 0.4896 $ & $ 0.8592 $ & $ 0.0592 $ & $ 0.2422 $\\
  $   2000 $ & $ 17.5182 $ & $ 0.0182 $ & $ 0.8090 $ & $ 1.8977 $ & $ 0.0977 $ & $ 0.3413 $ & $ 0.8333 $ & $ 0.0333 $ & $ 0.1864 $\\
  $   4000 $ & $ 17.5273 $ & $ 0.0273 $ & $ 0.5931 $ & $ 1.8120 $ & $ 0.0120 $ & $ 0.2426 $ & $ 0.8051 $ & $ 0.0051 $ & $ 0.1347 $\\
  $   8000 $ & $ 17.4966 $ & $ -0.0034 $ & $ 0.4125 $ & $ 1.8072 $ & $ 0.0072 $ & $ 0.1692 $ & $ 0.8030 $ & $ 0.0030 $ & $ 0.0842 $\\
\bottomrule 
\end{tabularx}
\end{center}
{\footnotesize \it Median (Med.), median bias (Bias) and root median squared error (RMSE) of the MCL estimator with $K = 10$. DGP: Poisson-IG IVT. The IVT process $X_t$ is simulated on the grid $t = \Delta, 2\Delta, \ldots, n\Delta$, with $\Delta = 0.10$, see Table \ref{tab:paramTab} for  the values of the parameters used in the simulations. Number of Monte Carlo simulations: $500$.} 
\label{tab:poi_ig}
\end{table}

	\begin{table}
\caption{\it CL estimation results: Poisson trawl process with $\Gamma$ trawl function}
\begin{center}
\footnotesize
\begin{tabularx}{1.00\textwidth}{@{\extracolsep{\stretch{1}}}rccccccccc@{}} 
\toprule
& \multicolumn{3}{c}{$\hat{\nu}$ ($\nu = 17.5$)} & \multicolumn{3}{c}{$\hat{H}$ ($H = 1.7$)} & \multicolumn{3}{c}{$\hat{\alpha}$ ($\alpha = 0.8$)} \\  
\cmidrule{2-4} \cmidrule{5-7} \cmidrule{8-10}  \\  
$n$ & Med. & Bias & RMSE & Med. & Bias & RMSE & Med. & Bias & RMSE \\ 
\midrule
  $    250 $ & $ 17.8207 $ & $ 0.3207 $ & $ 1.8470 $ & $ 3.0719 $ & $ 1.3719 $ & $ 1.4020 $ & $ 1.5159 $ & $ 0.7159 $ & $ 0.7302 $\\
  $    500 $ & $ 17.5946 $ & $ 0.0946 $ & $ 1.2180 $ & $ 2.1831 $ & $ 0.4831 $ & $ 0.9537 $ & $ 1.0463 $ & $ 0.2463 $ & $ 0.4816 $\\
  $   1000 $ & $ 17.4396 $ & $ -0.0604 $ & $ 0.9877 $ & $ 1.8563 $ & $ 0.1563 $ & $ 0.7227 $ & $ 0.8914 $ & $ 0.0914 $ & $ 0.3636 $\\
  $   2000 $ & $ 17.4540 $ & $ -0.0460 $ & $ 0.6642 $ & $ 1.7807 $ & $ 0.0807 $ & $ 0.5371 $ & $ 0.8406 $ & $ 0.0406 $ & $ 0.2728 $\\
  $   4000 $ & $ 17.5048 $ & $ 0.0048 $ & $ 0.4821 $ & $ 1.6902 $ & $ -0.0098 $ & $ 0.3851 $ & $ 0.7784 $ & $ -0.0216 $ & $ 0.2004 $\\
  $   8000 $ & $ 17.5612 $ & $ 0.0612 $ & $ 0.3925 $ & $ 1.6433 $ & $ -0.0567 $ & $ 0.2464 $ & $ 0.7607 $ & $ -0.0393 $ & $ 0.1189 $\\
\bottomrule 
\end{tabularx}
\end{center}
{\footnotesize \it Median (Med.), median bias (Bias) and root median squared error (RMSE) of the MCL estimator with $K = 10$. DGP: Poisson-Gamma IVT. The IVT process $X_t$ is simulated on the grid $t = \Delta, 2\Delta, \ldots, n\Delta$, with $\Delta = 0.10$, see Table \ref{tab:paramTab} for  the values of the parameters used in the simulations. Number of Monte Carlo simulations: $500$.} 
\label{tab:poi_gam}
\end{table}

	\begin{table}
\caption{\it CL estimation results: NB trawl process with exponential trawl function}
\begin{center}
\footnotesize
\begin{tabularx}{1.00\textwidth}{@{\extracolsep{\stretch{1}}}rccccccccc@{}} 
\toprule
& \multicolumn{3}{c}{$\hat{m}$ ($m = 7.5$)} & \multicolumn{3}{c}{$\hat{p}$ ($p = 0.7$)}  & \multicolumn{3}{c}{$\hat{\lambda}$ ($\lambda = 1.8$)}  \\  
\cmidrule{2-4} \cmidrule{5-7} \cmidrule{8-10}  \\  
$n$ & Med. & Bias & RMSE & Med. & Bias & RMSE & Med. & Bias & RMSE \\ 
\midrule
  $    100 $ & $ 8.9761 $ & $ 1.4761 $ & $ 2.3850 $ & $ 0.6649 $ & $ -0.0351 $ & $ 0.0623 $ & $ 1.9197 $ & $ 0.1197 $ & $ 0.2595 $\\
  $    250 $ & $ 7.9680 $ & $ 0.4680 $ & $ 1.3264 $ & $ 0.6872 $ & $ -0.0128 $ & $ 0.0353 $ & $ 1.8311 $ & $ 0.0311 $ & $ 0.1431 $\\
  $    500 $ & $ 7.8443 $ & $ 0.3443 $ & $ 0.9956 $ & $ 0.6896 $ & $ -0.0104 $ & $ 0.0287 $ & $ 1.8202 $ & $ 0.0202 $ & $ 0.1010 $\\
  $   1000 $ & $ 7.6891 $ & $ 0.1891 $ & $ 0.6964 $ & $ 0.6970 $ & $ -0.0030 $ & $ 0.0202 $ & $ 1.8106 $ & $ 0.0106 $ & $ 0.0728 $\\
  $   2000 $ & $ 7.5855 $ & $ 0.0855 $ & $ 0.5010 $ & $ 0.6972 $ & $ -0.0028 $ & $ 0.0143 $ & $ 1.8001 $ & $ 0.0001 $ & $ 0.0470 $\\
  $   4000 $ & $ 7.5362 $ & $ 0.0362 $ & $ 0.3484 $ & $ 0.6993 $ & $ -0.0007 $ & $ 0.0095 $ & $ 1.8013 $ & $ 0.0013 $ & $ 0.0316 $\\
  $   8000 $ & $ 7.5265 $ & $ 0.0265 $ & $ 0.2365 $ & $ 0.6993 $ & $ -0.0007 $ & $ 0.0065 $ & $ 1.7994 $ & $ -0.0006 $ & $ 0.0251 $\\
\bottomrule 
\end{tabularx}
\end{center}
{\footnotesize \it Median (Med.), median bias (Bias) and root median squared error (RMSE) of the MCL estimator with $K = 1$. DGP: Negative Binomial-Exponential IVT process. The IVT process $X_t$ is simulated on the grid $t = \Delta, 2\Delta, \ldots, n\Delta$, with $\Delta = 0.10$, see Table \ref{tab:paramTab} for  the values of the parameters used in the simulations. Number of Monte Carlo simulations: $500$.} 
\label{tab:nb_exp}
\end{table}

	\begin{table}
\caption{\it CL estimation results: NB trawl process with IG trawl function}
\begin{center}
\scriptsize
\begin{tabularx}{1.00\textwidth}{@{\extracolsep{\stretch{1}}}rcccccccccccc@{}} 
\toprule
& \multicolumn{3}{c}{$\hat{m}$ ($m = 7.5$)} & \multicolumn{3}{c}{$\hat{p}$ ($p = 0.7$)} & \multicolumn{3}{c}{$\hat{\delta}$ ($\delta = 1.8$)} & \multicolumn{3}{c}{$\hat{\gamma}$ ($\gamma = 0.8$)} \\  
\cmidrule{2-4} \cmidrule{5-7} \cmidrule{8-10}  \cmidrule{11-13}  \\  
$n$ & Med. & Bias & RMSE & Med. & Bias & RMSE & Med. & Bias & RMSE & Med. & Bias & RMSE \\ 
\midrule
  $    250 $ & $ 9.4842 $ & $ 1.9842 $ & $ 2.2867 $ & $ 0.6695 $ & $ -0.0305 $ & $ 0.0501 $ & $ 3.1478 $ & $ 1.3478 $ & $ 1.3540 $ & $ 1.3513 $ & $ 0.5513 $ & $ 0.6440 $\\
  $    500 $ & $ 8.1866 $ & $ 0.6866 $ & $ 1.4351 $ & $ 0.6885 $ & $ -0.0115 $ & $ 0.0335 $ & $ 2.3665 $ & $ 0.5665 $ & $ 0.8803 $ & $ 1.0506 $ & $ 0.2506 $ & $ 0.4657 $\\
  $   1000 $ & $ 7.7323 $ & $ 0.2323 $ & $ 0.8607 $ & $ 0.6941 $ & $ -0.0059 $ & $ 0.0256 $ & $ 1.9682 $ & $ 0.1682 $ & $ 0.6219 $ & $ 0.8834 $ & $ 0.0834 $ & $ 0.3099 $\\
  $   2000 $ & $ 7.6150 $ & $ 0.1150 $ & $ 0.5755 $ & $ 0.6970 $ & $ -0.0030 $ & $ 0.0176 $ & $ 1.9398 $ & $ 0.1398 $ & $ 0.4398 $ & $ 0.8588 $ & $ 0.0588 $ & $ 0.2116 $\\
  $   4000 $ & $ 7.5561 $ & $ 0.0561 $ & $ 0.4348 $ & $ 0.6994 $ & $ -0.0006 $ & $ 0.0129 $ & $ 1.8866 $ & $ 0.0866 $ & $ 0.3547 $ & $ 0.8351 $ & $ 0.0351 $ & $ 0.1741 $\\
  $   8000 $ & $ 7.5317 $ & $ 0.0317 $ & $ 0.3103 $ & $ 0.6990 $ & $ -0.0010 $ & $ 0.0088 $ & $ 1.8470 $ & $ 0.0470 $ & $ 0.2266 $ & $ 0.8247 $ & $ 0.0247 $ & $ 0.1063 $\\
\bottomrule 
\end{tabularx}
\end{center}
{\footnotesize \it Median (Med.), median bias (Bias) and root median squared error (RMSE) of the MCL estimator with  $K = 10$. DGP: Negative Binomial-IG IVT. The IVT process $X_t$ is simulated on the grid $t = \Delta, 2\Delta, \ldots, n\Delta$, with $\Delta = 0.10$, see Table \ref{tab:paramTab} for  the values of the parameters used in the simulations. Number of Monte Carlo simulations: $500$.} 
\label{tab:nb_ig}
\end{table}

	\begin{table}
\caption{\it CL estimation results: NB trawl process with $\Gamma$ trawl function}
\begin{center}
\scriptsize
\begin{tabularx}{1.00\textwidth}{@{\extracolsep{\stretch{1}}}rcccccccccccc@{}} 
\toprule
& \multicolumn{3}{c}{$\hat{m}$ ($m = 7.5$)} & \multicolumn{3}{c}{$\hat{p}$ ($p = 0.7$)} & \multicolumn{3}{c}{$\hat{H}$ ($H = 1.7$)} & \multicolumn{3}{c}{$\hat{\alpha}$ ($\alpha = 0.8$)} \\  
\cmidrule{2-4} \cmidrule{5-7} \cmidrule{8-10}  \cmidrule{11-13}  \\  
$n$ & Med. & Bias & RMSE & Med. & Bias & RMSE & Med. & Bias & RMSE & Med. & Bias & RMSE \\ 
\midrule
  $    250 $ & $ 9.2542 $ & $ 1.7542 $ & $ 2.3233 $ & $ 0.6678 $ & $ -0.0322 $ & $ 0.0532 $ & $ 4.0968 $ & $ 2.3968 $ & $ 2.3968 $ & $ 1.9812 $ & $ 1.1812 $ & $ 1.1814 $\\
  $    500 $ & $ 8.3098 $ & $ 0.8098 $ & $ 1.5008 $ & $ 0.6799 $ & $ -0.0201 $ & $ 0.0384 $ & $ 2.6049 $ & $ 0.9049 $ & $ 1.2064 $ & $ 1.2160 $ & $ 0.4160 $ & $ 0.5768 $\\
  $   1000 $ & $ 7.9214 $ & $ 0.4214 $ & $ 0.9728 $ & $ 0.6890 $ & $ -0.0110 $ & $ 0.0257 $ & $ 2.2420 $ & $ 0.5420 $ & $ 0.9221 $ & $ 1.0470 $ & $ 0.2470 $ & $ 0.4404 $\\
  $   2000 $ & $ 7.7270 $ & $ 0.2270 $ & $ 0.6394 $ & $ 0.6938 $ & $ -0.0062 $ & $ 0.0179 $ & $ 1.9538 $ & $ 0.2538 $ & $ 0.7091 $ & $ 0.9244 $ & $ 0.1244 $ & $ 0.3413 $\\
  $   4000 $ & $ 7.5876 $ & $ 0.0876 $ & $ 0.4594 $ & $ 0.6977 $ & $ -0.0023 $ & $ 0.0130 $ & $ 1.8234 $ & $ 0.1234 $ & $ 0.5016 $ & $ 0.8545 $ & $ 0.0545 $ & $ 0.2418 $\\
  $   8000 $ & $ 7.5249 $ & $ 0.0249 $ & $ 0.3355 $ & $ 0.6990 $ & $ -0.0010 $ & $ 0.0091 $ & $ 1.7745 $ & $ 0.0745 $ & $ 0.3466 $ & $ 0.8436 $ & $ 0.0436 $ & $ 0.1671 $\\
\bottomrule 
\end{tabularx}
\end{center}
{\footnotesize \it Median (Med.), median bias (Bias) and root median squared error (RMSE) of the MCL estimator with $K = 10$. DGP: Negative Binomial-Gamma IVT process. The IVT process $X_t$ is simulated on the grid $t = \Delta, 2\Delta, \ldots, n\Delta$, with $\Delta = 0.10$, see Table \ref{tab:paramTab} for  the values of the parameters used in the simulations. Number of Monte Carlo simulations: $500$.} 
\label{tab:nb_gam}
\end{table}

	\clearpage
	
	\newpage
	
	\subsection{Finite sample approximation of the asymptotic distribution}\label{sec:fs_asym}
	This section investigates how close the finite sample distribution of the MCL estimator is to the true (Gaussian) asymptotic limit, as presented in Theorem \ref{th:CLT}.
	
	We simulate $M = 1,000$ Monte Carlo replications of the IVT processes considered in the previous section, i.e.~with parameter values given in Table \ref{tab:paramTab}. For each replication, we estimate the asymptotic covariance matrix $G(\theta_0)^{-1}$ of Theorem \ref{th:CLT} using the simulation-based approach, described in Section \ref{app:B1}. We then construct the standardized version of the estimated parameters, 
	\begin{align}\label{eq:z_std}
	z_{i,m} = \sqrt{n} \frac{\hat\theta_{i}^{CL} - \theta_{i,0}}{ \sqrt{ \hat G(\hat \theta^{CL})^{-1}_{i,i}} } ,\quad m = 1, 2\ldots, M,
	\end{align}
	where the $i$ denote the $i$th entrance in the parameter vector $\theta$, and $\hat G(\hat \theta^{CL})^{-1}_{i,i}$ is the $i$th diagonal entrance of the matrix $\hat G(\hat \theta^{CL})^{-1}$. 
	
	According to Theorem \ref{th:CLT}, we would expect that $z_{i,m}$ is  distributed approximately as a standard normal random variable. Figures \ref{fig:QQ1}--\ref{fig:QQ6} contain QQ plots of $\{z_{i,m}\}_{m=1}^{M}$ for various sample sizes ($n$), various parameters (denoted by $i$ here), and for the six different DGPs, respectively.
	
	From the figures, we observe a general tendency: The finite sample distribution of the estimators of the parameters governing the marginal distribution ($\nu$ for the Poisson distribution, $m$ and $p$ for the NB distribution) is close to the standard normal distribution. The same holds for the trawl parameter in the case of the IVTs with exponential autocorrelation function (Exponential trawl with parameter $\lambda$). The picture changes when there are two parameters in the trawl function ($\delta$ and $\gamma$ in the case of the IG trawl and $H$ and $\alpha$ in the case of the Gamma trawl): Here, the convergence to the Gaussian distribution appears to be quite slow. Indeed, even for $n=8000$, there are deviations from the Gaussian distribution. 
	
			\begin{figure}[!t]
		\centering
		\includegraphics[scale=0.80]{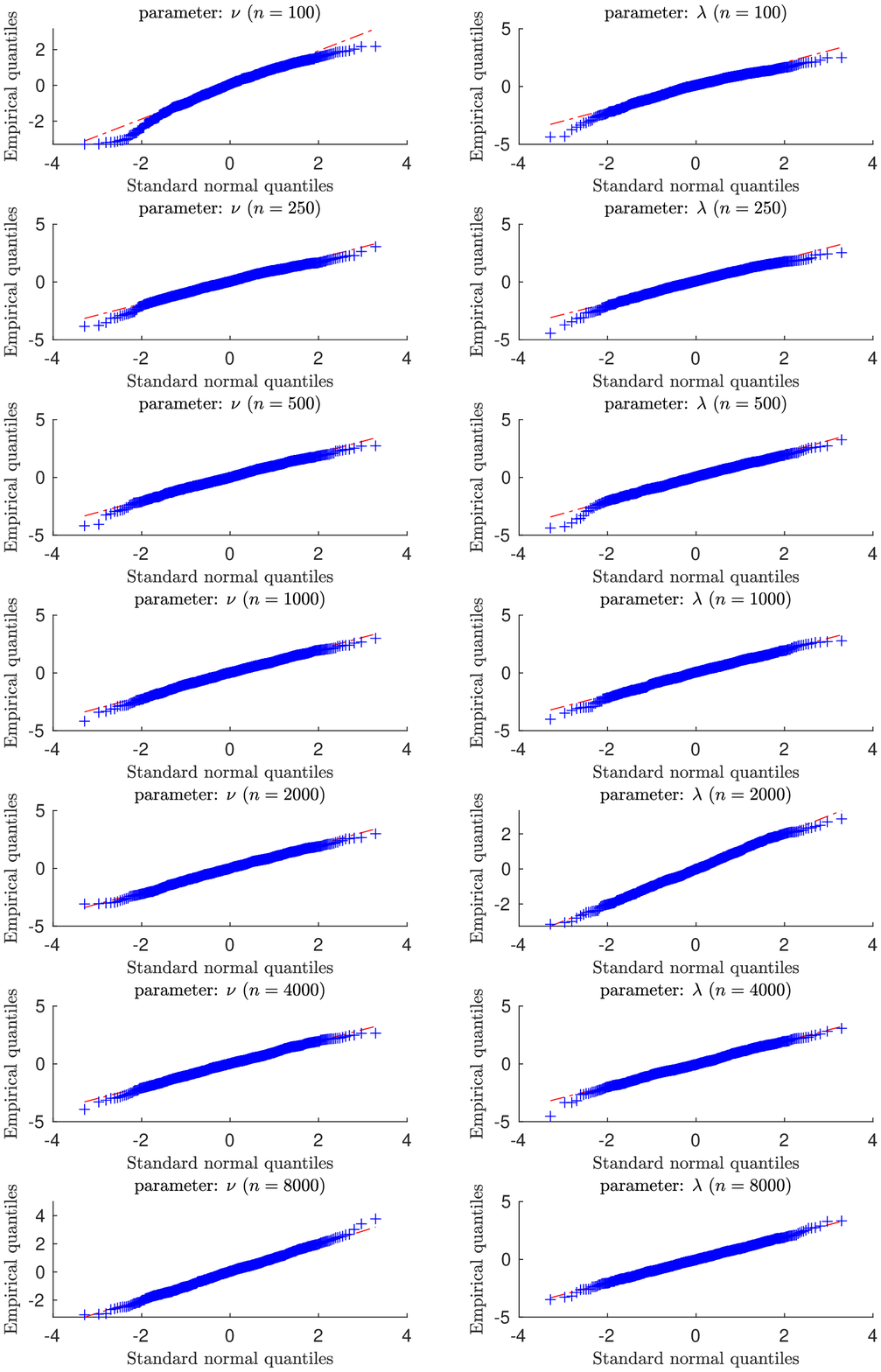} 
		\caption{\it QQ plot of $\{z_{i,m}\}_{m=1}^{M}$ of Equation \eqref{eq:z_std} for the Poisson-Exp DGP.}
		\label{fig:QQ1}
	\end{figure}

			\begin{figure}[!t]
		\centering
		\includegraphics[scale=0.80]{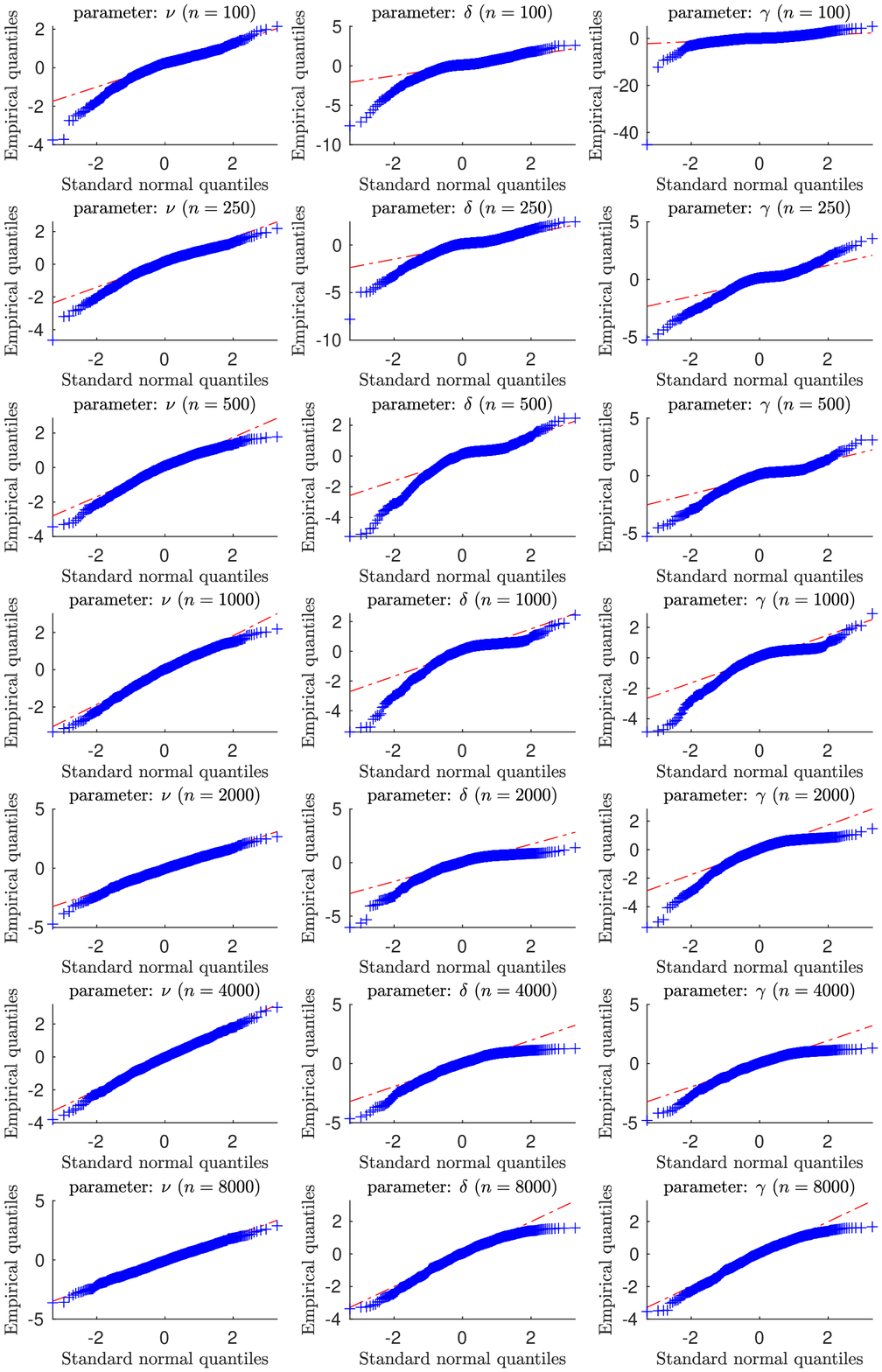} 
		\caption{\it QQ plot of $\{z_{i,m}\}_{m=1}^{M}$ of Equation \eqref{eq:z_std} for the  Poisson-IG DGP.}
		\label{fig:QQ2}
	\end{figure}

			\begin{figure}[!t]
		\centering
		\includegraphics[scale=0.80]{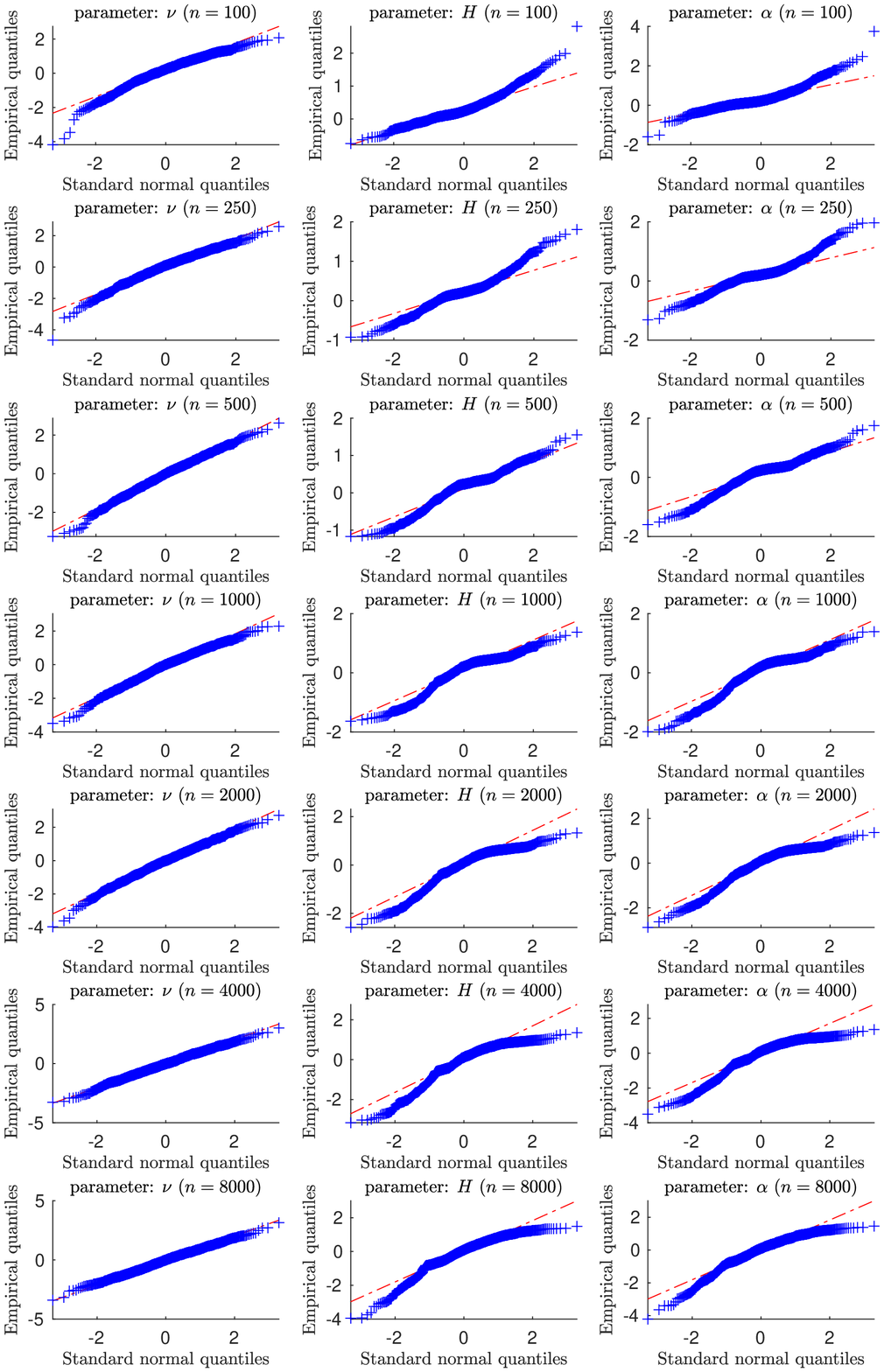} 
		\caption{\it QQ plot of $\{z_{i,m}\}_{m=1}^{M}$ of Equation \eqref{eq:z_std} for the  Poisson-Gamma DGP.}
		\label{fig:QQ3}
	\end{figure}
	
				\begin{figure}[!t]
		\centering
		\includegraphics[scale=0.80]{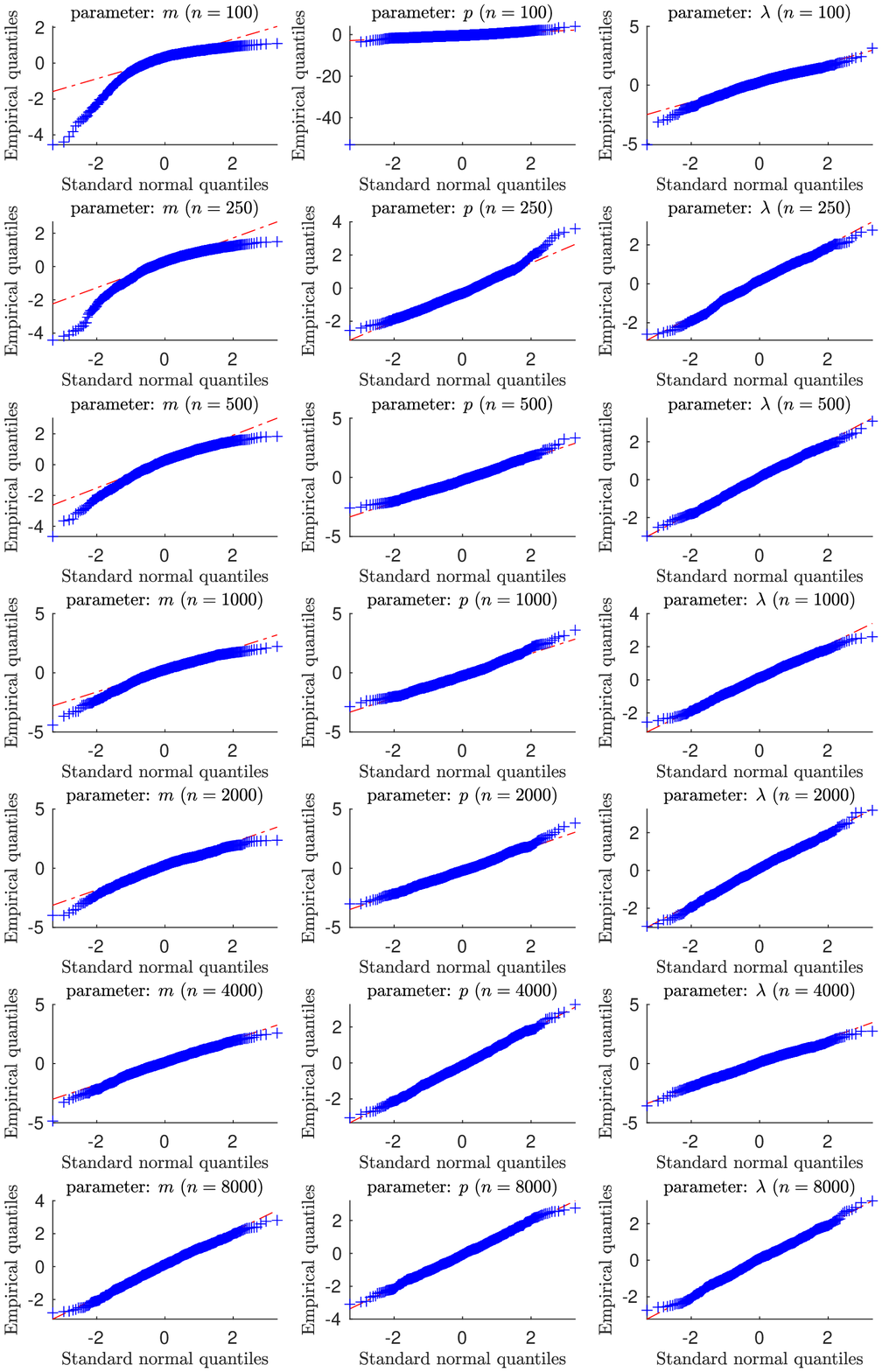} 
		\caption{\it QQ plot of $\{z_{i,m}\}_{m=1}^{M}$ of Equation \eqref{eq:z_std} for the NB-Exp DGP.}
		\label{fig:QQ4}
	\end{figure}

			\begin{figure}[!t]
		\centering
		\includegraphics[scale=0.80]{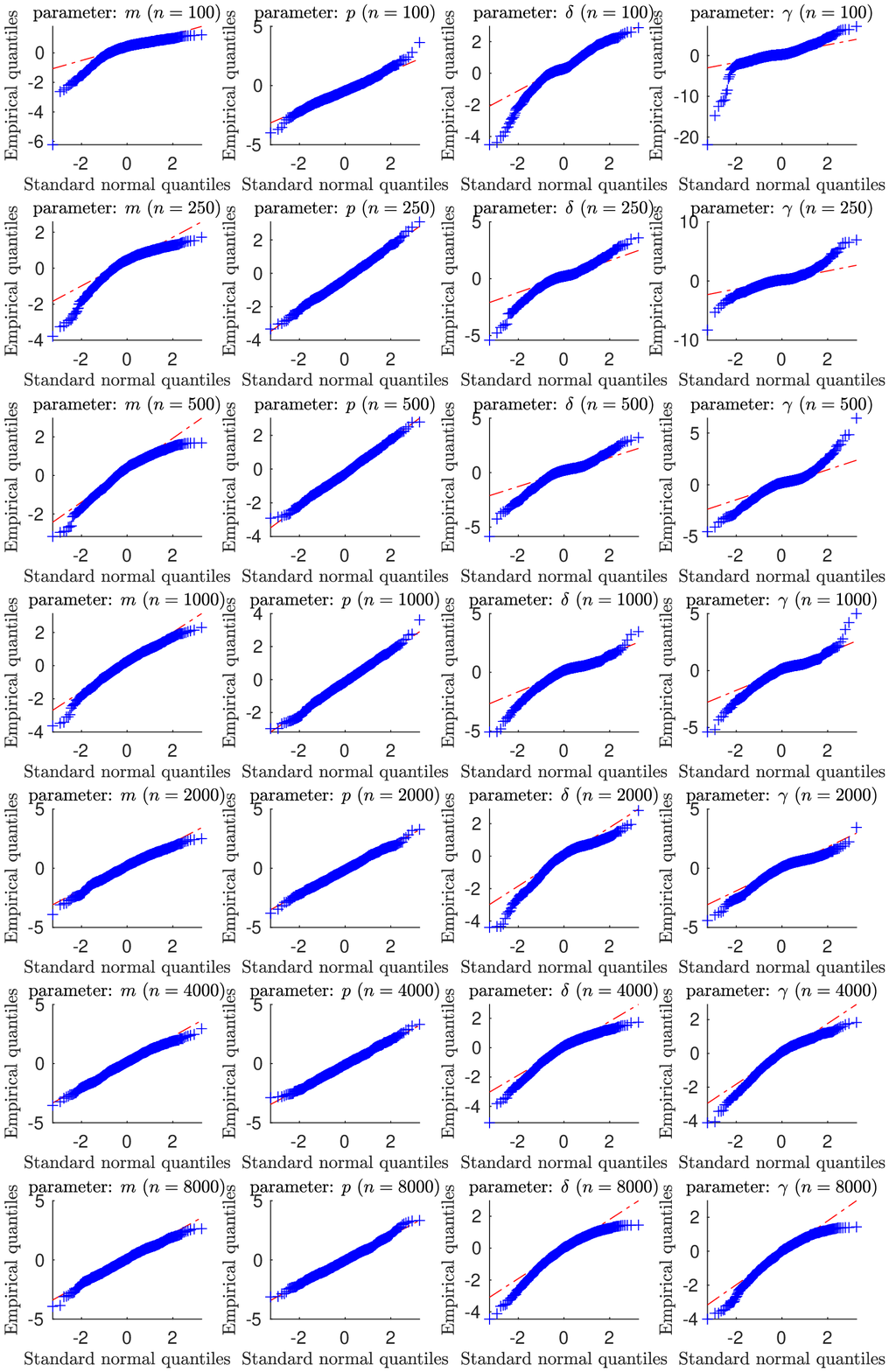} 
		\caption{\it QQ plot of $\{z_{i,m}\}_{m=1}^{M}$ of Equation \eqref{eq:z_std} for the NB-IG DGP.}
		\label{fig:QQ5}
	\end{figure}

			\begin{figure}[!t]
		\centering
		\includegraphics[scale=0.80]{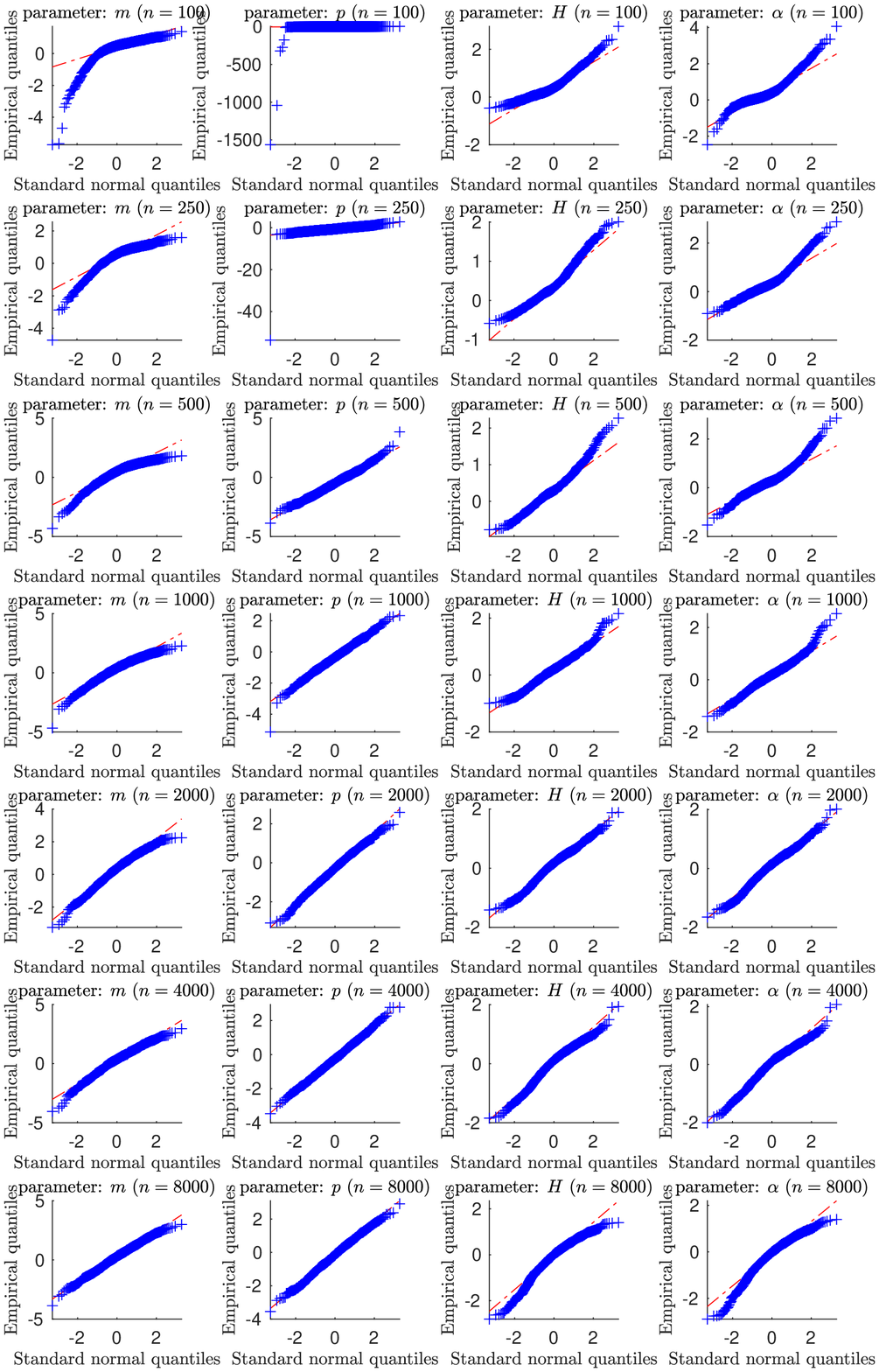} 
		\caption{\it QQ plot of $\{z_{i,m}\}_{m=1}^{M}$ of Equation \eqref{eq:z_std} for the NB-Gamma DGP.}
		\label{fig:QQ6}
	\end{figure}
	
%
	
	\clearpage
	
	\newpage
	
	\subsection{Finite sample properties of the model selection procedure}\label{sec:fs_ic}
	This section illustrates the use, and finite sample properties, of the model selection procedure introduced in Section 3.1.3 of the main paper.  Consider $n =  4000$ equidistant observations of an IVT process on a grid of $\Delta = 0.10$, i.e.~$X_{\Delta}, X_{2\Delta}, \ldots, X_{n\Delta}$.\footnote{The number of simulated observations, $n = 4000$, the space between observations, $\Delta = 0.10$, and the tuning parameter, $K = 10$, are chosen such as to be comparable to the data studied in the empirical section of the main paper.} For each of the six possible models, we then calculate the three goodness-of-fit measures, namely the value of the maximized composite likelihood function $CL$, the AIC-like composite likelihood information criteria $CLAIC$, and the BIC-like composite likelihood information criteria $CLBIC$. The model which has the maximum value of a criterion is ``selected'' by that criterion. We repeat this process for $100$ Monte Carlo replications and the six different DGPs using the parameters of Table \ref{tab:paramTab}. Figure \ref{fig:ic} reports the ``selection rates'' of the models, i.e.~the fraction of times that a model, given on the $x$-axis, is selected, for each of the three different criteria. Each panel in the figure corresponds to a particular DGP, as shown above, the respective panels.

	Consider, for instance, the case where the true DGP is the NB-Exp IVT model. The results from using this DGP are given in the upper right panel of Figure \ref{fig:ic}. In this case, when we estimate the six different models and calculate the three goodness-of-fit measures, the true model (i.e.~NB-Exp)  has the highest composite likelihood value in $70\%$ of simulations. In contrast, the CLAIC and CLBIC result in selecting the true model in $73\%$ and $81\%$ of the simulations, respectively. Note that since the models considered here are not nested, it is not necessarily the case that the maximized composite likelihood value CL will be larger for the more complicated models.
	
	Overall, Figure \ref{fig:ic} indicates that the model selection procedure is quite accurate when the marginal distribution of the DGP is the Negative Binomial distribution. Conversely, when the marginal distribution of the DGP is the Poisson distribution, the correct model is chosen less often. However, in these situations, it is often the case that although the Negative Binomial distribution is (incorrectly) preferred to the Poisson distribution, the correct trawl function (autocorrelation structure) is nonetheless selected.  
	
	Lastly, to examine the effect of the tuning parameter $K$ on the model selection procedure, we ran the same experiment but using both $K=5$ and $K=20$ (results not shown here, but available upon request). We find that the model selection procedure deteriorates when $K = 5$, while it performs similarly to that shown in  Figure \ref{fig:ic} when $K=20$, indicating that it is important to set $K$ sufficiently large value so that the selection criteria can properly distinguish between the models.

	

	

	\begin{figure}[!t]
		\centering
		\includegraphics[scale=0.95]{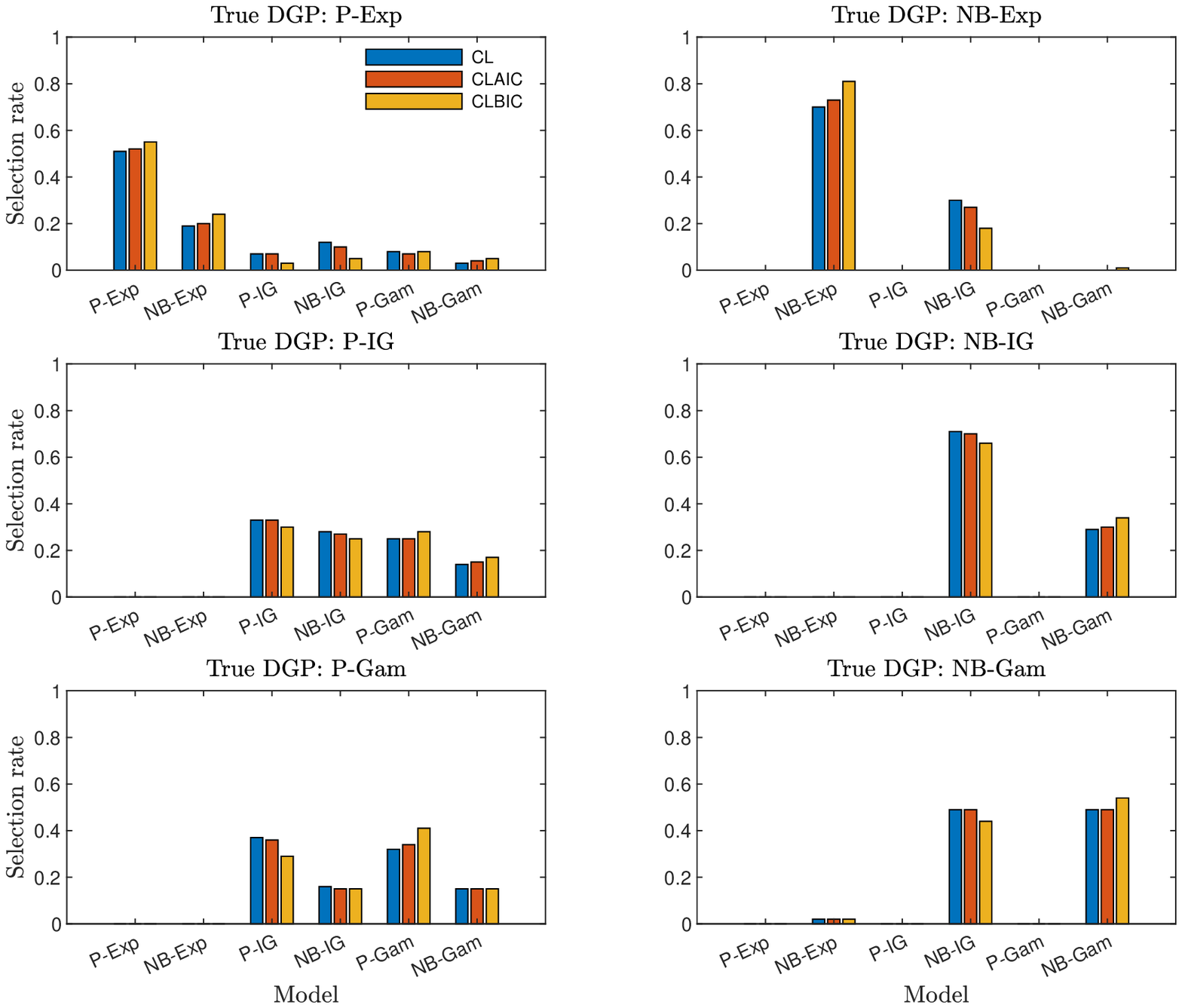} 
		\caption{\it Simulation study of model selection procedure. Each plot represents the outcome of a separate Monte Carlo study, where the true DGP in the study is given above the plot. The numbers plotted are the average selection rates of the models given on the $x$-axis, using a given criterion over $M = 100$ Monte Carlo simulations.  For each Monte Carlo replication, $n = 4000$ observations of the true DGP are simulated on a grid with step size $\Delta = 0.10$. The parameters used in the study are given in Table \ref{tab:paramTab} and we set $K = 10$.}
		\label{fig:ic}
	\end{figure}
	
	\clearpage
	
	\newpage

	\subsection{Alternative simulation setup} \label{sec:setup2}
	We perform simulation experiments similar to those in the main paper but with a different set of simulation settings. The parameter values used for the DGPs in this study are given in Table \ref{tab:paramTab2}; the associated implied marginal distributions of the underlying L\'evy bases and autocorrelations of the IVT processes are shown in Figure \ref{fig:simEx2}.
	
	In this simulation study, we simulate $n$ observations of an IVT process $X_t$ on an equidistant grid of size $\Delta = 0.10$. For the IVTs based on the exponential trawl function, we set $K = 1$, while we set $K = 3$ for the remaining IVTs. This should be contrasted to the setup of the main paper, where we set $K = 10$. The finite sample estimation results can be found in Tables \ref{tab:poi_exp2}--\ref{tab:nb_gam2}. Figure \ref{fig:CLvsMM2} plots the relative RMSE of the MCL estimator compared to the MM estimator; numbers smaller than one favour the MCL estimator.

	\clearpage
	
	\newpage

	\begin{table}
\caption{\it Parameter values used in simulation setup 2}
\begin{center}
\footnotesize
\begin{tabularx}{1.00\textwidth}{@{\extracolsep{\stretch{1}}}lcccccccc@{}} 
\toprule
DGP        &  $\nu$  & $m$           & $p$      & $\lambda$ & $\delta$ & $\gamma$ & $H$ & $\alpha$ \\
\cmidrule{2-9}
P-Exp      & $5.00$      &                  &              & $1.00$            &               & & & \\
P-IG        & $5.00$      &                  &              &                   &  $0.75$  & $0.50$ & & \\
P-Gamma  & $5.00$      &                  &              &              & &     &  $0.50$  & $0.75$ \\
NB-Exp   &            & $2.14$   & $0.70$  & $1.00$            &               & & & \\
NB-IG     &            &   $2.14$  & $0.70$  &                   &  $0.75$ & $0.50$ & & \\
NB-Gamma     &            &   $2.14$  & $0.70$  &          & &         &  $0.50$ & $0.75$ \\
\bottomrule 
\end{tabularx}
\end{center}
{\footnotesize \it Parameter values for the six different DGPs used in the simulation studies of the Supplementary Material. See the various Examples  of the main paper for details. The value $m = \nu (1-p)/p$ with $\nu = 5$ is chosen such that the first moment of the Poisson and Negative Binomial L\'evy bases are matched. Marginal distributions and autocorrelation function implied by these parameter values are shown in Figure \ref{fig:simEx2}.}
\label{tab:paramTab2}
\end{table}
	
	\begin{figure}[!t]
		\centering
		\includegraphics[scale=0.95]{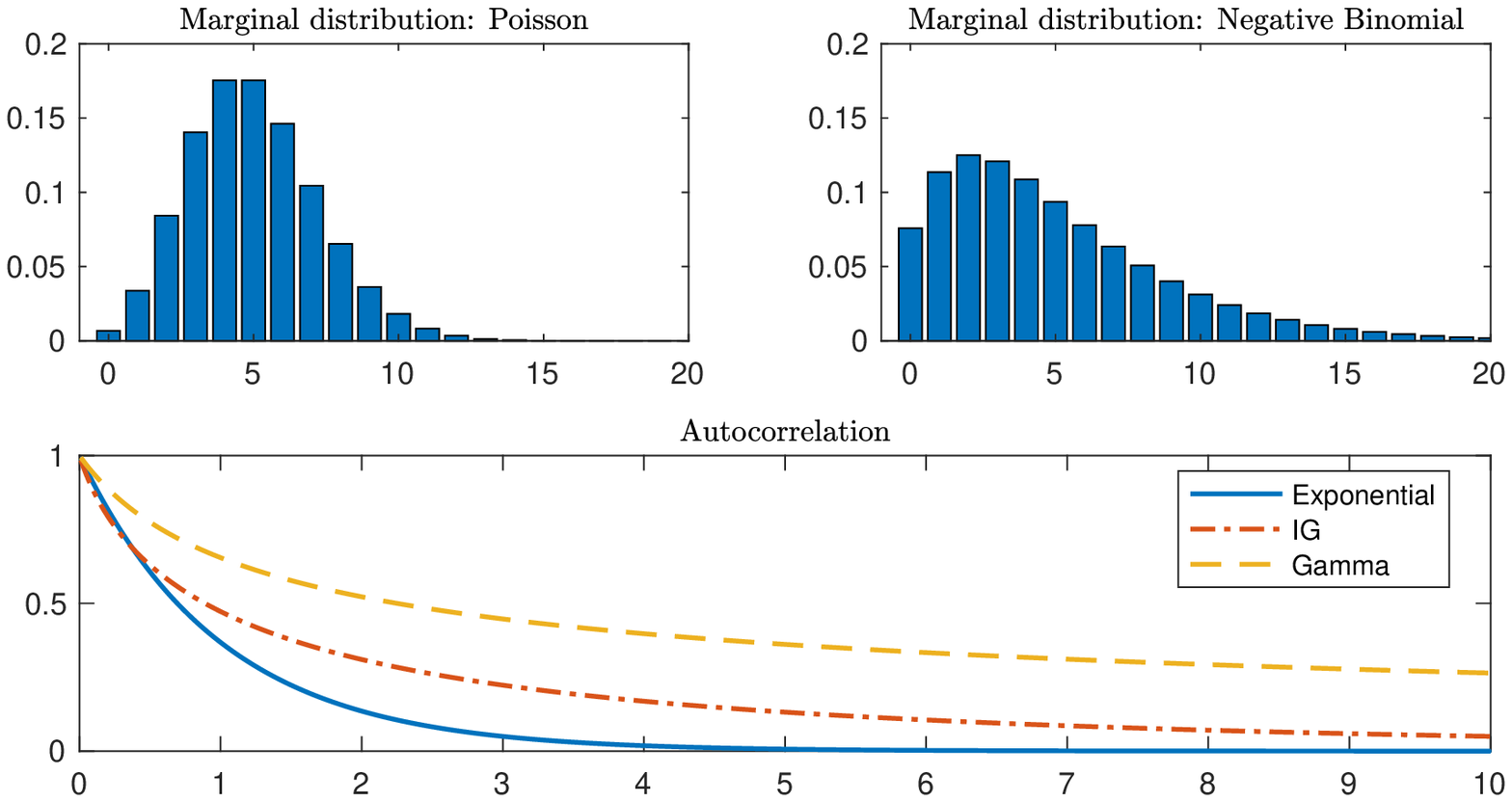} 
		\caption{\it  Marginal distributions of the L\'evy bases and autocorrelations of the DGPs used in the simulation studies of Section \ref{sec:setup2}. The marginal distribution and autocorrelation structure of IVT processes can be specified independently, resulting in six different DGPs in this setup (P-Exp, P-IG, P-Gamma, NB-Exp, NB-IG, NB-Gamma). The parameter values used to produce the plots are given in Table \ref{tab:paramTab2}. Note that the marginal distribution shown in the top plots is of the underlying L\'evy bases and not of the IVT process itself.}
		\label{fig:simEx2}
	\end{figure}

	\clearpage
	
	\newpage

	\begin{table}
\caption{\it CL estimation results: Poisson trawl process with exponential trawl function}
\begin{center}
\footnotesize
\begin{tabularx}{1.00\textwidth}{@{\extracolsep{\stretch{1}}}rcccccc@{}} 
\toprule
& \multicolumn{3}{c}{$\hat{\nu}$ ($\nu = 5$)} & \multicolumn{3}{c}{$\hat{\lambda}$ ($\lambda = 1$)}  \\  
\cmidrule{2-4} \cmidrule{5-7}  \\  
Nobs & Avg & Bias & RMSE & Avg & Bias & RMSE \\ 
\midrule
  $    100 $ & $ 4.9994 $ & $ -0.0006 $ & $ 0.6464 $ & $ 1.0131 $ & $ 0.0131 $ & $ 0.1220 $\\
  $    250 $ & $ 5.0345 $ & $ 0.0345 $ & $ 0.4055 $ & $ 1.0099 $ & $ 0.0099 $ & $ 0.0756 $\\
  $    500 $ & $ 5.0468 $ & $ 0.0468 $ & $ 0.2908 $ & $ 1.0074 $ & $ 0.0074 $ & $ 0.0623 $\\
  $   1000 $ & $ 5.0100 $ & $ 0.0100 $ & $ 0.2181 $ & $ 1.0067 $ & $ 0.0067 $ & $ 0.0441 $\\
  $   2000 $ & $ 5.0217 $ & $ 0.0217 $ & $ 0.1383 $ & $ 1.0026 $ & $ 0.0026 $ & $ 0.0288 $\\
  $   4000 $ & $ 5.0103 $ & $ 0.0103 $ & $ 0.1058 $ & $ 1.0009 $ & $ 0.0009 $ & $ 0.0203 $\\
  $   8000 $ & $ 5.0012 $ & $ 0.0012 $ & $ 0.0722 $ & $ 1.0005 $ & $ 0.0005 $ & $ 0.0150 $\\
\bottomrule 
\end{tabularx}
\end{center}
{\footnotesize \it Median (Med.), median bias (Bias) and root median squared error (RMSE) of the MCL estimator. DGP: Poisson-Exponential IVT process. The IVT process $X_t$ is simulated on the grid $t = \Delta, 2\Delta, \ldots, n\Delta$, with $\Delta = 0.10$, see Table \ref{tab:paramTab2} for  the values of the parameters used in the simulations. $K = 1$. Number of Monte Carlo simulations: $500$.} 
\label{tab:poi_exp2}
\end{table}

	\begin{table}
\caption{\it CL estimation results: Poisson trawl process with IG trawl function}
\begin{center}
\footnotesize
\begin{tabularx}{1.00\textwidth}{@{\extracolsep{\stretch{1}}}rccccccccc@{}} 
\toprule
& \multicolumn{3}{c}{$\hat{\nu}$ ($\nu = 5$)} & \multicolumn{3}{c}{$\hat{\delta}$ ($\delta = 0.75$)} & \multicolumn{3}{c}{$\hat{\gamma}$ ($\gamma = 0.5$)} \\  
\cmidrule{2-4} \cmidrule{5-7} \cmidrule{8-10}  \\  
Nobs & Avg & Bias & RMSE & Avg & Bias & RMSE & Avg & Bias & RMSE \\ 
\midrule
  $    250 $ & $ 4.8852 $ & $ -0.1148 $ & $ 0.6536 $ & $ 0.8535 $ & $ 0.1035 $ & $ 0.2654 $ & $ 5.6836 $ & $ 5.1836 $ & $ 31.0800 $\\
  $    500 $ & $ 4.8907 $ & $ -0.1093 $ & $ 0.4725 $ & $ 0.8241 $ & $ 0.0741 $ & $ 0.1833 $ & $ 1.1475 $ & $ 0.6475 $ & $ 6.2198 $\\
  $   1000 $ & $ 4.9770 $ & $ -0.0230 $ & $ 0.3128 $ & $ 0.7849 $ & $ 0.0349 $ & $ 0.1266 $ & $ 0.5699 $ & $ 0.0699 $ & $ 0.2729 $\\
  $   2000 $ & $ 4.9789 $ & $ -0.0211 $ & $ 0.2322 $ & $ 0.7641 $ & $ 0.0141 $ & $ 0.0890 $ & $ 0.5281 $ & $ 0.0281 $ & $ 0.1196 $\\
  $   4000 $ & $ 4.9921 $ & $ -0.0079 $ & $ 0.1516 $ & $ 0.7525 $ & $ 0.0025 $ & $ 0.0640 $ & $ 0.5118 $ & $ 0.0118 $ & $ 0.0773 $\\
  $   8000 $ & $ 4.9997 $ & $ -0.0003 $ & $ 0.1203 $ & $ 0.7533 $ & $ 0.0033 $ & $ 0.0421 $ & $ 0.5042 $ & $ 0.0042 $ & $ 0.0543 $\\
\bottomrule 
\end{tabularx}
\end{center}
{\footnotesize \it Median (Med.), median bias (Bias) and root median squared error (RMSE) of the MCL estimator. DGP: Poisson-IG IVT. The IVT process $X_t$ is simulated on the grid $t = \Delta, 2\Delta, \ldots, n\Delta$, with $\Delta = 0.10$, see Table \ref{tab:paramTab2} for  the values of the parameters used in the simulations. $K  = 3$. Number of Monte Carlo simulations: $500$.} 
\label{tab:poi_ig2}
\end{table}

	\begin{table}
\caption{\it CL estimation results: Poisson trawl process with $\Gamma$ trawl function}
\begin{center}
\footnotesize
\begin{tabularx}{1.00\textwidth}{@{\extracolsep{\stretch{1}}}rccccccccc@{}} 
\toprule
& \multicolumn{3}{c}{$\hat{\nu}$ ($\nu = 5$)} & \multicolumn{3}{c}{$\hat{H}$ ($H = 0.5$)} & \multicolumn{3}{c}{$\hat{\alpha}$ ($\alpha = 0.75$)} \\  
\cmidrule{2-4} \cmidrule{5-7} \cmidrule{8-10}  \\  
Nobs & Avg & Bias & RMSE & Avg & Bias & RMSE & Avg & Bias & RMSE \\ 
\midrule
  $    250 $ & $ 5.0958 $ & $ 0.0958 $ & $ 0.4897 $ & $ 0.6691 $ & $ 0.1691 $ & $ 0.3452 $ & $ 0.9620 $ & $ 0.2120 $ & $ 0.5434 $\\
  $    500 $ & $ 4.9864 $ & $ -0.0136 $ & $ 0.3323 $ & $ 0.6360 $ & $ 0.1360 $ & $ 0.2879 $ & $ 0.9710 $ & $ 0.2210 $ & $ 0.4283 $\\
  $   1000 $ & $ 4.9781 $ & $ -0.0219 $ & $ 0.2526 $ & $ 0.5957 $ & $ 0.0957 $ & $ 0.2109 $ & $ 0.8667 $ & $ 0.1167 $ & $ 0.3183 $\\
  $   2000 $ & $ 4.9866 $ & $ -0.0134 $ & $ 0.1776 $ & $ 0.5765 $ & $ 0.0765 $ & $ 0.1531 $ & $ 0.8480 $ & $ 0.0980 $ & $ 0.2443 $\\
  $   4000 $ & $ 4.9789 $ & $ -0.0211 $ & $ 0.1233 $ & $ 0.5519 $ & $ 0.0519 $ & $ 0.1156 $ & $ 0.8223 $ & $ 0.0723 $ & $ 0.1678 $\\
  $   8000 $ & $ 4.9720 $ & $ -0.0280 $ & $ 0.0945 $ & $ 0.5515 $ & $ 0.0515 $ & $ 0.0864 $ & $ 0.8147 $ & $ 0.0647 $ & $ 0.1234 $\\
\bottomrule 
\end{tabularx}
\end{center}
{\footnotesize \it Median (Med.), median bias (Bias) and root median squared error (RMSE) of the MCL estimator. DGP: Poisson-Gamma IVT. The IVT process $X_t$ is simulated on the grid $t = \Delta, 2\Delta, \ldots, n\Delta$, with $\Delta = 0.10$, see Table \ref{tab:paramTab2} for  the values of the parameters used in the simulations. $K = 3$. Number of Monte Carlo simulations: $500$.} 
\label{tab:poi_gam2}
\end{table}

	\begin{table}
\caption{\it CL estimation results: NB trawl process with exponential trawl function}
\begin{center}
\footnotesize
\begin{tabularx}{1.00\textwidth}{@{\extracolsep{\stretch{1}}}rccccccccc@{}} 
\toprule
& \multicolumn{3}{c}{$\hat{m}$ ($m = 2.1429$)} & \multicolumn{3}{c}{$\hat{p}$ ($p = 0.7$)}  & \multicolumn{3}{c}{$\hat{\lambda}$ ($\lambda = 1$)}  \\  
\cmidrule{2-4} \cmidrule{5-7} \cmidrule{8-10}  \\  
Nobs & Avg & Bias & RMSE & Avg & Bias & RMSE & Avg & Bias & RMSE \\ 
\midrule
  $    100 $ & $ 2.7614 $ & $ 0.6186 $ & $ 0.8445 $ & $ 0.6268 $ & $ -0.0732 $ & $ 0.1010 $ & $ 1.0021 $ & $ 0.0021 $ & $ 0.1902 $\\
  $    250 $ & $ 2.3957 $ & $ 0.2528 $ & $ 0.4848 $ & $ 0.6692 $ & $ -0.0308 $ & $ 0.0618 $ & $ 0.9921 $ & $ -0.0079 $ & $ 0.1206 $\\
  $    500 $ & $ 2.1957 $ & $ 0.0529 $ & $ 0.3009 $ & $ 0.6883 $ & $ -0.0117 $ & $ 0.0394 $ & $ 1.0038 $ & $ 0.0038 $ & $ 0.0791 $\\
  $   1000 $ & $ 2.2200 $ & $ 0.0771 $ & $ 0.2357 $ & $ 0.6912 $ & $ -0.0088 $ & $ 0.0309 $ & $ 1.0008 $ & $ 0.0008 $ & $ 0.0596 $\\
  $   2000 $ & $ 2.1887 $ & $ 0.0458 $ & $ 0.1632 $ & $ 0.6935 $ & $ -0.0065 $ & $ 0.0215 $ & $ 1.0030 $ & $ 0.0030 $ & $ 0.0410 $\\
  $   4000 $ & $ 2.1630 $ & $ 0.0201 $ & $ 0.1126 $ & $ 0.6967 $ & $ -0.0033 $ & $ 0.0144 $ & $ 0.9999 $ & $ -0.0001 $ & $ 0.0273 $\\
  $   8000 $ & $ 2.1557 $ & $ 0.0128 $ & $ 0.0805 $ & $ 0.6985 $ & $ -0.0015 $ & $ 0.0109 $ & $ 1.0012 $ & $ 0.0012 $ & $ 0.0210 $\\
\bottomrule 
\end{tabularx}
\end{center}
{\footnotesize \it Median (Med.), median bias (Bias) and root median squared error (RMSE) of the MCL estimator. DGP: Negative Binomial-Exponential IVT process. The IVT process $X_t$ is simulated on the grid $t = \Delta, 2\Delta, \ldots, n\Delta$, with $\Delta = 0.10$, see Table \ref{tab:paramTab2} for  the values of the parameters used in the simulations. $K = 1$. Number of Monte Carlo simulations: $500$.} 
\label{tab:nb_exp2}
\end{table}

	\begin{table}
\caption{\it CL estimation results: NB trawl process with IG trawl function}
\begin{center}
\footnotesize
\begin{tabularx}{1.00\textwidth}{@{\extracolsep{\stretch{1}}}rcccccccccccc@{}} 
\toprule
& \multicolumn{3}{c}{$\hat{m}$ ($m = 2.1429$)} & \multicolumn{3}{c}{$\hat{p}$ ($p = 0.7$)} & \multicolumn{3}{c}{$\hat{\delta}$ ($\delta = 0.75$)} & \multicolumn{3}{c}{$\hat{\gamma}$ ($\gamma = 0.5$)} \\  
\cmidrule{2-4} \cmidrule{5-7} \cmidrule{8-10}  \cmidrule{11-13}  \\  
Nobs & Avg & Bias & RMSE & Avg & Bias & RMSE & Avg & Bias & RMSE & Avg & Bias & RMSE \\ 
\midrule
  $    250 $ & $ 2.5049 $ & $ 0.3621 $ & $ 0.5605 $ & $ 0.6361 $ & $ -0.0639 $ & $ 0.0827 $ & $ 0.9671 $ & $ 0.2171 $ & $ 0.3323 $ & $ 0.6664 $ & $ 0.1664 $ & $ 0.2521 $\\
  $    500 $ & $ 2.3485 $ & $ 0.2056 $ & $ 0.4236 $ & $ 0.6688 $ & $ -0.0312 $ & $ 0.0632 $ & $ 0.8739 $ & $ 0.1239 $ & $ 0.2095 $ & $ 0.5738 $ & $ 0.0738 $ & $ 0.1402 $\\
  $   1000 $ & $ 2.2625 $ & $ 0.1197 $ & $ 0.3055 $ & $ 0.6767 $ & $ -0.0233 $ & $ 0.0450 $ & $ 0.7991 $ & $ 0.0491 $ & $ 0.1460 $ & $ 0.5280 $ & $ 0.0280 $ & $ 0.0986 $\\
  $   2000 $ & $ 2.1975 $ & $ 0.0547 $ & $ 0.2073 $ & $ 0.6937 $ & $ -0.0063 $ & $ 0.0303 $ & $ 0.7757 $ & $ 0.0257 $ & $ 0.1099 $ & $ 0.5147 $ & $ 0.0147 $ & $ 0.0769 $\\
  $   4000 $ & $ 2.1757 $ & $ 0.0328 $ & $ 0.1384 $ & $ 0.6936 $ & $ -0.0064 $ & $ 0.0199 $ & $ 0.7777 $ & $ 0.0277 $ & $ 0.0800 $ & $ 0.5203 $ & $ 0.0203 $ & $ 0.0573 $\\
  $   8000 $ & $ 2.1661 $ & $ 0.0232 $ & $ 0.0970 $ & $ 0.6978 $ & $ -0.0022 $ & $ 0.0152 $ & $ 0.7631 $ & $ 0.0131 $ & $ 0.0560 $ & $ 0.5127 $ & $ 0.0127 $ & $ 0.0380 $\\
\bottomrule 
\end{tabularx}
\end{center}
{\footnotesize \it Median (Med.), median bias (Bias) and root median squared error (RMSE) of the MCL estimator. DGP: Negative Binomial-IG IVT. The IVT process $X_t$ is simulated on the grid $t = \Delta, 2\Delta, \ldots, n\Delta$, with $\Delta = 0.10$, see Table \ref{tab:paramTab2} for  the values of the parameters used in the simulations. $K = 3$. Number of Monte Carlo simulations: $500$.} 
\label{tab:nb_ig2}
\end{table}

	\begin{table}
\caption{\it CL estimation results: NB trawl process with $\Gamma$ trawl function}
\begin{center}
\footnotesize
\begin{tabularx}{1.00\textwidth}{@{\extracolsep{\stretch{1}}}rcccccccccccc@{}} 
\toprule
& \multicolumn{3}{c}{$\hat{m}$ ($m = 2.1429$)} & \multicolumn{3}{c}{$\hat{p}$ ($p = 0.7$)} & \multicolumn{3}{c}{$\hat{H}$ ($H = 0.5$)} & \multicolumn{3}{c}{$\hat{\alpha}$ ($\alpha = 0.75$)} \\  
\cmidrule{2-4} \cmidrule{5-7} \cmidrule{8-10}  \cmidrule{11-13}  \\  
Nobs & Avg & Bias & RMSE & Avg & Bias & RMSE & Avg & Bias & RMSE & Avg & Bias & RMSE \\ 
\midrule
  $    500 $ & $ 2.6635 $ & $ 0.5206 $ & $ 0.5684 $ & $ 0.6240 $ & $ -0.0760 $ & $ 0.0873 $ & $ 3.0978 $ & $ 2.5978 $ & $ 2.5978 $ & $ 4.6910 $ & $ 3.9410 $ & $ 3.9414 $\\
  $   1000 $ & $ 2.4640 $ & $ 0.3212 $ & $ 0.4184 $ & $ 0.6510 $ & $ -0.0490 $ & $ 0.0608 $ & $ 1.1213 $ & $ 0.6213 $ & $ 0.6213 $ & $ 1.6475 $ & $ 0.8975 $ & $ 0.8975 $\\
  $   2000 $ & $ 2.3565 $ & $ 0.2136 $ & $ 0.3091 $ & $ 0.6659 $ & $ -0.0341 $ & $ 0.0461 $ & $ 0.7854 $ & $ 0.2854 $ & $ 0.2887 $ & $ 1.1460 $ & $ 0.3960 $ & $ 0.4207 $\\
  $   4000 $ & $ 2.2840 $ & $ 0.1411 $ & $ 0.1944 $ & $ 0.6739 $ & $ -0.0261 $ & $ 0.0327 $ & $ 0.7204 $ & $ 0.2204 $ & $ 0.2391 $ & $ 1.0852 $ & $ 0.3352 $ & $ 0.3518 $\\
  $   8000 $ & $ 2.2302 $ & $ 0.0873 $ & $ 0.1583 $ & $ 0.6843 $ & $ -0.0157 $ & $ 0.0256 $ & $ 0.7034 $ & $ 0.2034 $ & $ 0.2142 $ & $ 1.0659 $ & $ 0.3159 $ & $ 0.3259 $\\
\bottomrule 
\end{tabularx}
\end{center}
{\footnotesize \it Median (Med.), median bias (Bias) and root median squared error (RMSE) of the MCL estimator. DGP: Negative Binomial-Gamma IVT process. The IVT process $X_t$ is simulated on the grid $t = \Delta, 2\Delta, \ldots, n\Delta$, with $\Delta = 0.10$, see Table \ref{tab:paramTab2} for  the values of the parameters used in the simulations. $K = 3$. Number of Monte Carlo simulations: $500$.} 
\label{tab:nb_gam2}
\end{table}

	\begin{figure}[!t]
		\centering
		\includegraphics[scale=0.95]{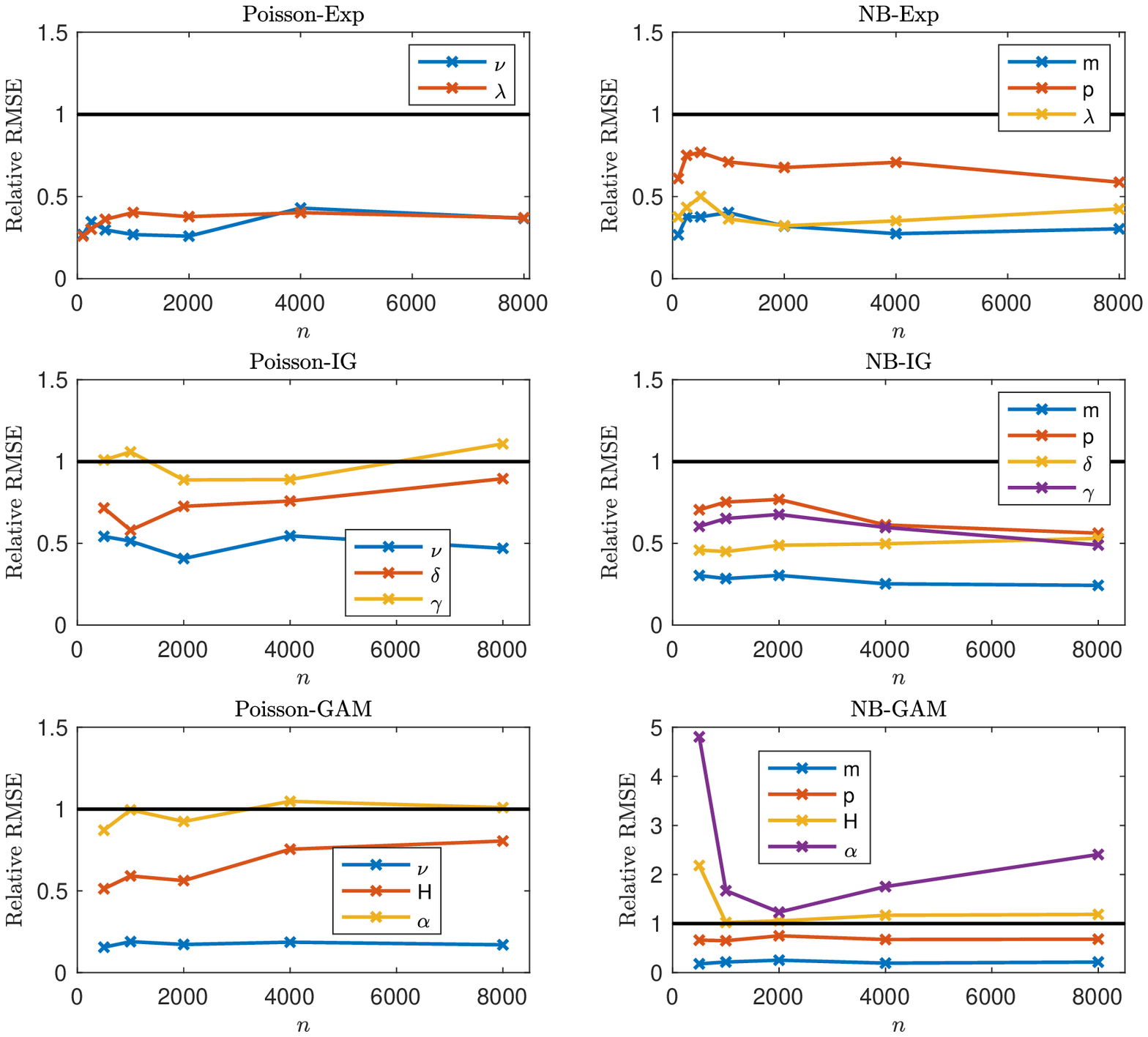} 
		\caption{\it  Root median square error (RMSE) of the MCL estimator divided by the RMSE of the GMM estimator. The underlying IVT process $X_t$ is simulated on the grid $t = \Delta, 2\Delta, \ldots, n\Delta$, with $\Delta = 0.10$, see Table \ref{tab:paramTab2} for  the values of the parameters used in the simulations. For the Poisson-Exp and NB-Exp we set $K = 1$; for the other DGPs we set $K = 3$. We also conducted the comparison with $K=5$, as suggested in \cite{BNLSV2014} with similar results (results not presented here but available from the authors upon request). }
		\label{fig:CLvsMM2}
	\end{figure}

	\clearpage

	\newpage
	\section{Empirical study}\label{sec:SuppEmp}
	This section contains additional details on the data pre-processing used in the empirical study, see Subsection \ref{sec:EmpSetUp}, and additional forecasting results, see Subsection \ref{sec:add_forec}. 
	\subsection{Details for the empirical study}\label{sec:EmpSetUp}
	We will now provide some additional details on the data preprocessing carried out for the empirical study. 
	In the article, we analyse
	the time series of the bid-ask spread, measured in U.S. dollar cents, of the Agilent Technologies Inc. stock (ticker: A) on a single day, May $4$, $2020$. 
	The A stock is traded on the New York Stock Exchange, which is open from $9$:$30$ AM to $4$ PM. To avoid opening effects, we consider the data from $10$:$30$ AM to $4$ PM, i.e.~we discard the first $60$ minutes of the day. Our data is gathered from the Trade and Quote database and cleaned using the approach proposed in \cite{BNHLS2009}. The data is available at a very high frequency but to obtain equidistant data, we sample the observations with $\Delta = \frac{1}{12} $ minutes (i.e.~$5$ seconds) time steps, using the previous tick approach, starting at $10$:$30$ AM, resulting in $n = 3961$ observations.  

	
	\subsection{Additional forecasting results}\label{sec:add_forec}
	Figures \ref{fig:ncif}--\ref{fig:ncif3} report forecasting results analogous to those of Section \ref{sec:fSpr} 
	in the main paper, but now using the conditional mode, instead of the conditional mean, as a point forecast. That is, using the notation of Section 6.1 in the main paper, we set  $\hat x_{i|i-h} = \arg \max_k \widehat \PP(X_{i|i-h} = k)$, where $\widehat \PP$ is the estimated predictive PMF of the IVT model. 
	
	
	\begin{figure}
		\captionsetup[subfigure]{aboveskip=-4pt,belowskip=-4pt}
		\subfloat[NB-Gamma versus Poissonian INAR(1)	\label{fig:ncif}]{	\includegraphics[width=\textwidth, height=0.2\textheight]{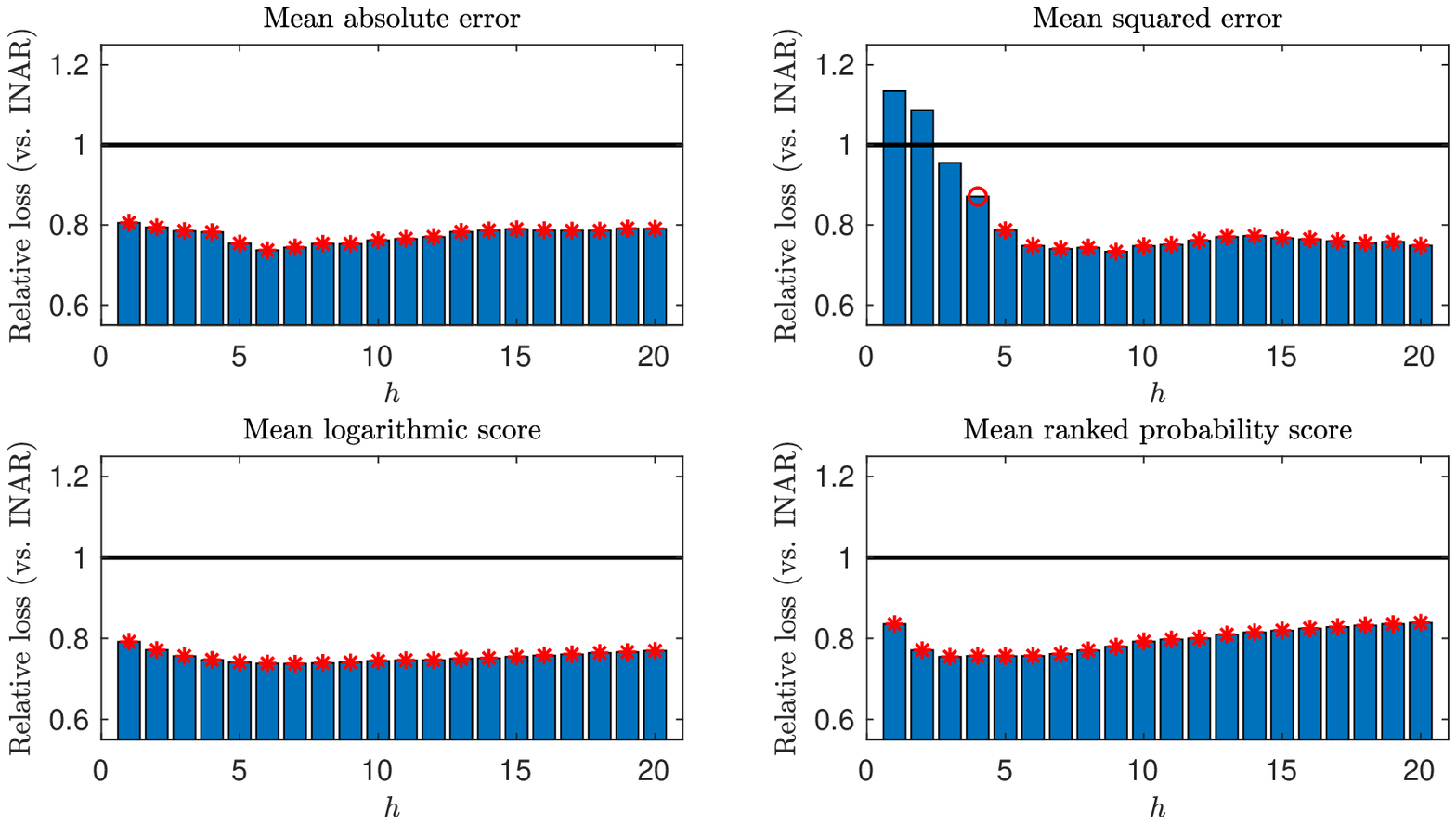} } \\
		\captionsetup[subfigure]{aboveskip=-4pt,belowskip=-4pt}
		\subfloat[NB-Gamma versus Poisson-Gamma model\label{fig:ncif2}]{	\includegraphics[width=\textwidth, height=0.2\textheight]{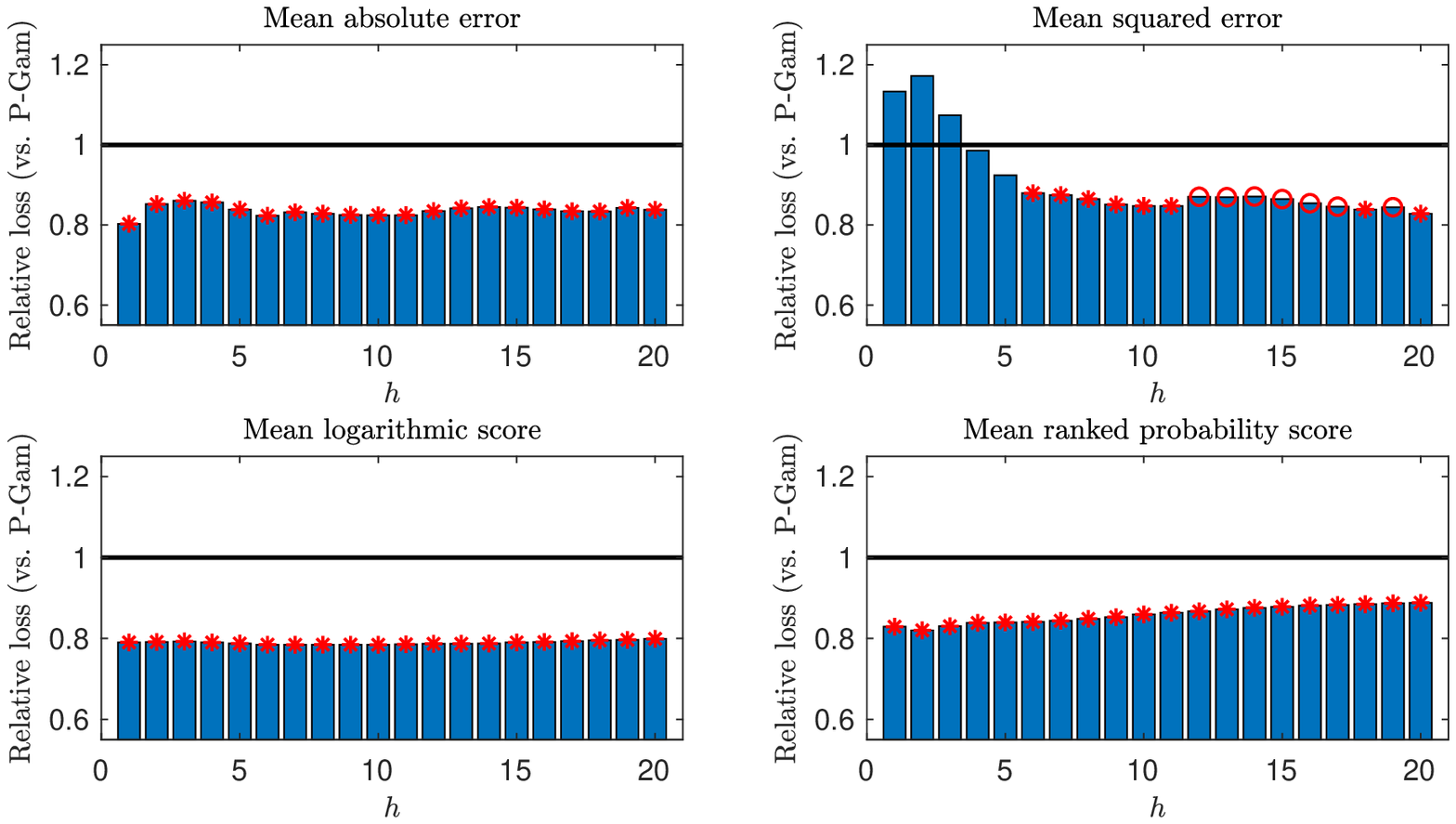}} 	\\
		\captionsetup[subfigure]{aboveskip=-4pt,belowskip=-4pt}
		\subfloat[NB-Gamma versus NB-Exponential\label{fig:ncif3}]{	\includegraphics[width=\textwidth, height=0.2\textheight]{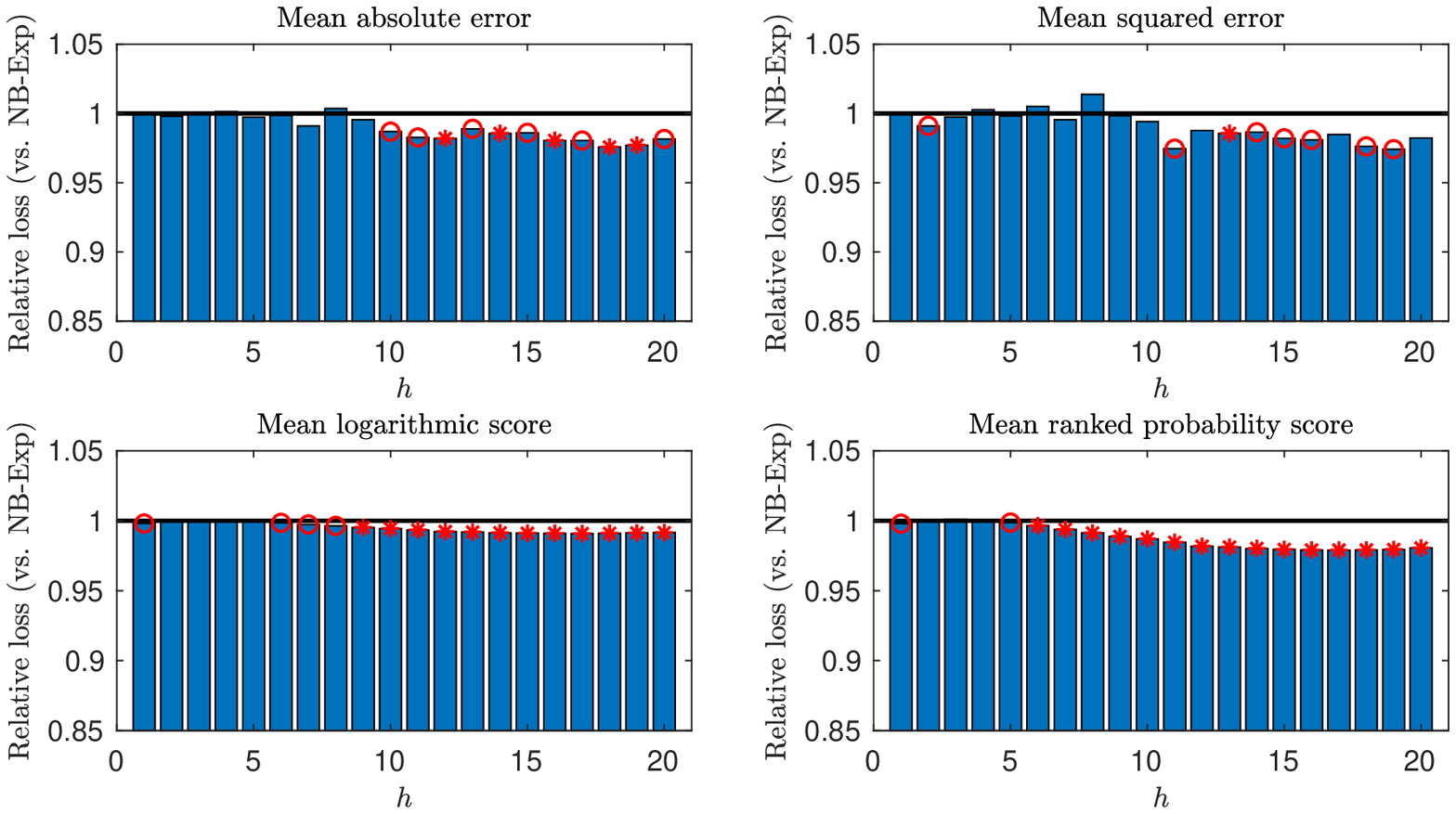}} 	
		\caption{\it Forecasting the spread level of the A stock on May 4, 2020. Four different loss metrics and twenty forecast horizons, $h = 1, 2, \ldots, 20$. The numbers plotted are relative average losses of the NB-Gamma forecasting model, 
			compared with the (a) Poissonian INAR(1) model, the (b) Poisson-Gamma model, and the (c) NB-Exponential, over $n_{oos} = 720$ out-of-sample forecasts. 
			A circle above the bars indicates rejection null of equal forecasting performance between the two models, against the alternative that the NB-Gamma model provides superior forecasts, using the Diebold-Mariano \citep{DM1995} test at a $5\%$ level; an asterisk denotes rejection at a $1\%$ level.}
		\label{fig:ncif_all_supp}
	\end{figure}

	
	\clearpage
	
	\newpage

	\clearpage
	
	\newpage
	
	\section{Details concerning integer-valued L\'evy bases}\label{sec:levyBases}
	
	\subsection{Poisson L\'evy basis}
	
	Consider the case where the L\'evy basis is Poisson, i.e.~$L' \sim \textnormal{Poi}(\nu)$ for some intensity $\nu > 0.$  For a bounded Borel set $B$ with $Leb(B)< \infty,$ we have
	\begin{align*}
		L(B) \sim \textnormal{Poi}(\nu Leb(B)).
	\end{align*}
	The cumulants, in this case, are $\kappa_j = \nu$ for all $j\geq 0$.
	
	\subsection{Negative binomial L\'evy basis}\label{sec:NB}
	We follow \cite{OlePollardNeil12,BNLSV2014} and denote by $NB(m,p)$ the negative binomial law with parameters $m \in \mathbb{N}$ and $p\in (0,1).$ Recall, that a negative binomial random variable is positively valued and can be interpreted as the number of successes, $k,$ until $m$ failures in a sequence of iid Bernoulli trials, each with the probability of success $p.$ Let $L' \sim NB(m,p);$ it holds that
	\begin{align*}
		P(L' = k) = \frac{\Gamma(m+k)}{k!\Gamma(m)} (1-p)^m p^k, \quad k = 0, 1, 2, \ldots.
	\end{align*}
	As is well known, we have that $L'_t \sim NB(mt,p)$ and therefore, for a Borel set $B$, it holds that $L(B) \sim NB(Leb(B)m,p)$, which implies
	\begin{align*}
		P(L(B) = k) =  \frac{\Gamma(Leb(B)m+k)}{k!\Gamma(Leb(B)m)} (1-p)^{Leb(B)m} p^k, \quad k = 0, 1, 2, \ldots.
	\end{align*}
	Here the relevant cumulants are $\kappa_1 = \frac{pm}{1-p}$, $\kappa_2 = \frac{pm}{(1-p)^2}$ and $\kappa_4 =m \frac{p + 4p^2 + p^3}{(1-p)^4}$.
	
	\subsection{Skellam L\'evy basis}\label{sec.:pairSkellam}
	The Skellam distribution is the distribution of the difference of two Poisson processes $N_t^+$ and $N_t^-$ and is therefore integer-valued. Let $N_t^{\pm} \sim \textnormal{Poi}(\psi^{\pm})$  with $\psi^{\pm}>0$; then $S := N_t^+ - N_t^- \sim \textnormal{Skellam}(\psi^+,\psi^-).$  Further, the Skellam L\'evy process  $(L_t')_{t\geq 0}$  with $L_1' \sim \textnormal{Skellam}(\psi^+,\psi^-)$  has the marginal distribution $L_t' \sim \textnormal{Skellam}(t\psi^+,t\psi^-)$  \citep{OlePollardNeil12}, meaning that for a Borel set $B,$ we have  $L(B) \sim \textnormal{Skellam}(Leb(B)\psi^+,Leb(B)\psi^-).$ The PMF of the random variable $X \sim \textnormal{Skellam}(\psi^+,\psi^-)$ is given by
	\begin{align*}
		g(k;\psi^+,\psi^-) := P(X = k) = e^{-(\psi^+ + \psi^-)} \left( \frac{\psi^+}{\psi^-}\right)^{k/2} I_k\left(2\sqrt{\psi^+ \psi^-}\right),
	\end{align*}
	where $I_{\nu}(x)$ is the modified Bessel function of the first kind (see e.g. \cite{AbramowitzStegun}) with parameter $\nu$ evaluated at $x.$ In the symmetric case, $\psi^+ = \psi^- = \psi,$ this reduces to $g(k;\psi) := e^{-2\psi} I_k(2\psi)$. The cumulants are easily seen to be $\kappa_j = \psi^+ - \psi^-$ for $j$ odd and $\kappa_j = \psi^+ + \psi^-$ for $j$ even.
	
	\subsection{$\Delta NB$ L\'evy basis}
	Analogous to the Skellam process, we can consider the difference of two L\'evy seeds which have negative binomials as their laws; \cite{OlePollardNeil12} call this a $\Delta NB$ L\'evy process. Let $L^{\pm} \sim NB(m^{\pm},p^{\pm})$ be independent L\'evy seeds with negative binomial laws. \cite{OlePollardNeil12} show that for $k \geq 0,$ the difference L\'evy seed $L' = L^+ - L^-$ has PMF
	\begin{align}\label{eq.:DNBpmf}
		P(L' = k) = (1-p^+)^{m^+} (1-p^-)^{m^-}\frac{(p^+)^k (m^+)_k}{k!} F(m^+ + k, m^-; k+1; p^+ p^-),
	\end{align}
	where 
	\begin{align*}
		F(\alpha,\beta;\gamma; z) = \sum_{n=0}^{\infty} \frac{(\alpha)_n (\beta)_n}{(\gamma)_n} \frac{z^n}{n!}, \quad z \in [0,1), \quad \alpha,\beta,\gamma >0,
	\end{align*}
	is the hypergeometric function, see e.g. \cite{AbramowitzStegun}, and $(\alpha)_n = \frac{\Gamma(\alpha + n)}{\Gamma(\alpha)}$ is the Pochhammer symbol. The PMF for $k \leq 0$ is, by symmetry, given as \ref{eq.:DNBpmf}, mutatis mutandis. The resulting distribution is denoted as $L' \sim \Delta NB(m^+,p^+,m^-,p^-)$ and it is easy to show that \citep{OlePollardNeil12} the L\'evy process corresponding to $L'$ has marginal distribution $L_t' \sim \Delta NB(tm^+,p^+,tm^-,p^-),$ meaning that we have for a Borel set $B$,
	\begin{align*}
		L(B) \sim NB(Leb(B) m^+,p^+,Leb(B) m^-,p^-).
	\end{align*}
	The cumulants for the $\Delta NB$ L\'evy seed are easily deduced from those of the negative binomial ones, recalling that the $\Delta NB$ law is the difference of two independent NB random variables.

	\clearpage 
	
	\newpage
	
	\section{Details concerning parametric trawl functions}\label{sec:trawls}
	
	The expressions for the likelihoods in the previous section reveal that we are interested in calculating expressions such as $Leb(A_t \backslash A)$ and $Leb(A_t \cap A)$ for different trawl functions. In this section we derive the required results for various trawls based on the superposition trawl function $d(s) = \int_0^{\infty} e^{\lambda s} \pi (d\lambda)$, $s \leq 0$, see also the main paper.
	
	\subsection{The exponential trawl}
	
	The case where the measure $\pi$ has an atom at $\lambda > 0,$  i.e.~$\pi(dx) =\delta_{\lambda}(dx),$ where $\delta_y(\cdot)$ is the Dirac delta function at $y \in \mathbb{R}_+,$ we get $d(s) = e^{\lambda s}.$ Consequently, for $t \geq 0,$
	\begin{align*}
		Leb(A) = \lambda^{-1}, \quad Leb(A_t \backslash A) = \lambda^{-1}(1-e^{-\lambda t}), \quad Leb(A_t \cap A) = \lambda^{-1}e^{-\lambda t}.
	\end{align*}
	This implies the correlation function
	\begin{align*}
		\rho(h) = \exp(-\lambda h), \quad h>0.
	\end{align*}
	
	\subsection{The finite superposition exponential trawl}\label{sec:supExp}
	Let $\pi$ have finitely many atoms, i.e.~$\pi(dx) = \sum_{i=1}^q w_i \delta_{\lambda_i}(dx)$ for $q \in \mathbb{N}.$ Then
	\begin{align*}
		d(s) = \sum_{i=1}^q w_i e^{\lambda_i s}, 
	\end{align*}
	and
	\begin{align*}
		Leb(A) = \sum_{i=1}^q w_i \lambda_i ^{-1}, \quad Leb(A_t \backslash A) = \sum_{i=1}^q w_i\lambda_i^{-1}(1-e^{-\lambda_i t}), \quad Leb(A_t \cap A) = \sum_{i=1}^q w_i \lambda_i^{-1}e^{-\lambda_i t}.
	\end{align*}
	This implies the correlation function
	\begin{align*}
		\rho(h) = \left( \sum_{i=1}^q \omega_i \lambda_i^{-1} \right)^{-1} \sum_{i=1}^q \omega_i \lambda_i^{-1} \exp(-\lambda_i h), \quad h>0.
	\end{align*}

	\subsection{The GIG trawl}\label{sec.:GIGtrawl}
	A flexible class of trawl functions can be specified through the \emph{generalized inverse Gaussian} (GIG) density function (see e.g. \cite{BNLSV2014}),
	\begin{align*}
		f_{\pi}(x) = \frac{(\gamma/\delta)^{\nu}}{2 K_{\nu}(\delta \gamma)} x^{\nu - 1} \exp \left( -\frac{1}{2} (\delta^2 x^{-1} + \gamma^2 x) \right),
	\end{align*}
	where $\nu \in \mathbb{R}$ and $\gamma, \delta \geq 0$ with both not equal to zero simultaneously. $K_{\nu}(x)$ is the modified Bessel function of the third kind with parameter $\nu,$ evaluated at $x$ (e.g. \cite{AbramowitzStegun}). Suppose now, that $\pi$ has density $f_{\pi},$ i.e.~$\pi(d\lambda) = f_{\pi}(\lambda) d\lambda.$ For $s \leq 0,$ the trawl function becomes
	\begin{align*}
		d(s) = \int_0^{\infty} e^{\lambda s} f_{\pi} (\lambda) d\lambda = \left( 1- \frac{2s}{\gamma^2}\right)^{-\nu/2} \frac{K_{\nu}\left(\delta \gamma \alpha_s \right)}{K_{\nu}(\delta \gamma)},
	\end{align*}
	whereas
	\begin{align*}
		& Leb(A) =\frac{\gamma}{\delta} \frac{K_{\nu-1}(\delta \gamma)}{K_{\nu}(\delta \gamma)}, \quad Leb(A_t  \cap A) = \frac{\gamma \alpha_t^{-\nu +1}}{\delta} \frac{K_{\nu-1}(\delta \gamma \alpha_t)}{K_{\nu}(\delta \gamma)},
	\end{align*}
	and
	\begin{align*}
		Leb(A_t \backslash A) =  \frac{\gamma}{\delta K_{\nu}(\delta \gamma)}\left( K_{\nu-1}(\delta \gamma) - \alpha_t^{-\nu + 1}K_{\nu-1}(\delta \gamma\alpha_t)\right),
	\end{align*}
	where $\alpha_t := \sqrt{\frac{2t}{\gamma^2} + 1}.$ This implies the correlation function
	\begin{align*}
		\rho(h) =  \alpha_h^{-\nu +1} \frac{K_{\nu-1}(\delta \gamma \alpha_h)}{K_{\nu-1}(\delta \gamma)}, \quad h>0.
	\end{align*}


	\subsection{The IG trawl}\label{sec:IG}
	The inverse Gaussian distribution is a special case of the GIG distributions, where $\nu = \frac{1}{2}.$ In this case, the trawl function simplifies to
	\begin{align*}
		d(s) =  \left( 1- \frac{2s}{\gamma^2}\right)^{-1/2} \exp\left( \delta \gamma \left( 1- \sqrt{ 1 - \frac{2s}{\gamma^2}} \right) \right), \quad s \leq 0,
	\end{align*}
	which means that
	\begin{align*}
		Leb(A) = \frac{\gamma}{\delta}, \quad Leb(A_t \cap A) =\frac{\gamma}{\delta} e^{\delta \gamma ( 1 - \alpha_t)}, \quad Leb(A_t  \backslash  A) =\frac{\gamma}{\delta} \left( 1- e^{\delta \gamma ( 1 - \alpha_t)}\right),
	\end{align*}
	where again $\alpha_t = \sqrt{\frac{2t}{\gamma^2} + 1}.$ This implies the correlation function
	\begin{align*}
		\rho(h) =\exp(\delta \gamma ( 1 - \alpha_h)), \quad h>0.
	\end{align*}
	
	\subsection{The $\Gamma$ trawl}
	An interesting case, capable of generating long memory in the trawl process, is given by the $\Gamma$ trawl. Suppose that $\pi$ has the $\Gamma(1+H,\alpha)$ density,
	\begin{align*}
		f_{\pi}(\lambda) = \frac{1}{\Gamma(1+H)} \alpha^{1+H} \lambda^{H} e^{-\lambda \alpha},
	\end{align*}
	where $\alpha > 0 $ and $H > 0.$ Now, 
	\begin{align*}
		d(s) = \left( 1- \frac{s}{\alpha}\right)^{-(H+1)}, \quad s \leq 0,
	\end{align*}
	which implies
	\begin{align*}
		Leb(A) =\frac{\alpha}{H}, \quad Leb(A_t  \cap  A) =\frac{\alpha}{H} \left(1 + \frac{t}{\alpha}\right)^{-H}, \quad Leb(A_t \backslash A) = \frac{\alpha}{H}\left(1- \left(1 + \frac{t}{\alpha}\right)^{-H}\right).
	\end{align*}
	
	This yields the correlation function
	\begin{align*}
		\rho(h) = Corr( L(A_{t+h}),L(A_t)) = \frac{ Leb( A_{h} \cap A)}{Leb(A)} =  \left(1 + \frac{h}{\alpha}\right)^{-H},
	\end{align*}
	so that
	\begin{align*}
		\int_0^{\infty} \rho(h) dh = \left\{
		\begin{array}{lr}
			\infty & \textnormal{if} \quad H \in (0,1],\\
			\frac{\alpha}{H-1} & \textnormal{if} \quad H > 1,
		\end{array}
		\right.
	\end{align*}
	from which we see, that the trawl process has long memory for  $H \in (0,1].$

	\clearpage
	
	\newpage

	\section{Details concerning gradients}\label{sec:gradients}
	Recall that we have the composite log-likelihood function
	\begin{align*}
		l_{CL}(\theta;x) := l_{CL}^{(K)}(\theta;x)  = \log L_{CL}^{(K)}(\theta;x) = \sum_{k=1}^K \sum_{i=1}^{n-k} \log f(x_{i+k},x_i;\theta).
	\end{align*}
	Let $\theta_i$ be an element of $\theta$. The derivative of $l_{CL}(\theta;x)$ wrt. $\theta_i$ is 
	\begin{align}\label{eq:grad1}
		\frac{\partial}{\partial \theta_i} l_{CL}(\theta;x)  = \frac{\partial}{\partial \theta_i} \log L_{CL}^{(K)}(\theta;x) = \sum_{k=1}^K \sum_{i=1}^{n-k} \frac{1}{f(x_{i+k},x_i;\theta)} \frac{\partial}{\partial \theta_i} f(x_{i+k},x_i;\theta).
	\end{align}
	
	Recall also that
	\begin{align*}
		f(x_{i+k},x_i;\theta) &= \sum_{c=-\infty}^{\infty} P_{1,i,k}^{(c)} \cdot P_{2,i,k}^{(c)} \cdot P_{3,k}^{(c)}
	\end{align*}
	with
	\begin{align*}
		P_{1,i,k}^{(c)} := \mathbb{P}\left(  L(A_{k\Delta} \setminus A) = x_{i+k}-c  \right), \quad P_{2,i,k}^{(c)} =  \mathbb{P}\left(  L(A_{k\Delta} \setminus A) = x_{i}-c  \right), \quad P_{3,k}^{(c)} = \mathbb{P}\left(  L(A_{k\Delta}\cap A) = c  \right),  
	\end{align*}
	implying that
	\begin{align}\label{eq:grad2}
		\frac{\partial}{\partial \theta_i}  f(x_{i+k},x_i;\theta) &= \sum_{c=-\infty}^{\infty} \left( \frac{\partial}{\partial \theta_i}  P_{1,i,k}^{(c)} \cdot P_{2,i,k}^{(c)} \cdot P_{3,k}^{(c)}  + P_{1,i,k}^{(c)} \cdot \frac{\partial}{\partial \theta_i} P_{2,i,k}^{(c)} \cdot P_{3,k}^{(c)} + P_{1,i,k}^{(c)} \cdot P_{2,i,k}^{(c)} \cdot\frac{\partial}{\partial \theta_i}  P_{3,k}^{(c)}\right).
	\end{align}
	The terms $P_{1,i,k}^{(c)}, P_{2,i,k}^{(c)}, P_{3,k}^{(c)}$ are calculated in the numerical maximization of the composite likelihood routine for all $c$. The aim of this section is to calculate $\frac{\partial}{\partial \theta_i}  P_{j,i,k}^{(c)}$ for $j = 1, 2, 3$, so that the gradient of the log-likelihood function is easily calculated using Equations \eqref{eq:grad1} and \eqref{eq:grad2}. It is clear that $\frac{\partial}{\partial \theta_i}  P_{j,i,k}^{(c)}$ will depend on both the L\'evy basis as well as the form of the trawl set (and hence the trawl function). We first supply the relevant derivations for the Poisson L\'evy basis (Section \ref{app:gradPoi}) and the Negative Binomial L\'evy basis (Section \ref{app:gradNB}), and then the trawl functions Exp, SupExp, IG, and $\Gamma$ (Sections \ref{app:gradExp}--\ref{app:gradGam}).

	\subsection{Some preliminary practical details}
	In our numerical implementation of the composite likelihood methods, we often have restrictions on some parameters. Most notably, we have positivity restriction, e.g. we require that the intensity $\nu>0$ for the Poisson L\'evy basis. One could impose such restrictions by using a constrained optimization procedure when performing the numerical optimization of the log composite likelihood function $ l_{CL}(\theta;x)$. We prefer to work with an unconstrained optimization procedure, by transforming the parameters such that they are fulfilling their restrictions automatically. That is if $\theta$ is a restricted parameter, we find an invertible transformation function $g$, such that $\tilde\theta = g^{-1}(\theta) \in \R$ is unrestricted. The unconstrained numerical optimizer is optimizing over the unrestricted parameter $\tilde\theta$ and arrives at, say, $\tilde\theta^*$. Our estimate of $\theta$ is thus $\hat \theta = g(\tilde \theta^*)$. Consequently, it is necessary to correct  for this when calculating standard errors (delta rule) as well as when supplying a gradient for our numerical optimization scheme. The reason is that the calculations concerning the gradient, detailed in the previous section, are with respect to $\theta$, and not $\tilde \theta$, which is the actual parameter being used in the numerical optimization procedure. In the case of a transformed variable, the gradient that should be supplied to the machine is therefore not the one given in \eqref{eq:grad2}, but rather
	\begin{align*}
		\frac{\partial}{\partial \tilde\theta_i}  f(x_{i+k},x_i;\theta) = \frac{\partial}{\partial \theta_i}  f(x_{i+k},x_i;\theta)  \frac{\partial \theta_i}{\partial \tilde\theta_i} =  \frac{\partial}{\partial \theta_i}  f(x_{i+k},x_i;\theta)  \frac{\partial }{\partial \tilde\theta_i}  g(\tilde \theta).
	\end{align*}
	
	In this paper two restrictions are encountered: many parameters are positive, while a few are restricted to be in the unit interval. If $\theta>0$ is a positive parameter, we use a log transformation by defining the new parameter $\tilde\theta$ through
	\begin{align*}
		\tilde\theta = g^{-1}(\theta) = \log \theta, \qquad \theta = g(\tilde\theta) = \exp (\tilde\theta).
	\end{align*}
	If $p \in (0,1)$ is a parameter, we use an inverse logistic (sigmoid) transformation,
	\begin{align*}
		\tilde p= g^{-1}(p) = \log \left( \frac{p}{1-p} \right), \qquad  p = g(\tilde p) =  \frac{1}{1+\exp(-\tilde p)}.
	\end{align*}

	\subsection{Poisson L\'evy basis} \label{app:gradPoi}
	Let $L' \sim Poi(\nu)$ and recall that  for a Borel set $B$, this implies
	\begin{align*}
		\mathbb{P}\left( L(B) = x \right) = \frac{[\nu Leb(B)]^x \exp(-\nu Leb(B))}{x!}.
	\end{align*}
	We deduce, that for a generic parameter $\theta \neq \nu$,
	\begin{align*}
		\frac{\partial}{\partial \theta}\mathbb{P}\left( L(B) = x \right) =  \left( x Leb(B)^{-1} - \nu \right) \mathbb{P}\left( L(B) = x \right)  \frac{\partial}{\partial \theta}Leb(B).
	\end{align*}
	The only ingredient left to calculate is
	\begin{align*}
		\frac{\partial}{\partial \nu} \mathbb{P}\left( L(B) = x \right) = \left(x\nu^{-1} - Leb(B)\right) \mathbb{P}\left( L(B) = x \right). 
	\end{align*}

	\subsection{Negative Binomial L\'evy basis}\label{app:gradNB}
	Recall that in the case where the L\'evy seed $L'$ is distributed as a Negative Binomial random variable with parameters $m>0$ and $p \in [0,1]$, we have $L(B) \sim NB(Leb(B)m,p)$, which implies
	\begin{align*}
		P(L(B) = x) =  \frac{\Gamma(Leb(B)m+x)}{x!\Gamma(Leb(B)m)} (1-p)^{Leb(B)m} p^x, \quad x = 0, 1, 2, \ldots.
	\end{align*}
	Using the well-known property of the $\Gamma$ function that $\Gamma(x+1) = x\Gamma(x)$ \citep[][p. 904]{integrals}, we can write
	\begin{align*}
		P(L(B) = x) =  \left(Leb(B)m + x-1\right) \left(Leb(B)m + x-2\right) \cdots \left(Leb(B)m\right)   \frac{1}{x!} (1-p)^{Leb(B)m} p^x,
	\end{align*}
	for $ k = 0, 1, 2, \ldots$. We deduce, that for a generic parameter $\theta \neq m, p$,
	\begin{align*}
		\frac{\partial}{\partial \theta}& P(L(B) = x) \\
		&= \left(  \frac{\partial}{\partial \theta} Leb(B)  \right)mP(L(B) = x)  \left( \log(1-p) +  \frac{1}{Leb(B)m} +   \frac{1}{Leb(B)m+1} + \cdots +\frac{1}{Leb(B)m + x-1}  \right) .
	\end{align*}
	The only ingredients left to calculate are
	\begin{align*}
		\frac{\partial}{\partial p}&P(L(B) = x) =    P(L(B) = x)  \left( \frac{x}{p} - \frac{Leb(B)m}{1-p}  \right), 
	\end{align*}
	and
	\begin{align*}
		&\frac{\partial}{\partial m}P(L(B) = x) \\
		&=    P(L(B) = x)  Leb(B) \left( \log(1-p) +  \frac{1}{Leb(B)m} +   \frac{1}{Leb(B)m+1} + \cdots +\frac{1}{Leb(B)m + x-1}  \right) .
	\end{align*}

	\subsection{Exponential trawl function}\label{app:gradExp}
	Let $L'$ be a generic L\'evy seed and $d(s) = \exp(\lambda s)$ for $s \leq 0$. Recall that for $t>0$,
	\begin{align*}
		Leb(A_t \setminus A) = \lambda^{-1} (1-\exp(-\lambda t)), \qquad Leb(A_t \cap A) = \lambda^{-1} \exp(-\lambda t).
	\end{align*}
	It is not difficult to show that
	\begin{align*}
		\frac{\partial}{\partial \lambda} Leb(A_t \setminus A) = \lambda^{-1}\left( t\exp(-\lambda t) - \lambda^{-1} (1-\exp(-\lambda t))  \right),
	\end{align*}
	while
	\begin{align*}
		\frac{\partial}{\partial \lambda} Leb(A_t \cap A) = -\lambda^{-1}\exp(-\lambda t)\left( \lambda^{-1} +  t  \right).
	\end{align*}

	\subsection{SupExp trawl function}\label{app:gradSupExp}
	Let $L'$ be a generic L\'evy seed and $d(s)$ be the supExp trawl function (see above). Recall that for $t>0$,
	\begin{align*}
		Leb(A_t \setminus A) = \sum_{i=1}^q w_i\lambda_i^{-1}(1-e^{-\lambda_i t}), \qquad Leb(A_t \cap A) = \sum_{i=1}^q w_i \lambda_i^{-1}e^{-\lambda_i t}.
	\end{align*}
	
	It is not difficult to show that for $j = 1, 2, \ldots, q,$
	\begin{align*}
		\frac{\partial}{\partial \lambda_j} Leb(A_t \setminus A) &= w_j \lambda_j^{-1}\left( t\exp(-\lambda_j t) - \lambda_j^{-1} (1-\exp(-\lambda_j t))  \right), \\
		\frac{\partial}{\partial \lambda_j} Leb(A_t \cap A) &= - w_j\lambda_j^{-1}\exp(-\lambda_j t)\left( \lambda_j^{-1} +  t  \right),
	\end{align*}
	while
	\begin{align*}
		\frac{\partial}{\partial w_j} Leb(A_t \setminus A) &= \lambda_j^{-1}(1-e^{-\lambda_j t}), \\
		\frac{\partial}{\partial w_j} Leb(A_t \cap A) &= \lambda_j^{-1}e^{-\lambda_j t}.
	\end{align*}

	\subsection{IG trawl function}\label{eq:gradIG}
	Let $L'$ be a generic L\'evy seed and $d(s)$ be the IG trawl (see above).  Recall that, for $t>0$,
	\begin{align*}
		Leb(A_t  \setminus  A) =\frac{\gamma}{\delta} \left( 1- \exp(\delta \gamma ( 1 - \alpha_t))\right), \quad Leb(A_t \cap A) =\frac{\gamma}{\delta} \exp(\delta \gamma ( 1 - \alpha_t)),
	\end{align*}
	where $\alpha_t = \sqrt{\frac{2t}{\gamma^2} + 1}$.
	
	We can show that
	\begin{align*}
		\frac{\partial}{\partial \delta} Leb(A_t \setminus A) &=  -\delta^{-1}Leb(A_t \setminus A)  -\gamma^2 \delta^{-1} (1-\alpha_t) \exp(\delta \gamma (1-\alpha_t)), \\
		\frac{\partial}{\partial \gamma} Leb(A_t \setminus A) &= -\gamma^{-1} Leb(A_t \setminus A)  - \gamma \exp(\delta \gamma (1-\alpha_t)) [1-\alpha_t + 2\gamma^{-2}\alpha_t^{-1} t], 
	\end{align*}
	and
	\begin{align*}
		\frac{\partial}{\partial \delta} Leb(A_t \cap A) &=  Leb(A_t \cap A)(\gamma(1-\alpha_t) - \delta^{-1}), \\
		\frac{\partial}{\partial \gamma} Leb(A_t \cap A) &=  Leb(A_t \cap A)(\gamma^{-1} + \delta (1-\alpha_t) + 2\delta \gamma^{-2} \alpha_t^{-1} t).
	\end{align*}

	\subsection{$\Gamma$ trawl function}\label{app:gradGam}
	Let $L'$ be a generic L\'evy seed and $d(s)$ be the $\Gamma$ trawl (see above). Recall that, for $t>0$,
	\begin{align*}
		Leb(A_t \setminus A) = \frac{\alpha}{H}\left(1- \left(1 + \frac{t}{\alpha}\right)^{-H}\right), \quad Leb(A_t  \cap  A) =\frac{\alpha}{H} \left(1 + \frac{t}{\alpha}\right)^{-H}.
	\end{align*}
	It is easy to show that
	\begin{align*}
		\frac{\partial}{\partial H} Leb(A_t \cap A) &=  - \frac{\alpha}{H} \left( 1 + \frac{t}{\alpha} \right)^{-H} \left( H^{-1}+ \log\left( 1+ \frac{t}{\alpha}\right) \right),  \\
		\frac{\partial}{\partial \alpha} Leb(A_t \cap A) &=   \left( 1 + \frac{t}{\alpha} \right)^{-(H+1)} \left( H^{-1} \left( 1+ \frac{t}{\alpha}\right) + \frac{t}{\alpha} \right).
	\end{align*}
	and
	\begin{align*}
		\frac{\partial}{\partial H} Leb(A_t \setminus A) &=   -\alpha H^{-2} - \frac{\partial}{\partial H} Leb(A_t \cap A), \\
		\frac{\partial}{\partial \alpha} Leb(A_t \setminus A) &= H^{-1} - \frac{\partial}{\partial \alpha} Leb(A_t \cap A).
	\end{align*}

	\clearpage
	
	\newpage

	\section{Additional calculations}\label{sec:add_calcs}
	
	\subsection{Calculations for the GIG trawl of Section \ref{sec.:GIGtrawl}}
	We have the trawl function
	\begin{align*}
		d(s) = \left( 1- \frac{2s}{\gamma^2}\right)^{-\nu/2} \frac{K_{\nu}\left(\delta \gamma \sqrt{1-\frac{2s}{\gamma^2}}\right)}{K_{\nu}(\delta \gamma)}.
	\end{align*}
	In the following we use the substitution $x = \sqrt{1+\frac{2s}{\gamma^2}}$ to get
	\begin{align*}
		Leb(A) = \int_0^{\infty} d(-s)ds &= \int_0^{\infty} \left( 1+ \frac{2s}{\gamma^2}\right)^{-\nu/2} \frac{K_{\nu}\left(\delta \gamma \sqrt{1+\frac{2s}{\gamma^2}}\right)}{K_{\nu}(\delta \gamma)} ds \\
		&=\int_1^{\infty} x^{-\nu + 1} \frac{K_{\nu}\left(\delta \gamma x\right)}{K_{\nu}(\delta \gamma)} \gamma^2 dx \\
		&= \frac{\gamma^2}{K_{\nu}(\delta \gamma)} \left(   \int_0^{\infty} x^{-\nu + 1}K_{\nu}\left(\delta \gamma x\right)   -   \int_0^1  x^{-\nu + 1}K_{\nu}\left(\delta \gamma x\right) \right).
	\end{align*}
	Now apply (6.561.12) and (6.561.16) in \cite{integrals} to get\footnote{Note, that we here need to impose $\nu < 1.$}
	\begin{align*}
		Leb(A) =\frac{\gamma}{\delta} \frac{K_{\nu -1}(\delta \gamma)}{K_{\nu}(\delta \gamma)}.
	\end{align*}
	Set $\alpha :=  \sqrt{\frac{2t}{\gamma^2} + 1}.$ Now, make the same substitution as above to get
	\begin{align*}
		Leb(A_t \cap A) = \int_t^{\infty} d(-s)ds &= \int_0^{\infty} \left( 1+ \frac{2s}{\gamma^2}\right)^{-\nu/2} \frac{K_{\nu}\left(\delta \gamma \sqrt{1+\frac{2s}{\gamma^2}}\right)}{K_{\nu}(\delta \gamma)} ds \\
		&=\int_{\alpha}^{\infty} x^{-\nu + 1} \frac{K_{\nu}\left(\delta \gamma x\right)}{K_{\nu}(\delta \gamma)} \gamma^2 dx.
	\end{align*}
	Set $y = \alpha^{-1} x$ to get
	\begin{align*}
		\int_{\alpha}^{\infty} x^{-\nu + 1} \frac{K_{\nu}\left(\delta \gamma x\right)}{K_{\nu}(\delta \gamma)} \gamma^2 dx &= \frac{\gamma^2}{K_{\nu}(\delta \gamma)} \int_{1}^{\infty} (\alpha y)^{-\nu + 1} K_{\nu}\left(\delta \gamma \alpha y\right) \alpha dy \\
		&= \frac{\gamma^2 \alpha^{-\nu+2}}{K_{\nu}(\delta \gamma)} \int_{1}^{\infty} y^{-\nu + 1} K_{\nu}\left(\delta \gamma \alpha y\right) dy.
	\end{align*}
	Now, splitting the integral as above and using the same formulae yields
	\begin{align*}
		Leb(A_t \cap A) =\frac{\gamma \alpha^{-\nu +1}}{ \delta} \frac{K_{\nu -1}(\delta \gamma \alpha)}{K_{\nu}(\delta \gamma)}.
	\end{align*}

	\subsection{Calculations for the IG trawl of Section \ref{sec:IG}}
	We have
	\begin{align*}
		d(s) =  \left( 1- \frac{2s}{\gamma^2}\right)^{-1/2} \exp\left( \delta \gamma \left( 1- \sqrt{ 1 - \frac{2s}{\gamma^2}} \right) \right),
	\end{align*}
	which means that
	\begin{align*}
		Leb(A) = \int_0^{\infty} d(-s) ds = \int_0^{\infty} \left( 1 + \frac{2s}{\gamma^2} \right)^{-1/2}  \exp\left( \delta \gamma \left( 1- \sqrt{ 1 + \frac{2s}{\gamma^2}} \right) \right) ds.
	\end{align*}
	So, after the change of variable $x =  \sqrt{ 1 + \frac{2s}{\gamma^2}}$ we have
	\begin{align*}
		Leb(A) = \int_0^{\infty} d(-s) ds &= \int_1^{\infty} x^{-1}  \exp\left( \delta \gamma ( 1- x) \right) \gamma^2 x dx \\
		&= \gamma^2 \int_1^{\infty}  \exp\left( \delta \gamma ( 1- x) \right)  dx \\
		&= \gamma^2 e^{\delta \gamma} \int_1^{\infty} e^{-\delta \gamma x}dx \\
		&= \frac{\gamma}{\delta}.
	\end{align*}
	Again, defining $\alpha :=  \sqrt{\frac{2t}{\gamma^2} + 1},$ we get by similar calculations
	\begin{align*}
		Leb(A_t \cap A) = \int_t^{\infty} d(-s) ds &= \gamma^2 e^{\delta \gamma} \int_{\alpha}^{\infty} e^{-\delta \gamma x}dx = \frac{\gamma}{\delta} e^{\delta \gamma ( 1 - \alpha)}.
	\end{align*}

	\clearpage
	
	\newpage
	
	\section{Weak dependence of trawl processes and asymptotic theory for the GMM approach}\label{sec:WeakDepGMM}
	
	In this section, we show that trawl processes, not necessarily restricted to the integer-valued case, are $\theta$-weakly dependent and we state and prove the asymptotic theory for the GMM approach to parameter estimation. 
	\subsection{Weak dependence of trawl processes}
	In this section, we show that (integer-valued) trawl processes are $\theta$-weakly dependent, see \citet[Definition 3.2]{CuratoStelzer2019}.
	
	Let us consider a (not necessarily integer-valued) L\'{e}vy seed $L'$ with characteristic triplet $(\gamma, a, \lev)$, i.e.~an infinitely divisible random variable with characteristic function given by 
	\begin{align}
		\label{eq:LevSeed}
		\Psi(\theta; L')=\mathbb{E}(\exp(i \theta L'))
		= \exp\left(i \theta \drift  - \frac{1}{2}\theta^2 a  + \int_{\mathbb{R}}\left(e^{i\theta \xi}-1-i\theta \xi \mathbb{I}_{[-1,1]}(\xi)\right) \lev(d\xi)\right),
	\end{align}
	for $\theta \in \R$.
	
	In the case of an integer-valued trawl process, we have
	\begin{align}
		\label{eq:LevSeedIV}
		\Psi(\theta; L')=\mathbb{E}(\exp(i \theta L'))
		= \exp\left( \int_{\mathbb{R}}\left(e^{i\theta \xi}-1\right) \lev(d\xi)\right)
		=\exp\left( \sum_{\xi}\left(e^{i\theta \xi}-1\right) \lev(\xi)\right),
	\end{align}
	for $\theta \in \R$. I.e.~in this case, the corresponding characteristic triplet is given by
	$(\gamma, 0, \lev)$, where $\gamma = \int_{\mathbb{R}}\xi \mathbb{I}_{[-1,1]}(\xi) \lev(d\xi)=\sum_{\xi=-1}^1\xi \lev(\xi)$.
	
	We set 
	$$
	A_t=\{(x, s): s\leq t, 0 \leq x \leq d(s-t)\},
	$$ for a function $d:(-\infty, 0]\to [0, \infty)$. 
	Let us also define a function $g:[0, \infty)\to [0, \infty)$ by $g(s):=d(-s)$, for all $s\geq 0$. 
	We note that the trawl process associated with the L\'{e}vy seed $L'$ can be expressed as 
	$X=(X_t)_{t\geq 0}$ with \begin{align}\label{eq:PT}
		X_t & =
		L(A_t) = \int_{(-\infty,t]\times \R}\ind_{(0,d(s-t))}(x)L(dx,ds)=\int_{(-\infty,t]\times \R}\ind_{(0,g(t-s))}(x)L(dx,ds)\\
		&=
		\int_{\R\times \R}f(x,t-s) L(dx,ds),
	\end{align}
	with  
	$f(x,t-s)=\ind_{(0,g(t-s))}(x)\ind_{[0,\infty)}(t-s)$, which is a special case of a causal mixed moving average processes as defined in 
	\citet[Definition 3.3]{CuratoStelzer2019}.
	Hence, using \citet[Corollary 3.4]{CuratoStelzer2019} and assuming that $\int_{|\xi|>1}|\xi|^2\lev(d\xi)<\infty$, we deduce that 
	the trawl process is $\theta$-weakly dependent in the sense of \citet[Definition 3.2]{CuratoStelzer2019} with coefficient, for $r\geq 0$, 
	\begin{align*}
		\theta_X(r)&=\left( \mathrm{Var}(L')\int_{(-\infty, -r)\times \R}\ind_{(0,g(-s))}^2(x)\ind_{[0,\infty)}^2(-s) dxds
		\right.\\
		&\left. 
		+
		\left| \E(L')\int_{(-\infty, -r)\times \R}\ind_{(0,g(-s))}(x)\ind_{[0,\infty)}(-s) dxds
		\right|^2
		\right)^{1/2}
		\\
		&=
		\left( \mathrm{Var}(L')\int_{-\infty}^{-r}g(-s)ds+ (\E(L'))^2\left(\int_{-\infty}^{-r}g(-s)ds\right)^2\right)^{1/2}
		\\
		&=
		\left( \mathrm{Var}(L')\int_{r}^{\infty}g(s)ds+ (\E(L'))^2\left(\int_{r}^{\infty}g(s)ds\right)^2\right)^{1/2}
		\\
		&=
		\left( \mathrm{Cov}(X_0, X_r)
		+\frac{(\E(L'))^2}{(\mathrm{Var}(L'))^2} (\mathrm{Cov}(X_0, X_r))^2\right)^{1/2},
	\end{align*}
	where   $\E(L')=\gamma+\int_{|\xi|>1}\lev(d\xi)$, and 
	$\mathrm{Var}(L')=a+\int_{\R}\xi^2\lev(d\xi)$.
	
	In the case when $L'$ is of finite variation, i.e.~when the characteristic triplet is given by $(\gamma, 0, \lev)$ with $\int_{\R}|\xi|\lev(\xi)< \infty$, which includes, in particular, integer-valued trawl processes, then the coefficient is, for $r\geq 0$,  given by
	\begin{align*}
		\theta_X(r)&=\int_{(-\infty, -r)\times \R}\int_{\R}|\ind_{(0,g(-s))}(x)\ind_{[0,\infty)}(-s)\xi|\lev(d\xi) dxds\\
		& +\int_{(-\infty, -r)\times \R}|\ind_{(0,g(-s))}(x)\ind_{[0,\infty)}(-s)\gamma_0| dxds\\
		&=\left(\int_{\R}|\xi|\lev(d\xi)+|\gamma_0|\right) \int_{r}^{\infty}g(s)ds=
		c\mathrm{Cov}(X_0, X_r),
	\end{align*}
	where $c=\left(\int_{\R}|\xi|\lev(d\xi)+|\gamma_0|\right)/\mathrm{Var}(L')$ and $\gamma_0=\gamma-\int_{|\xi|\leq 1}\xi\lev(d\xi)$.
	
	We note that, in the case of an integer-valued trawl, we have
	$
	\gamma_0=\gamma-\int_{|\xi|\leq 1}\xi\lev(d\xi)
	=\int_{\mathbb{R}}\xi \mathbb{I}_{[-1,1]}(\xi) \lev(d\xi)-\int_{|\xi|\leq 1}\xi\lev(d\xi)=0$ and, hence, $c=\int_{\R}|\xi|\lev(d\xi)=\sum_{\xi}|\xi|\lev(\xi)$.

	We note that, as pointed out in \citet[p.~324]{CuratoStelzer2019} and shown in the discrete-time case in \cite{Doukhanetal2012}, for integer-valued trawl processes, the fact that IVT processes are $\theta$-weakly dependent, implies that they are strongly mixing.

	\subsection{GMM estimation for trawl processes}
	In \cite{BNLSV2014}, the authors proposed estimating the trawl parameters via a (generalised) method of moments (G)MM. We shall now derive the corresponding asymptotic theory.
	
	Consider the equidistantly sampled process $X_{\Delta}, X_{2\Delta}, \ldots, X_{n\Delta}$, for $\Delta=T/n >0, T>0, n \in \N$. The GMM estimator is based on the sample mean, sample variance and sample autocovariances up to lag $m\geq 2$. Consider the vector
	$$
	Y_t^{(m)}=(X_{t\Delta}, X_{(t+1)\Delta}, \ldots, X_{(t+m)\Delta}),
	$$
	for $t=1, \ldots, n-m$. Let $\Theta$ denote the parameter space of the trawl process and set $\mu:=\mu(\theta)=\E(X_0)$ and $D(k):=D(k,\theta):=\E(X_0X_{k\Delta})$, for $k=0, \ldots, m$. For a given parametric model of $X$, $D(k)$ is just a function of the model parameter(s) $\theta$.
	
	Define the measurable function $h:\R^{m+1}\times \Theta\to \R^{m+2}$ by
	\begin{align*}
		h(Y_t^{(m)}, \theta)&=
		\left(
		\begin{array}{c}
			h_E(Y_t^{(m)}, \theta)\\
			h_0(Y_t^{(m)}, \theta)\\
			h_1(Y_t^{(m)}, \theta)\\
			\vdots
			\\
			h_m(Y_t^{(m)}, \theta)\\
		\end{array}
		\right)
		=
		\left(
		\begin{array}{c}X_{t\Delta}-\mu(\theta) \\
			X_{t\Delta}^2-D(0,\theta)\\
			X_{t\Delta}X_{(t+1)\Delta}-D(1,\theta)\\
			\vdots
			\\
			X_{t\Delta}X_{(t+m)\Delta}-
			D(m,\theta)
		\end{array}
		\right).
	\end{align*}
	The corresponding sample moments can be defined as 
	\begin{align*}
		g_{n,m}(\theta)&=
		\frac{1}{n-m}\sum_{t=1}^{n-m}h(Y_t^{(m)}, \theta)=
		\left(
		\begin{array}{c}
			\frac{1}{n-m}\sum_{t=1}^{n-m}h_E(Y_t^{(m)}, \theta)\\
			\frac{1}{n-m}\sum_{t=1}^{n-m}h_0(Y_t^{(m)}, \theta)\\
			\vdots
			\\
			\frac{1}{n-m}\sum_{t=1}^{n-m}h_m(Y_t^{(m)}, \theta)
		\end{array}
		\right).
	\end{align*}
	We can then estimate the true parameter $\theta_0$, say,  by minimising the objective function of the GMM, which leads to the estimator
	\begin{align}\label{eq:GMMest}
		\widehat\theta_{0,\mathrm{GMM}}^{n,m}
		=\mathrm{argmin} 
		g_{n,m}(\theta)^{\top} A_{n,m}
		g_{n,m}(\theta),
	\end{align}
	where $A_{n,m}$ is the positive-definite weight matrix of the $m+2$ moments considered.
	
	We would like to derive a central limit theorem for the GMM estimator. As a first step, as in \citep[Section 6.1 for supOU processes]{CuratoStelzer2019}, we derive a central limit theorem for the moment function $h(Y_t^{(m)},\theta_0)$:
	\begin{theorem}\label{prop:clt-momentfct}
		Consider a trawl process $X$ with characteristic triplet $(\drift, a, \lev)$ and suppose that
		$\int_{|\xi|>1}|\xi|^{4+\delta}\lev(d\xi)<\infty$, for some $\delta>0$ and suppose that the $\theta$-weakly dependence coefficient of the trawl process is given by $\theta_X(r)=O(r^{-\alpha})$, for $\alpha>\left( 1+\frac{1}{\delta}\right)\left(1+\frac{1}{2+\delta} \right)$.
		Set $
		Y_t^{(m)}=(X_{t\Delta}, X_{(t+1)\Delta}, \ldots, X_{(t+m)\Delta})$,
		for $t=1, \ldots, n-m$.
		Then $h(Y_t^{(m)}, \theta_0)$ is a $\theta$-weakly dependent process, the matrix
		$$
		\Sigma_{a}=\sum_{l\in \Z}\mathrm{Cov}(h(Y_0^{(m)}, \theta_0), h(Y_l^{(m)}, \theta_0))
		$$
		is finite, positive definite and, as $n\to \infty$,
		$$
		\sqrt{n}g_{n,m}(\theta_0)
		\stackrel{d}{\to}\mathrm{N}(0, \Sigma_a).
		$$
	\end{theorem}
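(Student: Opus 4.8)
The plan is to follow the strategy used for supOU processes in \citet[Section 6.1]{CuratoStelzer2019}, establishing the three assertions in turn: the $\theta$-weak dependence of the moment process $(h(Y_t^{(m)},\theta_0))_t$, the finiteness and positive definiteness of the long-run covariance matrix $\Sigma_{a}$, and finally the central limit theorem itself via a CLT for stationary $\theta$-weakly dependent sequences. Since the trawl process $X$ is stationary by construction (Section \ref{sec.:setup}), so are the lagged vectors $Y_t^{(m)}$ and hence the moment functions $h(Y_t^{(m)},\theta_0)$, which I would exploit throughout.

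First I would establish that $(h(Y_t^{(m)},\theta_0))_t$ is $\theta$-weakly dependent. The key observation is that each component of $h$ is a polynomial of degree at most two in the coordinates of $Y_t^{(m)}$: the mean-type component $h_E$ is affine, while the (auto)covariance-type components $h_0,\ldots,h_m$ are quadratic monomials of the form $X_{t\Delta}X_{(t+k)\Delta}$ minus a constant. Using the fact that the trawl process is $\theta$-weakly dependent with coefficient $\theta_X(r)$ (established in the previous subsection), I would invoke the transfer principle that a locally Lipschitz function of a $\theta$-weakly dependent process is again $\theta$-weakly dependent, with coefficients bounded by the product of the local Lipschitz constants and suitable moments of $X$. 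Because the relevant functions are quadratic, the local Lipschitz constants grow linearly in the norm of the argument, so controlling the resulting coefficients requires finite moments of order $4+\delta$; this is precisely where the assumption $\int_{|\xi|>1}|\xi|^{4+\delta}\eta(d\xi)<\infty$ enters. The upshot is that the moment process inherits $\theta$-weak dependence coefficients of the same polynomial order $O(r^{-\alpha})$ as $\theta_X$.

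Next I would bound the covariances $\mathrm{Cov}(h(Y_0^{(m)},\theta_0),h(Y_l^{(m)},\theta_0))$ by the $\theta$-weak dependence coefficients of the moment process, up to a constant depending on the finite $(4+\delta)$-moments, so that the series defining $\Sigma_{a}$ is dominated by $\sum_{l}O(|l|^{-\alpha})$, which converges under the stated rate condition on $\alpha$. Positive definiteness of $\Sigma_{a}$ would follow from a non-degeneracy argument, namely that no non-trivial linear combination of the components of $h(Y_0^{(m)},\theta_0)$ is almost surely constant; this is plausible because the marginal law is genuinely non-degenerate infinitely divisible and the autocovariances at distinct lags are linearly independent functionals. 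The CLT is then obtained by applying an existing central limit theorem for stationary $\theta$-weakly dependent sequences: one verifies its moment hypothesis with the choice $\delta'=\delta/2$, since $h$ being quadratic in $X$ requires $\mathbb{E}[|h(Y_0^{(m)},\theta_0)|^{2+\delta'}]<\infty$, i.e.\ moments of $X$ of order $4+2\delta'=4+\delta$, and its dependence-rate hypothesis reduces exactly to $\alpha>\left(1+\frac{1}{\delta}\right)\left(1+\frac{1}{2+\delta}\right)$.

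The hard part will be the first step: carefully transferring $\theta$-weak dependence through the quadratic map $h$ while keeping the moment bookkeeping consistent, so that the decay rate of the resulting coefficients is preserved and the moment requirements of the chosen weak-dependence CLT are met with precisely the given $\delta$ and $\alpha$. The positive-definiteness claim is the second most delicate point, as it requires ruling out degenerate linear dependencies among the sample moments; if a clean direct argument proves elusive, one may instead impose it as a mild identifiability condition on the parametric model.
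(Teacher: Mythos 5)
Your overall architecture --- transfer the $\theta$-weak dependence of $X$ through the (quadratic) moment map $h$, then apply a CLT for stationary $\theta$-weakly dependent sequences via Cram\'er--Wold --- is the same as the paper's, which represents $Y_t^{(m)}$ as a causal multivariate mixed moving average process, obtains its weak dependence from \citet[Proposition 4.1]{CuratoStelzer2019}, transfers it through the shifted map $H(Y_t^{(m)})=h(Y_t^{(m)},\theta_0)+\mathrm{const}$ via \citet[Proposition 3.4]{CuratoStelzer2019}, and concludes with \citet[Theorem 1]{Dedecker2000}. However, your key step contains a genuine error: you assert that the moment process ``inherits $\theta$-weak dependence coefficients of the same polynomial order $O(r^{-\alpha})$ as $\theta_X$.'' This is false for unbounded (here quadratic) transformations of a weakly dependent process. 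The transfer result actually needed (with moment order $p=4+\delta$ and Lipschitz constants of polynomial growth degree $a-1=1$) yields coefficients of order $\theta_X(r)^{\frac{p-a}{p-1}}=\theta_X(r)^{\frac{2+\delta}{3+\delta}}$, i.e.\ the decay exponent degrades from $\alpha$ to $\alpha\frac{2+\delta}{3+\delta}$. This degradation is precisely the origin of the factor $\left(1+\frac{1}{2+\delta}\right)$ in the theorem's hypothesis: the weak-dependence CLT requires the transformed coefficient exponent $\alpha^{*}=\alpha\frac{2+\delta}{3+\delta}$ to exceed $1+\frac{1}{\delta}$, which is equivalent to $\alpha>\left(1+\frac{1}{\delta}\right)\left(1+\frac{1}{2+\delta}\right)$.

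Because of this, your final bookkeeping does not ``reduce exactly'' to the stated threshold as you claim. Under your (incorrect) preserved-order assumption together with your choice $\delta'=\delta/2$ in the CLT's moment hypothesis, the natural dependence-rate requirement would be $\alpha>1+\frac{1}{\delta'}=1+\frac{2}{\delta}$, which is strictly stronger than the stated bound, since $(\delta+1)(\delta+3)<(\delta+2)^2$ implies $\left(1+\frac{1}{\delta}\right)\left(1+\frac{1}{2+\delta}\right)<1+\frac{2}{\delta}$. So as written your argument both relies on a false transfer principle and fails to recover the theorem's hypothesis on $\alpha$. Two smaller points: the paper gets the weak dependence of the lagged vector $Y_t^{(m)}$ from its explicit representation as an $(m+1)$-dimensional causal mixed moving average (which supplies the dimension constant and the shift $r-m\Delta$), rather than by assertion; and your fallback of imposing positive definiteness of $\Sigma_a$ as an assumption would change the statement of the theorem --- though, to be fair, the paper's own proof also defers this point to the argument of \citet[Theorem 6.1]{CuratoStelzer2019} rather than proving it directly.
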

	\begin{proof}[Proof of Theorem \ref{prop:clt-momentfct}]
		We note that $Y=(Y_t^{(m)})_{t\in \R}$ can be represented as a causal $(m+1)$-dimensional mixed moving average process given by
		\begin{align*}
			Y_t^{(m)}= \int_{(-\infty,t\Delta]\times \R}
			\left( \begin{array}{c}
				\ind_{(0,g(t\Delta-s))}(x)
				\\
				\ind_{(0,g((t-1)\Delta-s))}(x)\\
				\vdots\\
				\ind_{(0,g((t-m)\Delta-s))}(x)
			\end{array} \right)
			L(dx,ds)
			=\int_{(-\infty,t\Delta]\times \R}
			\left( \begin{array}{c}
				\ind_{(0,g(t\Delta-s))}(x)
				\\
				\ind_{(0,g(t\Delta-s-\Delta))}(x)\\
				\vdots\\
				\ind_{(0,g(t\Delta-s-m\Delta))}(x)
			\end{array} \right)
			L(dx,ds),
		\end{align*}
		which is $\theta$-weak dependent with coefficient $\theta(r)=\mathcal{D}\theta_X(r-m\Delta)$, for $r\geq m \Delta$, where $\mathcal{D}=(m\Delta +1)^{1/2}$ for general trawl processes and $\mathcal{D}=(m\Delta +1)$ in the finite variation case, see \citet[Proposition 4.1]{CuratoStelzer2019}
		
		Note that the condition $\int_{|\xi|>1}|\xi|^{4+\delta}\lev(d\xi)<\infty$  implies the existence of the $(4+\delta)$-moment of the trawl process.
		Define a function $H: \R^{m+1} \to \R^{m+2}$ such that
		\begin{align*}
			H(Y_t^{(m)})&=h(Y_t^{(m)}, \theta_0)+\left(
			\begin{array}{c}
				\mu(\theta_0)\\
				D(0, \theta_0)\\
				\vdots\\
				D(m, \theta_0)
			\end{array}
			\right)
			%
			=\left(
			\begin{array}{c}X_{t\Delta}-\mu(\theta_0) \\
				X_{t\Delta}^2-D(0, \theta_0)\\
				X_{t\Delta}X_{(t+1)\Delta}-D(1, \theta_0)\\
				\vdots
				\\
				X_{t\Delta}X_{(t+m)\Delta}-
				D(m, \theta_0)
			\end{array}
			\right)+\left(
			\begin{array}{c}
				\mu(\theta_0)\\
				D(0, \theta_0)\\
				D(1, \theta_0)\\
				\vdots\\
				D(m, \theta_0)
			\end{array}
			\right)\\
			&=\left(
			\begin{array}{c}X_{t\Delta} \\
				X_{t\Delta}^2\\
				X_{t\Delta}X_{(t+1)\Delta}\\
				\vdots
				\\
				X_{t\Delta}X_{(t+m)\Delta}
			\end{array}
			\right).
		\end{align*}
		\cite{CuratoStelzer2019} showed that the function $H$ satisfies the  conditions of \citet[Proposition 3.4]{CuratoStelzer2019}  for $p=4+\delta, c=1, a=2$.
	Hence, according to \citet[Proposition 3.4]{CuratoStelzer2019},  $H(Y_t^{(m)})$ is a $\theta$-weakly dependent process with coefficient 
	$\mathcal{C}(\mathcal{D}\theta_X(r-m\Delta))^{\frac{2+\delta}{3+\delta}}$, for $r\geq m \Delta$, 
	for a constant $\mathcal{C}>0$ independent of $r$.
	We can now deduce that $h(Y_t^{(m)}, \theta_0)$ is a zero-mean,  $\theta$-weakly dependent process with the same coefficient.
	Applying the Cramer-Wold device and 
	\citet[Theorem 1]{Dedecker2000} allows us to conclude as in the proof of \citet[Theorem 6.1]{CuratoStelzer2019}, where we note that the moment condition appearing in \citet[Theorem 1]{Dedecker2000} is implied for a weakly $\theta$-dependent process with coefficient $\theta(r)=O(r^{-\alpha^*})$, for $\alpha^*>1+1/\delta$, see e.g.~\cite{CSS2022}.
	I.e.~in our setting we require that $\alpha^*=\alpha\frac{2+\delta}{3+\delta} >1+1/\delta
	\Leftrightarrow \alpha>(1+\frac{1}{\delta})(1+\frac{1}{2+\delta})$.
\end{proof}

Let us now formulate the technical assumptions for the weak consistency and the central limit theorem of the GMM estimator.

We start off with the assumptions which guarantee weak consistency, cf.~\citet[Assumptions 1.1-1.3]{Matyas1999}
\begin{assumption}\label{as:1.1}
\begin{enumerate}
	\item[(i)] Suppose that $\E(h(Y_t^{(m)}, \theta))$ exists and is finite for all $\theta\in \Theta$ and for all $t$.
	\item[(ii)] Set $h_t^{(m)}(\theta)=\E(h(Y_t^{(m)}, \theta))$. There exists a $\theta_0\in \Theta$ such that $h_t^{(m)}(\theta)=0$ for all $t$ if and only if $\theta=\theta_0$. 
\end{enumerate}
\end{assumption}
We note that by construction, Assumption \ref{as:1.1} (i) is satisfied in our setting under suitable moment conditions on $L'$, whereas (ii)  needs to be verified for the specific parametric case of interest.

Next, we impose an assumption on the convergence of the sample moments to the population moments. To this end, let $h^{(m)}(\theta)=\sum_{t=1}^{n-m}h_t^{(m)}(\theta)$. We denote the $j$th component of the $m+2$-dimensional vectors $h^{(m)}(\theta)$ and $g_{n,m}(\theta)$ by $h^{(m)}_j(\theta)$ and $g_{n,m; j}(\theta)$, respectively.
\begin{assumption}\label{as:1.2}
Suppose that, for $j=1, \ldots, m+2$, as $n\to \infty$, 
$$
\sup_{\theta \in \Theta}|h^{(m)}_j(\theta) -g_{n,m; j}(\theta)|\stackrel{\mathbb{P}}{\rightarrow} 0,
$$
\end{assumption}
The next assumption concerns the convergence of the weighting matrix:
\begin{assumption}\label{as:1.3}
There exists a sequence of non-random, positive definite matrices $\overline{A}_{n,m}$ such that, as $n \to \infty$, 
$|A_{n,m}-\overline{A}_{n,m}|\stackrel{\mathbb{P}}{\rightarrow} 0$.
\end{assumption}

\begin{theorem}\label{thm:consistency}
Assume that Assumptions \ref{as:1.1}, \ref{as:1.2}, \ref{as:1.3} hold. Then the GMM estimator $\widehat\theta_{0,\mathrm{GMM}}^{n,m}$ defined in \eqref{eq:GMMest}
is weakly consistent.
\end{theorem}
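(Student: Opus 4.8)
The plan is to recognize $\widehat\theta_{0,\mathrm{GMM}}^{n,m}$ as a standard extremum (minimum-distance) estimator and to assemble the classical consistency argument for such estimators, as in \citet[Theorem 2.1]{NM1994handbook} and, in precisely this GMM formulation, \citet[Chapter 1]{Matyas1999} — indeed Assumptions \ref{as:1.1}--\ref{as:1.3} are stated so as to match that framework. First I would introduce the random criterion $Q_{n,m}(\theta):=g_{n,m}(\theta)^{\top}A_{n,m}\,g_{n,m}(\theta)$, which $\widehat\theta_{0,\mathrm{GMM}}^{n,m}$ minimizes by \eqref{eq:GMMest}, alongside the nonrandom comparison criterion $\overline{Q}_{n,m}(\theta):=h^{(m)}(\theta)^{\top}\overline{A}_{n,m}\,h^{(m)}(\theta)$, with $\overline{A}_{n,m}$ the deterministic positive-definite sequence from Assumption \ref{as:1.3}. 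The argument then reduces to three ingredients: a uniquely identified, well-separated minimizer of $\overline{Q}_{n,m}$, and uniform convergence of $Q_{n,m}$ to $\overline{Q}_{n,m}$.

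Next I would establish the well-separated minimum. Since $\overline{A}_{n,m}$ is positive definite, $\overline{Q}_{n,m}(\theta)\geq 0$ with equality if and only if $h^{(m)}(\theta)=0$, which by the identification condition of Assumption \ref{as:1.1}(ii) holds if and only if $\theta=\theta_0$. Using continuity of $\theta\mapsto h^{(m)}(\theta)$ — inherited from the smooth dependence of $\mu(\theta)$ and $D(k,\theta)$ on $\theta$ — together with compactness of $\Theta$ (as in Assumption \ref{ass:theta}), the infimum of $\overline{Q}_{n,m}$ over the complement of any open ball around $\theta_0$ is attained and bounded away from $\overline{Q}_{n,m}(\theta_0)=0$. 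This is exactly the well-separatedness needed to rule out mass escaping from a neighbourhood of $\theta_0$.

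Then I would prove the uniform convergence. Writing the difference of the two quadratic forms as a term driven by $g_{n,m}(\theta)-h^{(m)}(\theta)$ and a term driven by $A_{n,m}-\overline{A}_{n,m}$, the first is controlled uniformly by Assumption \ref{as:1.2}, the second by Assumption \ref{as:1.3}, and the remaining cross terms are handled once $g_{n,m}$, $h^{(m)}$ and the two weighting matrices are shown to be uniformly bounded (in probability) over the compact $\Theta$; this gives $\sup_{\theta\in\Theta}\lvert Q_{n,m}(\theta)-\overline{Q}_{n,m}(\theta)\rvert\stackrel{\mathbb{P}}{\rightarrow}0$. Combining this with the well-separated minimizer yields the conclusion by the usual argmin step: for any neighbourhood $U$ of $\theta_0$, with probability tending to one $Q_{n,m}(\widehat\theta)\leq Q_{n,m}(\theta_0)$ sits close to $\overline{Q}_{n,m}(\theta_0)=0$, whereas outside $U$ the criterion is bounded below by the positive separation constant, forcing $\widehat\theta_{0,\mathrm{GMM}}^{n,m}\in U$ and hence $\widehat\theta_{0,\mathrm{GMM}}^{n,m}\stackrel{\mathbb{P}}{\rightarrow}\theta_0$.

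The main obstacle I anticipate is not the argmin machinery, which is routine, but the uniform-in-$\theta$ control in the convergence step: Assumption \ref{as:1.2} delivers uniform convergence of the moment vector itself, but transferring this to the quadratic form requires uniform boundedness of $g_{n,m}(\theta)$ over $\Theta$, which rests on the existence of the relevant moments of the trawl process (guaranteed by a moment condition on $L'$ of the type imposed in Theorem \ref{prop:clt-momentfct}) together with continuity and compactness. Verifying these regularity conditions, and — for any concrete parametric trawl specification — checking the global identification condition in Assumption \ref{as:1.1}(ii), is where the genuine work of applying Theorem \ref{thm:consistency} lies.
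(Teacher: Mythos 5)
Your proposal is correct, but it is worth noting what the paper actually does: its entire proof is the single line ``This is an immediate consequence of \citet[Theorem 1.1]{Matyas1999}'', since Assumptions \ref{as:1.1}--\ref{as:1.3} were deliberately stated to match that theorem's hypotheses. What you have written is, in effect, a self-contained reconstruction of the classical extremum-estimator argument that sits behind that citation (identification giving a well-separated minimum of the limiting criterion, uniform convergence of the random criterion, then the standard argmin step), so mathematically the two routes coincide; yours buys transparency at the cost of length, while the paper's buys brevity at the cost of hiding the regularity conditions inside the reference. One substantive point your reconstruction surfaces: to get the well-separation step you invoke compactness of $\Theta$ and continuity of $\theta \mapsto h^{(m)}(\theta)$, neither of which appears among the stated hypotheses of Theorem \ref{thm:consistency} --- compactness enters the paper only later, as Assumption \ref{as:1} for the CLT in Theorem \ref{prop:clt-gmm} (and as Assumption \ref{ass:theta} in the MCL section). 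This is not a defect of your argument but a genuine feature: pointwise identification plus uniform convergence alone do not imply consistency on a non-compact parameter space, so either compactness (or some well-separation condition) is being maintained implicitly through the appeal to \citet{Matyas1999}. Your proof makes that hidden hypothesis explicit, which is arguably an improvement on the paper's presentation.
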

\begin{proof}
This is an immediate consequence of \citet[Theorem 1.1]{Matyas1999}.
\end{proof}
\begin{remark}
We note that \citet[p.~14--17]{Matyas1999} discusses alternative (sufficient) assumptions which might be easier to check in practice.
\end{remark}
\begin{assumption}\label{as:1}
$\Theta$ is a compact  parameter space which includes the true parameter $\theta_0$. 
\end{assumption}
\begin{remark}
We note that, in practice, we would often impose bounds on the parameter space $\Theta$, so even if the true parameter constraints on $\Theta$ might not necessarily imply a compact space, it can typically be chosen to be compact when imposing suitable constraints in the optimisation. 
\end{remark}
\begin{assumption}\label{as:2}
The weight matrix $A_{n,m}$ converges in probability to a positive definite matrix $A$.
\end{assumption}
\begin{assumption}\label{as:3}
The covariance matrix $\Sigma_a$ is positive definite.
\end{assumption}

\begin{theorem}\label{prop:clt-gmm}
Consider a trawl process $X$ with characteristic triplet $(\drift, a, \lev)$ and suppose that
$\int_{|\xi|>1}|\xi|^{4+\delta}\lev(d\xi)<\infty$, for some $\delta>0$ and and suppose that the $\theta$-weakly dependence coefficient of the trawl process is given by $\theta_X(r)=O(r^{-\alpha})$, for $\alpha>\left( 1+\frac{1}{\delta}\right)\left(1+\frac{1}{2+\delta} \right)$.
Suppose that Assumptions \ref{as:1.1},  \ref{as:1}, \ref{as:2}, \ref{as:3} hold.
Then, as $n\to \infty$, 
$$
\sqrt{n}(\widehat\theta_{0,\mathrm{GMM}}^{n,m}-\theta_0)
\stackrel{d}{\to}\mathrm{N}(0, M\Sigma_aM^{\top}),
$$
where
\begin{align*}
	\Sigma_{a}&=\sum_{l\in \Z}\mathrm{Cov}(h(Y_0^{(m)}, \theta_0), h(Y_l^{(m)}, \theta_0)),\\
	M&=(G_0^{\top}AG_0)^{-1}G_0^{\top}A, \quad \mathrm{where} \; \;\;
	G_0=\E\left[\frac{\partial h(Y_t^{(m)}, \theta)}{\partial \theta^{\top}} \right]_{\theta=\theta_0}.
\end{align*}
\end{theorem}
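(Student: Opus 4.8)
The plan is to follow the classical GMM asymptotic-normality argument of Hansen, in the form of Newey and McFadden, observing that the only genuinely non-standard ingredient—the serial dependence of the moment vector—has already been dispatched in Theorem \ref{prop:clt-momentfct}. First I would record that, because $\theta_0$ lies in the interior of $\Theta$ (needed in addition to the compactness of Assumption \ref{as:1}, exactly as in Assumption \ref{ass:theta} for the MCL estimator) and because $\widehat\theta_{0,\mathrm{GMM}}^{n,m}$ is weakly consistent by Theorem \ref{thm:consistency}, the estimator satisfies the first-order condition of the minimisation \eqref{eq:GMMest} with probability approaching one, namely $J_{n,m}(\widehat\theta)^{\top}A_{n,m}g_{n,m}(\widehat\theta)=0$, where $J_{n,m}(\theta):=\partial g_{n,m}(\theta)/\partial\theta^{\top}$ is the Jacobian of the sample moments and $A_{n,m}$ is taken symmetric.

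A simplification specific to this model is that $g_{n,m}(\theta)$ depends on $\theta$ only through the deterministic quantities $\mu(\theta)$ and $D(0,\theta),\ldots,D(m,\theta)$, the random sample statistics entering additively and free of $\theta$. Consequently $J_{n,m}(\theta)$ is non-random and equals $G(\theta):=\E[\partial h(Y_t^{(m)},\theta)/\partial\theta^{\top}]$ exactly for every $n$, so that no uniform law of large numbers for the Jacobian is required; continuity of $G$ together with consistency already yields $J_{n,m}(\widehat\theta)\stackrel{\mathbb{P}}{\rightarrow}G_0=G(\theta_0)$. I would then expand each component of $g_{n,m}$ about $\theta_0$ by the mean value theorem, obtaining $g_{n,m}(\widehat\theta)=g_{n,m}(\theta_0)+\bar J_n(\widehat\theta-\theta_0)$, where the $j$th row of $\bar J_n$ is the gradient of the $j$th component evaluated at an intermediate point $\bar\theta_j$ on the segment between $\widehat\theta$ and $\theta_0$; since $G$ is deterministic and continuous and each $\bar\theta_j\stackrel{\mathbb{P}}{\rightarrow}\theta_0$, one has $\bar J_n\stackrel{\mathbb{P}}{\rightarrow}G_0$ as well.

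Substituting the expansion into the first-order condition and solving gives
\begin{align*}
\sqrt{n}\,(\widehat\theta_{0,\mathrm{GMM}}^{n,m}-\theta_0)=-\left[J_{n,m}(\widehat\theta)^{\top}A_{n,m}\bar J_n\right]^{-1}J_{n,m}(\widehat\theta)^{\top}A_{n,m}\,\sqrt{n}\,g_{n,m}(\theta_0).
\end{align*}
The bracketed matrix converges in probability to $G_0^{\top}AG_0$ by the above together with Assumption \ref{as:2}, hence is invertible with probability approaching one, while Theorem \ref{prop:clt-momentfct} supplies $\sqrt{n}\,g_{n,m}(\theta_0)\stackrel{d}{\rightarrow}\mathrm{N}(0,\Sigma_a)$. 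An application of Slutsky's lemma and the continuous mapping theorem then delivers the limit $\mathrm{N}(0,M\Sigma_aM^{\top})$ with $M=(G_0^{\top}AG_0)^{-1}G_0^{\top}A$; the overall sign of the linear map is immaterial, since it cancels in the sandwich $M\Sigma_aM^{\top}$.

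The two points requiring care, rather than the dependence structure, are the matrix-inversion step and the smoothness/boundary conditions. For the former I must guarantee that $G_0^{\top}AG_0$ is invertible; this needs the local rank condition that $G_0$ has full column rank $\dim(\theta)$—the differential analogue of the global identification Assumption \ref{as:1.1}(ii)—which, combined with positive definiteness of $A$ from Assumption \ref{as:2}, renders $G_0^{\top}AG_0$ positive definite. I expect this rank condition, together with continuous differentiability of $\theta\mapsto(\mu(\theta),D(\cdot,\theta))$ needed to legitimise the mean value expansion, to be the main thing to pin down, since both must be checked model by model; they hold for the parametric trawl specifications of Section \ref{sec:corr}, where the moment maps are explicit and smooth. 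The serial-dependence difficulty, by contrast, is fully absorbed into the already-established Theorem \ref{prop:clt-momentfct}.
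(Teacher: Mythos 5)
Your proposal is correct in substance and, mathematically, it is the same argument as the paper's: the paper proves the theorem by verifying the conditions (Assumptions 1.1--1.3 and 1.7--1.9) of the general GMM central limit theorem of M\'aty\'as (1999, Theorem 1.2), and what you write out — first-order condition, mean-value expansion of $g_{n,m}$ around $\theta_0$, convergence of the Jacobian terms, Slutsky combined with Theorem \ref{prop:clt-momentfct} — is precisely the machinery inside that cited theorem. Both arguments also pivot on the same model-specific observation, namely that the moment vector depends on $\theta$ only through the deterministic maps $\mu(\theta)$ and $D(k,\theta)$, so the Jacobian $\partial g_{n,m}/\partial\theta^{\top}$ is non-random and equals $G(\theta)$ exactly; the paper uses this to verify M\'aty\'as's Assumption 1.8 (Jacobian convergence) via the continuous mapping theorem, just as you use it to dispense with a uniform law of large numbers for the Jacobian. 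Your explicit flags — interiority of $\theta_0$, and the full-column-rank condition on $G_0$ that makes $G_0^{\top}AG_0$ invertible — are legitimate: the paper leaves both implicit in M\'aty\'as's assumptions (and in the very definition of $M$), so pinning them down is a point in your favour rather than a divergence.

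One step in your proposal is not licensed as written: you take consistency from Theorem \ref{thm:consistency}, but that theorem assumes Assumptions \ref{as:1.1}, \ref{as:1.2} and \ref{as:1.3}, whereas the present theorem assumes \ref{as:1.1}, \ref{as:1}, \ref{as:2}, \ref{as:3} — the uniform-convergence Assumption \ref{as:1.2} and the weight-matrix Assumption \ref{as:1.3} are not among its hypotheses. The paper closes this loop inside the proof: Assumption \ref{as:1.3} follows from Assumption \ref{as:2}, and the analogue of Assumption \ref{as:1.2} is established via M\'aty\'as's sufficient conditions 1.4--1.6 (compactness from Assumption \ref{as:1}, ergodicity of the trawl process as a mixed moving average, and a Lipschitz condition on $h$ in $\theta$, which holds because the random parts of $h$ cancel in differences and the deterministic parts have bounded derivatives). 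In your framework the patch is equally immediate — since $g_{n,m}(\theta)-\E[g_{n,m}(\theta)]$ is free of $\theta$, uniform convergence over the compact $\Theta$ reduces to the pointwise ergodic law of large numbers plus continuity of $\mu(\cdot)$ and $D(k,\cdot)$ — but the step does need to be said; as it stands, your appeal to Theorem \ref{thm:consistency} invokes hypotheses that the theorem being proved does not grant you.
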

\begin{proof}[Proof of Theorem \ref{prop:clt-gmm}]
The proof follows the strategy of the proof of \citet[Theorem 1.2]{Matyas1999}, see also \citet[Proof of Theorem 6.2]{CuratoStelzer2019} for the case of a supOU process.
Hence, we verify the Assumptions 1.1-1.3 and 1.7-1.9 in \citet[Chapter 1]{Matyas1999}.

We note that Assumption 1.1 in \cite{Matyas1999} is implied by Assumption \ref{as:1.1}. 

For Assumption 1.2 in \cite{Matyas1999}, see our Assumption \ref{as:1.2}, we verify the corresponding sufficient conditions Assumption 1.4, 1.5 and 1.6 in \cite{Matyas1999}. 

Assumption 1.3 in \cite{Matyas1999} holds due to Assumption \ref{as:2}.

Assumption 1.4 in \cite{Matyas1999} is implied by our Assumption \ref{as:1}.

Assumption 1.5 in \cite{Matyas1999} is satisfied since the trawl process is a special case of a mixed moving average process and hence mixing and ergodic, see \cite{FS2013}.

Assumption 1.6 in \cite{Matyas1999} implies that we need to show that each component of the function $h$ satisfies a (stochastic) Lipschitz condition. 
For the first component, we have 
$$
|h_E(Y_t^{(m)},\theta_1)-h_E(Y_t^{(m)},\theta_2)|=
|-\mu(\theta_1)+\mu(\theta_2)|,
$$
and similar expressions hold for the other components. In all cases, we observe that the random components cancel out and, hence, we only require a Lipschitz condition for the deterministic parts. 
We note that, when taking partial derivatives with respect to the model parameters, 
we get that these partial derivatives are bounded and, hence, the components are Lipschitz continuous and Assumption 1.6 in \cite{Matyas1999} holds.

Assumption 1.7 in \cite{Matyas1999} holds by construction. Then $h(Y_t^{(m)}, \theta)$ is continuously differentiable with respect to $\theta \in \Theta$.
We set 
\begin{align*}
	G_{n,m}(\theta):=\frac{1}{n-m}\sum_{t=1}^{n-m}\frac{\partial h(Y_t^{(m)}, \theta)}{\partial \theta^{\top}}.
\end{align*}
Note that Assumption 1.8 in \cite{Matyas1999} holds if we can show that $\frac{\partial h(Y_t^{(m)}, \theta)}{\partial \theta^{\top}}$ satisfies a weak law of large numbers in a neighbourhood of $\theta_0$. I.e.~we need to show that, for any sequence $(\theta_n^*)$ such that $\theta_n^* \stackrel{\mathbb{P}}{\to} \theta_0$, we have $G_{n,m}(\theta_n^*)\stackrel{\mathbb{P}}{\to} G_0$.
From the definition of $h(Y_t^{(m)}, \theta)$, we can read off that the partial derivative matrix $\frac{\partial h(Y_t^{(m)}, \theta)}{\partial \theta^{\top}}$ does not depend on $Y_t^{(m)}$, which implies that $G_{n,m}(\theta):=\frac{\partial h(Y_t^{(m)}, \theta)}{\partial \theta^{\top}}$ and $G_{0}=\E\left[\frac{\partial h(Y_t^{(m)}, \theta)}{\partial \theta^{\top}} \right]_{\theta=\theta_0}
=\left.\frac{\partial h(Y_t^{(m)}, \theta)}{\partial \theta^{\top}}\right|_{\theta=\theta_0}$.
We can now apply the continuous mapping theorem, to deduce that Assumption 1.8 in \cite{Matyas1999} holds.

Finally, we need to justify that Assumption 1.9 in \cite{Matyas1999} holds. However, this is a direct consequence of Theorem \ref{prop:clt-momentfct}.

Hence, the same steps as in the proof of \citet[Theorem 2.1]{Matyas1999} can be applied since Assumption \ref{as:3} holds, where $f_T$ and $F_T$ need to be replaced by $g_{n,m}$ and $G_{n,m}$.
\end{proof}

Let us now study some examples when the condition that the $\theta$-weakly dependence coefficient of the trawl process is given by $\theta_X(r)=O(r^{-\alpha})$, for $\alpha>\left( 1+\frac{1}{\delta}\right)\left(1+\frac{1}{2+\delta} \right)$.
Suppose in the following two examples that we consider an integer-valued trawl process, then $\theta_X(r)\propto \mathrm{Cov}(X_0, X_r)$.
\begin{example}
In the case of an exponential trawl function with parameter $\lambda>0$, we have that $\theta_X(r) \propto e^{-\lambda r}$, which decays even faster than any polynomial decay, so the condition that $\theta_X(r)=O(r^{-\alpha})$, for $\alpha>\left( 1+\frac{1}{\delta}\right)\left(1+\frac{1}{2+\delta} \right)$ holds. 
\end{example}
\begin{example}
In the case of a Gamma trawl function with parameters $\alpha>0, H>0$, we have that $\theta_X(r) \propto r^{-H}$. So the condition in the two theorems implies that we need that
$H>\left( 1+\frac{1}{\delta}\right)\left(1+\frac{1}{2+\delta} \right)>1$, which excludes the long-memory setting.
\end{example}

\subsection{Comparison of the asymptotic variances of the MCL and GMM estimators: The case of the Poisson-Exponential IVT process}\label{sec:AVARcomp} 
In Section \ref{sec:MC}, the GMM and MCL estimators were compared in finite samples. By considering the CLT of the MCL estimator (Theorem  \ref{th:CLT}) and that of the GMM estimator (Theorem  \ref{prop:clt-gmm}), we may also compare the two estimators analytically by comparing the values of the asymptotic variance (AVAR) of the two estimators. Indeed, in the short-memory case, we have results to the effect that (Theorem  \ref{th:CLT} and Theorem \ref{prop:clt-gmm})
\begin{align}\label{eq:AVAR}
\sqrt{n}( \hat \theta_x - \theta_0) \stackrel{(d)}{\rightarrow} N(0,AVAR_x), \qquad x = MCL, GMM,
\end{align}
where $AVAR_x$ is the asymptotic variance matrix for $x = MCL, GMM$. Both approaches to calculating $AVAR_x$ rely on terms for which we do not have closed-form expressions, but that we instead have to estimate using simulations.\footnote{This holds in particular for the $V$ matrix in the MCL CLT, see Theorem  \ref{th:CLT}  and the discussion following it. It also holds for the $\Sigma_a$ matrix in the GMM CLT, see Theorem  \ref{prop:clt-gmm}.} We illustrate this in the case of the Poisson-Exponential IVT model with $\nu = 17.50$ and $\lambda = 1.80$ (same setup as in the paper, cf. Table \ref{tab:paramTab}). We set $K = 10$ for the MCL estimator and $m=10$ for the GMM estimator.  We use the estimation-based method with $B = N = 500$ (Section \ref{app:B1}) to calculate the $V$ matrix of the MCL estimator of Theorem  \ref{th:CLT}\footnote{We also need to estimate the $H$ matrix of  Theorem  \ref{th:CLT}, which we would normally get as standard output from the numerical MCL estimation procedure. Here, we simulate one instance of a very long ($n = 20,000$) Poisson-Exponential IVT process and use this to estimate $H$ via standard output from the numerical optimizer.}; to calculate $\Sigma_a$ of Theorem \ref{prop:clt-gmm}, we use a similar simulation-based approach. To be precise, for $b = 1, 2, \ldots, B$ with $B = 500$, we simulate $N = 500$ observations of a Poisson-Exponential IVT process (with $\nu = 17.5$ and $\lambda = 1.8$) $Y$, which we use to calculate $\Sigma_a^{(b)} = \mathrm{Cov}(h(Y_0^{(m)}, \theta_0), h(Y_0^{(m)}, \theta_0)) + 2\sum_{l=1}^{50} \mathrm{Cov}(h(Y_0^{(m)}, \theta_0), h(Y_l^{(m)}, \theta_0))$. Then we estimate $\Sigma_a \approx \frac{1}{B} \sum_{b=1}^B \Sigma_a^{(b)}$.



We run the above simulation-based procedure $100$ times. The simulation-based estimated values for the diagonal element of $\sqrt{AVAR_x}$, for the runs $i = 1,2, \ldots M$, are shown in Figure \ref{fig:AVAR}. The diagonal elements of $AVAR_{MCL}$ (red crosses) are always smaller than those of  $AVAR_{GMM}$ (blue circles), and the simulation-based estimates of the latter are much more volatile. The ratio of the average values of the diagonal elements of $\sqrt{AVAR_{MCL}}$ to those of $\sqrt{AVAR_{GMM}}$ are $0.50$ and $0.51$ for $\nu$ and $\lambda$, respectively. Incidentally, these numbers are close to those found in the finite sample comparison between the two methods, cf. Figure \ref{fig:CLvsMM}. 

\begin{figure}[!t]
		\centering
		\includegraphics[scale=0.80]{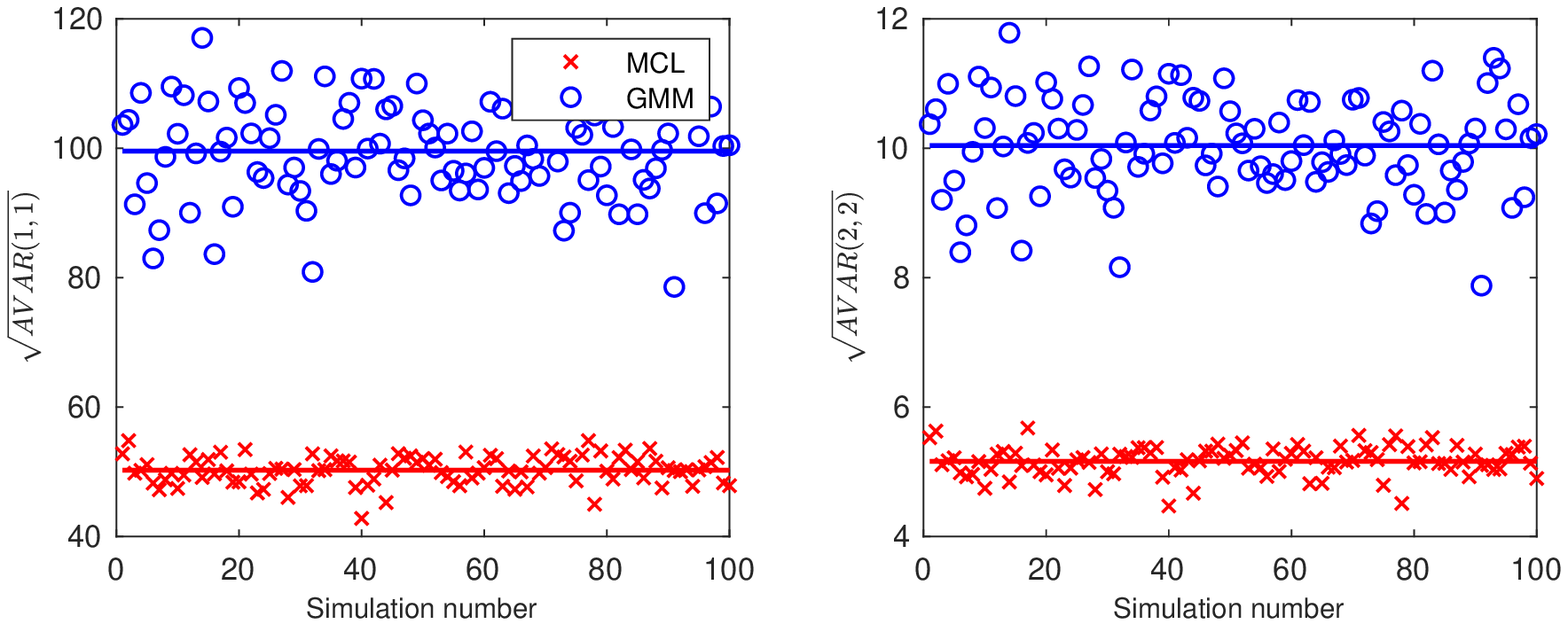} 
		\caption{\it Left plot: Simulation-based estimates of the asymptotic variance of $\hat \nu$, i.e.~$\sqrt{AVAR_x(1,1)}$ for $x = MCL, GMM$, where $AVAR_x$ is the asymptotic covariance matrix of $\hat \theta$, see Equation \eqref{eq:AVAR}. Right plot: Simulation-based estimates of the asymptotic variance of $\hat \lambda$, i.e.~$\sqrt{AVAR_x(2,2)}$ for $x = MCL, GMM$. Horizontal lines denote the average over the $100$ simulations.}
		\label{fig:AVAR}
		\end{figure}

\newpage

	\section{Software (MATLAB)}\label{sec:code}
	The following functions are available in the MATLAB software language. We give a very brief description of the functions here but refer to the extensive documentation in the code for further details. The code can be freely downloaded from \if1\blind{GitHub.} \fi 
	
	\if0\blind{\noindent  \url{https://github.com/mbennedsen/Likelihood-based-IVT}.} \fi
	
	\begin{itemize}
		\item
		\texttt{simulateIVT}: 
		\begin{itemize}
			\item	
			Simulates equidistant observations of a parametric IVT process, specified by a L\'evy basis and a trawl function. The L\'evy basis and trawl function can be specified independently of each other using the framework described in this Supplementary Material.
		\end{itemize}
		\item
		\texttt{estimateIVT}: 
		\begin{itemize}
			\item	
			Takes as input a vector of equidistantly spaced observations and a parametric specification (L\'evy basis and trawl function) and outputs estimates of the corresponding parameters using the maximum composite likelihood approach developed in the main paper.
		\end{itemize}
		\item
		\texttt{modelselectIVT}: 
		\begin{itemize}
			\item	
			This function estimates six parametric IVT models (Poisson-Exponential, Poisson-IG, Poisson-Gamma, NB-Exponential, NB-IG, NB-Gamma) and calculates the composite likelihood function when evaluated in the optimized parameters, as well as the CLAIC and CLBIC criteria given in the main paper. These three criteria can be used for model selection, with larger values indicating a better fit.
		\end{itemize}
		\item
		\texttt{forecastIVT}: 
		\begin{itemize}
			\item	
			Takes as input a parametric IVT model (L\'evy basis and trawl function), a forecast horizon (which can be a vector of several forecast horizons), as well as historical observations; the output is the predictive probability distribution for the given forecast horizons. The parameters underlying the predictive distribution are estimated using the maximum composite likelihood approach presented in the main paper.
		\end{itemize}
		\item
		\texttt{analyze\_stock\_A} and \texttt{analyze\_simulated\_data}: 
		\begin{itemize}
			\item	
			These files illustrate the use of the functions \texttt{simulateIVT}, \texttt{estimateIVT}, \texttt{modelselectIVT}, and \texttt{forecastIVT}. The file \texttt{analyze\_stock\_A} reproduces the output of the main paper, while  \texttt{analyze\_simulated\_data} simulates a user-specified IVT process and then conducts analyses similar to those considered in the main paper on these simulated data.
		\end{itemize}
	\end{itemize}
	
	\newpage \clearpage

\end{document}